\documentclass[11pt, reqno]{amsart}

\usepackage{amsmath,amssymb,amsthm,mathrsfs,mathtools}
\usepackage{microtype}
\usepackage{hyperref}
\hypersetup{
  colorlinks=true,
  linkcolor=blue,
  citecolor=blue,
  urlcolor=blue,
  pdftitle={Quantitative local recovery of Kerr--de~Sitter parameters from high-frequency equatorial quasinormal modes},
  pdfauthor={Ruiliang Li}
}

\numberwithin{equation}{section}

\newtheorem{theorem}{Theorem}
\newtheorem{proposition}[theorem]{Proposition}
\newtheorem{lemma}[theorem]{Lemma}
\newtheorem{corollary}[theorem]{Corollary}
\theoremstyle{definition}
\newtheorem{definition}[theorem]{Definition}
\theoremstyle{remark}
\newtheorem{remark}[theorem]{Remark}

\providecommand{\doi}[1]{\href{https://doi.org/#1}{doi:#1}}
\providecommand{\arxiv}[1]{\href{https://arxiv.org/abs/#1}{arXiv:#1}}

\title[Inverse Kerr--de~Sitter spectroscopy]{Quantitative local recovery of Kerr--de~Sitter parameters from high-frequency equatorial quasinormal modes}

\author{Ruiliang Li}
\address{Tsinghua University, Beijing 100084, China}
\email{lirl23@mails.tsinghua.edu.cn}

\date{\today}

\subjclass[2020]{35P25, 35R30, 58J50, 83C57}
\keywords{Kerr--de~Sitter, quasinormal modes, scattering resonances, inverse problems, semiclassical quantization, Bohr--Sommerfeld conditions, black-hole spectroscopy}

\begin{document}

\begin{abstract}
We study an inverse resonance problem for the scalar wave equation on the Kerr--de~Sitter family.  
In a compact subextremal slow-rotation regime and at a fixed overtone index, high-frequency quasinormal modes admit semiclassical quantization and a real-analytic labeling by angular momentum indices.  
Using this structure, we first prove that a finite equatorial high-frequency package of quasinormal-mode frequencies determines the mass and rotation parameter $(M,a)$ (for fixed cosmological constant $\Lambda>0$), with a quantitative stability estimate.
As a key geometric input we compute explicit second-order (in $a$) corrections to the equatorial photon-orbit invariants which control the leading real and imaginary parts of the quasinormal modes.
Finally, allowing $\Lambda$ to vary in a compact interval, we show that adding one damping observable (the scaled imaginary part of a single equatorial mode) yields a three-parameter inverse theorem: a finite package of three independent real observables determines $(M,a,\Lambda)$ locally in the slow-rotation regime away from $a=0$.
\end{abstract}

\maketitle

\section{Introduction}\label{sec:intro}

The observation of gravitational waves has turned the ``ringdown'' phase of a perturbed black hole into a quantitative probe of strong-field general relativity. In the simplest idealization, the late-time signal is well-approximated by a finite superposition of exponentially damped sinusoids whose complex frequencies are \emph{quasinormal modes} (QNMs); see, e.g., \cite{DreyerEtAl2004BlackHoleSpectroscopy,BertiCardosoStarinets2009QNMReview}. From the mathematical viewpoint relevant here, QNMs are poles of a meromorphic continuation of a cutoff resolvent (or equivalently of a stationary family of operators obtained by Fourier transformation in time). The modern microlocal framework that gives a robust Fredholm setup for Kerr--de~Sitter and related geometries originates in Vasy's work \cite{Vasy2013MicrolocalAHKdS} and was further developed in a series of papers establishing existence, distribution, and resonance expansions in the slowly rotating Kerr--de~Sitter setting \cite{Dyatlov2011QNMKerrDeSitter,Dyatlov2012AsymptoticQNMKdS}; see also the surveys \cite{Zworski2012SemiclassicalAnalysis,Zworski2017ScatteringResonances} for background on resonances in scattering theory. Recent advances establish real-analyticity of Kerr/Kerr--de~Sitter resonant states (quasinormal modes as functions on the spacetime) under real-analytic coefficients \cite{PetersenVasy2023AnalyticityQNM} and remove slow-rotation restrictions for resonance expansions of linear waves in the full subextremal Kerr--de~Sitter range \cite{PetersenVasy2025WaveKdS}; in parallel, resolvent technology near normally hyperbolic trapping has been sharpened and systematized in \cite{WunschZworski2011ResolventNHT,HintzVasy2015NonTrappingNHT,Dyatlov2016SpectralGapsNHT}.

\medskip

\noindent\textbf{Inverse resonance problems and black-hole parameters.}
Motivated by black-hole spectroscopy, one may ask to what extent geometric parameters can be recovered from (partial) spectral data. In the non-rotating de~Sitter--Schwarzschild case, Uhlmann--Wang \cite{UhlmannWang2023RecoverMassSingleQNM} proved that the black hole mass can be recovered locally from a \emph{single} QNM, with H\"older stability, thereby providing a rigorous counterpart to the heuristic ``one complex frequency determines one real parameter'' principle. For rotating black holes, the parameter space becomes two-dimensional (mass and angular momentum), and the question becomes both subtler and richer: one expects that a judiciously chosen pair of observables extracted from the QNM spectrum should determine the parameters locally, but quantitative stability depends on the microlocal structure of trapping and on the non-selfadjoint nature of the underlying spectral problem.

In the same de~Sitter--Schwarzschild setting, the high-frequency distribution of resonances and the associated pseudopole lattice were described earlier by S\'a Barreto--Zworski \cite{SaBarretoZworski1997SphericalBH}, providing a rigorous semiclassical underpinning for the ``lattice'' patterns observed in numerical studies.  Dyatlov's subsequent work \cite{Dyatlov2012AsymptoticQNMKdS} extends this semiclassical quantization paradigm to slowly rotating Kerr--de~Sitter spacetimes and supplies the forward microlocal input for our inverse analysis.

\medskip

\noindent\textbf{Setting and the equatorial high-frequency regime.}
Fix a cosmological constant $\Lambda>0$. For each subextremal Kerr--de~Sitter metric $g_{M,a}$ with mass $M>0$ and (small) rotation parameter $a$, consider the scalar wave equation
\begin{equation}\label{eq:intro-wave}
\Box_{g_{M,a}}u=0.
\end{equation}
Upon separation of variables, QNMs are labeled by an overtone number $n\in\mathbb{N}_0$, an angular momentum $\ell\in\mathbb{N}$, and an azimuthal index $k\in\mathbb{Z}$ with $|k|\le\ell$. In this paper we focus on fixed $n$ and the semiclassical (high-frequency) limit $\ell\to\infty$ in the \emph{equatorial} families $k=\pm\ell$; throughout, we use ``co-/counter-rotating'' to mean $k>0$/$k<0$ (equivalently, the sign of the corresponding geodesic angular velocity in the chosen Boyer--Lindquist $\varphi$-coordinate). Denote the corresponding QNMs by
\[
\omega_\pm(M,a):=\omega_{n,\ell,\pm\ell}(M,a)\in\mathbb{C}.
\]
The semiclassical mechanism underlying these families is barrier-top quantization at the photon sphere: to leading order, $\Re\omega_\pm$ captures an orbital frequency, while $-\Im\omega_\pm$ is governed by a Lyapunov exponent of the unstable null geodesic. The precise quantization we use is recalled and proved in Sections~\ref{sec:setup}--\ref{sec:HF-quantization}.

\medskip

\noindent\textbf{A two-mode package using only oscillation frequencies.}
A key point for inverse problems is that $\Re\omega_\pm$ can (in principle) be estimated more robustly from time-domain data than $\Im\omega_\pm$, and it is therefore natural to ask whether one can recover $(M,a)$ from \emph{real parts alone}. We show that, in the slowly rotating regime and for $\ell$ large, the pair of observables
\begin{equation}\label{eq:intro-U-V}
U:=\Re\frac{\omega_+(M,a)+\omega_-(M,a)}{2\ell},\qquad
V:=\Re\frac{\omega_+(M,a)-\omega_-(M,a)}{2\ell}
\end{equation}
determines $(M,a)$ locally and stably on precompact subextremal parameter sets. The normalization is chosen so that $U$ is of order $1$ and primarily sensitive to $M$, while $V$ is of order $a$ and measures the Zeeman-type splitting of the two equatorial families.

Our principal inverse result for the \emph{true} QNM spectrum is stated as Theorem~\ref{thm:inverse-true-QNM}: for any precompact set
\[
\mathcal K \Subset \{(M,a):\ \text{$g_{M,a}$ is subextremal}\}\cap\{|a|\le a_0\},
\]
with $a_0>0$ sufficiently small, there exists $\ell_0=\ell_0(n,\mathcal K)$ such that for all $\ell\ge\ell_0$ the data map
\[
\mathcal G^{\mathrm{QNM}}_\ell:\mathcal K\to\mathbb{R}^2,\qquad
\mathcal G^{\mathrm{QNM}}_\ell(M,a):=(U,V),
\]
is a real-analytic diffeomorphism onto its image with a uniform two-sided Lipschitz bound (cf.\ \eqref{eq:stability-true-omega}). In particular, $(M,a)$ can be reconstructed from $(U,V)$ with a stability constant depending only on $\mathcal K$ and $\Lambda$.

\medskip

\noindent\textbf{On labeling, orientation, and recovery from unlabeled data.}
The map $\mathcal G^{\mathrm{QNM}}_\ell$ uses \emph{labeled} equatorial modes: one needs to know which frequency corresponds to $k=+\ell$ (co-rotating) and which corresponds to $k=-\ell$ (counter-rotating), relative to a fixed choice of the axial Killing field $\partial_\varphi$.
This is natural from the PDE/separation viewpoint, but it matters conceptually because the Kerr--de~Sitter metrics with parameters $a$ and $-a$ are isometric after reversing the axial coordinate $\varphi\mapsto-\varphi$ (equivalently, after swapping $k\mapsto-k$).
Accordingly, if one is given only the \emph{unordered} pair of equatorial oscillation frequencies $\{\Re\omega_{n,\ell,\ell},\Re\omega_{n,\ell,-\ell}\}$ (or, more generally, the unlabeled QNM set), then one cannot expect to determine the \emph{sign} of $a$ without an additional orientation convention; at best one can recover $|a|$.
We make this precise by proving in Section~\ref{sec:transfer} an unlabeled variant of the two-parameter inverse theorem: for $\ell$ large, the data $(U,|V|)$ determined by the unordered pair $\{\Re\omega_+,\Re\omega_-\}$ recovers $(M,|a|)$ locally and stably (Corollary~\ref{cor:unlabeled-recovery}).

\medskip

\noindent\textbf{Single-mode recovery in the rotating case (extension of Uhlmann--Wang).}
In addition to the two-mode/real-part reconstruction above, we record a complementary \emph{single-mode} recovery statement, in the spirit of \cite{UhlmannWang2023RecoverMassSingleQNM}, which becomes available once rotation is allowed: the pair $(\Re\omega_+,\Im\omega_+)$ provides two real observables and thus has the correct dimension to determine $(M,a)$. Using the same semiclassical normal form that produces the equatorial quantization, one obtains a one-mode inverse map in the slowly rotating regime, with stability uniform for $\ell\gg 1$. Concretely, in the equatorial family $k=\ell$ we consider the scaled observables
\[
\widetilde U:=\Re\frac{\omega_+(M,a)}{\ell},\qquad
\widetilde W:=-\frac{\Im\,\omega_+(M,a)}{n+\frac12},
\]
and prove that $(M,a)\mapsto(\widetilde U,\widetilde W)$ is locally invertible on $\mathcal K$ for $\ell$ large (a precise statement is given later as a corollary of the pseudopole inversion and the transfer to true QNMs). This can be viewed as a rotating analogue of \cite{UhlmannWang2023RecoverMassSingleQNM}: when $a=0$, $\widetilde U$ and $\widetilde W$ reduce to the orbital frequency and Lyapunov exponent at the de~Sitter--Schwarzschild photon sphere, whereas for $a\neq0$ the first-order Zeeman splitting supplies the missing second parameter.

\medskip

\noindent\textbf{Method and intermediate pseudopole problem.}
As in \cite{Dyatlov2011QNMKerrDeSitter,Dyatlov2012AsymptoticQNMKdS}, the forward QNM problem is non-selfadjoint and is naturally approached via semiclassical microlocal analysis of the separated radial operator. We first construct \emph{pseudopoles}---explicit semiclassical approximations to QNMs obtained by quantization of a barrier-top normal form---and show that they lie within $O(\ell^{-N})$ of true QNMs for every $N$ (see \eqref{eq:superpoly-close}). The inverse theorem is proved first at the pseudopole level (Theorem~\ref{thm:inverse-pseudopoles}), where one can explicitly track the dependence of the leading terms on $(M,a)$ and establish a uniform Jacobian lower bound. An explicit leading reconstruction is given in Proposition~\ref{prop:explicit-inverse}, and the corresponding statement for true QNMs is Proposition~\ref{prop:explicit-true}.

\medskip

\noindent\textbf{Structure of the paper.}
Section~\ref{sec:setup} recalls the Kerr--de~Sitter geometry, separation of variables, and the Fredholm/meromorphic setup for QNMs following \cite{Vasy2013MicrolocalAHKdS,Dyatlov2011QNMKerrDeSitter}. In Section~\ref{sec:HF-quantization} we develop the high-frequency quantization of the equatorial modes and introduce the pseudopole set. Section~\ref{sec:nondegeneracy} isolates the geometric nondegeneracy mechanism of the equatorial principal frequency map on compact slow-rotation parameter sets. Section~\ref{sec:inverse-pseudopoles} proves the inverse theorem for pseudopoles and derives closed-form leading reconstructions. Section~\ref{sec:transfer} transfers the inverse result to true QNMs using sharp pseudopole--QNM proximity estimates. Finally, Section~\ref{sec:discussion} discusses extensions and open problems, including the possibility of relaxing the slow-rotation restriction in light of \cite{PetersenVasy2025WaveKdS} and of incorporating resolvent technology near normally hyperbolic trapping \cite{WunschZworski2011ResolventNHT,HintzVasy2015NonTrappingNHT,Dyatlov2016SpectralGapsNHT}.

\section{Geometric and analytic setup}\label{sec:setup}

Throughout we fix a cosmological constant $\Lambda>0$ and consider the $3+1$ dimensional
Kerr--de~Sitter family $g_{M,a}$ with mass $M>0$ and rotation parameter $a\in\mathbb{R}$,
in the subextremal regime described below.
Our inverse statements will be uniform for $(M,a)$ ranging in a fixed compact subset of this
subextremal parameter region; in particular, we shall keep track of how the geometric objects
and the stationary wave operator depend on $(M,a)$.
The forward resonance theory that underlies our analysis is by now well-established:
in the slowly rotating regime by Dyatlov \cite{Dyatlov2011QNMKerrDeSitter,Dyatlov2012AsymptoticQNMKdS},
in a general microlocal Fredholm framework by Vasy \cite{Vasy2013MicrolocalAHKdS},
and in the full subextremal Kerr--de~Sitter range by Petersen--Vasy \cite{PetersenVasy2025WaveKdS}.
We use these results as input and focus on extracting inverse information from the high-frequency spectrum.

\subsection{The Kerr--de~Sitter exterior and the subextremal parameter region}\label{subsec:kds}

\paragraph{Boyer--\allowbreak Lindquist form.}
Let $(t,r,\theta,\varphi)$ be Boyer--\allowbreak Lindquist coordinates on
$\mathbb{R}_t\times (0,\infty)_r\times (0,\pi)_\theta\times \mathbb{S}^1_\varphi$ and set
\begin{align}
\rho^2(r,\theta) &:= r^2+a^2\cos^2\theta, \label{eq:rho2}\\
\Delta_\theta(\theta) &:= 1+\frac{\Lambda a^2}{3}\cos^2\theta, \label{eq:Delta_theta}\\
\Xi &:= 1+\frac{\Lambda a^2}{3}, \label{eq:Xi}\\
\Delta_r(r) &:= (r^2+a^2)\Bigl(1-\frac{\Lambda r^2}{3}\Bigr)-2Mr. \label{eq:Delta_r}
\end{align}
With the sign convention $(-,+,+,+)$, the Kerr--de~Sitter metric is
\begin{equation}\label{eq:kds_metric_BL}
\begin{aligned}
g=g_{M,a} :={}&
-\frac{\Delta_r}{\rho^2}\Bigl(dt-\frac{a\sin^2\theta}{\Xi}\,d\varphi\Bigr)^2
+\frac{\rho^2}{\Delta_r}\,dr^2+\frac{\rho^2}{\Delta_\theta}\,d\theta^2\\
&\quad
+\frac{\Delta_\theta\sin^2\theta}{\rho^2}\Bigl(a\,dt-\frac{r^2+a^2}{\Xi}\,d\varphi\Bigr)^2.
\end{aligned}
\end{equation}
The vector fields $\partial_t$ and $\partial_\varphi$ are Killing.

\paragraph{Subextremality and horizons.}
We adopt the standard Kerr--de~Sitter notion of subextremality used in the resonance theory
in the full parameter range \cite{PetersenVasy2025WaveKdS}.
We say that $(M,a)$ is \emph{subextremal} (for the fixed $\Lambda$) if the quartic polynomial
$\Delta_r(\cdot;M,a)$ has four \emph{distinct real} simple roots
\begin{equation}\label{eq:horizon_roots}
\begin{aligned}
&r_0(M,a)<0<r_-(M,a)<r_e(M,a)<r_c(M,a),\\
&\Delta_r(r_j)=0,\qquad \Delta_r'(r_j)\neq 0\ \text{for } r_j\in\{r_0,r_-,r_e,r_c\}.
\end{aligned}
\end{equation}
Here $r=r_-$ is the Cauchy horizon, $r=r_e$ the (outer) event horizon, and $r=r_c$ the cosmological horizon.
Our analysis takes place exclusively in the \emph{domain of outer communications} between $r_e$ and $r_c$.
The domain of outer communications is
\begin{equation}\label{eq:DOC}
\mathcal M^\circ_{M,a}:=\mathbb{R}_t\times (r_e,r_c)_r\times \mathbb{S}^2_{\theta,\varphi}.
\end{equation}
By the real-analytic implicit function theorem, the roots in \eqref{eq:horizon_roots} depend
real-analytically on $(M,a)$ in the subextremal set; in particular, $r_-(M,a),\allowbreak r_e(M,a),\allowbreak r_c(M,a)$ are
real-analytic.

\begin{remark}[Why we keep the full root structure]\label{rem:root-structure}
For the inverse problems studied below, only the outer horizons $r_e$ and $r_c$ enter.
Nevertheless we use the full definition \eqref{eq:horizon_roots} to align with the modern resonance literature
and to avoid any ambiguity about which horizon is meant by $r_e$.
All constructions below (star coordinates, Fredholm spaces, separation of variables) are performed on
$\mathcal M^\circ_{M,a}$ and its future extension across the \emph{outer} horizons.
\end{remark}

For later use, we record the (unsigned) surface gravities
\begin{equation}\label{eq:surface_gravity}
\kappa_h(M,a) := \frac{|\Delta_r'(r_h)|}{2\,(r_h^2+a^2)},\qquad r_h\in\{r_e,r_c\},
\end{equation}
which likewise vary real-analytically in the subextremal regime.

\paragraph{A fixed reference manifold for varying parameters.}
Let $\mathcal P_\Lambda\subset(0,\infty)\times\mathbb R$ denote the set of \emph{subextremal} Kerr--de~Sitter parameters
for the fixed $\Lambda$, i.e.\ those $(M,a)$ for which $\Delta_r(\cdot;M,a)$ has four distinct real simple roots as in
\eqref{eq:horizon_roots}.
Throughout we fix a compact set
\begin{equation}\label{eq:Kcompact}
\mathcal K\Subset \mathcal P_\Lambda\cap\{|a|\le a_0\},
\end{equation}
where $a_0>0$ is chosen sufficiently small so that
\begin{enumerate}
\item[(i)] the semiclassical quantization and pseudopole construction of Dyatlov
\cite{Dyatlov2011QNMKerrDeSitter,Dyatlov2012AsymptoticQNMKdS} (together with the parameter-differentiable remainder bounds
in Appendix~\ref{app:parameter-quantization}) apply uniformly on $\mathcal K$; and
\item[(ii)] the slow-rotation Taylor remainders in the geometric nondegeneracy mechanism (Proposition~\ref{prop:geo-nondegeneracy}) can be absorbed uniformly on $\mathcal K$.
\end{enumerate}
Condition (ii) is purely geometric: by Proposition~\ref{prop:geo-nondegeneracy} we have, uniformly on $\mathcal K$,
\begin{equation*}
\det D\mathcal G_{\mathrm{geo}}(M,a)=\Omega_{\mathrm{ph}}'(M)c_Z(M)+\mathcal O(a^2).
\end{equation*}
Hence shrinking $a_0$ if necessary yields a uniform lower bound.

\begin{remark}[Varying the cosmological constant]\label{rem:varying-Lambda}
For notational clarity we fix $\Lambda>0$ throughout most of the paper and view $(M,a)$ as the parameter vector.  
All constructions below (including the geometric quantities attached to photon orbits, the semiclassical quantization, and the analytic labeling of high-frequency modes) depend real-analytically on $\Lambda$ as well.  
In Section~\ref{sec:three-parameter} we therefore enlarge the parameter vector to $(M,a,\Lambda)$ and prove a three-parameter inverse theorem by adding a single damping observable.
\end{remark}

By compactness and the strict ordering in \eqref{eq:horizon_roots}, the outer-horizon separation
\begin{equation}\label{eq:Ldef}
L(M,a):=r_c(M,a)-r_e(M,a)
\end{equation}
is bounded above and below by positive constants on $\mathcal K$.
Consequently there exists $\delta_0\in(0,1)$ such that for all $(M,a)\in\mathcal K$,
\begin{equation}\label{eq:uniform_delta}
r_e-\delta_0\,L>r_-+\delta_0\,L>0,\qquad
\Delta_r(r)\neq 0\ \text{for}\ r\in [r_e-\delta_0 L,\,r_e)\cup (r_c,\,r_c+\delta_0 L].
\end{equation}
Set
\begin{equation}\label{eq:IMa-def}
I_{M,a}:=(r_e(M,a)-\delta_0\,L(M,a),\ r_c(M,a)+\delta_0\,L(M,a))
\end{equation}
and fix the reference interval $I:=(-\delta_0,\,1+\delta_0)$.
For each $(M,a)\in\mathcal K$, define the affine diffeomorphism
\begin{equation}\label{eq:affine_identification}
\Phi_{M,a}: I\to I_{M,a},\qquad
\Phi_{M,a}(x):=r_e(M,a)+x\,(r_c(M,a)-r_e(M,a)).
\end{equation}
We thus identify $I_{M,a}\times\mathbb{S}^2$ with the fixed manifold
\begin{equation}\label{eq:fixed_X}
X:= I\times\mathbb{S}^2_{\theta,\varphi}
\end{equation}
via $(x,\theta,\varphi)\mapsto (\Phi_{M,a}(x),\theta,\varphi)$.
Pulling back tensors by $\Phi_{M,a}$, we may regard $g_{M,a}$, $\Box_{g_{M,a}}$, and the stationary
family $P_{M,a}(\omega)$ defined below as a real-analytic family of operators on a fixed underlying
manifold. This fixed-manifold viewpoint is convenient for the inverse analysis, where we differentiate
spectral data with respect to parameters.

\subsection{Regular coordinates and the future extension across horizons}\label{subsec:starcoords}

The Boyer--\allowbreak Lindquist chart \eqref{eq:kds_metric_BL} is singular at the horizons where $\Delta_r=0$.
For the microlocal/Fredholm definition of quasinormal modes, it is standard to pass to
Eddington--Finkelstein type coordinates which extend smoothly across the \emph{future} event and
cosmological horizons; see, for instance, \cite[\S2]{PetersenVasy2025WaveKdS} and
\cite[\S1.1]{PetersenVasy2023AnalyticityQNM} for a careful discussion in the Kerr--de~Sitter setting.

\paragraph{Star coordinates.}
Fix a smooth function $f\in C^\infty(I)$ with $f\equiv -1$ near $x=0$ and $f\equiv +1$ near $x=1$.
On $I_{M,a}$ we write $f_{M,a}(r):=f(\Phi_{M,a}^{-1}(r))$.
Define the tortoise-type derivatives
\begin{equation}\label{eq:tortoise_primitives}
F_{M,a}'(r)=\frac{r^2+a^2}{\Delta_r(r)}\,f_{M,a}(r),\qquad
G_{M,a}'(r)=\frac{a\,\Xi}{\Delta_r(r)}\,f_{M,a}(r),
\end{equation}
and impose the differential relations
\begin{equation}\label{eq:star_differentials}
dt_* := dt - F_{M,a}'(r)\,dr,\qquad
d\varphi_* := d\varphi - G_{M,a}'(r)\,dr.
\end{equation}
Since $f_{M,a}$ is constant near each horizon, \eqref{eq:star_differentials} agrees near $r=r_e$
with the standard ingoing Eddington--Finkelstein transformation and near $r=r_c$ with the corresponding
choice that is regular at the future cosmological horizon.
In particular, expressed in $(t_*,r,\theta,\varphi_*)$ the metric extends smoothly across both
$r=r_e$ and $r=r_c$.

\paragraph{An extended spacetime.}
Using the identification \eqref{eq:affine_identification}, we work on the fixed extended spacetime
\begin{equation}\label{eq:fixed_spacetime}
\mathcal M := \mathbb{R}_{t_*}\times X,
\quad\text{with coordinates}\quad (t_*,x,\theta,\varphi_*),
\end{equation}
where the parameter-dependent metric $g_{M,a}$ is pulled back from
$\mathbb{R}_{t_*}\times I_{M,a}\times\mathbb{S}^2$ via $\Phi_{M,a}$.
We write $\mathcal M^\circ\subset\mathcal M$ for the physical domain of outer communications,
corresponding to $x\in(0,1)$ (i.e.\ $r\in(r_e,r_c)$).

\paragraph{Killing fields.}
In $(t_*,\varphi_*)$ coordinates, the stationary and axial Killing fields are
\begin{equation}\label{eq:Killing}
T:=\partial_{t_*},\qquad \Phi:=\partial_{\varphi_*}.
\end{equation}
We define quasinormal modes via Fourier analysis with respect to $T$.

\begin{remark}[Alternative Killing fields and the full subextremal range]\label{rem:altKilling}
In the slow-rotation regime considered here, the Kerr--de~Sitter Fredholm setup of Vasy \cite{Vasy2013MicrolocalAHKdS}
and Dyatlov \cite{Dyatlov2011QNMKerrDeSitter,Dyatlov2012AsymptoticQNMKdS} is naturally formulated using the stationary
generator $T=\partial_{t_*}$, and this is the convention we adopt for quasinormal frequencies.
In the full subextremal Kerr--de~Sitter range, Petersen--Vasy \cite{PetersenVasy2025WaveKdS} work with a modified stationary
problem associated with a Killing field of the form $K=T+\Omega\,\Phi$ in order to obtain global mode expansions.
Since our inverse statements are restricted to a compact slow-rotation set $\mathcal K$, we keep the canonical choice $T$.
\end{remark}

\begin{remark}[Frequencies in $(t_*,\varphi_*)$ and Boyer--Lindquist $(t,\varphi)$]\label{rem:tstar-t}
The differential relations \eqref{eq:star_differentials} integrate (after fixing constants of integration) to
$t_*=t-F_{M,a}(r)$ and $\varphi_*=\varphi-G_{M,a}(r)$.
Hence $\partial_{t_*}=\partial_t$ and $\partial_{\varphi_*}=\partial_\varphi$.
In particular, the temporal frequency $\omega$ and azimuthal number $k$ in the separated ansatz
$e^{-i\omega t}e^{ik\varphi}$ used in Section~\ref{sec:HF-quantization} agree with those appearing in the
Fourier decomposition with respect to $(t_*,\varphi_*)$ in Definition~\ref{def:QNM}.
\end{remark}

\subsection{The wave operator and stationary reduction}\label{subsec:stationary_reduction}

\paragraph{Wave operator.}
Let $\Box_g$ denote the d'Alembertian associated with $g=g_{M,a}$:
\begin{equation}\label{eq:dalembertian}
\Box_g u = |g|^{-1/2}\,\partial_\mu\bigl(|g|^{1/2}g^{\mu\nu}\partial_\nu u\bigr),
\end{equation}
where $|g|:=|\det(g_{\mu\nu})|$.
We work with the scalar wave operator
\begin{equation}\label{eq:P_def}
P:=\Box_g.
\end{equation}
(The discussion applies with minor changes to $\Box_g-\mu^2$ and to tensorial wave operators, but we
restrict to the scalar case for clarity.)
The principal symbol of $P$ is
\begin{equation}\label{eq:principal_symbol}
p(x,\xi)=g^{\mu\nu}(x)\,\xi_\mu\xi_\nu,\qquad (x,\xi)\in T^*\mathcal M.
\end{equation}

\paragraph{Stationary family.}
Since $[P,T]=0$, we may Fourier transform in $t_*$.
For $u\in C_c^\infty(\mathbb{R}_{t_*};C^\infty(X))$, define
\[
\widehat u(\omega,\cdot):=\int_{\mathbb{R}} e^{i\omega t_*}u(t_*,\cdot)\,dt_*.
\]
Formally, $u(t_*,x)=e^{-i\omega t_*}v(x)$ solves $Pu=0$ if and only if
\begin{equation}\label{eq:stationary_family}
P(\omega)v=0,\qquad
P(\omega):=e^{i\omega t_*}\,P\,e^{-i\omega t_*}\big|_{t_*=\mathrm{const}},
\end{equation}
i.e.\ $P(\omega)$ is obtained from $P$ by replacing $D_{t_*}:=\frac1{i}\partial_{t_*}$ by $\omega$.
In local coordinates on $X$, $P(\omega)$ is a second order differential operator whose coefficients
depend real-analytically on $(M,a)\in\mathcal K$ and polynomially on $\omega$:
\begin{equation}\label{eq:polynomial_omega}
P(\omega)=P_0+\omega P_1+\omega^2 P_2,
\end{equation}
where $P_j$ are (parameter-dependent) differential operators on $X$ of order $\le 2-j$.
Its principal symbol on $T^*X$ is
\begin{equation}\label{eq:principal_symbol_stationary}
p_\omega(x,\xi):=p\bigl((t_*,x),\omega\,dt_*+\xi\bigr)\Big|_{t_*=\mathrm{const}},
\qquad (x,\xi)\in T^*X,
\end{equation}
which is independent of $t_*$.

\paragraph{Azimuthal decomposition.}
Since $[P,\Phi]=0$, we can decompose in Fourier modes in $\varphi_*$:
\begin{equation}\label{eq:azimuthal_modes}
L^2(X)=\bigoplus_{k\in\mathbb Z} L^2_k(X),\qquad
L^2_k(X):=\{v\in L^2(X):\ \Phi v = ik\,v\}.
\end{equation}
The operator $P(\omega)$ preserves $L^2_k(X)$; we denote the restriction by $P_k(\omega)$.

\subsection{Sobolev spaces, Fredholm setup, and the definition of quasinormal modes}\label{subsec:fredholm_qnm}

Quasinormal modes are most robustly defined as poles of a meromorphically continued inverse of
$P(\omega)$, formulated as a Fredholm problem with outgoing conditions at the horizons.
In the Kerr--de~Sitter setting, this is achieved by combining radial-point microlocal analysis with
analytic Fredholm theory; see Vasy's general framework \cite{Vasy2013MicrolocalAHKdS} and, for Kerr--de~Sitter
specifically, Dyatlov \cite{Dyatlov2011QNMKerrDeSitter} (slow rotation) and Petersen--Vasy
\cite{PetersenVasy2025WaveKdS} (full subextremal range).

\paragraph{Variable order Sobolev spaces.}
Fix a smooth positive density $d\mu$ on $X$.
For $s\in\mathbb{R}$, let $H^s(X)$ be the standard Sobolev space.
We also use variable order Sobolev spaces $H^{\mathbf s}(X)$, where the order function
$\mathbf s\in C^\infty(S^*X)$ varies on the cosphere bundle; we refer to \cite[\S2]{Vasy2013MicrolocalAHKdS}
for a systematic construction and mapping properties.

\paragraph{Outgoing order functions and Fredholm spaces.}
For fixed $\omega\in\mathbb{C}$, the characteristic set $\Sigma_\omega=\{p_\omega=0\}\subset T^*X$
has radial sets at the horizons. One chooses an order function $\mathbf s=\mathbf s_\omega$ which is
above the radial point threshold at the \emph{future} radial sets (thus imposing outgoing regularity),
below threshold at the \emph{past} radial sets, and strictly decreasing along the Hamilton flow between them;
this is the standard choice in the Kerr--de~Sitter Fredholm setup \cite{Vasy2013MicrolocalAHKdS,PetersenVasy2025WaveKdS}.
Given such $\mathbf s$, define
\begin{equation}\label{eq:XsYs}
\mathcal Y^{\mathbf s}:=H^{\mathbf s}(X),\qquad
\mathcal X^{\mathbf s}:=\{u\in H^{\mathbf s}(X):\ P(\omega)u\in H^{\mathbf s-1}(X)\},
\end{equation}
endowed with the graph norm
$\|u\|_{\mathcal X^{\mathbf s}}:=\|u\|_{H^{\mathbf s}}+\|P(\omega)u\|_{H^{\mathbf s-1}}$.
Then $P(\omega):\mathcal X^{\mathbf s}\to\mathcal Y^{\mathbf s-1}$ is bounded.

\begin{theorem}[Meromorphic resolvent and quasinormal modes]\label{thm:meromorphy}
Fix $\Lambda>0$ and $(M,a)\in\mathcal K\Subset \mathcal P_\Lambda$.
For each $\omega\in\mathbb{C}$ one can choose an outgoing variable order $\mathbf s_\omega$ such that
\begin{equation}\label{eq:fredholm}
P(\omega):\mathcal X^{\mathbf s_\omega}\longrightarrow \mathcal Y^{\mathbf s_\omega-1}
\end{equation}
is Fredholm of index $0$.
Moreover, there exists $\omega_0\in\mathbb{R}$ such that for $\Im \omega>\omega_0$ the operator $P(\omega)$
is invertible, and the inverse family
\begin{equation}\label{eq:resolvent_family}
R(\omega):=P(\omega)^{-1}:\mathcal Y^{\mathbf s_\omega-1}\longrightarrow \mathcal X^{\mathbf s_\omega}
\end{equation}
extends meromorphically to $\omega\in\mathbb{C}$ with finite rank residues.
The set of poles $\mathrm{Res}(P)\subset\mathbb{C}$ is discrete and is called the set of
\emph{quasinormal mode frequencies} (or \emph{resonances}).
For $\omega\in \mathrm{Res}(P)$, the nontrivial elements of $\ker P(\omega)\subset\mathcal X^{\mathbf s_\omega}$
are called \emph{resonant states} (quasinormal modes).
\end{theorem}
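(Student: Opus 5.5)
This is the Vasy framework applied to the slowly rotating Kerr--de~Sitter geometry. The plan is to verify its dynamical hypotheses uniformly on $\mathcal K$ and then run the standard Fredholm and analytic Fredholm argument.

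\emph{Geometric input.} First I check that for $(M,a)\in\mathcal K$, with $a_0$ small, the extended manifold has the right structure. The time function $t_*$ is such that $dt_*$ is timelike on $\mathcal M^\circ$ near the horizons. The boundaries $x=-\delta_0$ and $x=1+\delta_0$ of $X$ are spacelike hypersurfaces lying beyond the horizons; this is guaranteed by \eqref{eq:uniform_delta}, since $\Delta_r<0$ there. The characteristic set $\Sigma_\omega$ splits into two components $\Sigma_\pm$. The radial sets $L_\pm$ are the conormal bundles to the future horizons $r=r_e,r_c$ at fiber infinity, and they are sources or sinks for the rescaled Hamilton flow. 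Their threshold quantities are $\tfrac12-\Im\omega/\kappa_h$, with $\kappa_h$ as in \eqref{eq:surface_gravity}. All remaining trapping is the normally hyperbolic photon region near $r\approx 3M$, which is stable under small $a$. These facts are computed explicitly for $a=0$ and persist for small $a$ because $g_{M,a}$ depends real-analytically and is $C^\infty$-close to de~Sitter--Schwarzschild.

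\emph{Fredholm estimates.} For fixed $\omega$ I choose $\mathbf s_\omega$ monotone along the flow. It lies above threshold at the radial sets $L_\pm$, which gives radial-point estimates with the high regularity propagated outward. It lies below threshold near the spacelike boundaries, where I use hyperbolic/energy estimates with vanishing data, i.e.\ supported-extendible spaces. Combining elliptic regularity, real-principal-type propagation, radial estimates, and the normally hyperbolic trapping estimate of Wunsch--Zworski/Dyatlov (which costs no derivatives in the non-semiclassical setting, only a compact error) yields
\[
\|u\|_{\mathcal X^{\mathbf s}}\le C\|P(\omega)u\|_{\mathcal Y^{\mathbf s-1}}+C\|u\|_{H^{-N}}.
\]
I also prove the dual estimate for $P(\omega)^*$ on the dual spaces, where the thresholds are reversed and the boundary conditions are swapped. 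Together with compactness of $H^{\mathbf s}\hookrightarrow H^{-N}$, these give the Fredholm property.

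\emph{Index and invertibility.} For $\Im\omega\gg1$, I prove invertibility by a semiclassical argument. Rescale $h=|\omega|^{-1}$ with $\omega/|\omega|$ in the upper half plane. Then $P(\omega)$ is semiclassically elliptic or has absorbing imaginary part on the trapped set. The non-trapping-type estimates therefore hold without the compact error for large $\Im\omega$, which gives injectivity of $P(\omega)$ and $P(\omega)^*$. Alternatively one can use an energy estimate with the $dt_*$-timelike structure to get exponential growth bounds $e^{\omega_0 t_*}$ and take a Laplace transform. Invertibility at one point, together with continuity of the family $\omega\mapsto P(\omega)$ in \eqref{eq:polynomial_omega} on a fixed space, shows the index is $0$ everywhere.

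\emph{Meromorphic continuation.} The analytic Fredholm theorem then gives a meromorphic inverse with finite rank residues on each region where $\mathbf s$ can be fixed (for $\Im\omega>-C$ with $\mathbf s$ satisfying thresholds there). The poles do not depend on the choice of $\mathbf s$: resonant states are smooth by the radial estimates, so the poles can be patched across regions. This gives the continuation to all of $\mathbb C$ and discreteness of $\mathrm{Res}(P)$.

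The main obstacle I expect is the uniform trapping estimate and its stability in $a$. Here I would simply invoke Dyatlov \cite{Dyatlov2011QNMKerrDeSitter} and Vasy \cite{Vasy2013MicrolocalAHKdS}, which establish exactly this for slow rotation.
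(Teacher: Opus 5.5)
The paper gives no proof of this theorem. It imports the result from Vasy's framework, as verified for Kerr--de~Sitter in \cite{Vasy2013MicrolocalAHKdS,Dyatlov2011QNMKerrDeSitter,PetersenVasy2025WaveKdS}. Your Fredholm/analytic-Fredholm skeleton, including the patching of continuations for increasing orders, is therefore the intended route. The problem is in your geometric input: you misplace the trapping, and so you misidentify what actually has to be checked.

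The order $\mathbf s_\omega$ and your Fredholm estimate live at fiber infinity of $T^*X$. There the principal symbol of $P(\omega)$ is $g^{-1}(\zeta,\zeta)$ on covectors with $\zeta(\partial_{t_*})=0$. This form is positive definite wherever $\partial_t$ is timelike, and near $r=3M$ that holds for small $|a|$, since there $\rho^2g_{tt}=-\Delta_r+a^2\Delta_\theta\sin^2\theta<0$. Consequently:
\begin{itemize}
\item the photon region lies in the \emph{elliptic} set of $P(\omega)$;
\item $\Sigma_\omega$ is confined to the ergoregions and the regions beyond the horizons;
\item at fixed $\omega$ nothing is trapped, since the photon sphere is trapped only for the semiclassical characteristic set, i.e.\ as $|\Re\omega|\to\infty$.
\end{itemize}
So the Wunsch--Zworski/Dyatlov estimate has no role in your fixed-$\omega$ estimate. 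Moreover, trapping estimates always carry a loss relative to non-trapping ones, so ``costs no derivatives, only a compact error'' is not something they supply.

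What you actually need, and do not state, is the non-trapping hypothesis of Vasy's framework. For $a\neq0$ the set $\Sigma_\omega$ enters the ergoregions inside $(r_e,r_c)$. You must check that every bicharacteristic of $\Sigma_\pm$ at fiber infinity tends to $L_\pm$ in one direction and reaches the spacelike boundary $x=-\delta_0$ or $x=1+\delta_0$ in the other. This can be done by perturbation from $a=0$. That is where slow rotation enters this theorem, and it is the real ``main obstacle.'' Trapping matters only for high-energy statements (resonance-free strips, polynomial resolvent bounds), none of which is asserted here.

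There are also three smaller corrections. First, $dt_*$ is \emph{not} timelike for the paper's star coordinates. With $f\equiv\pm1$ near the horizons in \eqref{eq:tortoise_primitives}, one computes $\rho^2g^{-1}(dt_*,dt_*)=a^2\sin^2\theta/\Delta_\theta\ge0$ there, and it is null when $a=0$. You must first replace $t_*$ by $t_*+\psi(r)$ with $\psi$ smooth across the horizons. For example, take $F'_{M,a}\mapsto F'_{M,a}-c\,f_{M,a}$ with a small constant $c>0$ and $a^2\ll c$. This replaces $P(\omega)$ by $e^{i\omega\psi}P(\omega)e^{-i\omega\psi}$, which is harmless for every assertion of the theorem. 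A timelike $dt_*$ is what yields the splitting $\Sigma_\omega=\Sigma_+\cup\Sigma_-$, and it is what your energy-estimate argument uses.

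Second, there is no threshold condition at the spacelike boundaries. The order $\mathbf s_\omega$ must be above threshold at $L_\pm$ and non-increasing along the flow away from them. It is the dual order $1-\mathbf s_\omega$ that is below threshold at $L_\pm$.

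Third, a semiclassical rescaling with $\omega/|\omega|$ in a compact subset of the upper half-plane misses the part of $\{\Im\omega>\omega_0\}$ where $|\Re\omega|\gg\Im\omega$. There the absorption is only of size $h\Im\omega$, so $\omega_0$ must be large enough for it to dominate the subprincipal terms. Your energy-estimate alternative, with the corrected $t_*$, covers the whole half-plane at once.
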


\begin{definition}[Quasinormal modes]\label{def:QNM}
Fix parameters $(M,a)$ in the subextremal Kerr--de~Sitter regime and let $P(\omega)$ be the stationary family
\eqref{eq:stationary_family}.
A complex number $\omega\in\mathbb C$ is called a \emph{quasinormal mode frequency} (or \emph{resonance}) if $P(\omega)$ fails to be invertible
in the outgoing Fredholm setup of Theorem~\ref{thm:meromorphy}, equivalently if $\omega$ is a pole of the meromorphic inverse family $R(\omega)=P(\omega)^{-1}$.
A nontrivial element $u\in\ker P(\omega)\subset\mathcal X^{\mathbf s_\omega}$ is called a \emph{resonant state} (a quasinormal mode).
For each azimuthal sector $k\in\mathbb Z$, the same definition applied to the restricted family $P_k(\omega)$ gives the discrete
set of $k$--mode resonances $\mathrm{Res}(P_k)$.
\end{definition}

\begin{remark}[Outgoing boundary conditions and regularity]\label{rem:smoothness}
In the Kerr--de~Sitter setting, resonant states correspond to mode solutions
$u(t_*,x)=e^{-i\omega t_*}v_\omega(x)$ which extend smoothly across the \emph{future} event and cosmological
horizons; this gives a precise mathematical implementation of the outgoing boundary condition used in the
physics literature.
If the coefficients are real analytic, then in the subextremal regime the resonant states are in fact
real analytic near the horizons; see Petersen--Vasy \cite{PetersenVasy2023AnalyticityQNM}.
\end{remark}

\begin{remark}[Analytic dependence on parameters]\label{rem:analytic_parameters}
Because we work on a fixed manifold $X$ and the coefficients of $P(\omega)$ depend real-analytically
on $\mu:=(M,a)\in\mathcal K$, the stationary family $(\omega,\mu)\mapsto P(\omega;\mu)$ is a real-analytic
family of Fredholm operators in the Kerr--de~Sitter setup of Theorem~\ref{thm:meromorphy}.
In particular, near a \emph{simple} resonance $\omega_0$ at $\mu_0$, analytic perturbation theory implies
that $\omega_0$ admits a locally real-analytic continuation $\omega(\mu)$ (and likewise for the associated
finite-rank spectral projection) in a neighborhood of $\mu_0$; see \cite[Chapters~VII--VIII]{KatoPerturbation}.
We only use this analytic dependence at the end of the paper to upgrade $C^1$ local inverses to real-analytic ones.
All stability estimates and implicit-function arguments in the inverse problem itself require only uniform $C^1$
control of the relevant families on the compact parameter set $\mathcal K$.
\end{remark}

\medskip

We will be interested in a semiclassical high-frequency regime with fixed overtone $n$ and large total angular
momentum $\ell$ in the separated problem.
The semiclassical parameter used throughout the quantization and inverse analysis will be introduced in
Section~\ref{sec:HF-quantization}; the present section is concerned only with the geometric and Fredholm setup.

\section{High-frequency quantization and labeling}\label{sec:HF-quantization}

This section recalls the semiclassical description of high-frequency Kerr--de~Sitter
quasinormal modes (QNMs) in a form adapted to the inverse problems studied later.
The decisive input is Dyatlov's all-orders Bohr--Sommerfeld type quantization for
slowly rotating Kerr--de~Sitter black holes \cite{Dyatlov2011QNMKerrDeSitter,Dyatlov2012AsymptoticQNMKdS},
built on Vasy's microlocal Fredholm framework \cite{Vasy2013MicrolocalAHKdS} and
normally hyperbolic trapping estimates in the sense of Wunsch--Zworski \cite{WunschZworski2011ResolventNHT}.
We also record two explicit computations: the Schwarzschild--de~Sitter leading lattice and the
linear-in-$a$ equatorial splitting (``Zeeman slope'') that will be used in the reconstruction.

Throughout this section, $\Lambda>0$ is fixed and $(M,a)$ lie in the slowly rotating subextremal regime.
We work with the frequency parameter $\omega\in\mathbb C$ appearing in the stationary reduction of
Section~\ref{sec:setup} and in the separation of variables below.
We use the convention $D_x:=-i\,\partial_x$.

\subsection{Separation and \texorpdfstring{$\lambda$}{lambda}--resolvents}\label{subsec:separation-lambda-resolvent}

\begin{sloppypar}
In Boyer--\allowbreak Lindquist coordinates $(t,r,\theta,\varphi)$ on the domain of outer communications, the stationary wave equation
$\Box_g u=0$ separates at a fixed temporal frequency $\omega$.
We decompose in the azimuthal mode $k\in\mathbb Z$.
More precisely, writing
\end{sloppypar}
\[
u(t,r,\theta,\varphi)=e^{-i\omega t}e^{ik\varphi}v(r,\theta),
\]
the reduced operator acting on $v$ splits as
\begin{equation}\label{eq:PrPtheta}
P(\omega)=P_r(\omega,k)+P_\theta(\omega),
\end{equation}
where
\begin{align}
P_r(\omega,k)
&:=D_r\bigl(\Delta_r D_r\bigr)\;-\;\frac{\Xi^2}{\Delta_r}\Bigl((r^2+a^2)\omega-a k\Bigr)^2,
\label{eq:Pr-def}\\
P_\theta(\omega)
&:=\frac1{\sin\theta}D_\theta\bigl(\Delta_\theta\sin\theta\,D_\theta\bigr)
+\frac{\Xi^2}{\Delta_\theta\sin^2\theta}\Bigl(a\omega\sin^2\theta-D_\varphi\Bigr)^2.
\label{eq:Ptheta-def}
\end{align}
(Here $\Delta_r,\Delta_\theta,\Xi$ are as in Section~\ref{sec:setup}.)

The separation constant $\lambda\in\mathbb C$ couples the radial and angular problems: a separated mode corresponds
to a pair $(v_r,v_\theta)$ satisfying
\begin{equation}\label{eq:separated-eq}
\bigl(P_r(\omega,k)+\lambda\bigr)v_r=0,\qquad
\bigl(P_\theta(\omega)+\lambda\bigr)v_\theta=0.
\end{equation}

Fix $\delta>0$ and set $K_r:=(r_e+\delta,r_c-\delta)\Subset(r_e,r_c)$.
Following \cite{Dyatlov2011QNMKerrDeSitter,Dyatlov2012AsymptoticQNMKdS}, we consider the cutoff radial resolvent
\begin{equation}\label{eq:Rr-def}
R_r(\omega,\lambda,k):=\mathbf{1}_{K_r}\bigl(P_r(\omega,k)+\lambda\bigr)^{-1}\mathbf{1}_{K_r},
\end{equation}
which is a meromorphic family in $\lambda$ (for fixed $(\omega,k)$) with finite-rank residues;
the pole condition encodes the outgoing boundary condition at the horizons via the Kerr--de~Sitter Fredholm setup.
Similarly, since $P_\theta(\omega)$ is elliptic on $\mathbb S^2$ for each fixed $\omega$,
\begin{equation}\label{eq:Rtheta-def}
R_\theta(\omega,\lambda):=\bigl(P_\theta(\omega)+\lambda\bigr)^{-1}
\end{equation}
is meromorphic in $\lambda$ with finite-rank residues, and restricting to the $k$th azimuthal sector amounts to
acting on $e^{ik\varphi}$-modes.

We now make precise the equivalence between QNMs in a fixed azimuthal sector and common $\lambda$--poles.

\begin{proposition}[QNMs and common $\lambda$--poles]\label{lem:QNM-common-poles}
Fix $k\in\mathbb Z$ and $\omega\in\mathbb C$.
Then $\omega$ is a quasinormal mode \emph{in the $k$th azimuthal sector}
(i.e.\ a pole of the meromorphically continued inverse of the stationary operator $P_k(\omega)$ on $e^{ik\varphi}$-modes)
if and only if there exists $\lambda\in\mathbb C$ such that $\lambda$ is simultaneously a pole of the cutoff
radial resolvent $R_r(\omega,\lambda,k)$ and a pole of the angular resolvent $R_\theta(\omega,\lambda)$
restricted to the $k$th azimuthal subspace.
Equivalently, there exist nontrivial $v_r$ and $v_\theta$ solving \eqref{eq:separated-eq}, with $v_\theta$ smooth on
$\mathbb S^2$ and $v_r$ satisfying the outgoing condition at $r=r_e$ and $r=r_c$.
\end{proposition}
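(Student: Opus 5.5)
The plan follows the strategy of Dyatlov \cite[\S\S1--2]{Dyatlov2011QNMKerrDeSitter}: express the stationary operator on the $k$th sector as a sum of commuting radial and angular pieces, then compare the full resolvent with a spectral decomposition in the angular variable. After multiplying by the weight $\rho^2$ (which does not change the kernel), the operator restricted to $e^{ik\varphi}$-modes takes the form $\rho^2P_k(\omega)=P_r(\omega,k)+P_\theta(\omega)$ as in \eqref{eq:PrPtheta}.

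\emph{Angular decomposition.} For fixed $\omega$, $P_\theta(\omega)$ on the $k$-sector is an elliptic second-order operator on $\mathbb S^2$. Its nonselfadjointness comes from lower-order terms of size $O(a\omega)$. Hence, for each $\omega$, $L^2_k(\mathbb S^2)$ decomposes into generalized eigenspaces $\{\mathcal E_j(\omega)\}_j$ of $P_\theta(\omega)$, with eigenvalues $-\lambda_j(\omega)$; these are exactly the poles of $R_\theta(\omega,\cdot)$. I would use the slow-rotation assumption ($a$ small on $\mathcal K$) to guarantee completeness of this system, arguing as a relatively compact perturbation of the spherical Laplacian in Kato's framework \cite{KatoPerturbation}. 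The same assumption gives uniform bounds on the spectral projectors outside a neighbourhood of the finitely many non-simple eigenvalues.

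\emph{Forward direction ($\Leftarrow$).} Suppose $\lambda$ is a common pole. Then there is a nontrivial $v_\theta\in\ker(P_\theta(\omega)+\lambda)$, which is smooth by ellipticity. Taking the residue of $R_r$ at $\lambda$ yields a nontrivial outgoing solution $v_r$ of $(P_r(\omega,k)+\lambda)v_r=0$. Outgoing here means smooth across $r_e,r_c$ in the star coordinates \eqref{eq:star_differentials}, equivalently of the form $(r-r_h)^{\pm i(\cdots)}\times$smooth in Boyer--Lindquist coordinates. The product $u=v_r\otimes v_\theta$ is then smooth across the future horizons and satisfies $P_k(\omega)u=0$. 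It therefore lies in $\mathcal X^{\mathbf s_\omega}$ for the outgoing order of Theorem~\ref{thm:meromorphy}, since smooth functions are above threshold at the future radial sets. So $\omega$ is a resonance.

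\emph{Converse ($\Rightarrow$).} Let $u\in\ker P_k(\omega)$ be nontrivial. By radial-point propagation, $u$ is smooth across the future horizons (Remark~\ref{rem:smoothness}). Project $u$, for each fixed $r$, onto the generalized eigenspaces $\mathcal E_j(\omega)$ using the Riesz projector $\Pi_j=\frac1{2\pi i}\oint R_\theta\,d\lambda$. Since $\Pi_j$ commutes with $P_r+P_\theta$, each $\Pi_ju$ is annihilated by the operator. Completeness gives some $j$ with $\Pi_ju\ne0$. If $\mathcal E_j$ is a Jordan block, I apply $(P_\theta+\lambda_j)^m$ to reach a genuine eigenvector component. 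This produces $v_r(r)\otimes v_\theta$ with $(P_r(\omega,k)+\lambda_j)v_r=0$ and $v_r$ outgoing and nontrivial. Nontriviality on $K_r$ follows from unique continuation for the radial ODE. An outgoing solution of the radial ODE then forces a pole of the cutoff resolvent $R_r(\omega,\cdot,k)$ at $\lambda_j$. This is because the radial resolvent is built from the Wronskian of the two outgoing Frobenius solutions at $r_e$ and $r_c$, and it has a pole precisely where these are linearly dependent.

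\emph{Main obstacle.} The delicate point is the converse's use of completeness and the interchange of the angular projection with the Fredholm spaces. One must check that $\Pi_j$ preserves $\mathcal X^{\mathbf s_\omega}$, i.e.\ that the variable-order regularity (which lives near the horizons and is essentially radial) commutes with angular projections. I would handle this by choosing $\mathbf s_\omega$ depending only on the $(r,\xi_r)$ variables near the radial sets. Then, since $\Pi_j$ is an angular pseudodifferential operator of order $0$ with smooth kernel in $r$, it preserves these spaces. I expect the completeness statement for nonselfadjoint $P_\theta(\omega)$ at large $|\omega|$ to need the small-$a$ hypothesis, precisely as in \cite{Dyatlov2011QNMKerrDeSitter}.
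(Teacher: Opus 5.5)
Your proposal is correct and takes the same route as the paper's proof: you project a resonant state onto an angular generalized eigenspace of $P_\theta(\omega)$ to obtain a separated outgoing solution, and conversely you multiply the residue states of $R_r$ and $R_\theta$, while spelling out what the paper leaves implicit (Riesz projectors, reduction of Jordan blocks, the Wronskian criterion, and the completeness needed to get some $\Pi_j u\neq0$; that completeness in fact holds for every $\omega$ and $a$ by Keldysh's theorem applied to $P_\theta(\omega)^*=P_\theta(\bar\omega)$ on the $k$-sector, so no slow-rotation input is needed there). The one thing to correct is a sign you inherited, together with the paper, from \eqref{eq:separated-eq}: with $+\lambda$ in both equations one gets $(P_r+P_\theta)(v_rv_\theta)=-2\lambda\,v_rv_\theta$, so the angular problem must be $(P_\theta(\omega)-\lambda)v_\theta=0$ with $R_\theta=(P_\theta(\omega)-\lambda)^{-1}$ (Dyatlov's convention, and the one behind the paper's remark that the $a=0$ poles are at $\lambda=\ell(\ell+1)$), i.e.\ the eigenvalues of $P_\theta(\omega)$ in your converse are $\lambda_j$ rather than $-\lambda_j$, after which your argument goes through unchanged.
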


\begin{proof}
Suppose first that $\omega$ is a QNM in the $k$th sector.
By definition (Theorem~\ref{thm:meromorphy}, applied to the restriction $P_k(\omega)$), there exists a nontrivial
resonant state $u(t,r,\theta,\varphi)=e^{-i\omega t}e^{ik\varphi}v(r,\theta)$ satisfying the outgoing condition at the
event and cosmological horizons.
Substituting into $\Box_g u=0$ gives $P(\omega)v=0$, i.e.\ \eqref{eq:PrPtheta} with $P_r(\omega,k)$ and $P_\theta(\omega)$.
Since $P_\theta(\omega)$ is elliptic on $\mathbb S^2$, the restriction of $R_\theta(\omega,\lambda)$ to the $k$th azimuthal
subspace is a meromorphic family in $\lambda$ whose poles are precisely the (discrete) eigenvalues $-\lambda$ for which the
equation $(P_\theta(\omega)+\lambda)v_\theta=0$ has a nontrivial smooth solution.
Expanding $v(r,\theta)$ in the corresponding (finite-dimensional) eigenspace at such a pole, one finds a separation constant
$\lambda$ for which \eqref{eq:separated-eq} holds with $v_\theta\not\equiv0$.
The outgoing condition on $u$ implies that the corresponding radial factor $v_r$ satisfies the outgoing condition at the horizons,
which is equivalent to $\lambda$ being a pole of the cutoff radial resolvent $R_r(\omega,\lambda,k)$; see
\cite{Vasy2013MicrolocalAHKdS,Dyatlov2011QNMKerrDeSitter,Dyatlov2012AsymptoticQNMKdS} for the microlocal construction.

Conversely, assume that $\lambda$ is a common pole of the two meromorphic families.
Then the residue of $R_\theta(\omega,\lambda)$ at $\lambda$ yields a nontrivial smooth angular mode $v_\theta$ with
$(P_\theta(\omega)+\lambda)v_\theta=0$ in the $k$th sector.
Likewise, the residue of $R_r(\omega,\lambda,k)$ yields a nontrivial radial solution $v_r$ of
$(P_r(\omega,k)+\lambda)v_r=0$ with the outgoing behavior at the horizons.
The separated product
\[
u(t,r,\theta,\varphi):=e^{-i\omega t}e^{ik\varphi}v_r(r)v_\theta(\theta)
\]
then satisfies $\Box_g u=0$ and the outgoing condition, hence $u$ is a resonant state and $\omega$ is a QNM in the $k$th sector.
\end{proof}

\subsection{Semiclassical scaling and symbol classes}\label{subsec:semiclassical-scaling}

We focus on the high-frequency regime $\Re\omega\to+\infty$ within a fixed strip
\begin{equation}\label{eq:HF-strip}
|\Im\omega|<\nu_0,\qquad \nu_0>0\ \text{fixed}.
\end{equation}
On a dyadic block $|\Re\omega|\sim h^{-1}$ we introduce the semiclassical variables
\begin{equation}\label{eq:scaled-vars}
\tilde\omega:=h\,\Re\omega,\qquad \tilde\nu:=\Im\omega,\qquad
\tilde k:=h\,k,\qquad
\tilde\lambda:=h^2\,\Re\lambda,\qquad \tilde\mu:=h\,\Im\lambda,
\end{equation}
so that $(\tilde\omega,\tilde\nu,\tilde k,\tilde\lambda,\tilde\mu)$ remain $\mathcal O(1)$ in the region of interest.

A function $a(\tilde\omega,\tilde\nu,\tilde k;h)$ is a \emph{classical symbol of order $m$} in $(\tilde\omega,\tilde k)$ if it
admits an asymptotic expansion
\begin{equation}\label{eq:classical-symbol}
a(\tilde\omega,\tilde\nu,\tilde k;h)\sim \sum_{j\ge0} a_j(\tilde\omega,\tilde\nu,\tilde k)\,h^j,
\qquad a_j(\rho\tilde\omega,\tilde\nu,\rho\tilde k)=\rho^{m-j}a_j(\tilde\omega,\tilde\nu,\tilde k),
\end{equation}
with coefficients $a_j$ smooth in $(\tilde\omega,\tilde\nu,\tilde k)$ away from the origin.
When a quantization rule is formulated in semiclassical variables with $h\sim (\ell+\tfrac12)^{-1}$, we also use the standard
``removal of $h$'' to obtain a homogeneous (non-semiclassical) symbol: setting
\begin{equation}\label{eq:remove-h}
h:=(\ell+\tfrac12)^{-1},\qquad \tilde\ell:=h\bigl(\ell+\tfrac12\bigr)\equiv 1,\qquad \tilde k:=h\,k,
\end{equation}
we define
\[
F(n,\ell,k):=h^{-1}F^\omega\bigl(n,\tilde\ell,\tilde k;h\bigr).
\]
This produces a classical expansion in powers of $\ell^{-1}$ (see \eqref{eq:F-symbol} below).

\subsection{Quantization of radial and angular \texorpdfstring{$\lambda$}{lambda}--poles}\label{subsec:quantization-maps}

The next two propositions are extracted from the semiclassical analysis of Dyatlov
\cite{Dyatlov2012AsymptoticQNMKdS} (see also \cite{Dyatlov2011QNMKerrDeSitter}).
They encode, respectively, the hyperbolic (barrier-top) quantization for the radial ODE and the Bohr--Sommerfeld quantization
for spheroidal harmonics.
We use them as a black box, keeping track of the dependence on the physical parameters $(M,a)$.

\begin{proposition}[Radial quantization map]\label{prop:radial-quant}
Fix $\nu_0>0$. There exist constants $C_n\in\mathbb N$, $C_\lambda>0$, $C_k>0$ and a classical symbol
$F^r(n,\tilde\omega,\tilde\nu,\tilde k;h)$ of order $2$ in $(\tilde\omega,\tilde k)$, smooth in
the parameters $(M,a)$, with real principal part $F^r_0$ independent of $n$ and $\tilde\nu$, such that the poles of the
cutoff radial resolvent $R_r(\omega,\lambda,k)$ in the region
\begin{equation}\label{eq:radial-lambda-region}
1<\tilde\omega<2,\qquad |\tilde\nu|<\nu_0,\qquad |k|\le C_k h^{-1},\qquad |\lambda|\le C_\lambda h^{-2},
\end{equation}
are precisely the values
\begin{equation}\label{eq:radial-F}
\tilde\lambda+i h\tilde\mu=F^r(n,\tilde\omega,\tilde\nu,\tilde k;h),\qquad n\in\{0,1,\dots,C_n\},
\end{equation}
modulo $O(h^\infty)$, in the sense that each pole differs from the right-hand side by $O(h^N)$ for every $N$.

At $a=0$ the leading term is explicitly given by the barrier-top model at the photon sphere:
\begin{equation}\label{eq:Fr-a0}
\begin{aligned}
F^r(n,\tilde\omega,\tilde\nu,\tilde k;h)
&=\Bigl[ih\Bigl(n+\tfrac12\Bigr)+\frac{3\sqrt3\,M}{\sqrt{\Delta}}\bigl(\tilde\omega+i h\tilde\nu\bigr)\Bigr]^2
+O(h^2),\\
&\qquad \Delta:=1-9\Lambda M^2.
\end{aligned}
\end{equation}
\end{proposition}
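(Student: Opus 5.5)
The plan is to derive the statement from Dyatlov's barrier-top analysis of the radial ODE \cite{Dyatlov2011QNMKerrDeSitter,Dyatlov2012AsymptoticQNMKdS}, keeping track of the $(M,a)$-dependence throughout and checking the $a=0$ formula by hand.

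\emph{Step 1: semiclassical reduction.} First I rescale $h^2(P_r(\omega,k)+\lambda)$ in the variables \eqref{eq:scaled-vars}. This gives a semiclassical ODE operator on $(r_e,r_c)$ with principal symbol
\[
p_r=\Delta_r\xi^2-\frac{\Xi^2}{\Delta_r}\bigl((r^2+a^2)\tilde\omega-a\tilde k\bigr)^2+\tilde\lambda .
\]
The subprincipal terms are linear in $h$ and carry $\tilde\nu$ and $\tilde\mu$. In the region \eqref{eq:radial-lambda-region}, the real characteristic set $\{p_r=0\}$ has two kinds of features:
\begin{itemize}
\item[(a)] radial sinks and sources at $r_e$ and $r_c$, which the outgoing Fredholm setup of Theorem~\ref{thm:meromorphy} handles through radial estimates;
\item[(b)] at most one trapped point, namely $\xi=0$ at the critical point $r_{\rm ph}(\tilde\omega,\tilde k)$ of the effective potential $V=\Xi^2\bigl((r^2+a^2)\tilde\omega-a\tilde k\bigr)^2/\Delta_r$.
\end{itemize}
At $a=0$ one has $V=\tilde\omega^2 r^4/\Delta_r$. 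Its unique minimum on $(r_e,r_c)$ lies at $r=3M$, with value $27M^2\tilde\omega^2/(1-9\Lambda M^2)$, and this point is nondegenerate. By the implicit function theorem, the critical point persists, stays nondegenerate, and depends smoothly on $(M,a,\tilde\omega,\tilde k)$ for $|a|\le a_0$ and $|\tilde k|\le C_k$. This uses compactness of $\mathcal K$ and of the $(\tilde\omega,\tilde k)$-range. Away from this point the flow is nontrapping, so the cutoff resolvent is $O(h^{-1})$ except near $\tilde\lambda\approx -V(r_{\rm ph})$.

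\emph{Step 2: barrier-top quantization.} Near the hyperbolic fixed point I would follow Dyatlov and set up a Grushin problem in which the operator is microlocally conjugated, by an $h$-FIO depending smoothly on parameters, to the model $x hD_x+hD_x x$ plus a classical symbol in $\tilde\lambda$. Solving the effective equation gives poles exactly at $\tilde\lambda+ih\tilde\mu=F^r(n,\cdot;h)$, with $F^r$ a classical symbol and errors $O(h^\infty)$. The construction uses only Borel summation and symbolic calculus with coefficients smooth in $(M,a)$, so $F^r$ inherits smoothness. Homogeneity of order $2$ in $(\tilde\omega,\tilde k)$ follows from the scaling structure of $p_r$. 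The real principal part $F^r_0=V(r_{\rm ph})$ is independent of $n$ and $\tilde\nu$. The finite cap $C_n$ reflects the bound $|\Im|$ imposed by $|\tilde\nu|<\nu_0$ together with the region in $\tilde\mu$.

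\emph{Step 3: the $a=0$ term.} Next I compute the $a=0$ leading term explicitly. Write $c:=3\sqrt3M/\sqrt\Delta$. The principal part is $c^2\tilde\omega^2$. Including $\omega=h^{-1}(\tilde\omega+ih\tilde\nu)$ at the subprincipal level replaces $\tilde\omega$ by $\tilde\omega+ih\tilde\nu$. The model quantization contributes $ih(n+\tfrac12)$ times the Lyapunov rate; in tortoise coordinates this rate equals $2c\tilde\omega$ divided by the normalization of $\lambda$. One then checks that this matches the cross term of $[ih(n+\tfrac12)+c(\tilde\omega+ih\tilde\nu)]^2$ modulo $O(h^2)$. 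To do so, I would evaluate the Hessian of $r^4/\Delta_r$ at $3M$, together with the factor $\Delta_r/r^2$ from $dr_*$. This is essentially the S\'a Barreto--Zworski computation \cite{SaBarretoZworski1997SphericalBH}.

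\emph{Main obstacle.} The step I expect to be hardest is the \emph{uniformity and parameter-smoothness} of the $O(h^\infty)$ remainders and of the one-to-one correspondence between poles and the values $F^r$ over all of $\mathcal K$. Dyatlov's results are stated for fixed parameters. I would therefore redo the Grushin reduction with the parameters $(M,a)$ treated as additional smooth variables, and verify that all the constants involved (the radial threshold estimates, the nontrapping escape time, and the nondegeneracy of the Hessian) are uniform on the compact set. This is the content deferred to Appendix~\ref{app:parameter-quantization}.
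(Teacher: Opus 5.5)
Your plan matches the paper, which does not prove this proposition independently but imports it as a black box from Dyatlov's barrier-top analysis of the radial ODE, deferring the parameter-uniform normal-form/Grushin bookkeeping to Appendix~\ref{app:parameter-quantization} as you do; your $a=0$ check, with minimum value $c^2\tilde\omega^2$ at $r=3M$ and Hamiltonian rate $2c\tilde\omega$ at the saddle of $V-\Delta_r\xi^2$, is consistent with \eqref{eq:Fr-a0}. One small sign slip: with your $p_r$ the trapped level is $\tilde\lambda\approx +V(r_{\mathrm{ph}})$, not $-V(r_{\mathrm{ph}})$, as your own identification $F^r_0=V(r_{\mathrm{ph}})$ already indicates.
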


\begin{remark}
The structure \eqref{eq:radial-F} reflects the fact that, microlocally near trapping, the radial ODE reduces to a semiclassical
barrier-top resonance problem; the imaginary part is produced by quantization of the hyperbolic directions transverse to the trapped
torus. We refer to \cite{Dyatlov2012AsymptoticQNMKdS} for the construction and to \cite{WunschZworski2011ResolventNHT} for the
normally hyperbolic trapping resolvent estimates underpinning the microlocal control.
\end{remark}

\begin{proposition}[Angular quantization map]\label{prop:angular-quant}
Fix $\nu_0>0$. There exist constants $C_\lambda>0$, $C_k>0$ and a classical symbol
$F^\theta(\tilde\ell,\tilde\omega,\tilde\nu,\tilde k;h)$ of order $2$ in $(\tilde\ell,\tilde k)$, smooth
in $(M,a)$, with real principal part $F^\theta_0$ independent of $\tilde\nu$, such that the poles of the angular resolvent
$R_\theta(\omega,\lambda)$ (restricted to the $k$th azimuthal mode) in the region
\begin{equation}\label{eq:angular-lambda-region}
1<\tilde\omega<2,\qquad |\tilde\nu|<\nu_0,\qquad |k|\le C_k h^{-1},\qquad |\lambda|\le C_\lambda h^{-2},
\end{equation}
are precisely the values
\begin{equation}\label{eq:angular-F}
\tilde\lambda+i h\tilde\mu=F^\theta(\tilde\ell,\tilde\omega,\tilde\nu,\tilde k;h),
\qquad \tilde\ell\in h\bigl(\mathbb N+\tfrac12\bigr),
\end{equation}
modulo $O(h^\infty)$, in the same sense as in Proposition~\ref{prop:radial-quant}.

At $a=0$, one has $P_\theta(\omega)=-\Delta_{\mathbb S^2}$ (independent of $\omega$), hence the poles occur at the exact spherical
eigenvalues $\lambda=\ell(\ell+1)$.
In the semiclassical scaling \eqref{eq:scaled-vars} with the Langer-shifted quantum number
$\tilde\ell=h(\ell+\tfrac12)$, this corresponds to the exact identity
\begin{equation}\label{eq:Ftheta-a0}
F^\theta(\tilde\ell,\tilde\omega,\tilde\nu,\tilde k;h)=\tilde\ell^2-\frac{h^2}{4}\qquad (a=0),
\end{equation}
so that the principal part is $F^\theta_0(\tilde\ell)=\tilde\ell^2$.

Moreover, at the equatorial locus $\tilde\ell=\pm\tilde k$ one has
\begin{equation}\label{eq:equatorial-Ftheta0}
F^\theta_0(\pm\tilde k,\tilde\omega,\tilde k)=\Xi^2(\tilde k-a\tilde\omega)^2+O(a^2),
\end{equation}
and its derivatives at that locus satisfy
\begin{equation}\label{eq:equatorial-derivs}
\partial_{\tilde\ell}F^\theta_0(\pm\tilde k,\tilde\omega,\tilde k)=\pm 2\tilde k+O(a^2),
\qquad
\partial_{\tilde k}F^\theta_0(\pm\tilde k,\tilde\omega,\tilde k)= -2a\tilde\omega+O(a^2).
\end{equation}
\end{proposition}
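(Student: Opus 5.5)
The plan has two parts. The existence part (the symbol $F^\theta$, its order, smoothness in $(M,a)$, and the $O(h^\infty)$ description of the poles) will not be reproved: it is taken from Dyatlov's Bohr--Sommerfeld analysis of the spheroidal operator \cite{Dyatlov2012AsymptoticQNMKdS,Dyatlov2011QNMKerrDeSitter}. We only check that his hypotheses hold uniformly for $(M,a)\in\mathcal K$. They do, because the coefficients of $P_\theta(\omega)$ in \eqref{eq:Ptheta-def} depend real-analytically on $(M,a)$ and polynomially on $\omega$, and $P_\theta(\omega)$ stays elliptic uniformly for $|a|\le a_0$. The work is in the explicit formulas \eqref{eq:Ftheta-a0}, \eqref{eq:equatorial-Ftheta0} and \eqref{eq:equatorial-derivs}, which I would derive from the classical (principal-level) Bohr--Sommerfeld rule.

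For \eqref{eq:Ftheta-a0}, at $a=0$ we have $\Delta_\theta=\Xi=1$, so $P_\theta=-\Delta_{\mathbb S^2}$ with eigenvalues $\ell(\ell+1)$. Writing this as $(\ell+\tfrac12)^2-\tfrac14$ and multiplying by $h^2$ with $\tilde\ell=h(\ell+\tfrac12)$ gives exactly $\tilde\ell^2-h^2/4$. By uniqueness of the asymptotic expansion of the symbol, this identifies $F^\theta$ at $a=0$.

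For the equatorial statements, restrict to the $k$th sector and pass to the semiclassical principal symbol of $h^2P_\theta$. This is a one-degree-of-freedom Hamiltonian
\[
H(\theta,\xi_\theta)=\Delta_\theta\,\xi_\theta^2+V(\theta),\qquad
V(\theta):=\frac{\Xi^2\bigl(a\tilde\omega\sin^2\theta-\tilde k\bigr)^2}{\Delta_\theta\sin^2\theta}.
\]
The Bohr--Sommerfeld rule reads $\tfrac1{2\pi}\oint\xi_\theta\,d\theta=\tilde\ell-|\tilde k|$, with the Maslov and Langer shifts absorbed into $\tilde\ell\in h(\mathbb N+\tfrac12)$. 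This rule defines $F^\theta_0$ as the energy level $E$, expressed as a function of the action $I=\tilde\ell-|\tilde k|\ge0$; at $a=0$ it reproduces $F^\theta_0=\tilde\ell^2$.

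The equatorial locus $\tilde\ell=\pm\tilde k$ (i.e.\ $I=0$) is the bottom of the well. Because of the symmetry $\theta\mapsto\pi-\theta$, $V'(\pi/2)=0$, and for small $a$ the equator is the nondegenerate minimum. Hence $F^\theta_0|_{I=0}=V(\pi/2)=\Xi^2(\tilde k-a\tilde\omega)^2$, which gives \eqref{eq:equatorial-Ftheta0} (in fact exactly, not just up to $O(a^2)$).

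Near the bottom, harmonic-oscillator normal form gives
\[
F^\theta_0=V(\pi/2)+\omega_{\rm osc}\,I+O(I^2),\qquad
\omega_{\rm osc}=2\sqrt{\Delta_\theta(\pi/2)\,V''(\pi/2)/2}.
\]
Expanding $V$ gives
\[
V''(\pi/2)=\Xi^2\bigl(2\tilde k^2-2a^2\tilde\omega^2\bigr)+O(a^2)=2\tilde k^2+O(a^2),
\]
since the $O(a)$ cross term $-2a\tilde\omega\tilde k$ is constant in $\theta$. Therefore $\omega_{\rm osc}=2|\tilde k|+O(a^2)$.

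The chain rule then gives both derivatives in \eqref{eq:equatorial-derivs}:

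1. From $\partial I/\partial\tilde\ell=1$, we get $\partial_{\tilde\ell}F^\theta_0=\omega_{\rm osc}=2|\tilde k|=\pm2\tilde k$ at $\tilde\ell=\pm\tilde k$.
2. From $\partial I/\partial\tilde k=-\operatorname{sgn}\tilde k$, we get
\[
\partial_{\tilde k}F^\theta_0=2\Xi^2(\tilde k-a\tilde\omega)-2|\tilde k|\operatorname{sgn}\tilde k+O(a^2)=-2a\tilde\omega+O(a^2).
\]

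The main obstacle is justifying these derivatives at the boundary $I=0$ of the classically allowed region. One has to know that $F^\theta_0$ is a smooth function of $(\tilde\ell,\tilde k)$ up to and including $I=0$, so that the one-sided expansion in $I$ is legitimate and the equator has no Maslov or turning-point pathology. I would handle this with the Birkhoff/harmonic-oscillator normal form at the nondegenerate elliptic minimum, where the energy is analytic in the action. That normal form is precisely what underlies the labeling by $\tilde\ell$ in \cite{Dyatlov2012AsymptoticQNMKdS}. As a cross-check, the whole computation should reduce at $a=0$ to $F^\theta_0=\tilde\ell^2$, and it does.
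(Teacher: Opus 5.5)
Your proposal is correct. It does not follow the paper, because the paper gives no argument here beyond attribution. The symbol, the pole description, and the equatorial formulas \eqref{eq:equatorial-Ftheta0}--\eqref{eq:equatorial-derivs} are all imported as a black box from Dyatlov. The only computation written out is the $a=0$ identity $h^2\ell(\ell+1)=\tilde\ell^2-h^2/4$, which you reproduce.

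What you add is an actual derivation of the equatorial formulas from the one-degree-of-freedom Hamiltonian $\Delta_\theta\xi_\theta^2+V(\theta)$, and it checks out:
\begin{itemize}
\item $\Delta_\theta(\pi/2)=1$ gives $V(\pi/2)=\Xi^2(\tilde k-a\tilde\omega)^2$ exactly.
\item $\theta=\pi/2$ is the global minimum for $|a|$ small and $|\tilde k|$ bounded below, since $V$ is then monotone in $\sin^2\theta$.
\item $V''(\pi/2)=2\tilde k^2+O(a^2)$, so $\omega_{\mathrm{osc}}=2|\tilde k|+O(a^2)$.
\item The chain rule through $I=\tilde\ell-|\tilde k|$ gives both derivatives.
\end{itemize}
The sign of $\partial_{\tilde k}F^\theta_0$ is also right. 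Set $S=\sqrt{1-9\Lambda M^2}$; the radial barrier-top value is $F^r_0=27M^2\tilde\omega^2/S^2-6a\tilde\omega\tilde k/S^2+O(a^2)$. Implicitly differentiating $F^r_0=F^\theta_0$ then returns exactly the Zeeman slope $c_Z(M)\,a$ of \eqref{eq:Zeeman-slope}.

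A shorter, global route is also available for comparison. One has $p_\theta=|\xi|^2_{\mathbb S^2}-2a\tilde\omega\xi_\varphi+O(a^2)$, and $\xi_\varphi$ is itself an action variable. Hence $F^\theta_0=\tilde\ell^2-2a\tilde\omega\tilde k+O(a^2)$ on the whole cone $|\tilde k|\le\tilde\ell$, and restricting to $\tilde\ell=|\tilde k|$ gives all three formulas at once. The price is that you then need $C^1$ control of the $O(a^2)$ term up to the boundary of the cone. That is precisely what your local normal form at the nondegenerate elliptic minimum supplies, and it also gives the value at $I=0$ exactly rather than modulo $O(a^2)$.

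Two points should be made explicit.

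First, state that your node-count label $\tilde\ell=h(|k|+n_\theta+\tfrac12)$ agrees with Dyatlov's label. For real $\tilde\omega$ each $k$-sector is a self-adjoint Sturm--Liouville problem with simple eigenvalues, so the ordering is preserved by continuity from $a=0$. The principal part does not see $\tilde\nu$.

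Second, there is a sign convention the statement itself glosses over. With $R_\theta=(P_\theta+\lambda)^{-1}$ as literally written in \eqref{eq:Rtheta-def}, the poles at $a=0$ sit at $\lambda=-\ell(\ell+1)$, not $\ell(\ell+1)$. Likewise, \eqref{eq:separated-eq} with $+\lambda$ in both equations is incompatible with $P=P_r+P_\theta$. Your identification $h^2\lambda=E$ tacitly uses $R_\theta=(P_\theta-\lambda)^{-1}$, which is Dyatlov's convention and the only one under which the stated formulas hold. You should say so.
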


\subsection{Pseudopoles, QNMs, and the frequency symbol}\label{subsec:pseudopoles}

By Proposition~\ref{lem:QNM-common-poles}, QNMs in a fixed $k$--mode occur when the radial and angular $\lambda$--poles coincide.
In semiclassical variables this leads to solving
\begin{equation}\label{eq:quantization-eq}
F^r(n,\tilde\omega,\tilde\nu,\tilde k;h)=F^\theta(\tilde\ell,\tilde\omega,\tilde\nu,\tilde k;h)
\end{equation}
for the complex quantity $\tilde\omega+i h\tilde\nu$, given $(n,\tilde\ell,\tilde k)$.

\begin{lemma}[Implicit solvability of the frequency symbol]\label{lem:implicit-frequency-symbol}
On the parameter ranges in Propositions~\ref{prop:radial-quant}--\ref{prop:angular-quant},
the equation \eqref{eq:quantization-eq} has a unique solution of the form
\begin{equation}\label{eq:omega-symbol}
\tilde\omega+i h\tilde\nu = F^\omega(n,\tilde\ell,\tilde k;h),
\end{equation}
where $F^\omega$ is a classical symbol of order $1$ in $(\tilde\ell,\tilde k)$, smooth in $(M,a)$,
with real principal part $F^\omega_0$ independent of $n$.
\end{lemma}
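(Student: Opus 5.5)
The idea is to treat \eqref{eq:quantization-eq} as an equation for the single complex unknown $z:=\tilde\omega+ih\tilde\nu$, with $(n,\tilde\ell,\tilde k)$ and $(M,a)$ as parameters, and to solve it order by order in $h$. First the principal equation is solved by the implicit function theorem. Then a Borel-summed asymptotic series is built, and uniqueness is checked.

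\emph{Principal equation.} Set
\[
G(z,\tilde\ell,\tilde k;h):=F^r(n,\Re z,h^{-1}\Im z,\tilde k;h)-F^\theta(\tilde\ell,\Re z,h^{-1}\Im z,\tilde k;h).
\]
In each symbol the dependence on $(\tilde\omega,\tilde\nu)$ enters through the combination $\tilde\omega+ih\tilde\nu$, as in \eqref{eq:Fr-a0}. This holds because $P_r,P_\theta$ are polynomial in $\omega$, so the symbols extend holomorphically in $z$. Hence $G$ is holomorphic in $z$, and its $h^0$ term is
\[
G_0=F^r_0(\tilde\omega,\tilde k)-F^\theta_0(\tilde\ell,\tilde\omega,\tilde k),
\]
which is real and independent of $n$ and $\tilde\nu$. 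We then solve $G_0(\tilde\omega,\tilde\ell,\tilde k)=0$ for $\tilde\omega=F^\omega_0(\tilde\ell,\tilde k)$. The implicit function theorem needs $\partial_{\tilde\omega}G_0\neq0$.

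At $a=0$ this reads $27M^2\tilde\omega^2/\Delta=\tilde\ell^2$. The solution is $\tilde\omega=\sqrt{\Delta}\,\tilde\ell/(3\sqrt3 M)$, and the derivative is $54M^2\tilde\omega/\Delta\neq0$ on $1<\tilde\omega<2$. For $|a|\le a_0$ the derivative stays bounded away from $0$ by continuity in $(M,a)$ over the compact set $\mathcal K$, using the smoothness in $(M,a)$ from Propositions~\ref{prop:radial-quant}--\ref{prop:angular-quant}. This yields a real, smooth $F^\omega_0$ independent of $n$. It is homogeneous of degree $1$, because $F^r_0$ and $F^\theta_0$ are homogeneous of degree $2$ in $(\tilde\omega,\tilde\ell,\tilde k)$ and the solution is unique.

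\emph{Higher orders.} Next write the ansatz $z\sim\sum_j h^jF^\omega_j$ and insert it into $G$. At order $h^j$ the result is linear:
\[
\partial_zG_0\cdot F^\omega_j=\text{(terms determined by }F^\omega_0,\dots,F^\omega_{j-1}\text{ and the lower-order symbols of }F^r,F^\theta).
\]
This equation is solvable since $\partial_zG_0=\partial_{\tilde\omega}G_0\neq0$. Here $F^\omega_1$ picks up the $i(n+\tfrac12)$ contribution. Homogeneity of degree $1-j$ follows from the homogeneity of the data. Borel summation gives a classical symbol $F^\omega$ of order $1$, smooth in $(M,a)$, with $G(F^\omega)=O(h^\infty)$.

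\emph{Uniqueness.} For any solution $z$ in the region, $|G(z)-G(F^\omega)|\ge c|z-F^\omega|$ by the uniform lower bound on $\partial_zG$ and Taylor expansion in the strip. So $z=F^\omega+O(h^\infty)$, which is uniqueness modulo $O(h^\infty)$, the sense in which the quantization maps are defined. An exact root can also be obtained via Rouché applied to $G(\cdot;h)$ on a disc of radius $h^N$.

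\emph{Main obstacle.} I expect the main obstacle to be the uniform nondegeneracy $\partial_{\tilde\omega}G_0\neq0$ across the whole parameter range, rather than just near the equatorial/$a=0$ model. The first thing I would try is the smallness of $a_0$ together with the explicit $a=0$ formulas \eqref{eq:Fr-a0} and \eqref{eq:Ftheta-a0}. This requires restricting $(\tilde\ell,\tilde k)$ to the range where the $a=0$ solution lies in $1<\tilde\omega<2$.
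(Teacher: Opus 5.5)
Your proposal is correct and follows essentially the same route as the paper. The paper also writes the condition as $G(\tilde\omega+ih\tilde\nu)=0$ with $G=F^r-F^\theta$, checks that $\partial_{\tilde\omega}G_0=54M^2\tilde\omega/\Delta\neq0$ on $1<\tilde\omega<2$ at $a=0$, extends this by continuity to small $|a|$ and $h$, and invokes the complex implicit function theorem; reality and $n$-independence of $F^\omega_0$ are then read off from $F^r_0,F^\theta_0$. Your order-by-order construction with Borel summation, and your uniqueness-modulo-$O(h^\infty)$ argument, make explicit what the paper compresses into that single appeal to the implicit function theorem. Both versions rest on the same unproved assumption, which you state openly: that the symbols depend holomorphically on $\tilde\omega+ih\tilde\nu$.
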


\begin{proof}
Write \eqref{eq:quantization-eq} as $G(\tilde\omega+i h\tilde\nu;\,n,\tilde\ell,\tilde k;h)=0$, where $G:=F^r-F^\theta$.
At $a=0$, the principal symbols satisfy
\[
F^r_0(\tilde\omega)=\frac{27M^2}{\Delta}\tilde\omega^2,\qquad
F^\theta_0(\tilde\ell)=\tilde\ell^2
\]
(cf.\ \eqref{eq:Fr-a0} and the principal part of \eqref{eq:Ftheta-a0}), hence
\[
G_0(\tilde\omega;\tilde\ell)=\frac{27M^2}{\Delta}\tilde\omega^2-\tilde\ell^2,
\qquad
\partial_{\tilde\omega}G_0(\tilde\omega;\tilde\ell)=2\frac{27M^2}{\Delta}\tilde\omega.
\]
In the region $1<\tilde\omega<2$ the derivative $\partial_{\tilde\omega}G_0$ is uniformly bounded away from $0$.
By continuity in $(M,a)$ and the symbol structure (in particular, the absence of $\tilde\nu$ in the principal parts),
$\partial_{\tilde\omega}G$ remains uniformly nonzero for $|a|$ and $h$ sufficiently small on the region
\eqref{eq:radial-lambda-region}--\eqref{eq:angular-lambda-region}.
The complex implicit function theorem therefore yields a unique classical solution
$\tilde\omega+i h\tilde\nu=F^\omega(n,\tilde\ell,\tilde k;h)$.
The principal part is real since $F^r_0$ and $F^\theta_0$ are real, and independence of $n$ follows from
Proposition~\ref{prop:radial-quant}.
\end{proof}

\paragraph{Non-semiclassical pseudopoles.}
Removing $h$ as in \eqref{eq:remove-h}, we obtain a homogeneous symbol
\begin{equation}\label{eq:F-symbol}
F(n,\ell,k)\sim \sum_{j\ge 0}F_j(n,\ell,k),\qquad F_j(n,\rho\ell,\rho k)=\rho^{1-j}F_j(n,\ell,k),
\end{equation}
defined for $0\le n\le C_n$ and $|k|\le \ell$.
We call the values
\begin{equation}\label{eq:pseudopoles}
\omega^\sharp_{n,\ell,k}:=F(n,\ell,k)
\end{equation}
the \emph{pseudopoles}. They form a lattice-like subset of the high-frequency region and approximate the true QNMs with
super-polynomial accuracy.

\begin{remark}[Choosing a representative for the full symbol] \label{rem:pseudopole-representative}
The asymptotic expansion \eqref{eq:F-symbol} determines $F$ only modulo $\mathcal O(\ell^{-\infty})$.
For definiteness, we fix once and for all an asymptotic summation (for instance, a Borel summation in $\ell^{-1}$)
to obtain an actual smooth function $F(n,\ell,k)$ whose expansion is \eqref{eq:F-symbol}.
Different choices of such a representative change $F$ (hence the pseudopoles \eqref{eq:pseudopoles}) by
$\mathcal O(\ell^{-N})$ for every $N$ on the quantization range.
Since all quantitative statements below are formulated with remainders $\mathcal O(\ell^{-N})$ for arbitrary $N$,
they are independent of this choice.
\end{remark}

\begin{theorem}[High-frequency QNMs are given by pseudopoles]\label{thm:Dyatlov-quantization}
Fix $\nu_0>0$ and parameters $(M,a)$ in the slowly rotating subextremal regime.
Then there exist constants $C_\omega>0$ and $C_n\in\mathbb{N}$ such that the set of
QNMs $\omega$ satisfying
\begin{equation}\label{eq:HF-region-for-thm}
\Re\omega>C_\omega,\qquad \Im\omega>-\nu_0
\end{equation}
coincides \emph{modulo $O(|\omega|^{-\infty})$} with the set of pseudopoles $\{\omega^\sharp_{n,\ell,k}\}$ with indices
\begin{equation}\label{eq:index-set}
n\in\{0,1,\dots,C_n\},\qquad \ell\in\mathbb{N},\qquad k\in\mathbb{Z},\quad |k|\le \ell,
\end{equation}
in the following sense: there exists a bijection between QNMs in \eqref{eq:HF-region-for-thm} and pseudopoles
such that the paired elements differ by $O(|\omega|^{-N})$ for every $N$ as $\Re\omega\to\infty$.
\end{theorem}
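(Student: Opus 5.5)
We prove Theorem~\ref{thm:Dyatlov-quantization} by reducing it to the separated problem and then matching the radial and angular quantization rules. The argument has two halves: show that every pseudopole is shadowed by a true QNM, and show that every high-frequency QNM is shadowed by a pseudopole. Counting then yields the bijection.

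\emph{Step 1 (localization to the quantization region).} First we show that QNMs in the region \eqref{eq:HF-region-for-thm} lie in the regime \eqref{eq:radial-lambda-region}--\eqref{eq:angular-lambda-region}. We fix $h\sim(\Re\omega)^{-1}$ on dyadic blocks, so that $1<\tilde\omega<2$. If $|\lambda|\gg h^{-2}$ or $|k|\gg h^{-1}$, the separated problems are elliptic, or at least nontrapping. In that case Dyatlov's a priori estimates exclude common poles, and this applies uniformly for $\Im\omega>-\nu_0$.

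Similarly, the normally hyperbolic trapping estimates of Wunsch--Zworski bound the number of admissible overtones. They show that the radial resolvent has no poles with $\Im\omega>-\nu_0$ beyond $n\le C_n$.

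\emph{Step 2 (QNMs $\Rightarrow$ pseudopoles).} By Proposition~\ref{lem:QNM-common-poles}, a QNM $\omega$ in sector $k$ gives a common $\lambda$-pole. By Propositions~\ref{prop:radial-quant} and~\ref{prop:angular-quant}, this common pole satisfies
\[
F^r(n,\tilde\omega,\tilde\nu,\tilde k;h)=F^\theta(\tilde\ell,\tilde\omega,\tilde\nu,\tilde k;h)+O(h^\infty)
\]
for some admissible $n$ and $\ell$. The constraint $|k|\le\ell$ comes from the fact that spherical-harmonic sectors require $\ell\ge|k|$.

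Lemma~\ref{lem:implicit-frequency-symbol} gives $\partial_{\tilde\omega}(F^r-F^\theta)$ uniformly nonzero. Hence a quantitative implicit function theorem, in the form of a Rouché-type estimate on discs of radius $h^N$, converts this $O(h^\infty)$ defect into
\[
\omega-\omega^\sharp_{n,\ell,k}=O(h^{N-1})=O(|\omega|^{-N+1}).
\]

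\emph{Step 3 (pseudopoles $\Rightarrow$ QNMs, and the bijection).} Conversely, near each pseudopole we want exactly one QNM, counted with multiplicity. The plan is to express the QNM condition in sector $k$ near the trapped set as the vanishing of a holomorphic function of $\omega$, namely a Grushin-problem determinant. We then show this function equals
\[
(\omega-\omega^\sharp)\times(\text{nonvanishing factor})+O(h^\infty),
\]
and apply Rouché's theorem on a disc of radius $h^N$ to obtain exactly one zero there.

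Simplicity of the pseudopole comes from the nondegenerate $\tilde\omega$-derivative, together with simplicity of the radial and angular $\lambda$-poles. The angular poles are simple by the ellipticity of $P_\theta$ and the small-$a$ perturbation from $\ell(\ell+1)$ with fixed $k$; the radial poles are simple by the barrier-top model.

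Injectivity of the correspondence uses the separation between distinct pseudopoles. In $\ell$ and $k$ the spacing is of order $1$, since $\partial_\ell F_0$ is nonzero. In $n$ the spacing is of order $1$ in the imaginary part. Both spacings are $\gg h^N$.

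\emph{Main obstacle.} The hardest step is Step 3: existence and uniqueness of a true QNM near each pseudopole. The difficulty is that Propositions~\ref{prop:radial-quant}--\ref{prop:angular-quant} describe $\lambda$-poles for each fixed $\omega$. Passing from there to $\omega$-poles of the full resolvent requires uniform control of the residues and of the meromorphic continuation in $\omega$. Specifically, one needs an $O(h^{-C})$ resolvent bound away from pseudopoles, so that the Rouché argument applies.

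I would first try to follow Dyatlov's construction: define the meromorphic function $\omega\mapsto$ (radial pole) $-$ (angular pole), analytically continued in $\omega$ by the implicit function theorem. Its zeros are then exactly the QNMs in the sector. This reduces the problem to a scalar complex equation to which the Step 2 argument applies symmetrically.
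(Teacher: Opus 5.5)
The paper gives no proof of this statement. It is Theorem~1 of \cite{Dyatlov2012AsymptoticQNMKdS}, imported as a black box, as are Propositions~\ref{prop:radial-quant}--\ref{prop:angular-quant}, which come from the same source. Your architecture follows Dyatlov's, and your Step~2 is fine given those inputs. A common $\lambda$-pole forces $G(h\omega)=O(h^\infty)$, where $G=F^r(n,\cdot)-F^\theta(\tilde\ell,\cdot)$ and $|\partial G|\ge c$. A mean-value estimate then gives $\omega=\omega^\sharp_{n,\ell,k}+O(h^\infty)$, with no Rouch\'e argument needed.

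That proves only the inclusion of QNMs into pseudopoles modulo $O(|\omega|^{-\infty})$. The converse and the bijection are left as a plan in Step~3, and that plan assumes the result: asserting that a Grushin determinant equals $(\omega-\omega^\sharp)\cdot(\text{nonvanishing})+O(h^\infty)$ is essentially the theorem itself. Your closing suggestion, to study the zeros of $D(\omega):=\lambda^r_n(\omega)-\lambda^\theta_\ell(\omega)$, is a workable route, but it needs two things you do not supply.
\begin{itemize}
\item[(i)] You need the converse halves of the two quantization propositions: near each quantized value there is a pole, and it is simple and isolated. Only then do the poles define holomorphic branches in complex $\omega$ (for instance as simple zeros in $\lambda$ of a Wronskian holomorphic in $(\omega,\lambda)$). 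You also need $h^2\lambda^r_n(\omega)=F^r(n,h\omega)+O(h^\infty)$, and likewise for $\lambda^\theta_\ell$, uniformly on a complex disc around $\omega^\sharp$. With this, Rouch\'e comparing $D$ with $h^{-2}(F^r-F^\theta)(h\,\cdot)$ on $|\omega-\omega^\sharp|=h^N$ yields exactly one zero.
\item[(ii)] You need the order of vanishing of $D$ at $\omega_0$ to equal the multiplicity of $\omega_0$ as a pole of the sector-$k$ resolvent.
\end{itemize}
Your argument for simplicity of the angular poles also fails. You treat them as a small-$a$ perturbation of $\ell(\ell+1)$, but $P_\theta(\omega)$ contains $\Xi^2a^2\omega^2\sin^2\theta/\Delta_\theta$, of size $\sim a^2h^{-2}$, while the eigenvalue spacing near $\ell\sim h^{-1}$ is only $\sim h^{-1}$. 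Simplicity must instead come from the semiclassical quantization itself, or from perturbing the self-adjoint operator at real $\omega$, whose non-self-adjoint part is only $O(ah^{-1})$.

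Your injectivity argument is also wrong as stated. The spacing in $k$ is not of order one, since $\partial_kF_0=O(a)$. At $a=0$, which lies in the slowly rotating regime, $\omega^\sharp_{n,\ell,k}$ does not depend on $k$, so each pseudopole value occurs $2\ell+1$ times. More generally, pseudopoles from different $k$-sectors need not be separated by $\gg h^N$, so your Rouch\'e discs can overlap.

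The bijection therefore only makes sense with multiplicities, and it must be built sector by sector. Within a fixed $k$, pseudopoles are $O(1)$-separated: in $\ell$ through $\partial_\ell F_0\approx\Omega_{\mathrm{ph}}$, and in $n$ through the imaginary part. Across sectors, $P(\omega)$ commutes with $D_\varphi$, so a QNM's multiplicity is the sum of its sector multiplicities. This is why (ii) is indispensable.

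Finally, in Step~1 the bound $n\le C_n$ does not come from Wunsch--Zworski. Their estimates give polynomial resolvent bounds only in a strip of fixed, dynamically determined width below the real axis, not for arbitrary $\nu_0$. The bound comes from the barrier-top quantization, via $\Im\omega^\sharp\approx-(n+\tfrac12)\Omega_{\mathrm{ph}}$.
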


\begin{remark}[Equatorial boundary regime]\label{rem:equatorial-boundary}
The index set \eqref{eq:index-set} includes the boundary case $|k|=\ell$, hence the equatorial families used in the inverse
problem lie within the scope of Dyatlov's quantization theorem.
This inclusion of the boundary $|k|=\ell$ is explicit in \cite[Theorem~1 and equation~(0.2)]{Dyatlov2012AsymptoticQNMKdS},
where the quantization symbols are constructed on (and smooth up to) the closed cone $\{\,|k|\le \ell\,\}$.
\end{remark}

\begin{remark}\label{rem:meaning}
Theorem~\ref{thm:Dyatlov-quantization} is a rigorous microlocal realization of the geometric-optics principle that high-$\ell$
ringdown is governed by trapped null geodesic dynamics.
The principal symbol $F_0$ is the frequency map induced by the integrable Hamiltonian flow on the trapped set, while the leading
imaginary part arises from quantization of the hyperbolic directions transverse to the trapped invariant tori.
We refer to \cite{Dyatlov2012AsymptoticQNMKdS} for the detailed construction.
\end{remark}

\subsection{Explicit leading asymptotics at \texorpdfstring{$a=0$}{a=0} and equatorial splitting}\label{subsec:explicit-leading}

We extract from the symbol $F$ two explicit leading-order formulas used later.

\begin{corollary}[Schwarzschild--de~Sitter lattice]\label{cor:a0-lattice}
At $a=0$, the pseudopoles satisfy
\begin{equation}\label{eq:a0-lattice}
\omega^\sharp_{n,\ell,k}
=\frac{\sqrt{\Delta}}{3\sqrt3\,M}\Bigl[\bigl(\ell+\tfrac12\bigr)-i\bigl(n+\tfrac12\bigr)\Bigr]
+O(\ell^{-1}),
\qquad \Delta=1-9\Lambda M^2,
\end{equation}
uniformly for fixed $n$ and $|k|\le \ell$ as $\ell\to\infty$.
\end{corollary}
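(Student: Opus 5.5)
The plan is to solve the quantization equation \eqref{eq:quantization-eq} explicitly at $a=0$, using the closed forms \eqref{eq:Fr-a0} and \eqref{eq:Ftheta-a0}, and then remove $h$ via \eqref{eq:remove-h}. At $a=0$ the angular symbol is exact and does not depend on $\tilde\omega$, $\tilde\nu$ or $\tilde k$: $F^\theta=\tilde\ell^2-h^2/4$. So \eqref{eq:quantization-eq} becomes
\[
\Bigl[ih\bigl(n+\tfrac12\bigr)+\tfrac{3\sqrt3 M}{\sqrt\Delta}\,z\Bigr]^2+O(h^2)=\tilde\ell^2-\tfrac{h^2}{4},
\qquad z:=\tilde\omega+ih\tilde\nu .
\]
Here I write $O(h^2)$ for the remainder in \eqref{eq:Fr-a0}. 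By Lemma~\ref{lem:implicit-frequency-symbol} this remainder is a classical symbol in $z$, and that is the point I would make sure of.

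Next I take the square root. In the region $1<\tilde\omega<2$ the right-hand side is $\tilde\ell^2+O(h^2)$ with $\tilde\ell\equiv 1$, so it stays bounded away from zero. The bracket has real part close to $\tfrac{3\sqrt3M}{\sqrt\Delta}\tilde\omega>0$, so the principal branch is the correct one. Since $\sqrt{\tilde\ell^2+O(h^2)}=\tilde\ell+O(h^2)$, we get
\[
\tfrac{3\sqrt3 M}{\sqrt\Delta}\,z=\tilde\ell-ih\bigl(n+\tfrac12\bigr)+O(h^2),
\]
and therefore $z=\tfrac{\sqrt\Delta}{3\sqrt3M}\bigl[\tilde\ell-ih(n+\tfrac12)\bigr]+O(h^2)$. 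The uniqueness part of Lemma~\ref{lem:implicit-frequency-symbol} identifies this solution with $F^\omega(n,\tilde\ell,\tilde k;h)$.

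To remove $h$, set $h=(\ell+\tfrac12)^{-1}$ and $\tilde\ell=1$, and multiply by $h^{-1}=\ell+\tfrac12$:
\[
\omega^\sharp_{n,\ell,k}=h^{-1}F^\omega=\tfrac{\sqrt\Delta}{3\sqrt3M}\Bigl[(\ell+\tfrac12)-i(n+\tfrac12)\Bigr]+O(h),
\]
and $O(h)=O(\ell^{-1})$. This result does not depend on $k$, because at $a=0$ neither $F^r$ nor $F^\theta$ depends on $\tilde k$. Uniformity in $|k|\le\ell$ comes from the uniformity of the symbol remainders on the region $|\tilde k|\le C_k$ in Propositions~\ref{prop:radial-quant} and \ref{prop:angular-quant}, together with the fixed finite range of $n$.

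The main obstacle is controlling the $O(h^2)$ remainder in \eqref{eq:Fr-a0}. The quantity $\tilde\nu$ enters only through $ih\tilde\nu$, and $\tilde\nu$ is itself an unknown of size $O(1)$. So I need the remainder, including its dependence on $z$, to be $O(h^2)$ uniformly for $\tilde\nu$ in the strip $|\tilde\nu|<\nu_0$. That follows from its classical symbol structure. I also need the implicit-function step to lose no powers of $h$, which holds because $\partial_z$ of the left side is bounded below, by $\tfrac{54M^2}{\Delta}\tilde\omega$ in the notation of Lemma~\ref{lem:implicit-frequency-symbol}. Finally, I would check that the chosen $n$ is within $C_n$, so that the strip condition $|\tilde\nu|<\nu_0$ is consistent with the result $\tilde\nu\approx-\tfrac{\sqrt\Delta}{3\sqrt3M}(n+\tfrac12)$; this may require taking $\nu_0$ large enough.
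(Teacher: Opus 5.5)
Your proposal is correct and follows the paper's proof essentially step for step. You insert \eqref{eq:Fr-a0} and \eqref{eq:Ftheta-a0} into \eqref{eq:quantization-eq}, take the square root with positive real part to get $\tilde\omega+ih\tilde\nu=\frac{\sqrt\Delta}{3\sqrt3M}\bigl(\tilde\ell-ih(n+\tfrac12)\bigr)+O(h^2)$, and multiply by $h^{-1}=\ell+\tfrac12$ with $\tilde\ell=1$; your extra remarks (uniqueness via Lemma~\ref{lem:implicit-frequency-symbol}, and uniformity in $|k|\le\ell$ from the $\tilde k$-independence of both symbols at $a=0$) are harmless supplements that the paper leaves implicit.
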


\begin{proof}
At $a=0$ the quantization equation \eqref{eq:quantization-eq} reads, using \eqref{eq:Fr-a0} and \eqref{eq:Ftheta-a0},
\begin{equation}\label{eq:a0-quant-eq-proof}
\Bigl[ih\Bigl(n+\tfrac12\Bigr)+\frac{3\sqrt3\,M}{\sqrt{\Delta}}\bigl(\tilde\omega+i h\tilde\nu\bigr)\Bigr]^2
=\tilde\ell^2-\frac{h^2}{4}+O(h^2).
\end{equation}
We now remove $h$ via \eqref{eq:remove-h}, i.e.\ take $h=(\ell+\tfrac12)^{-1}$ so that $\tilde\ell=h(\ell+\tfrac12)\equiv 1$.
Since the correction $-h^2/4$ is $O(h^2)$, it can be absorbed into the remainder in \eqref{eq:a0-quant-eq-proof}; thus
\[
\Bigl[ih\Bigl(n+\tfrac12\Bigr)+\frac{3\sqrt3\,M}{\sqrt{\Delta}}\bigl(\tilde\omega+i h\tilde\nu\bigr)\Bigr]^2
=1+O(h^2).
\]
Taking the square root with positive real part gives
\[
ih\Bigl(n+\tfrac12\Bigr)+\frac{3\sqrt3\,M}{\sqrt{\Delta}}\bigl(\tilde\omega+i h\tilde\nu\bigr)=1+O(h^2),
\]
hence
\[
\tilde\omega+i h\tilde\nu=\frac{\sqrt{\Delta}}{3\sqrt3\,M}\Bigl(1-ih\Bigl(n+\tfrac12\Bigr)\Bigr)+O(h^2).
\]
Recalling that $\omega=h^{-1}\tilde\omega+i\tilde\nu$ and $h^{-1}=\ell+\tfrac12$, we obtain
\[
\omega=\frac{\sqrt{\Delta}}{3\sqrt3\,M}\Bigl[\bigl(\ell+\tfrac12\bigr)-i\bigl(n+\tfrac12\bigr)\Bigr]+O(h)
=\frac{\sqrt{\Delta}}{3\sqrt3\,M}\Bigl[\bigl(\ell+\tfrac12\bigr)-i\bigl(n+\tfrac12\bigr)\Bigr]+O(\ell^{-1}),
\]
which is \eqref{eq:a0-lattice}.
\end{proof}

\begin{proposition}[Equatorial Zeeman slope]\label{prop:Zeeman-slope}
Fix $\Lambda>0$ and $M$ with $0<9\Lambda M^2<1$.
Let $n$ be fixed and consider the equatorial family $|k|=\ell$.
Then for $|a|$ sufficiently small, the principal part $F_0(n,\ell,k)$ satisfies
\begin{equation}\label{eq:Zeeman-slope}
\partial_k F_0\bigl(n,\ell,\pm\ell\bigr)
=\frac{(2+9\Lambda M^2)\,a}{27 M^2}+O(a^2),
\end{equation}
uniformly as $\ell\to\infty$.
\end{proposition}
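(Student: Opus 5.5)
The plan is to compute $\partial_k F_0$ by implicit differentiation of the principal-order quantization condition, using Lemma~\ref{lem:implicit-frequency-symbol}. By that lemma, the principal part $F^\omega_0(\tilde\ell,\tilde k)$ is the real solution $\tilde\omega$ of
\[
G_0(\tilde\omega;\tilde\ell,\tilde k):=F^r_0(\tilde\omega,\tilde k)-F^\theta_0(\tilde\ell,\tilde\omega,\tilde k)=0 .
\]
Since $F_0(n,\ell,k)=h^{-1}F^\omega_0(1,hk)$ and $F_0$ is homogeneous of degree $1$, we have $\partial_kF_0(n,\ell,\pm\ell)=\partial_{\tilde k}F^\omega_0(1,\pm1)$. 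This quantity is homogeneous of degree $0$, so the estimate is automatically uniform in $\ell$. Differentiating $G_0=0$ in $\tilde k$ at fixed $\tilde\ell$ gives
\[
\partial_{\tilde k}F^\omega_0=-\frac{\partial_{\tilde k}F^r_0-\partial_{\tilde k}F^\theta_0}{\partial_{\tilde\omega}F^r_0-\partial_{\tilde\omega}F^\theta_0}.
\]
I will evaluate each of the four terms at the equatorial locus, modulo $O(a^2)$.

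\emph{Angular terms.} These come directly from Proposition~\ref{prop:angular-quant}. Equation \eqref{eq:equatorial-derivs} gives $\partial_{\tilde k}F^\theta_0=-2a\tilde\omega+O(a^2)$. Differentiating \eqref{eq:equatorial-Ftheta0} in $\tilde\omega$ gives $\partial_{\tilde\omega}F^\theta_0=-2a\Xi^2(\tilde k-a\tilde\omega)+O(a^2)=O(a)$. The numerator of the quotient is itself $O(a)$, so this $O(a)$ term in the denominator only contributes at order $O(a^2)$ and can be dropped.

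\emph{Radial terms.} Here I need more than the $a=0$ formula \eqref{eq:Fr-a0}, namely the $k$-dependence of $F^r_0$ to first order in $a$. I would use the barrier-top characterization underlying Dyatlov's radial quantization: the principal radial symbol is
\[
\Delta_r\xi^2-\Xi^2\bigl((r^2+a^2)\tilde\omega-a\tilde k\bigr)^2/\Delta_r ,
\]
and $F^r_0$ is the critical (maximum) value of the potential,
\[
F^r_0(\tilde\omega,\tilde k)=\max_{r\in(r_e,r_c)}\frac{\Xi^2\bigl((r^2+a^2)\tilde\omega-a\tilde k\bigr)^2}{\Delta_r(r)} .
\]
As a consistency check, at $a=0$ the maximizer is $r=3M$ and $\Delta_r(3M)=3M^2\Delta$, which gives $27M^2\tilde\omega^2/\Delta$ and matches \eqref{eq:Fr-a0}. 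By the envelope theorem the maximizer need not be differentiated. Hence
\[
\partial_{\tilde k}F^r_0=-\frac{2a\,\Xi^2\bigl((r^2+a^2)\tilde\omega-a\tilde k\bigr)}{\Delta_r}\Big|_{r=r_{\max}}=-\frac{6a\tilde\omega}{\Delta}+O(a^2),
\]
and $\partial_{\tilde\omega}F^r_0=54M^2\tilde\omega/\Delta+O(a)$.

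\emph{Assembly.} Substituting these values,
\[
\partial_{\tilde k}F^\omega_0=\frac{\bigl(6/\Delta-2\bigr)a\tilde\omega}{54M^2\tilde\omega/\Delta}+O(a^2)=\frac{(6-2\Delta)a}{54M^2}+O(a^2)=\frac{(2+9\Lambda M^2)a}{27M^2}+O(a^2).
\]
This holds with the same value at $\tilde k=\pm\tilde\ell$, since the $\pm$ sign does not enter at order $a$.

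\emph{Main obstacle.} The step I expect to be hardest is rigorously identifying $F^r_0$ with the barrier maximum for $a\neq0$, smoothly in $(\tilde\omega,\tilde k,a)$. This includes showing that the nondegenerate maximizer $r_{\max}(a)=3M+O(a)$ persists, by the implicit function theorem applied to $\partial_r$ of the potential at the nondegenerate photon sphere. It also requires that the $O(a^2)$ remainders are uniform on the compact range $1<\tilde\omega<2$. I would extract the barrier-maximum identification from Dyatlov's construction of $F^r$ (the principal part being the critical value of the radial potential), and handle the persistence and uniformity by the implicit function argument just described.
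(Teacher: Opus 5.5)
Your argument is correct, but it takes a genuinely different route from the paper's. The paper computes nothing here. It quotes Dyatlov's explicit formula for the principal frequency symbol \cite[Theorem~1, (0.4)]{Dyatlov2012AsymptoticQNMKdS}, which already contains the slope $\frac{(2+9\Lambda M^2)a}{27M^2}$ at the equatorial boundary point. It then uses homogeneity of $F_0$ to transfer that slope from $|\tilde k|=\tilde\ell$ to $k=\pm\ell$.

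You instead derive the coefficient from the two principal quantization symbols by implicitly differentiating $F^r_0=F^\theta_0$. The frame-dragging term $-a\tilde k$ in the radial potential gives $-6a\tilde\omega/\Delta$ at $r=3M$. The angular coupling $a\omega\sin^2\theta$ partially offsets it by $+2a\tilde\omega$. Dividing by $\partial_{\tilde\omega}F^r_0=54M^2\tilde\omega/\Delta$ yields $(6-2\Delta)a/(54M^2)=c_Z(M)\,a$, and these values check out.

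Your route buys a transparent, checkable mechanism. It also gives an independent cross-check of the geometric value of $c_Z$ obtained from $\partial_a g_{t\varphi}$ in Appendix~\ref{app:equatorial-photon}. The cost is an input the paper never records: Proposition~\ref{prop:radial-quant} gives $F^r_0$ only at $a=0$. As you note, you must therefore go back into Dyatlov's construction to identify $F^r_0$ with the critical value of the radial potential when $a\neq0$. The reliance on Dyatlov is not removed, only moved from one explicit formula to a structural feature of his construction.

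One correction: that critical value is not a maximum. Consider $V(r)=\Xi^2\bigl((r^2+a^2)\tilde\omega-a\tilde k\bigr)^2/\Delta_r(r)$. Since $\Delta_r\to0$ at $r_e$ and $r_c$ while $(r^2+a^2)\tilde\omega-a\tilde k$ stays away from zero there, $V$ tends to $+\infty$ at both horizons and has no maximum on $(r_e,r_c)$. At $a=0$ it equals $\tilde\omega^2r^2/(1-2M/r-\Lambda r^2/3)$, and $r=3M$ is its nondegenerate \emph{minimum}. The ``barrier top'' is instead a maximum of $\Delta_r/(r^2+a^2)^2$, equivalently a saddle of the radial symbol in $(r,\xi)$.

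Your envelope-theorem step and the implicit-function persistence of $r_*(a)=3M+O(a)$ use only that $r=3M$ is a nondegenerate critical point of $V$. Replacing ``maximum'' by ``critical value'' throughout therefore leaves your computation and conclusion unchanged.
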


\begin{proof}
This coefficient is computed explicitly in Dyatlov's semiclassical quantization.
In \cite[Theorem~1, equation~(0.4)]{Dyatlov2012AsymptoticQNMKdS}, Dyatlov gives the $k$--slope of the principal frequency map
at the equatorial boundary point $(m,\pm\tilde k,\tilde k)$ in his notation:
\[
(\partial_{\tilde k}F_0)(m,\pm\tilde k,\tilde k)
=\frac{(2+9\Lambda M^2)\,a}{27M^2}+O(a^2)\qquad\text{as }a\to0,
\]
with the remainder uniform for $\tilde k$ in compact subsets of $(0,\infty)$.
Here Dyatlov's $m$ corresponds to our overtone index $n$.
Finally, since $F_0$ is homogeneous of degree $1$ in $(\ell,k)$ (see \eqref{eq:F-symbol}), the derivative $\partial_k F_0(n,\ell,k)$
depends only on the ratio $k/\ell$; evaluating at $k=\pm\ell$ therefore agrees with evaluating the semiclassical derivative at
$\tilde k=\pm\tilde\ell$ (i.e.\ at the boundary point $|\tilde k|=\tilde\ell$).
This yields \eqref{eq:Zeeman-slope}.
\end{proof}

\subsection{Labeling and analytic dependence}\label{subsec:labeling-analytic}

For the inverse problem we need to track selected QNMs across a parameter set.
The key points are:
(i) in the high-frequency region of Theorem~\ref{thm:Dyatlov-quantization} the QNMs are super-polynomially close to pseudopoles,
hence can be labeled by $(n,\ell,k)$ for $\ell\gg1$;
(ii) near a simple pole, individual branches depend real-analytically on $(M,a)$ because the stationary family is real-analytic in
the parameters and the resolvent is a meromorphic Fredholm family (Remark~\ref{rem:analytic_parameters}; see also
\cite{KatoPerturbation}).

\begin{proposition}[Stable labeling of high-frequency QNMs]\label{prop:labeling-analytic}
Let $\mathcal K$ be a compact subset of the slowly rotating subextremal parameter set.
Fix $\nu_0>0$ and $n\in\{0,1,\dots,C_n\}$, and consider indices $(\ell,k)$ with $|k|\le \ell$.
Then there exists $\ell_0=\ell_0(\mathcal K,\nu_0,n)$ such that for all $\ell\ge \ell_0$ and all $(M,a)\in\mathcal K$:
\begin{enumerate}
\item For every $N\in\mathbb N$ there is a unique $k$--mode QNM $\omega_{n,\ell,k}(M,a)$ inside a disc of radius $C_N\ell^{-N}$
around the pseudopole $\omega^\sharp_{n,\ell,k}(M,a)$, and
\begin{equation}\label{eq:superpoly-close}
\omega_{n,\ell,k}(M,a)-\omega^\sharp_{n,\ell,k}(M,a)=O(\ell^{-N})\quad\text{for all }N.
\end{equation}
\item If $\omega_{n,\ell,k}(M,a)$ is simple for all $(M,a)\in\mathcal K$, then the map $(M,a)\mapsto \omega_{n,\ell,k}(M,a)$ is real-analytic on $\mathcal K$.  The same conclusion holds if we let $\Lambda$ vary in a compact interval and enlarge the parameter vector to $(M,a,\Lambda)$.
In particular, for the equatorial families $k=\pm\ell$ and $\ell$ sufficiently large, this simplicity holds uniformly on $\mathcal K$
by Remark~\ref{rem:uniform-simplicity} below.
\end{enumerate}
\end{proposition}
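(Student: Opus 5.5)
The argument has two parts: uniform existence and uniqueness of a labeled QNM near each pseudopole (part (1)), and real-analytic dependence of that QNM on the parameters (part (2)).

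\emph{Part (1): separation of the pseudopole lattice.} First we check that, for fixed $n$ and $\ell$ large, the pseudopoles are well separated uniformly on $\mathcal K$. By Corollary~\ref{cor:a0-lattice} and the smooth $(M,a)$-dependence of the symbol $F$ (Lemma~\ref{lem:implicit-frequency-symbol}), distinct indices give pseudopoles at mutual distance at least $c\ell^{-1}$. The two separating mechanisms are:
\begin{itemize}
\item[] different $n$ or $\ell$ change the imaginary or real part by order $1$;
\item[] different $k$ split the real part by $\partial_kF_0\sim a$ (Proposition~\ref{prop:Zeeman-slope}), or only at subprincipal order $\ell^{-1}$ when $a$ is near $0$.
\end{itemize}
The case $a\to0$ needs care. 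Since we work in a fixed azimuthal sector $k$, we use the sector-wise version of Theorem~\ref{thm:Dyatlov-quantization}, via Proposition~\ref{lem:QNM-common-poles}. Within a fixed sector, different $\ell$ or $n$ are separated by order $1$, so the $k$-degeneracy at $a=0$ is harmless.

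\emph{Part (1): disc around the pseudopole.} We then apply the bijection of Theorem~\ref{thm:Dyatlov-quantization} in sector $k$, with the $O(|\omega|^{-N})$ pairing error. Uniformity in $(M,a)\in\mathcal K$ comes from the uniform remainder bounds in Appendix~\ref{app:parameter-quantization} and condition (i) on $\mathcal K$. Consequently the disc of radius $C_N\ell^{-N}$ about $\omega^\sharp_{n,\ell,k}$, which is smaller than half the separation once $\ell\ge\ell_0$, contains exactly the QNM paired with it and no other. This gives uniqueness and \eqref{eq:superpoly-close}. By Remark~\ref{rem:pseudopole-representative}, the label does not depend on $N$: the discs are nested as $N$ grows, so the QNM selected is the same.

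\emph{Part (2): analyticity.} Fix $\mu_0\in\mathcal K$. Choose a small circle $\gamma$ around $\omega_{n,\ell,k}(\mu_0)$ of radius comparable to the separation; it contains no other resonance of $P_k(\cdot;\mu)$ for $\mu$ near $\mu_0$, by the uniform separation of Part (1). The Riesz projector
\[
\Pi(\mu)=\frac1{2\pi i}\oint_\gamma P_k(\omega;\mu)^{-1}\partial_\omega P_k(\omega;\mu)\,d\omega
\]
is real-analytic in $\mu$ by Remark~\ref{rem:analytic_parameters}, and its rank is locally constant. Under the simplicity hypothesis the rank equals $1$. Then
\[
\omega(\mu)=\operatorname{tr}\frac1{2\pi i}\oint_\gamma \omega\,P_k(\omega;\mu)^{-1}\partial_\omega P_k(\omega;\mu)\,d\omega
\]
is real-analytic, by the Gohberg--Sigal argument principle for the Fredholm family. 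Covering $\mathcal K$ by such neighbourhoods gives analyticity on $\mathcal K$. The same argument applies verbatim with $\Lambda$ included as a parameter, since the coefficients are analytic in $\Lambda$.

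\emph{Main obstacle.} The hardest step is the uniformity of Dyatlov's remainders and of the pseudopole separation over $\mathcal K$, in particular near $a=0$, where different $k$ become degenerate. Working sector by sector as above avoids the $k$-degeneracy. The parameter-uniform bounds themselves are taken from Appendix~\ref{app:parameter-quantization}.
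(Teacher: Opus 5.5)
Your proposal is correct and essentially follows the paper's proof: Dyatlov's $O(|\omega|^{-\infty})$ pairing, converted to $O(\ell^{-N})$ via $|\omega^\sharp_{n,\ell,k}|\sim\ell$, with uniqueness coming from the $k$-sector labeling, and real-analyticity from perturbation theory of an isolated simple pole (your Gohberg--Sigal trace formula makes explicit the paper's citation of Kato). One correction: your claimed $c\ell^{-1}$ separation between pseudopoles with different $k$ is false---at $a=0$ the full symbol $F$ is exactly independent of $k$, so these pseudopoles coincide---but this does no harm, because you only use the order-one separation within a fixed azimuthal sector.
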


\begin{proof}
Part~(1) is a direct reformulation of the $O(|\omega|^{-\infty})$ matching in Theorem~\ref{thm:Dyatlov-quantization}, once one
expresses the remainder in terms of the large parameter $\ell$.
Indeed, for fixed $n\le C_n$ and $|k|\le \ell$, the pseudopole symbol $F(n,\ell,k)$ has order $1$ in $(\ell,k)$ in the sense of
\eqref{eq:F-symbol}, uniformly on $\mathcal K$.
In particular, on $\mathcal K$ and for $\ell\gg 1$ one has a uniform comparability
\begin{equation}\label{eq:omega-ell-comparable}
c\,\ell\ \le\ |\omega^\sharp_{n,\ell,k}(M,a)|\ \le\ C\,\ell,
\end{equation}
with constants $c,C>0$ depending only on $\mathcal K$ (for instance, in the model case $a=0$ this follows directly from
\eqref{eq:a0-lattice}, and the general case follows by continuity of the leading symbol on the compact set $\mathcal K$).
Therefore the super--polynomial matching in $|\omega|$ implies that for every $N$ there exists $C_N>0$ such that the QNM paired with
$\omega^\sharp_{n,\ell,k}(M,a)$ satisfies
$|\omega_{n,\ell,k}(M,a)-\omega^\sharp_{n,\ell,k}(M,a)|\le C_N\ell^{-N}$.
Uniqueness of the \emph{$k$--mode} branch is part of the Dyatlov pairing: the quantization produces a labeling by $(n,\ell,k)$ (counting
multiplicities), and the corresponding meromorphic pole is uniquely determined within that labeled family.
Part~(2) follows from analytic Fredholm theory for the meromorphically continued resolvent in the Kerr--de~Sitter Fredholm setup,
together with analytic perturbation theory for isolated simple poles; see Remark~\ref{rem:analytic_parameters} and
\cite[Chapters~VII--VIII]{KatoPerturbation}.
\end{proof}

\begin{remark}[Uniform simplicity from the quantization function]\label{rem:uniform-simplicity}
For the equatorial families $k=\pm\ell$ at fixed overtone $n$, one can make the simplicity of the labeled branches
quantitative and uniform on compact parameter sets.
Indeed, the barrier--top normal form and Grushin reduction underlying Dyatlov's construction produce a scalar
\emph{quantization function} $\mathfrak q_\pm(\omega;\mu,\ell)$ (with $\mu=(M,a)$, or more generally $\mu=(M,a,\Lambda)$ when allowing $\Lambda$ to vary) whose zeros in a fixed complex window
are precisely the labeled QNMs near the corresponding pseudopoles.
After a harmless renormalization, the leading coefficient in the expansion \eqref{eq:q-expansion} satisfies
$\partial_\omega\mathfrak q_{\pm,0}\equiv 1$ on $\mathcal K\times\Omega$ (Remark~\ref{rem:q-normalization}), hence
\begin{equation}\label{eq:uniform-domega-q}
\partial_\omega\mathfrak q_\pm(\omega;\mu,\ell)=1+\mathcal O(\ell^{-1})
\end{equation}
uniformly on $\mathcal K\times\Omega'$ for any compact $\Omega'\Subset\Omega$.
In particular, for $\ell$ sufficiently large the factor \eqref{eq:uniform-domega-q} is bounded away from $0$, so each labeled
equatorial QNM is a simple zero of $\mathfrak q_\pm$ and thus a simple pole of the resolvent.
Consequently, the equatorial branches $(M,a)\mapsto \omega_{n,\ell,\pm\ell}(M,a)$ depend real-analytically on $\mu$ by the
implicit function theorem.
We will use this normalization and the associated uniform bounds in the proof of the $C^1$ pseudopole--QNM closeness
statement (Lemma~\ref{lem:C1-close}).
\end{remark}

\begin{remark}\label{rem:what-kept}
For later use we record three concrete outputs of this section.
First, the explicit Schwarzschild--de~Sitter lattice \eqref{eq:a0-lattice} identifies the leading high-frequency spacing and is the
starting point for recovering $M$ from a single branch.
Second, the equatorial slope \eqref{eq:Zeeman-slope} captures the leading $a$--dependence of the real parts and supplies the splitting
mechanism used to recover the rotation parameter.
Third, the stable correspondence \eqref{eq:superpoly-close} between true QNMs and pseudopoles provides the bridge that allows us to
transfer inverse statements proved at the pseudopole level to actual quasinormal frequencies.
\end{remark}

\subsection{The equatorial boundary regime and uniformity}\label{subsec:equatorial-regime}

The inverse statements proved in Sections~\ref{sec:nondegeneracy}--\ref{sec:transfer} use only the equatorial families
$k=\pm\ell$ at fixed overtone $n$.
In semiclassical variables, with $h=(\ell+\tfrac12)^{-1}$ and $\tilde k:=hk$, this corresponds to
\begin{equation}\label{eq:equatorial-tildek}
\tilde k=\pm\frac{\ell}{\ell+\tfrac12}=\pm\Bigl(1-\frac{h}{2}\Bigr),\qquad \tilde\ell=h\bigl(\ell+\tfrac12\bigr)\equiv 1.
\end{equation}
Thus the equatorial regime is a \,boundary\, regime in the action variables: $|\tilde k|\to \tilde\ell$ as $h\to0$.
Dyatlov's quantization theorem is formulated on the closed cone $\{\,|k|\le \ell\,\}$ (equivalently, $\{\,|\tilde k|\le \tilde\ell\,\}$) and the quantization symbols are constructed smooth up to the boundary
\cite[Theorem~1 and equation~(0.2)]{Dyatlov2012AsymptoticQNMKdS}; see Remark~\ref{rem:equatorial-boundary}.

For later use we record two consequences of this boundary smoothness.
First, the principal symbol $F_0$ and its derivatives in the external parameters $\mu$ remain uniformly bounded on compact slow-rotation sets even when restricted to the equatorial indices $k=\pm\ell$.
Second, the barrier-top Grushin reduction described in Appendix~\ref{app:parameter-quantization} produces quantization functions $\mathfrak q_\pm$ on a fixed scaled window $\Omega$ whose remainder bounds are uniform on compact parameter sets and stable under one parameter derivative; this uniformity in turn yields the $C^1$ super--polynomial pseudopole--QNM proximity estimates of Lemma~\ref{lem:C1-close}.

\section{Uniform nondegeneracy on compact slow-rotation sets}\label{sec:nondegeneracy}

The quantitative inverse results proved later rest on a \emph{uniform} Jacobian lower bound for the
equatorial data maps on a compact parameter set $\mathcal K$.
In early versions of the argument it is natural to first linearize at $a=0$ and then restrict to an auxiliary
smaller rotation scale $|a|\le a_*$ in order to keep the Jacobian away from degeneracy.
For the slowly rotating Kerr--de~Sitter family this extra restriction is unnecessary:
the relevant nondegeneracy is a geometric property already visible at the level of equatorial trapped null
geodesics, and it persists uniformly on any compact set $\mathcal K\Subset\mathcal P_\Lambda\cap\{|a|\le a_0\}$
once $a_0$ is fixed small enough to lie in the slow-rotation regime of the forward microlocal theory.

The purpose of this section is therefore twofold.
First, we record a self-contained geometric construction of the equatorial trapped null orbits and the associated
\emph{signed} angular velocities, uniform in $(M,a)\in\mathcal K$.
Second, we use this to prove a uniform nondegeneracy statement for the leading equatorial frequency map
$(M,a)\mapsto(U,V)$, with constants depending only on $\mathcal K$ and $\Lambda$.
This isolates the mechanism behind the uniform Jacobian bounds in Theorem~\ref{thm:inverse-pseudopoles} and clarifies
why no additional smallness assumption $|a|\le a_*$ is needed beyond \eqref{eq:Kcompact}.

\subsection{Equatorial circular photon orbits: uniform existence and smooth dependence}\label{subsec:geo-existence}

\begin{sloppypar}
We work in Boyer--\allowbreak Lindquist coordinates. We restrict to the equatorial plane $\theta=\pi/2$.
\end{sloppypar}
Let $T:=\partial_t$ and $\Phi:=\partial_\varphi$ be the stationary and axial Killing fields.
For a curve of the form
\[
\gamma(t)=(t,r,\pi/2,\varphi(t)),\qquad \dot\varphi(t)=\Omega,
\]
the tangent vector is $T+\Omega\Phi$.
Set
\begin{equation}\label{eq:Phi_geo}
\Phi_{\mathrm{geo}}(r,\Omega;M,a):=g_{M,a}(T+\Omega\Phi,\;T+\Omega\Phi)\big|_{\theta=\pi/2}.
\end{equation}
Then $\gamma$ is null if and only if $\Phi_{\mathrm{geo}}(r,\Omega;M,a)=0$.
Moreover, $\gamma$ is a null geodesic with constant radius if and only if the additional stationarity condition holds:
\begin{equation}\label{eq:geo-system}
\Phi_{\mathrm{geo}}(r,\Omega;M,a)=0,\qquad \partial_r\Phi_{\mathrm{geo}}(r,\Omega;M,a)=0;
\end{equation}
this is the standard stationary-axisymmetric criterion obtained from the conserved quantities
$E=-p_t$ and $L=p_\varphi$ and the radial effective potential.

At $a=0$ the metric reduces to de~Sitter--Schwarzschild and \eqref{eq:geo-system} yields the photon sphere $r=3M$
with angular velocities $\Omega=\pm\Omega_{\mathrm{ph}}(M)$, where
\begin{equation}\label{eq:Omega-cZ-def}
\Omega_{\mathrm{ph}}(M):=\frac{\sqrt{1-9\Lambda M^2}}{3\sqrt3\,M},
\qquad
c_Z(M):=\frac{2+9\Lambda M^2}{27M^2}.
\end{equation}
(See Appendix~\ref{app:equatorial-photon} for a derivation from \eqref{eq:geo-system}.)

To obtain a \emph{uniform} construction on $\mathcal K$, introduce the map
\[
\mathcal F_{\mathrm{geo}}(r,\Omega;M,a):=\bigl(\Phi_{\mathrm{geo}}(r,\Omega;M,a),\ \partial_r\Phi_{\mathrm{geo}}(r,\Omega;M,a)\bigr).
\]
At $a=0$ and $r_0:=3M$ we have $\mathcal F_{\mathrm{geo}}(r_0,\pm\Omega_{\mathrm{ph}}(M);M,0)=(0,0)$.
A direct computation (spelled out in Appendix~\ref{app:equatorial-photon}, see \eqref{eq:IFT-det}--\eqref{eq:drrPhi-nonzero})
shows that the Jacobian determinant of $(r,\Omega)\mapsto \mathcal F_{\mathrm{geo}}(r,\Omega;M,0)$ at these solutions equals
\begin{equation}\label{eq:geo-IFT-det}
\det D_{(r,\Omega)}\mathcal F_{\mathrm{geo}}(r_0,\pm\Omega_{\mathrm{ph}}(M);M,0)=4\,\Omega_{\mathrm{ph}}(M).
\end{equation}
Since $M$ ranges in the compact set $K_M:=\{M:(M,a)\in\mathcal K\}\Subset (0,1/(3\sqrt\Lambda))$,
the quantity $\Omega_{\mathrm{ph}}(M)$ is bounded below by a strictly positive constant on $K_M$.
Therefore \eqref{eq:geo-IFT-det} is uniformly bounded away from $0$ for $M\in K_M$.

\begin{lemma}[Uniform equatorial photon orbits]\label{lem:uniform-photon-orbits}
There exists $a_{\mathrm{geo}}>0$ (depending only on $\mathcal K$ and $\Lambda$) such that for all $(M,a)\in\mathcal K$
with $|a|\le a_{\mathrm{geo}}$ the system \eqref{eq:geo-system} has exactly two solutions
$(r_{\mathrm{geo},\pm}(M,a),\Omega_{\mathrm{geo},\pm}(M,a))$ near $(3M,\pm\Omega_{\mathrm{ph}}(M))$.
Moreover, the maps $(M,a)\mapsto r_{\mathrm{geo},\pm}(M,a)$ and $(M,a)\mapsto \Omega_{\mathrm{geo},\pm}(M,a)$ are real-analytic
on $\mathcal K\cap\{|a|\le a_{\mathrm{geo}}\}$, and satisfy
\begin{equation}\label{eq:geo-expansion}
\Omega_{\mathrm{geo},\pm}(M,a)=\pm\Omega_{\mathrm{ph}}(M)+c_Z(M)\,a\pm c_{\Omega,2}(M)\,a^2+\mathcal O(a^3),
\qquad a\to0.
\end{equation}
\begin{equation}\label{eq:cOmega2}
c_{\Omega,2}(M):=\frac{\sqrt{3}\,\bigl(11-45\Lambda M^2\bigr)}{486\,M^3\sqrt{1-9\Lambda M^2}}.
\end{equation}

uniformly for $(M,a)$ in compact slow-rotation sets.
\end{lemma}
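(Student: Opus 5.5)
The plan is to prove existence, uniqueness and analyticity via the real-analytic implicit function theorem, and then to get the expansion \eqref{eq:geo-expansion} by solving the system order by order in $a$.

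\textbf{Existence, uniqueness and analyticity.} The map $\mathcal F_{\mathrm{geo}}$ is real-analytic in $(r,\Omega,M,a)$. This holds because $\Phi_{\mathrm{geo}}$ is a polynomial in $\Omega$ with coefficients that are rational in $(r,a)$ and analytic away from $\rho^2=0$; at $\theta=\pi/2$ we have $\rho^2=r^2$, and $r$ stays near $3M>0$. By \eqref{eq:geo-IFT-det} the Jacobian in $(r,\Omega)$ at $(3M,\pm\Omega_{\mathrm{ph}}(M);M,0)$ equals $4\Omega_{\mathrm{ph}}(M)$, which is bounded below on $K_M$. We then apply a quantitative version of the real-analytic implicit function theorem, with uniform $C^2$ bounds on $\mathcal F_{\mathrm{geo}}$ over a fixed compact neighbourhood, or equivalently compactness of $K_M$ combined with the local theorem. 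This gives radii $\varepsilon,a_{\mathrm{geo}}>0$, independent of $M\in K_M$, such that for $|a|\le a_{\mathrm{geo}}$ the equation $\mathcal F_{\mathrm{geo}}=0$ has exactly one solution in each box $|r-3M|<\varepsilon$, $|\Omega\mp\Omega_{\mathrm{ph}}(M)|<\varepsilon$, and that solution depends analytically on $(M,a)$. Since $\Omega_{\mathrm{ph}}\ge c>0$, we may take $\varepsilon<c$, so the two boxes are disjoint; this gives exactly two solutions near the unperturbed ones.

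\textbf{Coefficients of the expansion.} Write $r_\pm=3M+r_1a+r_2a^2+\dots$ and $\Omega_\pm=\pm\Omega_{\mathrm{ph}}+\Omega_1 a+\Omega_2a^2+\dots$. Expand $\Phi_{\mathrm{geo}}$ at $\theta=\pi/2$ explicitly:
\[
\Phi_{\mathrm{geo}}=-\frac{\Delta_r}{r^2}\Bigl(1-\frac{a\Omega}{\Xi}\Bigr)^2+\frac{1}{r^2}\Bigl(a-\frac{(r^2+a^2)\Omega}{\Xi}\Bigr)^2 .
\]
Insert the ansatz and match powers of $a$ in both equations. At order $a$ we get a linear system whose matrix is $D_{(r,\Omega)}\mathcal F_{\mathrm{geo}}$ at $a=0$, with right-hand side $-\partial_a\mathcal F_{\mathrm{geo}}$. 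Solving it should give $\Omega_1=c_Z(M)$ for both signs, consistent with Proposition~\ref{prop:Zeeman-slope} as a check.

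The $a\mapsto-a$, $\varphi\mapsto-\varphi$ symmetry maps $(r_+,\Omega_+)(a)$ to $(r_-,-\Omega_-)(-a)$. Hence $\Omega_-(M,a)=-\Omega_+(M,-a)$, which forces the parity pattern $\pm\Omega_{\mathrm{ph}}+c_Za\pm c_{\Omega,2}a^2$ and means only one sign needs computing. At order $a^2$ we solve the same linear system, now with a right-hand side built from second derivatives of $\mathcal F_{\mathrm{geo}}$ and the first-order data $(r_1,\Omega_1)$. This yields $\Omega_2=c_{\Omega,2}(M)$ as in \eqref{eq:cOmega2}.

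Uniformity of the $\mathcal O(a^3)$ remainder follows from Taylor's theorem, because the third $a$-derivatives of the analytic solution are bounded on the compact set where it is defined.

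\textbf{Main obstacle.} The hard step is the second-order algebra. It involves $\Xi=1+\Lambda a^2/3$, the $a^2$ terms in $\Delta_r$, and the shift $r_1$ feeding back into $\Omega_2$. I would organize it by eliminating $r$ first: set $\Omega=1/b$ and use the equivalent critical-impact-parameter formulation $\partial_r(\ldots)=0$. Then I would check the result against the known $\Lambda=0$ Kerr value $\Omega=\pm\frac{1}{3\sqrt3M}+\frac{2a}{27M^2}\pm\frac{11\sqrt3a^2}{486M^3}$, which matches \eqref{eq:cOmega2} at $\Lambda=0$.
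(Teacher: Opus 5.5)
Your proposal is correct and follows essentially the same route as the paper. The paper likewise uses a uniform implicit function theorem for $\mathcal F_{\mathrm{geo}}$, based on the nondegenerate Jacobian at $(3M,\pm\Omega_{\mathrm{ph}}(M))$ and compactness of $K_M$; it uses the $(a,\varphi)\mapsto(-a,-\varphi)$ reflection to reduce to the co-rotating branch, expands the system order by order in $a$, and bounds the $\mathcal O(a^3)$ remainder by uniform third-derivative estimates.

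One simplification for the second-order algebra you flag, which the paper carries out in Proposition~\ref{prop:second-order-photon}. Since $\partial_r\Phi_{\mathrm{geo}}=0$ at the base point, the order-$a^2$ part of $\Phi_{\mathrm{geo}}=0$ is triangular: $\Omega_2=-\mathcal A/(2\,\partial_\Omega\Phi_{\mathrm{geo}})$, where $\mathcal A$ depends only on $(r_1,\Omega_1)$. So $r_2$ is never needed, and this one equation determines $c_{\Omega,2}$ for all $\Lambda$. Your Kerr check only tests the $\Lambda=0$ value.
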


\begin{proof}
For each fixed $M\in K_M$, the implicit function theorem applied to $\mathcal F_{\mathrm{geo}}(\cdot,\cdot;M,a)$
at $(r,\Omega,a)=(3M,\pm\Omega_{\mathrm{ph}}(M),0)$ yields local real-analytic branches
$(r_{\mathrm{geo},\pm}(M,a),\Omega_{\mathrm{geo},\pm}(M,a))$ for $|a|$ small.
By the uniform lower bound on the Jacobian determinant \eqref{eq:geo-IFT-det} and compactness of $K_M$,
the implicit function theorem constants may be chosen uniformly, giving a single $a_{\mathrm{geo}}>0$
valid for all $M\in K_M$.

The expansion \eqref{eq:geo-expansion} is obtained by a Taylor expansion of the system \eqref{eq:geo-system}
about $(r,\Omega,a)=(3M,\pm\Omega_{\mathrm{ph}}(M),0)$ up to second order in $a$.
Since $\partial_r\Phi_{\mathrm{geo}}=0$ along the orbit, the first derivative at $a=0$ only involves $\partial_a\Phi_{\mathrm{geo}}$,
yielding $\Omega_{\mathrm{geo},\pm}'(M,0)=c_Z(M)$ as in Appendix~\ref{app:equatorial-photon}.
The explicit second-order coefficient $c_{\Omega,2}(M)$ is computed in Appendix~\ref{app:equatorial-photon} as well,
and the uniform $\mathcal O(a^3)$ remainder follows from uniform bounds on third derivatives on compact slow-rotation sets.
\end{proof}

\subsection{The leading equatorial frequency map and a uniform Jacobian lower bound}\label{subsec:geo-nondeg}

The real principal part of the high-frequency equatorial QNM spectrum is induced by these equatorial trapped null geodesics.
To match the QNM normalization used later, it is convenient to consider the \emph{positive} angular velocities
\begin{equation}\label{eq:Omega-sharp-def}
\Omega^\sharp_\pm(M,a):=\pm\,\Omega_{\mathrm{geo},\pm}(M,a),
\end{equation}
which satisfy $\Omega^\sharp_\pm(M,0)=\Omega_{\mathrm{ph}}(M)$ and, by \eqref{eq:geo-expansion},
\begin{equation}\label{eq:Omega-sharp-expansion}
\Omega^\sharp_\pm(M,a)=\Omega_{\mathrm{ph}}(M)\pm c_Z(M)\,a+c_{\Omega,2}(M)\,a^2+\mathcal O(a^3),
\qquad a\to0.
\end{equation}

Define the \emph{geometric} equatorial data map
\begin{equation}\label{eq:G-geo-def}
\begin{aligned}
\mathcal G_{\mathrm{geo}}(M,a)&:=\bigl(U_{\mathrm{geo}}(M,a),V_{\mathrm{geo}}(M,a)\bigr),\\
U_{\mathrm{geo}}(M,a)&:=\frac{\Omega^\sharp_+(M,a)+\Omega^\sharp_-(M,a)}{2},\\
V_{\mathrm{geo}}(M,a)&:=\frac{\Omega^\sharp_+(M,a)-\Omega^\sharp_-(M,a)}{2}.
\end{aligned}
\end{equation}
The first coordinate is an averaged orbital frequency and the second measures the splitting between the two equatorial branches.

\begin{lemma}[Parity and Taylor structure]\label{lem:geo-parity}
The geometric map is even/odd in the rotation parameter: $U_{\mathrm{geo}}$ is even in $a$ and $V_{\mathrm{geo}}$ is odd in $a$.
Consequently,
\begin{align*}
\partial_a U_{\mathrm{geo}}(M,0)&=0, & \partial_M V_{\mathrm{geo}}(M,0)&=0,\\
\partial_M U_{\mathrm{geo}}(M,0)&=\Omega_{\mathrm{ph}}'(M), &
\partial_a V_{\mathrm{geo}}(M,0)&=c_Z(M).
\end{align*}
\end{lemma}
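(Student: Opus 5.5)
The plan is to derive the parity from the discrete symmetry of the Kerr--de~Sitter family, $(a,\varphi)\mapsto(-a,-\varphi)$, and then read off the derivatives at $a=0$ from the Taylor expansion \eqref{eq:Omega-sharp-expansion}.

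\emph{Step 1: the metric symmetry.} Inspection of \eqref{eq:kds_metric_BL} shows that $\rho^2$, $\Delta_\theta$, $\Xi$ and $\Delta_r$ depend on $a$ only through $a^2$. The cross terms are $a\,dt\,d\varphi$ and $a\,dt\cdot(r^2+a^2)d\varphi$. Hence pulling $g_{M,-a}$ back by $\varphi\mapsto-\varphi$ gives $g_{M,a}$. On the equatorial plane, $g_{M,-a}(T+\Omega\Phi,T+\Omega\Phi)=g_{M,a}(T-\Omega\Phi,T-\Omega\Phi)$, i.e.
\[
\Phi_{\mathrm{geo}}(r,\Omega;M,-a)=\Phi_{\mathrm{geo}}(r,-\Omega;M,a).
\]
The same identity then holds for $\partial_r\Phi_{\mathrm{geo}}$.

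\emph{Step 2: the symmetry swaps the two photon-orbit branches.} By Step 1, $(r,\Omega)$ solves \eqref{eq:geo-system} at $(M,-a)$ if and only if $(r,-\Omega)$ solves it at $(M,a)$. Consider the solution $(r_{\mathrm{geo},+}(M,-a),\Omega_{\mathrm{geo},+}(M,-a))$, which lies near $(3M,\Omega_{\mathrm{ph}})$. It yields the solution $(r_{\mathrm{geo},+}(M,-a),-\Omega_{\mathrm{geo},+}(M,-a))$ at $(M,a)$, which lies near $(3M,-\Omega_{\mathrm{ph}})$. By the uniqueness part of Lemma~\ref{lem:uniform-photon-orbits}, this must be the minus branch:
\[
\Omega_{\mathrm{geo},-}(M,a)=-\Omega_{\mathrm{geo},+}(M,-a).
\]
In terms of \eqref{eq:Omega-sharp-def} this reads $\Omega^\sharp_-(M,a)=\Omega^\sharp_+(M,-a)$. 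Substituting into \eqref{eq:G-geo-def} gives $U_{\mathrm{geo}}(M,-a)=U_{\mathrm{geo}}(M,a)$ and $V_{\mathrm{geo}}(M,-a)=-V_{\mathrm{geo}}(M,a)$.

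\emph{Step 3: derivatives at $a=0$.} Evenness of $U_{\mathrm{geo}}$ gives $\partial_aU_{\mathrm{geo}}(M,0)=0$. Oddness of $V_{\mathrm{geo}}$ gives $V_{\mathrm{geo}}(M,0)=0$ for all $M$, hence $\partial_MV_{\mathrm{geo}}(M,0)=0$. Since $\Omega^\sharp_\pm(M,0)=\Omega_{\mathrm{ph}}(M)$, we have $U_{\mathrm{geo}}(M,0)=\Omega_{\mathrm{ph}}(M)$, and differentiating in $M$ (allowed by real-analyticity) gives $\partial_MU_{\mathrm{geo}}(M,0)=\Omega_{\mathrm{ph}}'(M)$. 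Finally, \eqref{eq:Omega-sharp-expansion} gives $V_{\mathrm{geo}}=c_Z(M)a+\mathcal O(a^3)$, so $\partial_aV_{\mathrm{geo}}(M,0)=c_Z(M)$.

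The only delicate point is the branch identification in Step 2. It needs the reflected solution to land inside the uniqueness neighborhood of the minus branch. This holds because the neighborhoods are symmetric under $\Omega\mapsto-\Omega$ and $|a|\le a_{\mathrm{geo}}$ is preserved by $a\mapsto-a$. If $\mathcal K$ is not itself symmetric in $a$, the argument should be run on the local branches defined on the symmetric set $\{|a|\le a_{\mathrm{geo}}\}$ for $M\in K_M$.
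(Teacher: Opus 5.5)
Your proposal is correct and follows essentially the same route as the paper: the reflection $(a,\varphi)\mapsto(-a,-\varphi)$ swaps the co- and counter-rotating branches, giving $\Omega^\sharp_\pm(M,-a)=\Omega^\sharp_\mp(M,a)$, and the derivatives at $a=0$ are then read off from \eqref{eq:Omega-sharp-expansion}. Your identity $\Phi_{\mathrm{geo}}(r,\Omega;M,-a)=\Phi_{\mathrm{geo}}(r,-\Omega;M,a)$, the implicit-function-theorem uniqueness argument for identifying the branches, and your remark on working over a domain symmetric in $a$ make rigorous what the paper asserts in one line.
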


\begin{proof}
The Kerr--de~Sitter family is invariant under $(a,\varphi)\mapsto(-a,-\varphi)$.
On the equatorial plane this conjugation sends $\Omega=d\varphi/dt$ to $-\Omega$ and exchanges the co-/counter-rotating orbits.
Therefore $\Omega_{\mathrm{geo},\pm}(M,-a)=-\Omega_{\mathrm{geo},\mp}(M,a)$, and hence
$\Omega^\sharp_\pm(M,-a)=\Omega^\sharp_\mp(M,a)$.
Averaging and differencing gives the asserted parity for $(U_{\mathrm{geo}},V_{\mathrm{geo}})$.

At $a=0$, \eqref{eq:Omega-sharp-expansion} yields $U_{\mathrm{geo}}(M,0)=\Omega_{\mathrm{ph}}(M)$ and
$V_{\mathrm{geo}}(M,a)=c_Z(M)a+\mathcal O(a^3)$ as $a\to0$, and differentiating gives the claimed identities.
\end{proof}

\begin{proposition}[Uniform nondegeneracy of the geometric data map]\label{prop:geo-nondegeneracy}
Let $\mathcal K$ be as in \eqref{eq:Kcompact}.
After possibly shrinking the slow-rotation constant $a_0$ in \eqref{eq:Kcompact} (without changing $\mathcal K$),
there exist constants $c_{\mathrm{geo}},C_{\mathrm{geo}}>0$ such that for all $(M,a)\in\mathcal K$:
\begin{equation}\label{eq:geo-Jac}
\bigl|\det D\mathcal G_{\mathrm{geo}}(M,a)\bigr|\ge c_{\mathrm{geo}},
\end{equation}
and for all $(M_1,a_1),(M_2,a_2)\in\mathcal K$ sufficiently close,
\begin{equation}\label{eq:geo-stability}
|(M_1,a_1)-(M_2,a_2)|\ \le\ C_{\mathrm{geo}}\,
\bigl|\mathcal G_{\mathrm{geo}}(M_1,a_1)-\mathcal G_{\mathrm{geo}}(M_2,a_2)\bigr|.
\end{equation}
In particular, $\mathcal G_{\mathrm{geo}}$ is a $C^1$ local diffeomorphism at every point of $\mathcal K$.
\end{proposition}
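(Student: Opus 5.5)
The plan is a Taylor expansion of the Jacobian about $a=0$, followed by a compactness argument. Lemma~\ref{lem:geo-parity} gives the Jacobian at $a=0$. Since $\partial_a U_{\mathrm{geo}}(M,0)=0$ and $\partial_M V_{\mathrm{geo}}(M,0)=0$, the matrix $D\mathcal G_{\mathrm{geo}}(M,0)$ is diagonal, and
\[
\det D\mathcal G_{\mathrm{geo}}(M,0)=\Omega_{\mathrm{ph}}'(M)\,c_Z(M).
\]
I would first make this explicit. Differentiating \eqref{eq:Omega-cZ-def} gives
\[
\Omega_{\mathrm{ph}}'(M)=-\frac{1}{3\sqrt3\,M^2\sqrt{1-9\Lambda M^2}},
\]
which is strictly negative and bounded away from $0$ on the compact set $K_M\Subset(0,1/(3\sqrt\Lambda))$. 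The factor $c_Z(M)=(2+9\Lambda M^2)/(27M^2)$ is bounded below by a positive constant on $K_M$. Hence $|\det D\mathcal G_{\mathrm{geo}}(M,0)|\ge 2c_{\mathrm{geo}}$ for some $c_{\mathrm{geo}}>0$ depending only on $K_M$ and $\Lambda$.

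Next I would control the $a$-dependence. By Lemma~\ref{lem:uniform-photon-orbits}, $\Omega_{\mathrm{geo},\pm}$ are real-analytic on $\mathcal K\cap\{|a|\le a_{\mathrm{geo}}\}$, so $D\mathcal G_{\mathrm{geo}}$ is at least $C^1$ there. The second derivatives are bounded uniformly: take a compact neighborhood of $K_M\times\{0\}$ inside the domain of the implicit-function branches and use compactness. The parity in Lemma~\ref{lem:geo-parity} says $U_{\mathrm{geo}}$ is even and $V_{\mathrm{geo}}$ is odd in $a$. Therefore $\partial_MU_{\mathrm{geo}}$ and $\partial_aV_{\mathrm{geo}}$ are even, with corrections of order $\mathcal O(a^2)$, while $\partial_aU_{\mathrm{geo}}$ and $\partial_MV_{\mathrm{geo}}$ are odd, hence of order $\mathcal O(a)$. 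Their product is $\mathcal O(a^2)$, so
\[
\det D\mathcal G_{\mathrm{geo}}(M,a)=\Omega_{\mathrm{ph}}'(M)c_Z(M)+\mathcal O(a^2)
\]
uniformly. Shrinking $a_0$ (with $a_0\le a_{\mathrm{geo}}$) so that the remainder is at most $c_{\mathrm{geo}}$ gives \eqref{eq:geo-Jac}. This shrinking leaves the statement about $\mathcal K$ unchanged, since it only requires $\mathcal K\subset\{|a|\le a_0\}$.

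For \eqref{eq:geo-stability}, I would use the quantitative inverse function theorem. The derivative $D\mathcal G_{\mathrm{geo}}$ is uniformly bounded, say $\|D\mathcal G_{\mathrm{geo}}\|\le C_1$, and its determinant is bounded below by $c_{\mathrm{geo}}$. So $\|D\mathcal G_{\mathrm{geo}}^{-1}\|\le C_1/c_{\mathrm{geo}}$ by the adjugate formula. Uniform continuity of $D\mathcal G_{\mathrm{geo}}$ gives a radius $\rho>0$, independent of the base point, such that on any ball of radius $\rho$,
\[
\|D\mathcal G_{\mathrm{geo}}(x)-D\mathcal G_{\mathrm{geo}}(y)\|\le \frac{c_{\mathrm{geo}}}{2C_1}.
\]
Writing $\mathcal G_{\mathrm{geo}}(x_1)-\mathcal G_{\mathrm{geo}}(x_2)=\int_0^1 D\mathcal G_{\mathrm{geo}}(x_2+s(x_1-x_2))\,ds\,(x_1-x_2)$, the averaged matrix stays within half the inverse-norm bound of $D\mathcal G_{\mathrm{geo}}(x_2)$. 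It is therefore invertible, with inverse norm at most $2C_1/c_{\mathrm{geo}}=:C_{\mathrm{geo}}$. This yields the bound for $|x_1-x_2|<\rho$, which is what ``sufficiently close'' means here. The segment may leave $\mathcal K$, so I would work on a slightly larger compact neighborhood that still lies in the analytic domain. The local diffeomorphism claim then follows from the inverse function theorem.

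I expect the main obstacle to be the uniformity of the $\mathcal O(a^2)$ remainder in the Jacobian. It relies on the implicit-function branches from Lemma~\ref{lem:uniform-photon-orbits} having uniformly bounded second derivatives in $(M,a)$, not only in $a$. I would obtain this by applying the analytic implicit function theorem jointly in $(M,a)$ near the compact set $K_M\times\{0\}$, using the uniform nonvanishing determinant $4\Omega_{\mathrm{ph}}(M)$ from \eqref{eq:geo-IFT-det}. Cauchy estimates on a fixed complex neighborhood then bound all derivatives.
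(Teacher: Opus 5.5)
Your proposal is correct and follows essentially the same route as the paper. The Jacobian at $a=0$ is diagonal with determinant $\Omega_{\mathrm{ph}}'(M)c_Z(M)$, the parity of $U_{\mathrm{geo}},V_{\mathrm{geo}}$ makes the correction $\mathcal O(a^2)$ uniformly on $\mathcal K$, shrinking $a_0$ gives \eqref{eq:geo-Jac}, and the bound $\|A^{-1}\|\le\|A\|/|\det A|$ then yields stability. Your averaged-Jacobian/Neumann-series step on a slightly enlarged neighborhood is a more careful version of the paper's bare appeal to the mean value theorem, since it also handles possible non-convexity of $\mathcal K$.
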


\begin{proof}
Let $K_M$ be the projection of $\mathcal K$ to the $M$-axis.
A direct differentiation of \eqref{eq:Omega-cZ-def} gives
\[
\Omega_{\mathrm{ph}}'(M)
=
-\frac{1}{3\sqrt3}\Bigl(\frac{\sqrt{1-9\Lambda M^2}}{M^2}
+\frac{9\Lambda}{\sqrt{1-9\Lambda M^2}}\Bigr)<0
\quad\text{for}\quad 0<M<\frac{1}{3\sqrt\Lambda}.
\]
In particular, $\Omega_{\mathrm{ph}}'(M)$ is bounded away from $0$ on the compact set $K_M$.
Since $c_Z(M)>0$ on $(0,\infty)$, we may define
\begin{equation}\label{eq:cgeo-def}
c_0:=\frac12\,\min_{M\in K_M}\bigl|\Omega_{\mathrm{ph}}'(M)\,c_Z(M)\bigr|>0.
\end{equation}

Using \eqref{eq:Omega-sharp-expansion} and the parity in Lemma~\ref{lem:geo-parity}, Taylor's theorem yields uniform bounds
\[
\partial_M U_{\mathrm{geo}}(M,a)=\Omega_{\mathrm{ph}}'(M)+\mathcal O(a^2),\qquad
\partial_a V_{\mathrm{geo}}(M,a)=c_Z(M)+\mathcal O(a^2),
\]
and
\[
\partial_a U_{\mathrm{geo}}(M,a)=\mathcal O(|a|),\qquad
\partial_M V_{\mathrm{geo}}(M,a)=\mathcal O(|a|),
\]
with all error terms uniform on $\mathcal K$.
Therefore,
\[
\det D\mathcal G_{\mathrm{geo}}(M,a)
=
\Omega_{\mathrm{ph}}'(M)c_Z(M)\ +\ \mathcal O(a^2),
\]
uniformly on $\mathcal K$.
Shrinking $a_0$ if necessary so that the $\mathcal O(a^2)$ term is bounded by $c_0$ on $\mathcal K$,
we obtain \eqref{eq:geo-Jac} with $c_{\mathrm{geo}}:=c_0$.

For stability, compactness of $\mathcal K$ gives a uniform bound $\|D\mathcal G_{\mathrm{geo}}\|\le L_{\mathrm{geo}}$.
Using the $2\times2$ bound $\|A^{-1}\|\le \|A\|/|\det A|$ and \eqref{eq:geo-Jac} yields
$\sup_{\mathcal K}\|D\mathcal G_{\mathrm{geo}}^{-1}\|\le L_{\mathrm{geo}}/c_{\mathrm{geo}}$,
and the mean value theorem gives \eqref{eq:geo-stability} with $C_{\mathrm{geo}}:=L_{\mathrm{geo}}/c_{\mathrm{geo}}$.
\end{proof}

\begin{lemma}[Geometric approximation of the two-mode pseudopole map]\label{lem:Gell-geo-C1}
Fix $\Lambda>0$, a compact slow-rotation parameter set $\mathcal K$ as in \eqref{eq:Kcompact}, and an overtone
$n\in\{0,1,\dots,C_n\}$.
Then there exist $\ell_0\in\mathbb N$ and a constant $C>0$ such that for all $\ell\ge \ell_0$ one has
\begin{equation}\label{eq:Gell-geo-C1}
\bigl\|\mathcal G_\ell-\mathcal G_{\mathrm{geo}}\bigr\|_{C^1(\mathcal K)}\le C\,\ell^{-1},
\end{equation}
where $\mathcal G_\ell$ is the normalized equatorial two-mode pseudopole map defined in \eqref{eq:G-map-def} and
$\mathcal G_{\mathrm{geo}}$ is the geometric map \eqref{eq:G-geo-def}.
In particular, $\mathcal G_\ell$ is a $C^1$ perturbation of the local diffeomorphism in
Proposition~\ref{prop:geo-nondegeneracy}.
\end{lemma}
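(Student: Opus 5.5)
The strategy is to write each normalized equatorial pseudopole real part as its homogeneous principal value plus a remainder controlled in $C^1$ in the parameters. Then we identify the principal value with the geometric angular velocity $\Omega^\sharp_\pm(M,a)$.

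Concretely, $\mathcal G_\ell$ is built from $\Re\,\omega^\sharp_{n,\ell,\pm\ell}/\ell$, with $\omega^\sharp_{n,\ell,\pm\ell}=F(n,\ell,\pm\ell)$. By the classical expansion \eqref{eq:F-symbol},
\[
F(n,\ell,\pm\ell)=F_0(n,\ell,\pm\ell)+F_1(n,\ell,\pm\ell)+R_2,
\]
where $F_1$ is homogeneous of degree $0$ and $R_2=\mathcal O(\ell^{-1})$. Homogeneity of degree one gives $F_0(n,\ell,\pm\ell)=\ell\,F_0(n,1,\pm1)$, which is real and independent of $n$ by Lemma~\ref{lem:implicit-frequency-symbol}. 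Dividing by $\ell$, we get
\[
\Re\frac{\omega^\sharp_{n,\ell,\pm\ell}}{\ell}=F_0(n,1,\pm1)+\mathcal O(\ell^{-1}).
\]
This bound is uniform on $\mathcal K$, because the symbol terms are smooth in $(M,a)$ up to the boundary $|k|=\ell$ (Remark~\ref{rem:equatorial-boundary} and \S\ref{subsec:equatorial-regime}).

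For the $C^1$ statement, I would use the parameter-differentiable remainder bounds of Appendix~\ref{app:parameter-quantization}. These state that the symbol expansion can be differentiated once in $\mu=(M,a)$, with remainders of the same order, uniformly on $\mathcal K$. Hence $\partial_\mu F_j$ are homogeneous of the same degree and $\partial_\mu R_2=\mathcal O(\ell^{-1})$. One further point needs care: the Langer shift, i.e.\ whether the normalization uses $\ell$ or $\ell+\tfrac12$. It only produces factors $1+\mathcal O(\ell^{-1})$ that are independent of $\mu$, so the $C^1$ bound is unaffected.

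The main step, and the one I expect to be the real obstacle, is the identification
\[
F_0(n,1,\pm1)=\Omega^\sharp_\pm(M,a)\quad\text{on }\mathcal K.
\]
Here Remark~\ref{rem:meaning} is used in its precise form: the principal symbol is the frequency map of the trapped Hamiltonian flow. At the equatorial boundary point $|\tilde k|=\tilde\ell$, the invariant torus degenerates to the equatorial circular photon orbit. For that orbit the energy--momentum relation gives $E=\Omega\,L$, with $L=\pm\tilde\ell$ and $\Omega=\Omega_{\mathrm{geo},\pm}$. Hence $\tilde\omega=F^\omega_0(\tilde\ell,\pm\tilde\ell)=\pm\Omega_{\mathrm{geo},\pm}\,\tilde\ell=\Omega^\sharp_\pm\,\tilde\ell$.

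To make this rigorous, I would take the principal parts of \eqref{eq:quantization-eq} at $\tilde\ell=\pm\tilde k$. The equation $F^r_0=F^\theta_0$ with the equatorial angular value \eqref{eq:equatorial-Ftheta0} is, up to the separation-constant normalization, exactly the double-root condition for the equatorial radial potential. That double-root condition is equivalent to the system \eqref{eq:geo-system}, via the conserved quantities $E=-p_t$ and $L=p_\varphi$. Its unique solution near $(3M,\pm\Omega_{\mathrm{ph}})$ is given by Lemma~\ref{lem:uniform-photon-orbits}.

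As consistency checks, at $a=0$ this reproduces $\Omega_{\mathrm{ph}}(M)$ from Corollary~\ref{cor:a0-lattice}, and the $a$-slope reproduces $c_Z$ from Proposition~\ref{prop:Zeeman-slope}. Both agree with \eqref{eq:Omega-sharp-expansion}. If the exact identification at the principal level proves delicate, the fallback is the uniqueness part of Lemma~\ref{lem:uniform-photon-orbits}, together with real-analyticity in $a$.

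Finally, averaging and differencing the two branches gives $\mathcal G_\ell=\mathcal G_{\mathrm{geo}}+E_\ell$ with $\|E_\ell\|_{C^1(\mathcal K)}\le C\ell^{-1}$ for $\ell\ge\ell_0$, which is \eqref{eq:Gell-geo-C1}.
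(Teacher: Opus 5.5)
Your overall structure is the paper's. You split $\omega^\sharp_\pm/\ell$ into its homogeneous principal value plus a remainder that is $\mathcal O(\ell^{-1})$ in $C^1(\mathcal K)$ by the parameter-differentiable bounds of Appendix~\ref{app:parameter-quantization}. You then identify the principal value with $\Omega^\sharp_\pm$ and average and difference.

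The difference is in how you justify the key identity \eqref{eq:F0-geo-identification}. The paper imports it directly from the geometric-optics reading of Dyatlov's $F_0$: on an equatorial trapped null orbit $E=\Omega L$, so $\omega/k=\Omega$ at the principal level. You instead derive it from the principal part of \eqref{eq:quantization-eq} at $\tilde\ell=\pm\tilde k$. This turns $F^r_0=F^\theta_0$ into the double-root condition for the equatorial radial function with $E=\tilde\omega$, $L=\tilde k$, and then into \eqref{eq:geo-system}. That last step is fine: on the equatorial plane the radial function is an $r$-dependent nonvanishing multiple of $\Phi_{\mathrm{geo}}(r,E/L)$, so the two double-root systems coincide. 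Your route is more checkable than the paper's; the paper's route is shorter and never needs exact equatorial values of the separated symbols.

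That last point is where your version needs care. \eqref{eq:equatorial-Ftheta0} as stated carries an $\mathcal O(a^2)$ error, and Proposition~\ref{prop:radial-quant} gives $F^r_0$ explicitly only at $a=0$. Used literally, these inputs yield only $\ell^{-1}\Re F_0(n,\ell,\pm\ell)=\Omega^\sharp_\pm(M,a)+\mathcal O(a^2)$. That is not $\mathcal O(\ell^{-1})$ on $\mathcal K$, where $|a|$ ranges up to the fixed $a_0$.

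You need two exact facts from Dyatlov's construction:
\begin{enumerate}
\item $F^r_0(\tilde\omega,\tilde k)$ is the critical value in $r$ of $\Xi^2\Delta_r^{-1}\bigl((r^2+a^2)\tilde\omega-a\tilde k\bigr)^2$.
\item At $|\tilde k|=\tilde\ell$, the angular principal symbol is the bottom of the angular well, i.e.\ zero $\theta$-action. For small $a$ and $|\tilde k|\approx\tilde\ell$, the function $\Xi^2(a\tilde\omega\sin^2\theta-\tilde k)^2/(\Delta_\theta\sin^2\theta)$ has its unique minimum at $\theta=\pi/2$, by the $\theta\mapsto\pi-\theta$ symmetry and nondegeneracy at $a=0$. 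Hence $F^\theta_0(\pm\tilde k,\tilde\omega,\tilde k)=\Xi^2(\tilde k-a\tilde\omega)^2$ exactly.
\end{enumerate}
With these, your argument gives the exact identity, and Appendix~\ref{app:parameter-quantization} finishes the $C^1$ bound.

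Two smaller points. First, your fallback (agreement at $a=0$ and at first order, plus uniqueness and analyticity) cannot replace this, since it does not determine $F_0$ at finite $a$. Second, the Langer-shift correction is not $\mu$-independent: it moves the ratio $k/\ell$ by $\mathcal O(\ell^{-1})$. It is still $\mathcal O(\ell^{-1})$ in $C^1(\mathcal K)$ by smoothness of the symbols up to $|\tilde k|=\tilde\ell$.
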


\begin{proof}
We use two inputs:

\smallskip
\noindent\emph{(1) Principal symbol and the geometric frequency.}
By Dyatlov's semiclassical quantization (Theorem~\ref{thm:Dyatlov-quantization}), the pseudopoles are given by a classical
symbol $F(n,\ell,k)$ of order $1$ in $(\ell,k)$, whose principal part $F_0$ is real-valued (and independent of $n$) and arises
from the integrable Hamiltonian flow on the trapped set; see \cite[Theorem~1]{Dyatlov2012AsymptoticQNMKdS} and
Remark~\ref{rem:meaning}.
For an oscillatory mode $e^{-i\omega t}e^{ik\varphi}$ the associated covector is $p=-\omega\,dt+k\,d\varphi$.
Along a trapped null ray, Hamilton's equations give the group velocity
\(\Omega=d\varphi/dt=\omega/k\) at the level of the real principal symbol.
This is exactly the geometric-optics correspondence underlying Dyatlov's construction of the principal frequency symbol $F_0$ from the integrable Hamiltonian flow on the trapped set; see \cite[\S\S1.2 and 3]{Dyatlov2012AsymptoticQNMKdS}.
Therefore, on the equatorial trapped orbits (i.e.\ at the boundary ratio $k/\ell=\pm1$), the normalized principal frequency
obeys
\begin{equation}\label{eq:F0-geo-identification}
\frac{1}{\ell}\,\Re F_0(M,a;n,\ell,\pm\ell)=\Omega^\sharp_\pm(M,a).
\end{equation}
The right-hand side is precisely the coordinate-time angular velocity of the co-/counter-rotating equatorial circular photon orbit,
defined geometrically in \eqref{eq:Omega-sharp-def}; see Appendix~\ref{app:equatorial-photon} for an explicit derivation and the
small-$a$ expansion.

\smallskip
\noindent\emph{(2) Uniform subprincipal control with one parameter derivative.}
Fix a representative of the full symbol $F$ as in Remark~\ref{rem:pseudopole-representative}.
By the parameter-differentiable normal form and Grushin reduction (Appendix~\ref{app:parameter-quantization}), the full symbol
admits an expansion in $\ell^{-1}$ whose remainder is uniform on $\mathcal K$ and stable under one $(M,a)$-derivative.
In particular, using \eqref{eq:F0-geo-identification} for the principal term, we obtain
\begin{equation}\label{eq:omega-symb-expansion}
\widehat\omega^\sharp_\pm(M,a)=\Omega^\sharp_\pm(M,a)+\mathcal O(\ell^{-1}),
\qquad (M,a)\in\mathcal K,
\end{equation}
with the $\mathcal O(\ell^{-1})$ remainder uniform in $C^1(\mathcal K)$.

Taking the average and difference as in \eqref{eq:av-dif-def} and then real parts as in \eqref{eq:G-map-def}, we obtain
\eqref{eq:Gell-geo-C1}.
\end{proof}

\section{Inverse theorem for pseudopoles}\label{sec:inverse-pseudopoles}

In this section we establish a quantitative \emph{local inverse theorem} at the level of the
\emph{pseudopoles} (the semiclassical outputs of the Bohr--Sommerfeld quantization)
\[
\omega^\sharp_{n,\ell,k}(M,a)=F_{M,a}(n,\ell,k),
\]
constructed in Section~\ref{sec:HF-quantization}.
By Dyatlov's high-frequency description of Kerr--de~Sitter quasinormal modes (QNMs),
these pseudopoles parametrize the high-frequency QNMs up to super--polynomial errors in $\ell$
in the half-plane $\{\Im\omega>-\nu_0\}$; see Theorem~\ref{thm:Dyatlov-quantization} and
\cite{Dyatlov2012AsymptoticQNMKdS}.
The inverse mechanism carried out here is conceptually related to the one-parameter result of
Uhlmann--Wang for de~Sitter--Schwarzschild, where a single complex QNM determines the mass locally
\cite{UhlmannWang2023RecoverMassSingleQNM}.  Our novelty is the passage to the \emph{two-parameter}
Kerr--de~Sitter family and the identification of equatorial high-frequency mode packages that yield a
well-conditioned inverse map.

Throughout this section we fix an overtone index $n\in\{0,1,\dots,C_n\}$ and consider $\ell\gg1$.
All constants are uniform on a fixed compact parameter set $\mathcal K$ as in \eqref{eq:Kcompact};
in particular $|a|\le a_0$ on $\mathcal K$, with $a_0>0$ chosen sufficiently small so that the
semiclassical quantization and the parameter-differentiable remainder bounds of
Appendix~\ref{app:parameter-quantization} hold uniformly on $\mathcal K$.

\subsection{Pseudopole maps and observable choices}\label{subsec:pseudo-maps}

For each large $\ell$ and $|k|\le \ell$ we recall the pseudopole frequency
\[
\omega^\sharp_{n,\ell,k}(M,a):=F_{M,a}(n,\ell,k)\in\mathbb C,
\qquad (M,a)\in\mathcal K.
\]
We will focus on the equatorial pair $k=\pm\ell$ and write
\begin{equation}\label{eq:equatorial-pseudopoles}
\omega^\sharp_\pm(M,a):=\omega^\sharp_{n,\ell,\pm\ell}(M,a),\qquad
\widehat\omega^\sharp_\pm(M,a):=\frac{1}{\ell}\,\omega^\sharp_\pm(M,a).
\end{equation}
Two data maps will be useful.

\paragraph{(A) Single-mode map (one complex frequency).}
A single complex number carries two real degrees of freedom.  For fixed $(n,\ell)$ we define
\begin{equation}\label{eq:single-mode-map}
\mathcal S_\ell(M,a)
:=
\Bigl(
\widetilde W_\ell(M,a),\ U_\ell^{+}(M,a)
\Bigr)
:=
\Bigl(
-\frac{\Im\omega^\sharp_{+}(M,a)}{n+\frac12},\ \Re\widehat\omega^\sharp_{+}(M,a)
\Bigr)\in\mathbb R^2.
\end{equation}

By barrier--top quantization, the normalized damping observable is governed at leading order by the
(coordinate-time) Lyapunov exponent $\lambda_+$ of the corresponding equatorial photon orbit:
\begin{equation}\label{eq:W-Lyapunov}
\widetilde W_\ell(M,a)=\lambda_+(M,a)+\mathcal O(\ell^{-1}),
\qquad (M,a)\in\mathcal K,
\end{equation}
with the $\mathcal O(\ell^{-1})$ remainder uniform on $\mathcal K$ (and similarly when allowing $\Lambda$ to vary on compact sets).
We compute the small-$a$ Taylor coefficients of $\lambda_+$ in Appendix~\ref{app:equatorial-damping}--\ref{app:lyapunov}.
The first coordinate is the normalized damping rate and the second is the normalized oscillation frequency.

\paragraph{(B) Two-mode real-part map (equatorial package).}
To avoid the imaginary part of the data, we also use the equatorial pair and define the normalized
average and difference
\begin{equation}\label{eq:av-dif-def}
\begin{aligned}
\widehat\omega^\sharp_{\mathrm{av}}(M,a)&:=
\frac{\widehat\omega^\sharp_{+}(M,a)+\widehat\omega^\sharp_{-}(M,a)}{2},\\
\widehat\omega^\sharp_{\mathrm{dif}}(M,a)&:=
\frac{\widehat\omega^\sharp_{+}(M,a)-\widehat\omega^\sharp_{-}(M,a)}{2}.
\end{aligned}
\end{equation}
and their real parts
\begin{equation}\label{eq:G-map-def}
\mathcal G_\ell(M,a):=\bigl(U_\ell(M,a),V_\ell(M,a)\bigr)
:=
\Bigl(\Re\widehat\omega^\sharp_{\mathrm{av}}(M,a),\ \Re\widehat\omega^\sharp_{\mathrm{dif}}(M,a)\Bigr)\in\mathbb R^2.
\end{equation}
This is the ``equatorial two-mode package'' emphasized in the introduction: it uses only the real parts of
two labeled frequencies and exploits the $(a,k)$ symmetry to produce an almost triangular inverse map.

\begin{remark}[On labeling and the sign of $a$]\label{rem:labeling-sign}
The maps $\mathcal S_\ell$ and $\mathcal G_\ell$ use \emph{labeled} equatorial branches $k=\pm\ell$
(or, equivalently, labeled QNM families corresponding to $e^{ik\varphi}$ with $k=\pm\ell$).
If one is only given an \emph{unordered} pair $\{\Re\widehat\omega^\sharp_{+},\Re\widehat\omega^\sharp_{-}\}$,
then the symmetry in Lemma~\ref{lem:a-k-symmetry} shows that exchanging the two branches is equivalent to
changing $a\mapsto -a$.  In that case one can recover $|a|$ but not its sign.
\end{remark}

\subsection{The \texorpdfstring{$(a,k)$}{(a,k)} reflection symmetry}\label{subsec:symmetry}

The following symmetry is built into Kerr--de~Sitter and its separated equations: flipping the rotation
parameter corresponds to reversing the azimuthal direction.

\begin{lemma}[$(a,k)$ symmetry]\label{lem:a-k-symmetry}
For all admissible $(M,a)$, all $n,\ell$, and all $|k|\le \ell$, the pseudopole symbol satisfies
\begin{equation}\label{eq:a-k-symmetry}
F_{M,-a}(n,\ell,k)=F_{M,a}(n,\ell,-k).
\end{equation}
Equivalently,
\[
\omega^\sharp_{n,\ell,k}(M,-a)=\omega^\sharp_{n,\ell,-k}(M,a).
\]
\end{lemma}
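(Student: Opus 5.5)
The plan is to prove \eqref{eq:a-k-symmetry} first at the level of the separated operators, then carry it through the quantization maps of Propositions~\ref{prop:radial-quant}--\ref{prop:angular-quant}, and finally through the implicit solution of Lemma~\ref{lem:implicit-frequency-symbol}.

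\emph{Step 1: operator identity.} In \eqref{eq:Pr-def}, the functions $\Delta_r,\Xi$ depend on $a$ only through $a^2$, and the potential contains $((r^2+a^2)\omega-ak)^2$, which is invariant under $(a,k)\mapsto(-a,-k)$. Hence
\[
P_r(\omega,k)\big|_{-a}=P_r(\omega,-k)\big|_{a}.
\]
In \eqref{eq:Ptheta-def}, $\Delta_\theta$ is even in $a$. On the $k$th azimuthal sector one has $D_\varphi=k$, so the operator sees $(a\omega\sin^2\theta-k)^2$. This too is invariant under $(a,k)\mapsto(-a,-k)$. Equivalently, conjugating by the reflection $\varphi\mapsto-\varphi$ maps the $k$-sector at $-a$ onto the $(-k)$-sector at $a$. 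The outgoing conditions at $r_e,r_c$ are also preserved, because the star coordinates \eqref{eq:tortoise_primitives} change only by $G'\mapsto -G'$, which is consistent with $\varphi_*\mapsto-\varphi_*$.

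\emph{Step 2: pole sets.} The cutoff radial resolvent and the angular resolvent therefore satisfy $R_r(\omega,\lambda,k)|_{-a}=R_r(\omega,\lambda,-k)|_a$ and $R_\theta|_{-a,k}\cong R_\theta|_{a,-k}$. Their $\lambda$-pole sets coincide, so Proposition~\ref{lem:QNM-common-poles} gives $\mathrm{Res}(P_k)|_{-a}=\mathrm{Res}(P_{-k})|_a$.

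\emph{Step 3: symbols.} Next I would show that the symbols themselves, not merely their pole sets, obey $F^r_{-a}(n,\tilde\omega,\tilde\nu,\tilde k)=F^r_a(n,\tilde\omega,\tilde\nu,-\tilde k)$, and the analogous identity for $F^\theta$, modulo $O(h^\infty)$. The argument is that both sides are classical symbols smooth in $\tilde k$ which describe the same pole sets with the same labeling. For the radial map the labeling is by $n$, which counts barrier-top levels ordered by imaginary part. For the angular map it is by $\tilde\ell$, which is the Bohr--Sommerfeld index and is determined continuously from $a=0$, where $F^\theta=\tilde\ell^2-h^2/4$ is even in $\tilde k$. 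Since two such labeled symbol families agree on a set that is $h$-dense in $\tilde k$, their asymptotic expansions agree term by term. The cleaner route is to observe that Dyatlov's construction is covariant under the reflection $\varphi\mapsto-\varphi$, since it uses only the operators of Step~1.

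\emph{Step 4: frequency symbol.} With Step~3 in hand, the quantization equation \eqref{eq:quantization-eq} at $(-a,\tilde k)$ is literally the equation at $(a,-\tilde k)$. The uniqueness part of Lemma~\ref{lem:implicit-frequency-symbol} then gives $F^\omega_{-a}(n,\tilde\ell,\tilde k)=F^\omega_a(n,\tilde\ell,-\tilde k)$. Removing $h$ via \eqref{eq:remove-h} yields \eqref{eq:a-k-symmetry} modulo $O(\ell^{-\infty})$. To obtain exact equality, I would choose the representative in Remark~\ref{rem:pseudopole-representative} symmetrically, for instance by averaging the $a$-representative with the reflected one; this is permissible because the remark allows any summation.

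\emph{Main obstacle.} The main obstacle is Step~3. An equality of pole sets must be upgraded to an equality of labeled symbols, particularly the angular labeling at the equatorial boundary $|\tilde k|=\tilde\ell$. I would handle this through the continuity-in-$a$ labeling argument anchored at $a=0$, combined with the uniqueness of the implicit solution.
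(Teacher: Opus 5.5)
Your proposal is correct and follows the same route as the paper. The paper's proof simply notes that conjugation by $\varphi\mapsto-\varphi$ identifies the separated problem (radial, angular, and outgoing conditions) at $(-a,k)$ with the one at $(a,-k)$, so the quantization conditions defining $F$ are invariant; your Steps~1--4 unpack exactly this, and your remark that this alone gives \eqref{eq:a-k-symmetry} only modulo $\mathcal O(\ell^{-\infty})$ unless the representative of Remark~\ref{rem:pseudopole-representative} is chosen symmetrically (e.g.\ by your averaging) is a genuine refinement that the paper's argument passes over.
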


\begin{proof}
At the level of the separated Kerr--de~Sitter ODE system, the rotation parameter $a$ and the azimuthal
number $k$ enter through the same geometric coupling of $\partial_\varphi$ with the stationary Killing field.
Conjugating by $\varphi\mapsto-\varphi$ sends $k\mapsto -k$ and simultaneously sends the metric parameter
$a\mapsto -a$ (the Kerr family has the symmetry $(a,\varphi)\leftrightarrow(-a,-\varphi)$).
The quantization conditions defining $F_{M,a}(n,\ell,k)$ are invariant under this conjugation, which yields
\eqref{eq:a-k-symmetry}.
\end{proof}

\begin{corollary}[Even/odd structure of equatorial averages]\label{cor:even-odd}
For equatorial indices $k=\pm\ell$ one has
\[
\widehat\omega^\sharp_{\mathrm{av}}(M,-a)=\widehat\omega^\sharp_{\mathrm{av}}(M,a),
\qquad
\widehat\omega^\sharp_{\mathrm{dif}}(M,-a)=-\widehat\omega^\sharp_{\mathrm{dif}}(M,a),
\]
and therefore $U_\ell(M,a)$ is even in $a$ while $V_\ell(M,a)$ is odd in $a$.
\end{corollary}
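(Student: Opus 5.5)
The corollary should follow directly from Lemma~\ref{lem:a-k-symmetry}, applied at the two equatorial indices. The plan is to evaluate the pseudopoles at $(M,-a)$, rewrite them in terms of the pseudopoles at $(M,a)$, and then read off the parity of the average and the difference.

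Taking $k=\ell$ in \eqref{eq:a-k-symmetry} gives $\omega^\sharp_{n,\ell,\ell}(M,-a)=\omega^\sharp_{n,\ell,-\ell}(M,a)$. Taking $k=-\ell$, which is admissible since $|k|\le\ell$, gives $\omega^\sharp_{n,\ell,-\ell}(M,-a)=\omega^\sharp_{n,\ell,\ell}(M,a)$. In the notation \eqref{eq:equatorial-pseudopoles}, after dividing by $\ell$, these read
\[
\widehat\omega^\sharp_\pm(M,-a)=\widehat\omega^\sharp_\mp(M,a).
\]
Substituting into \eqref{eq:av-dif-def}, the average is symmetric under exchanging the two branches, so it is unchanged. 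The difference is antisymmetric, so it changes sign:
\[
\widehat\omega^\sharp_{\mathrm{dif}}(M,-a)=\tfrac12\bigl(\widehat\omega^\sharp_-(M,a)-\widehat\omega^\sharp_+(M,a)\bigr)=-\widehat\omega^\sharp_{\mathrm{dif}}(M,a).
\]
Taking real parts is $\mathbb R$-linear and commutes with these identities, so by \eqref{eq:G-map-def} $U_\ell$ is even in $a$ and $V_\ell$ is odd.

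The only point needing care is that both $(M,a)$ and $(M,-a)$ must lie in the domain where the pseudopoles are defined. The definition of subextremality and the constraint $|a|\le a_0$ depend on $a$ only through $a^2$. So the admissible slow-rotation set can be taken symmetric, or $\mathcal K$ can be symmetrized, which does not affect the Dyatlov construction. A second point is the fixed asymptotic-summation representative from Remark~\ref{rem:pseudopole-representative}. The plan is to choose it compatibly with the symmetry, for instance by replacing $F_{M,a}(n,\ell,k)$ with $\tfrac12\bigl(F_{M,a}(n,\ell,k)+F_{M,-a}(n,\ell,-k)\bigr)$, which changes $F$ only by $\mathcal O(\ell^{-\infty})$. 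With this choice \eqref{eq:a-k-symmetry} holds exactly rather than only modulo $\mathcal O(\ell^{-\infty})$. I expect this bookkeeping to be the only, mild, obstacle; the rest is immediate algebra.
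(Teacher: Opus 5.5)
Your proposal is correct and is essentially the paper's own proof: apply \eqref{eq:a-k-symmetry} at $k=\pm\ell$ to get $\widehat\omega^\sharp_\pm(M,-a)=\widehat\omega^\sharp_\mp(M,a)$, read off the parity of the average and the difference, and take real parts. Your extra bookkeeping is a sensible refinement that the paper's one-line proof leaves implicit: you note that the parameter domain is symmetric in $a$, and you symmetrize the asymptotic-summation representative so that the identity holds exactly rather than only modulo $\mathcal O(\ell^{-\infty})$.
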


\begin{proof}
This is immediate from \eqref{eq:a-k-symmetry} applied to $k=\pm\ell$ and the definitions
\eqref{eq:equatorial-pseudopoles}--\eqref{eq:av-dif-def}.
Taking real parts preserves the parity.
\end{proof}

\subsection{Leading geometric coefficients and small-\texorpdfstring{$a$}{a} expansions}\label{subsec:explicit-coeffs}

We recall the two explicit coefficient functions introduced in Section~\ref{sec:nondegeneracy} (see \eqref{eq:Omega-cZ-def}), which control the leading equatorial inverse map at $a=0$.

\begin{lemma}[Monotonicity and inversion of $\Omega_{\mathrm{ph}}$]\label{lem:Omega-monotone}
On the subextremal de~Sitter--Schwarzschild range $0<M<1/(3\sqrt{\Lambda})$,
$\Omega_{\mathrm{ph}}(M)$ is strictly decreasing and hence invertible.
Moreover,
\begin{equation}\label{eq:Omega-inverse}
\Omega_{\mathrm{ph}}(M)=U
\quad\Longleftrightarrow\quad
M=\frac{1}{3\sqrt{\Lambda+3U^2}}.
\end{equation}
\end{lemma}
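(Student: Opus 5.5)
The plan is a direct computation, organized in two steps: strict monotonicity first, then explicit inversion by squaring. For monotonicity, the derivative formula already computed in the proof of Proposition~\ref{prop:geo-nondegeneracy} can be reused verbatim:
\[
\Omega_{\mathrm{ph}}'(M)=-\frac{1}{3\sqrt3}\Bigl(\frac{\sqrt{1-9\Lambda M^2}}{M^2}+\frac{9\Lambda}{\sqrt{1-9\Lambda M^2}}\Bigr).
\]
On $0<M<1/(3\sqrt\Lambda)$ both summands in the parentheses are strictly positive, so $\Omega_{\mathrm{ph}}'(M)<0$. Hence $\Omega_{\mathrm{ph}}$ is strictly decreasing and injective on this interval.

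Its limits at the endpoints identify the range. As $M\to0^+$ we have $\Omega_{\mathrm{ph}}(M)\to+\infty$, and as $M\to 1/(3\sqrt\Lambda)$ we have $\Omega_{\mathrm{ph}}(M)\to0^+$. So $\Omega_{\mathrm{ph}}$ is a bijection from $(0,1/(3\sqrt\Lambda))$ onto $(0,\infty)$, and the equivalence \eqref{eq:Omega-inverse} should be read for $U>0$.

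For the explicit inverse, take $U>0$ and $\Omega_{\mathrm{ph}}(M)=U$. Squaring gives $(1-9\Lambda M^2)/(27M^2)=U^2$, which is equivalent to $1=27U^2M^2+9\Lambda M^2=9M^2(\Lambda+3U^2)$. Taking the positive root yields $M=1/(3\sqrt{\Lambda+3U^2})$.

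Conversely, given $U>0$, define $M$ by this formula. Then $9\Lambda M^2=\Lambda/(\Lambda+3U^2)<1$, so $M$ lies in the admissible range. Moreover $1-9\Lambda M^2=3U^2/(\Lambda+3U^2)$, and substituting gives $\Omega_{\mathrm{ph}}(M)=\sqrt{3U^2/(\Lambda+3U^2)}\cdot\sqrt{\Lambda+3U^2}/\sqrt3=U$.

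The only delicate point is sign bookkeeping when squaring. Both $\Omega_{\mathrm{ph}}$ and $U$ are positive, so squaring introduces no spurious solution, and the positive square root for $M$ is the correct choice. Nothing else here is an obstacle; the argument is elementary calculus.
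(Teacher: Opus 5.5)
Your proposal is correct and matches the paper's proof: the same explicit derivative showing $\Omega_{\mathrm{ph}}'(M)<0$, followed by squaring $U=\sqrt{1-9\Lambda M^2}/(3\sqrt3\,M)$ and solving for $M$. Your extra steps are harmless refinements that the paper leaves implicit. You identify the range $(0,\infty)$, restrict the equivalence to $U>0$ (needed, since for $U<0$ the right-hand side of \eqref{eq:Omega-inverse} still defines an admissible $M$), and verify the converse.
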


\begin{proof}
A direct differentiation of \eqref{eq:Omega-cZ-def} gives
\begin{align*}
\Omega_{\mathrm{ph}}'(M)
&=\frac{1}{3\sqrt3}\frac{d}{dM}\Bigl(M^{-1}\sqrt{1-9\Lambda M^2}\Bigr)\\
&=-\frac{1}{3\sqrt3}\Bigl(\frac{\sqrt{1-9\Lambda M^2}}{M^2}
+\frac{9\Lambda}{\sqrt{1-9\Lambda M^2}}\Bigr)<0.
\end{align*}
To obtain \eqref{eq:Omega-inverse}, square the identity
$U=\sqrt{1-9\Lambda M^2}/(3\sqrt3\,M)$ and solve for $M$.
\end{proof}

We next record the equatorial asymptotics at $a=0$ and the first-order equatorial splitting.
The $a=0$ lattice is Corollary~\ref{cor:a0-lattice}; we restate it here in the equatorial notation.

\begin{lemma}[Equatorial pseudopoles at $a=0$]\label{lem:a0-equatorial}
Uniformly for $(M,0)\in\mathcal K$ one has
\begin{equation}\label{eq:a0-equatorial}
\omega^\sharp_\pm(M,0)=
\Omega_{\mathrm{ph}}(M)\Bigl[\Bigl(\ell+\frac12\Bigr)-i\Bigl(n+\frac12\Bigr)\Bigr]
+\mathcal O(\ell^{-1}).
\end{equation}
In particular,
\[
-\Im\omega^\sharp_{+}(M,0)=\Omega_{\mathrm{ph}}(M)\Bigl(n+\frac12\Bigr)+\mathcal O(\ell^{-1}),
\]
\[
\Re\widehat\omega^\sharp_{+}(M,0)=\Omega_{\mathrm{ph}}(M)+\mathcal O(\ell^{-1}).
\]
\end{lemma}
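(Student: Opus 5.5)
The statement is essentially a specialization of Corollary~\ref{cor:a0-lattice} to the equatorial indices $k=\pm\ell$, rewritten using the notation $\Omega_{\mathrm{ph}}(M)$ from \eqref{eq:Omega-cZ-def}. The plan has three steps.

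First, compare constants. The prefactor $\sqrt{\Delta}/(3\sqrt3\,M)$ in \eqref{eq:a0-lattice}, with $\Delta=1-9\Lambda M^2$, is exactly $\Omega_{\mathrm{ph}}(M)$. Corollary~\ref{cor:a0-lattice} holds uniformly for $|k|\le\ell$, in particular at the boundary values $k=\pm\ell$, which Remark~\ref{rem:equatorial-boundary} places within the scope of the quantization. Inserting $k=\pm\ell$ therefore gives \eqref{eq:a0-equatorial} pointwise in $M$.

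Second, upgrade to uniformity in $(M,0)\in\mathcal K$. The proof of Corollary~\ref{cor:a0-lattice} goes through the quantization equation \eqref{eq:a0-quant-eq-proof}, whose $O(h^2)$ remainder comes from Propositions~\ref{prop:radial-quant}--\ref{prop:angular-quant}. These symbols depend smoothly on $(M,a)$, so their remainders are uniform on the compact projection $K_M$ (this is also the content of Appendix~\ref{app:parameter-quantization}). Two further points need checking:
\begin{itemize}
\item $\Delta$ is bounded below on $K_M\Subset(0,1/(3\sqrt\Lambda))$, so the square-root branch with positive real part is selected uniformly.
\item The division by $3\sqrt3 M/\sqrt\Delta$ introduces only uniformly bounded constants.
\end{itemize}
Hence the $O(h)=O(\ell^{-1})$ error is uniform on $\mathcal K$. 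I expect this bookkeeping to be the only real obstacle, and it is mild: it only requires noting that the symbol remainders in the cited propositions are locally uniform in the parameters.

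Third, take parts. Since $\Omega_{\mathrm{ph}}(M)$ is real, the imaginary part of \eqref{eq:a0-equatorial} gives $-\Im\omega^\sharp_+(M,0)=\Omega_{\mathrm{ph}}(M)(n+\tfrac12)+\mathcal O(\ell^{-1})$. Dividing the real part by $\ell$ gives
\[
\Re\widehat\omega^\sharp_+(M,0)=\Omega_{\mathrm{ph}}(M)\Bigl(1+\frac{1}{2\ell}\Bigr)+\mathcal O(\ell^{-2}).
\]
The term $\Omega_{\mathrm{ph}}(M)/(2\ell)$ is uniformly $\mathcal O(\ell^{-1})$ because $\Omega_{\mathrm{ph}}$ is bounded on $K_M$, and it is absorbed into the remainder. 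This yields the second displayed identity.
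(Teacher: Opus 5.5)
Your proposal is correct and follows the paper's proof, which is simply Corollary~\ref{cor:a0-lattice} specialized to $k=\pm\ell$, with $\sqrt{\Delta}/(3\sqrt3\,M)=\Omega_{\mathrm{ph}}(M)$. Your extra bookkeeping makes explicit what the paper leaves implicit: uniformity of the remainder on the compact projection $K_M$, and absorbing $\Omega_{\mathrm{ph}}(M)/(2\ell)$ into the error when passing to $\Re\widehat\omega^\sharp_+$.
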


\begin{proof}
This is \eqref{eq:a0-lattice} in Corollary~\ref{cor:a0-lattice} specialized to $k=\pm\ell$.
\end{proof}

The next statement encodes the Zeeman-type equatorial splitting of the \emph{principal} frequency map.
For later inverse estimates we need an explicit linear-in-$a$ splitting at the equatorial indices $k=\pm\ell$.

\begin{lemma}[Principal equatorial splitting]\label{lem:principal-splitting}
Let $F_0$ denote the real principal part of the pseudopole symbol (Section~\ref{subsec:pseudopoles}).
Then as $a\to0$ one has, uniformly for $(M,a)\in\mathcal K$,
\begin{equation}\label{eq:principal-splitting}
F_0(M,a;n,\ell,\pm\ell)
=
\Omega_{\mathrm{ph}}(M)\,\ell \ \pm\ c_Z(M)\,a\,\ell\ +\ \mathcal O(a^2\ell).
\end{equation}
\end{lemma}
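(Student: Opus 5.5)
The plan is to reduce \eqref{eq:principal-splitting} to a first-order Taylor expansion of $F_0$ in $a$ at the equatorial indices, and then to feed in the two explicit inputs already available: the $a=0$ lattice of Corollary~\ref{cor:a0-lattice} and the Zeeman slope of Proposition~\ref{prop:Zeeman-slope}. Since $F_0$ is homogeneous of degree one in $(\ell,k)$ by \eqref{eq:F-symbol}, it suffices to prove the statement in semiclassical variables at $\tilde\ell=1$, $\tilde k=\pm1$, and then multiply by $\ell$. Under this scaling every $\mathcal O(a^2)$ becomes $\mathcal O(a^2\ell)$, uniformly in $\ell$.

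\emph{Zeroth order.} At $a=0$ the proof of Lemma~\ref{lem:implicit-frequency-symbol} shows that the principal quantization equation is $\frac{27M^2}{\Delta}\tilde\omega^2=\tilde\ell^2$. Hence $F_0(M,0;n,\tilde\ell,\tilde k)=\Omega_{\mathrm{ph}}(M)\tilde\ell$, independently of $\tilde k$. After removing $h$ this gives the term $\Omega_{\mathrm{ph}}(M)\,\ell$, consistent with Lemma~\ref{lem:a0-equatorial}.

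\emph{First order.} The natural quantity to compute is $\partial_a F_0(M,0;n,1,\pm1)$, and I would obtain it in two independent ways.
\begin{itemize}
\item[(a)] Differentiate the principal identity $F^r_0(F_0,\tilde k)=F^\theta_0(\tilde\ell,F_0,\tilde k)$ in $a$ at $a=0$. By \eqref{eq:equatorial-Ftheta0}--\eqref{eq:equatorial-derivs}, the angular side contributes $\partial_a F^\theta_0=-2\Xi^2\tilde k\tilde\omega+\dots$. For the radial side I would use Dyatlov's barrier-top description of $F^r_0$ as the squared critical value of the effective potential. At $a=0$ its $a$-derivative comes from the cross term $-2ak(r^2)\omega/\Delta_r$, evaluated at $r=3M$.
\item[(b)] Alternatively, and more cheaply, integrate Proposition~\ref{prop:Zeeman-slope}. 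Homogeneity gives the Euler relation $F_0=\tilde\ell\partial_{\tilde\ell}F_0+\tilde k\partial_{\tilde k}F_0$. The parity of Lemma~\ref{lem:a-k-symmetry} shows that $F_0(M,a;\ell,\ell)-F_0(M,a;\ell,-\ell)$ is odd in $a$, while the sum is even. So the $a$-linear part of $F_0(\pm\ell)$ is $\pm\tfrac12[F_0(\ell)-F_0(-\ell)]$ to first order.
\end{itemize}

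\emph{Evaluating the linear coefficient.} To get the coefficient $c_Z(M)$ I would not integrate the slope along the full interval $[-\ell,\ell]$ in $k$, since that would require the slope in the interior of the cone. Instead I would use the geometric identification \eqref{eq:F0-geo-identification}: $\ell^{-1}F_0(\pm\ell)=\Omega^\sharp_\pm(M,a)$. Combined with the expansion \eqref{eq:Omega-sharp-expansion} from Lemma~\ref{lem:uniform-photon-orbits}, this gives directly
\[
\ell^{-1}F_0(M,a;n,\ell,\pm\ell)=\Omega_{\mathrm{ph}}(M)\pm c_Z(M)a+\mathcal O(a^2).
\]
This is the cleanest route. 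Proposition~\ref{prop:Zeeman-slope} then serves as a consistency check: it confirms that the first-order coefficient coincides with $c_Z$.

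\emph{Uniformity and the main obstacle.} Uniformity of the $\mathcal O(a^2)$ remainder on $\mathcal K$ follows from real-analyticity of $\Omega_{\mathrm{geo},\pm}$ together with compactness of $K_M$, exactly as in Lemma~\ref{lem:uniform-photon-orbits}. The main obstacle is justifying \eqref{eq:F0-geo-identification} at the boundary point $|\tilde k|=\tilde\ell$. There I would appeal to the smoothness of Dyatlov's symbols up to the closed cone (Remark~\ref{rem:equatorial-boundary}), and argue that the equatorial trapped torus degenerates to the circular equatorial photon orbit.
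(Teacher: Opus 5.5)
Your proposal is correct, and its main line is the paper's own argument. The paper likewise identifies the real principal frequency at $k=\pm\ell$ with $k\,\Omega_{\mathrm{geo},\pm}(M,a)$ through the geometric-optics relation $E=\Omega L$ on equatorial circular null orbits, then reads off $\Omega_{\mathrm{ph}}(M)\pm c_Z(M)a+\mathcal O(a^2)$ from the expansion of $\Omega_{\mathrm{geo},\pm}$ in Appendix~\ref{app:equatorial-photon}; the only difference is that you import the identification ready-made as \eqref{eq:F0-geo-identification}, whereas the paper re-derives it inline.

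Your side route (a) is a worthwhile independent check. Differentiating $F^r_0=F^\theta_0$ in $a$ at $a=0$, with $\partial_aF^r_0=-6\tilde\omega\tilde k/(1-9\Lambda M^2)$ from the barrier-top critical value at $r=3M$ and $\partial_aF^\theta_0=-2\tilde\omega\tilde k$, yields $\partial_aF_0=c_Z(M)\tilde k$. This comes directly from the quantization symbols and avoids the geometric-optics step that both you and the paper justify only heuristically.
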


\begin{proof}
At $a=0$ the spherical symmetry implies $F_0$ is independent of $k$ and equals $\Omega_{\mathrm{ph}}(M)\ell$
(cf.\ Corollary~\ref{cor:a0-lattice}).

For $|a|$ small, the real principal frequency map can be read off from the characteristic set of the wave operator.
More precisely, the principal symbol of the wave operator is the Lorentzian quadratic form
$p(x,\xi)=g^{\alpha\beta}(x)\,\xi_\alpha\xi_\beta$, and its bicharacteristic flow projects to null geodesics.
For a separated oscillatory mode $e^{-i\omega t}e^{ik\varphi}$ one has $\xi_t=-\omega$ and $\xi_\varphi=k$,
so the conserved energy and angular momentum are $E:=-\xi_t=\omega$ and $L:=\xi_\varphi=k$.
Along a null bicharacteristic the coordinate-time angular velocity satisfies
$\Omega:=\frac{d\varphi}{dt}=\frac{\dot\varphi}{\dot t}=\frac{\partial_{\xi_\varphi}p}{\partial_{\xi_t}p}$.
On an equatorial circular null orbit one has the familiar relation $E=\Omega L$ (equivalently, the impact
parameter $b=L/E$ equals $1/\Omega$; cf.\ \eqref{eq:b-Omega-relation}), hence $\omega/k=E/L=\Omega$.
Consequently, at the level of the real principal symbol one obtains the geometric-optics identification
\[
\Re\omega \sim k\,\Omega_{\mathrm{geo}},
\]
where $\Omega_{\mathrm{geo}}=d\varphi/dt$ is the (signed) angular velocity of the corresponding equatorial
circular photon orbit.
Appendix~\ref{app:equatorial-photon} computes the two such velocities
$\Omega_{\mathrm{geo},\pm}(M,a)$ and shows that
\[
\Omega_{\mathrm{geo},\pm}(M,a)=\pm\Omega_{\mathrm{ph}}(M)+c_Z(M)\,a+\mathcal O(a^2)
\qquad\text{as }a\to0,
\]
uniformly on compact slow-rotation sets.
For the equatorial branches $k=\pm\ell$, the combination $k\,\Omega_{\mathrm{geo}}$ contributing to $\Re\omega$
therefore has the form
\begin{align*}
(+\ell)\,\Omega_{\mathrm{geo},+}(M,a)
&=\ell\bigl(\Omega_{\mathrm{ph}}(M)+c_Z(M)\,a+\mathcal O(a^2)\bigr),\\
(-\ell)\,\Omega_{\mathrm{geo},-}(M,a)
&=\ell\bigl(\Omega_{\mathrm{ph}}(M)-c_Z(M)\,a+\mathcal O(a^2)\bigr).
\end{align*}
Since $F_0$ is precisely the real principal part of the pseudopole symbol,
this yields \eqref{eq:principal-splitting} (using also the homogeneity of $F_0$ in $(\ell,k)$, cf.\ \eqref{eq:F-symbol}).
\end{proof}

\subsection{Uniform expansions for the data maps}\label{subsec:expansions-maps}

We now collect the expansions for the observable maps.
The key structural input is Corollary~\ref{cor:even-odd}, which forces the equatorial average to have no
linear term in $a$ and the equatorial difference to have no quadratic term.

\begin{lemma}[Expansions for the two-mode real-part map]\label{lem:U-V-expansions}
For all $(M,a)\in\mathcal K$ and all $\ell\gg1$, the map $\mathcal G_\ell(M,a)=(U_\ell,V_\ell)$ defined in
\eqref{eq:G-map-def} satisfies
\begin{equation}\label{eq:U-V-expansions}
\begin{aligned}
U_\ell(M,a)&=\Omega_{\mathrm{ph}}(M)+\mathcal O(a^2)+\mathcal O(\ell^{-1}),\\
V_\ell(M,a)&=c_Z(M)\,a+\mathcal O(a^3)+\mathcal O(\ell^{-1}).
\end{aligned}
\end{equation}
and the same holds for first derivatives:
\begin{equation}\label{eq:U-V-derivatives}
\partial_M U_\ell = \Omega_{\mathrm{ph}}'(M)+\mathcal O(a^2)+\mathcal O(\ell^{-1}),
\quad
\partial_a U_\ell = \mathcal O(a)+\mathcal O(\ell^{-1}),
\end{equation}
\begin{equation}\label{eq:U-V-derivatives-2}
\partial_a V_\ell = c_Z(M)+\mathcal O(a^2)+\mathcal O(\ell^{-1}),
\quad
\partial_M V_\ell = a\,c_Z'(M)+\mathcal O(a^3)+\mathcal O(\ell^{-1}).
\end{equation}
All error terms are uniform on $\mathcal K$.
\end{lemma}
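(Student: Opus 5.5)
The proof will be a direct combination of three results already established: the $C^1$ approximation of Lemma~\ref{lem:Gell-geo-C1}, the small-$a$ expansion \eqref{eq:Omega-sharp-expansion} of the geometric angular velocities, and the parity structure of Lemma~\ref{lem:geo-parity} and Corollary~\ref{cor:even-odd}.

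\textbf{Step 1: reduce to the geometric map.} By Lemma~\ref{lem:Gell-geo-C1}, for $\ell\ge\ell_0$ we have
\[
U_\ell=U_{\mathrm{geo}}+\mathcal O(\ell^{-1}),\qquad V_\ell=V_{\mathrm{geo}}+\mathcal O(\ell^{-1}),
\]
with the remainders uniform in $C^1(\mathcal K)$. Hence the same $\mathcal O(\ell^{-1})$ bounds hold for $\partial_M$ and $\partial_a$ of the differences. It therefore suffices to prove \eqref{eq:U-V-expansions}--\eqref{eq:U-V-derivatives-2} for $(U_{\mathrm{geo}},V_{\mathrm{geo}})$ with the $\ell^{-1}$ terms removed.

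\textbf{Step 2: expand the geometric map in $a$.} From \eqref{eq:Omega-sharp-expansion}, averaging and differencing gives
\[
U_{\mathrm{geo}}=\Omega_{\mathrm{ph}}(M)+c_{\Omega,2}(M)a^2+\mathcal O(a^3),\qquad V_{\mathrm{geo}}=c_Z(M)a+\mathcal O(a^2).
\]
Lemma~\ref{lem:geo-parity} says $U_{\mathrm{geo}}$ is even in $a$ and $V_{\mathrm{geo}}$ is odd. So the cubic term of $U_{\mathrm{geo}}$ and the quadratic term of $V_{\mathrm{geo}}$ vanish, which upgrades the remainders to $\mathcal O(a^2)$ for $U$ and $\mathcal O(a^3)$ for $V$.

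\textbf{Step 3: derivatives.} Here I would not differentiate $\mathcal O$-symbols. Instead I use that $\Omega_{\mathrm{geo},\pm}$ are real-analytic on a neighborhood of the compact set $\mathcal K\cap\{|a|\le a_{\mathrm{geo}}\}$ (Lemma~\ref{lem:uniform-photon-orbits}). By evenness and real-analyticity one can write
\[
U_{\mathrm{geo}}(M,a)=\Omega_{\mathrm{ph}}(M)+a^2A(M,a),\qquad V_{\mathrm{geo}}(M,a)=c_Z(M)a+a^3B(M,a),
\]
with $A,B$ real-analytic. This follows from Taylor's formula with integral remainder in $a$ combined with parity. Differentiating these representations termwise gives:
\begin{itemize}
\item[] $\partial_M U_{\mathrm{geo}}=\Omega_{\mathrm{ph}}'(M)+\mathcal O(a^2)$ and $\partial_a U_{\mathrm{geo}}=2aA+a^2\partial_aA=\mathcal O(a)$;
\item[] $\partial_a V_{\mathrm{geo}}=c_Z(M)+\mathcal O(a^2)$ and $\partial_M V_{\mathrm{geo}}=a\,c_Z'(M)+\mathcal O(a^3)$.
\end{itemize}
The constants are uniform because $A,B$ and their first derivatives are bounded on the compact set. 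Adding back the $C^1$ error from Step 1 yields \eqref{eq:U-V-derivatives} and \eqref{eq:U-V-derivatives-2}.

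\textbf{Main obstacle.} The only delicate point is uniformity when passing derivatives through the $\mathcal O(\ell^{-1})$ remainder. This rests entirely on the parameter-differentiable remainder bounds of Appendix~\ref{app:parameter-quantization}, as packaged in Lemma~\ref{lem:Gell-geo-C1}.

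There is a second, minor point. One could instead apply Corollary~\ref{cor:even-odd} directly to $U_\ell,V_\ell$. That route needs no separate parity argument at the $\ell$-level, since the geometric parity plus the $C^1$ closeness already suffice.
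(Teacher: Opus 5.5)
Your proposal is correct and follows the paper's proof: reduce via the $C^1$ estimate of Lemma~\ref{lem:Gell-geo-C1} to $\mathcal G_{\mathrm{geo}}$, then read off the expansions from \eqref{eq:Omega-sharp-expansion} together with the parity in Lemma~\ref{lem:geo-parity}. Your factorization $U_{\mathrm{geo}}=\Omega_{\mathrm{ph}}(M)+a^2A(M,a)$, $V_{\mathrm{geo}}=c_Z(M)a+a^3B(M,a)$ makes explicit the derivative step that the paper only summarizes as ``differentiating and using again the parity information.''
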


\begin{proof}
Recall that $\mathcal G_\ell(M,a)=(U_\ell(M,a),V_\ell(M,a))$ and that the geometric map
$\mathcal G_{\mathrm{geo}}(M,a)=(U_{\mathrm{geo}}(M,a),V_{\mathrm{geo}}(M,a))$ is defined in \eqref{eq:G-geo-def}.
By Lemma~\ref{lem:Gell-geo-C1} we have the uniform $C^1(\mathcal K)$ approximation
\[
\|\mathcal G_\ell-\mathcal G_{\mathrm{geo}}\|_{C^1(\mathcal K)}\le C\,\ell^{-1}.
\]
Hence it suffices to establish the corresponding expansions for $\mathcal G_{\mathrm{geo}}$ uniformly on $\mathcal K$
and then absorb the $\mathcal O(\ell^{-1})$ error.

From the second-order expansion \eqref{eq:Omega-sharp-expansion} of $\Omega^\sharp_\pm$ and the definition \eqref{eq:G-geo-def} we obtain
\[
\begin{aligned}
U_{\mathrm{geo}}(M,a)&=\Omega_{\mathrm{ph}}(M)+c_{\Omega,2}(M)\,a^2+\mathcal O(a^3),\\
V_{\mathrm{geo}}(M,a)&=c_Z(M)\,a+\mathcal O(a^3).
\end{aligned}
\]
(Here we used that $U_{\mathrm{geo}}$ is even in $a$ and $V_{\mathrm{geo}}$ is odd in $a$; see Lemma~\ref{lem:geo-parity}.)
Differentiating and using again the parity information yields
\[
\partial_M U_{\mathrm{geo}}(M,a)=\Omega_{\mathrm{ph}}'(M)+\mathcal O(a^2),
\qquad
\partial_a U_{\mathrm{geo}}(M,a)=\mathcal O(a),
\]
\[
\partial_a V_{\mathrm{geo}}(M,a)=c_Z(M)+\mathcal O(a^2),
\qquad
\partial_M V_{\mathrm{geo}}(M,a)=a\,c_Z'(M)+\mathcal O(a^3),
\]
uniformly on $\mathcal K$ (the uniformity follows from compactness and smooth dependence of the geometric quantities).
Combining these bounds with the $C^1$ estimate from Lemma~\ref{lem:Gell-geo-C1} gives
\eqref{eq:U-V-expansions}--\eqref{eq:U-V-derivatives-2}.
\end{proof}

For the single-mode map \eqref{eq:single-mode-map} we need its triangular leading structure and a uniform Jacobian bound.
The real-part expansion is as before; for the damping rate we use the barrier-top/Lyapunov interpretation of the
leading imaginary part and the cancellation of the linear term in the equatorial Lyapunov exponent (Appendix~\ref{app:equatorial-damping}).

\begin{lemma}[Expansions for the single-mode map]\label{lem:single-mode-expansions}
For all $(M,a)\in\mathcal K$ and all $\ell\gg1$, the map $\mathcal S_\ell$ defined in \eqref{eq:single-mode-map} satisfies
\begin{equation}\label{eq:single-mode-expansions}
\begin{aligned}
\widetilde W_\ell(M,a)&=\Omega_{\mathrm{ph}}(M)+c_{\lambda,2}(M)\,a^2+\mathcal O(a^3)+\mathcal O(\ell^{-1}),\\
U_\ell^{+}(M,a)&=\Omega_{\mathrm{ph}}(M)+c_Z(M)\,a+c_{\Omega,2}(M)\,a^2+\mathcal O(a^3)+\mathcal O(\ell^{-1}).
\end{aligned}
\end{equation}
\begin{equation}\label{eq:clambda2}
c_{\lambda,2}(M):=\frac{\sqrt{3}\,\bigl(45\Lambda M^2-2\bigr)}{243\,M^3\sqrt{1-9\Lambda M^2}}.
\end{equation}

Moreover, the following derivative bounds hold uniformly on $\mathcal K$:
\begin{align}
\partial_M \widetilde W_\ell(M,a)&=\Omega_{\mathrm{ph}}'(M)+\mathcal O(a^2)+\mathcal O(\ell^{-1}),\label{eq:single-deriv-M}\\
\partial_a \widetilde W_\ell(M,a)&=\mathcal O(|a|)+\mathcal O(\ell^{-1}),\label{eq:single-deriv-a}\\
\partial_M U_\ell^{+}(M,a)&=\Omega_{\mathrm{ph}}'(M)+\mathcal O(|a|)+\mathcal O(\ell^{-1}),\label{eq:single-deriv-MU}\\
\partial_a U_\ell^{+}(M,a)&=c_Z(M)+\mathcal O(|a|)+\mathcal O(\ell^{-1}).\label{eq:single-deriv-aU}
\end{align}
\end{lemma}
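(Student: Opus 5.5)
The plan is to argue exactly as in Lemma~\ref{lem:U-V-expansions}: first reduce to a geometric map, then Taylor expand that map in $a$. The geometric single-mode map is
\[
\mathcal S_{\mathrm{geo}}(M,a):=\bigl(\lambda_+(M,a),\ \Omega^\sharp_+(M,a)\bigr),
\]
where $\lambda_+$ is the coordinate-time Lyapunov exponent of the co-rotating equatorial photon orbit from Lemma~\ref{lem:uniform-photon-orbits}.

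\textbf{Step 1: $C^1$ approximation by the geometric map.} The first step is to prove
\[
\|\mathcal S_\ell-\mathcal S_{\mathrm{geo}}\|_{C^1(\mathcal K)}\le C\ell^{-1}.
\]
For the second component this is already contained in \eqref{eq:omega-symb-expansion}, which is uniform in $C^1(\mathcal K)$ by the parameter-differentiable Grushin reduction of Appendix~\ref{app:parameter-quantization}. For the first component I would use \eqref{eq:W-Lyapunov}, upgraded to $C^1$ by the same appendix.

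The $\mathcal O(h)$ term of the barrier-top quantization gives the leading imaginary part. This term is $-i(n+\tfrac12)$ times the Hessian-type quantity at the top of the barrier, i.e.\ times $\lambda_+$. Since this quantity is built from the principal radial symbol at the critical point $r_{\mathrm{geo},+}(M,a)$, it is real-analytic in $(M,a)$. The remainder is $\mathcal O(h)$ after multiplying by $h^{-1}$, and it stays $\mathcal O(\ell^{-1})$ after one parameter derivative. At $a=0$ the value $\lambda_+=\Omega_{\mathrm{ph}}$ agrees with Lemma~\ref{lem:a0-equatorial}.

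\textbf{Step 2: Taylor coefficients of $\Omega^\sharp_+$.} The $U_\ell^+$ expansion follows at once from \eqref{eq:Omega-sharp-expansion}, i.e.\ from Lemma~\ref{lem:uniform-photon-orbits}. Differentiating the analytic expansion gives \eqref{eq:single-deriv-MU} and \eqref{eq:single-deriv-aU}. There is no parity here, so the errors are only $\mathcal O(|a|)$.

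\textbf{Step 3: expansion of $\lambda_+$.} This is the main obstacle. I would write the equatorial radial potential $\mathcal R(r;E,L)$ obtained from the null condition with $E=\omega$ and $L=k$. I would then use the standard formula
\[
\lambda_+=\sqrt{\frac{\mathcal R''(r_{\mathrm{geo},+})}{2\,\dot t^{\,2}}}\quad\text{(in suitable normalization)}.
\]
Into this I would substitute $r_{\mathrm{geo},+}=3M+r_1a+r_2a^2+\mathcal O(a^3)$, with $r_1,r_2$ taken from Appendix~\ref{app:equatorial-photon}, and expand to second order.

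The key claim is that the linear coefficient vanishes. This does not come from parity, since the co-rotating branch alone is not even in $a$, so it requires an explicit cancellation. I would check this cancellation first, because an $\mathcal O(a)$ term would contradict \eqref{eq:single-deriv-a}. The $a^2$ coefficient should come out as $c_{\lambda,2}(M)$ in \eqref{eq:clambda2}, and this is the computation deferred to Appendices~\ref{app:equatorial-damping}--\ref{app:lyapunov}.

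\textbf{Step 4: assembling the bounds.} Once $\lambda_+=\Omega_{\mathrm{ph}}+c_{\lambda,2}a^2+\mathcal O(a^3)$ holds analytically and uniformly on $\mathcal K$, differentiating gives $\partial_M\lambda_+=\Omega_{\mathrm{ph}}'+\mathcal O(a^2)$ and $\partial_a\lambda_+=\mathcal O(|a|)$. Uniformity follows from compactness and analyticity. Adding the $C^1$ error from Step 1 then yields \eqref{eq:single-mode-expansions} and \eqref{eq:single-deriv-M}--\eqref{eq:single-deriv-aU}.
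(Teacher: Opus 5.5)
Your proposal is correct and follows the paper's proof essentially step for step: you reduce $\widetilde W_\ell$ and $U_\ell^{+}$ to the geometric quantities $\lambda_+$ and $\Omega^\sharp_+$ up to $\mathcal O(\ell^{-1})$ via \eqref{eq:W-Lyapunov} and \eqref{eq:omega-symb-expansion}, and then take the Taylor coefficients, including the explicit non-parity cancellation of the linear term in $\lambda_+$, from Appendices~\ref{app:equatorial-photon}, \ref{app:equatorial-damping} and~\ref{app:lyapunov}. You are more careful than the paper in one place: you note that \eqref{eq:W-Lyapunov} is stated only in $C^0$ and must be upgraded to a $C^1$ statement (via Appendix~\ref{app:parameter-quantization}) before \eqref{eq:single-deriv-M}--\eqref{eq:single-deriv-a} follow, whereas the paper passes over this with ``differentiating gives the $C^1$ bounds''.
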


\begin{proof}
By \eqref{eq:W-Lyapunov} we have $\widetilde W_\ell(M,a)=\lambda_+(M,a)+\mathcal O(\ell^{-1})$ uniformly on $\mathcal K$. 
Appendix~\ref{app:equatorial-damping} shows that $\partial_a\lambda_+(M,0)=0$ (equivalently, $\lambda_+(M,a)=\Omega_{\mathrm{ph}}(M)+\mathcal O(a^2)$ as $a\to0$), and Appendix~\ref{app:lyapunov} computes the quadratic correction, giving the expansion
\[
\lambda_+(M,a)=\Omega_{\mathrm{ph}}(M)+c_{\lambda,2}(M)\,a^2+\mathcal O(a^3),
\qquad a\to0,
\]
with $c_{\lambda,2}(M)$ given by \eqref{eq:clambda2}. This yields the first line of \eqref{eq:single-mode-expansions}, and differentiating gives the $C^1$ bounds.

For the real part, \eqref{eq:omega-symb-expansion} gives $U_\ell^{+}(M,a)=\Omega^\sharp_+(M,a)+\mathcal O(\ell^{-1})$, and the second-order expansion \eqref{eq:Omega-sharp-expansion} implies
\[
\Omega^\sharp_+(M,a)=\Omega_{\mathrm{ph}}(M)+c_Z(M)\,a+c_{\Omega,2}(M)\,a^2+\mathcal O(a^3),
\qquad a\to0,
\]
uniformly on $\mathcal K$. This yields the second line of \eqref{eq:single-mode-expansions}; differentiating gives the remaining $C^1$ bounds.
\end{proof}

\subsection{Quantitative local inversion on the full slow-rotation set}\label{subsec:inverse-theorem}

We now state the inverse theorem at the pseudopole level.
The two-mode map $\mathcal G_\ell$ is the one used in the remainder of the paper (since it avoids imaginary parts),
while the single-mode map $\mathcal S_\ell$ provides a Kerr extension of the ``one complex mode'' principle.

\begin{theorem}[Inverse theorem for equatorial pseudopoles]\label{thm:inverse-pseudopoles}
Fix $\Lambda>0$, a compact parameter set $\mathcal K$ as in \eqref{eq:Kcompact}, and fix an overtone
$n\in\{0,1,\dots,C_n\}$.  Assume $a_0>0$ is chosen sufficiently small in \eqref{eq:Kcompact}.
Then there exist $\ell_0\in\mathbb N$ and constants $c_*,C_*>0$ such that for every $\ell\ge \ell_0$:

\begin{enumerate}
\item[(i)] \textbf{Single-mode inversion.}
The map $\mathcal S_\ell:\mathcal K\to\mathbb R^2$ in \eqref{eq:single-mode-map} is a $C^1$ local diffeomorphism
at every point of $\mathcal K$.  Its Jacobian satisfies the uniform nondegeneracy estimate
\begin{equation}\label{eq:Jacobian-single}
\bigl|\det D\mathcal S_\ell(M,a)\bigr|\ge c_*\qquad \forall (M,a)\in\mathcal K,
\end{equation}
and for all $(M_1,a_1),(M_2,a_2)\in\mathcal K$ sufficiently close,
\begin{equation}\label{eq:stability-single}
|(M_1,a_1)-(M_2,a_2)|\ \le\ C_*\,
|\mathcal S_\ell(M_1,a_1)-\mathcal S_\ell(M_2,a_2)|.
\end{equation}

\item[(ii)] \textbf{Two-mode real-part inversion.}
The map $\mathcal G_\ell:\mathcal K\to\mathbb R^2$ in \eqref{eq:G-map-def} is a $C^1$ local diffeomorphism
at every point of $\mathcal K$.  Its Jacobian satisfies the uniform nondegeneracy estimate
\begin{equation}\label{eq:Jacobian-two}
\bigl|\det D\mathcal G_\ell(M,a)\bigr|\ge c_*\qquad \forall (M,a)\in\mathcal K,
\end{equation}
and for all $(M_1,a_1),(M_2,a_2)\in\mathcal K$ sufficiently close,
\begin{equation}\label{eq:stability-two}
|(M_1,a_1)-(M_2,a_2)|\ \le\ C_*\,
|\mathcal G_\ell(M_1,a_1)-\mathcal G_\ell(M_2,a_2)|.
\end{equation}
\end{enumerate}
\end{theorem}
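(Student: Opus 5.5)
The plan is to reduce both parts to a uniform lower bound on a $2\times2$ Jacobian determinant. For each map we write the Jacobian as a model matrix, with determinant $\Omega_{\mathrm{ph}}'(M)\,c_Z(M)$, plus small errors. We then turn the determinant bound into the Lipschitz stability estimate using the mean value theorem. The constant we aim for is
\[
c_0=\tfrac12\min_{M\in K_M}|\Omega_{\mathrm{ph}}'(M)c_Z(M)|>0,
\]
which is positive by Lemma~\ref{lem:Omega-monotone} ($\Omega_{\mathrm{ph}}'<0$) and $c_Z>0$ on the compact projection $K_M$. We take $c_*:=c_0$, or half of it to leave room for the $\ell$-dependent errors.

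\textbf{Part (ii).} We use the derivative expansions \eqref{eq:U-V-derivatives}--\eqref{eq:U-V-derivatives-2} of Lemma~\ref{lem:U-V-expansions}. Equivalently, we use Lemma~\ref{lem:Gell-geo-C1} together with Proposition~\ref{prop:geo-nondegeneracy}. They give
\[
\det D\mathcal G_\ell=\bigl(\Omega_{\mathrm{ph}}'+\mathcal O(a^2)+\mathcal O(\ell^{-1})\bigr)\bigl(c_Z+\mathcal O(a^2)+\mathcal O(\ell^{-1})\bigr)-\bigl(\mathcal O(a)+\mathcal O(\ell^{-1})\bigr)\bigl(\mathcal O(a)+\mathcal O(\ell^{-1})\bigr).
\]
This equals $\Omega_{\mathrm{ph}}'c_Z+\mathcal O(a^2)+\mathcal O(\ell^{-1})$ uniformly on $\mathcal K$. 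Shrinking $a_0$ makes the $\mathcal O(a^2)$ term at most $c_0/2$. Choosing $\ell_0$ large makes the $\mathcal O(\ell^{-1})$ term at most $c_0/2$. This gives \eqref{eq:Jacobian-two}, and the inverse function theorem then gives the local diffeomorphism property.

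\textbf{Part (i).} We use Lemma~\ref{lem:single-mode-expansions} in the same way. The first row of $D\mathcal S_\ell$ is $(\Omega_{\mathrm{ph}}'+\mathcal O(a^2)+\mathcal O(\ell^{-1}),\ \mathcal O(|a|)+\mathcal O(\ell^{-1}))$. The second row is $(\Omega_{\mathrm{ph}}'+\mathcal O(|a|)+\mathcal O(\ell^{-1}),\ c_Z+\mathcal O(|a|)+\mathcal O(\ell^{-1}))$. The matrix is not triangular at leading order, since the lower-left entry is $\Omega_{\mathrm{ph}}'$. However, the upper-right entry is $\mathcal O(|a|)$, so the cross product is only $\mathcal O(|a|)$ and
\[
\det D\mathcal S_\ell=\Omega_{\mathrm{ph}}'c_Z+\mathcal O(|a|)+\mathcal O(\ell^{-1}).
\]
The error here is linear rather than quadratic in $a$, so $a_0$ must be shrunk further than in part (ii). It is shrunk only in terms of $\mathcal K$ and $\Lambda$, which the statement allows.

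\textbf{Stability.} The derivative bounds give a uniform bound $\|D\mathcal G_\ell\|,\|D\mathcal S_\ell\|\le L$ on $\mathcal K$ for $\ell\ge\ell_0$. The $2\times2$ inequality $\|A^{-1}\|\le\|A\|/|\det A|$ then gives $\|D\mathcal G_\ell^{-1}\|\le L/c_*$. To get \eqref{eq:stability-two}, take nearby points $p_1,p_2$ and write
\[
\mathcal G_\ell(p_1)-\mathcal G_\ell(p_2)=\Bigl(\int_0^1 D\mathcal G_\ell(p_2+s(p_1-p_2))\,ds\Bigr)(p_1-p_2).
\]
The integrated matrix equals $D\mathcal G_{\mathrm{geo}}(p_2)+\mathcal O(\ell^{-1})+\varpi(|p_1-p_2|)$. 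Here $\varpi$ is a modulus of continuity of $D\mathcal G_{\mathrm{geo}}$, which is uniformly continuous because the geometric map is real-analytic on a neighbourhood of $\mathcal K$. The $\mathcal O(\ell^{-1})$ comes from Lemma~\ref{lem:Gell-geo-C1}. So for $|p_1-p_2|$ below a threshold depending only on $\mathcal K$, and for $\ell\ge\ell_0$, the integrated matrix has determinant at least $c_*/2$. Its inverse is then bounded by a constant $C_*$ independent of $\ell$. For $\mathcal S_\ell$ the same argument runs with the comparison map $(\lambda_+,\Omega^\sharp_+)$ in place of $\mathcal G_{\mathrm{geo}}$. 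If $\mathcal K$ is not convex, we work on a slightly enlarged compact set inside the slow-rotation region. On that set the segments between nearby points stay within the region where the expansions hold, which is legitimate since all constants are stable under small enlargement.

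\textbf{Main obstacle.} The delicate point is part (i). Its Jacobian bound needs the Lyapunov approximation \eqref{eq:W-Lyapunov} to hold in $C^1(\mathcal K)$, not merely in $C^0$; otherwise $\partial_a\widetilde W_\ell$ is not controlled. We would first try to obtain this $C^1$ control from the parameter-differentiable Grushin reduction of Appendix~\ref{app:parameter-quantization}. That reduction identifies the subprincipal imaginary part of $F$ with $(n+\tfrac12)\lambda_+$, with a remainder stable under one derivative in $(M,a)$, exactly as in \eqref{eq:omega-symb-expansion} for the real part.
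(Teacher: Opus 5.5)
Your proposal is correct and follows the paper's proof essentially step for step. It uses the same Jacobian expansions for $\mathcal G_\ell$; the paper perturbs the smallest singular value of $D\mathcal G_{\mathrm{geo}}$ rather than expanding the determinant entrywise, which is cosmetic. It also uses the same $\Omega_{\mathrm{ph}}'c_Z+\mathcal O(|a|)+\mathcal O(\ell^{-1})$ bound for $\mathcal S_\ell$, and the same $\|A^{-1}\|\le\|A\|/|\det A|$ plus mean-value step for stability.

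Your integral mean-value argument against $D\mathcal G_{\mathrm{geo}}$ is a mild improvement: it makes the closeness threshold independent of $\ell$. You are also right that part (i) needs \eqref{eq:W-Lyapunov} in $C^1(\mathcal K)$, which the paper assumes without comment when it differentiates in Lemma~\ref{lem:single-mode-expansions}.
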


\begin{proof}
We prove the uniform Jacobian lower bounds and then the stability estimate.

\smallskip
\noindent\textbf{Step 1: geometric nondegeneracy for the two-mode map.}
Proposition~\ref{prop:geo-nondegeneracy} provides constants $c_{\mathrm{geo}},C_{\mathrm{geo}}>0$ such that
\begin{equation}\label{eq:Step1-geo-det}
\bigl|\det D\mathcal G_{\mathrm{geo}}(M,a)\bigr|\ge c_{\mathrm{geo}}\qquad\forall (M,a)\in\mathcal K.
\end{equation}
By compactness, there is also a uniform bound
\begin{equation}\label{eq:Step1-geo-L}
L_{\mathrm{geo}}:=\sup_{(M,a)\in\mathcal K}\|D\mathcal G_{\mathrm{geo}}(M,a)\|<\infty.
\end{equation}
For a $2\times2$ matrix $A$ with singular values $s_{\min}(A)\le s_{\max}(A)$, one has
$|\det A|=s_{\min}(A)s_{\max}(A)\le s_{\min}(A)\|A\|$.
Thus \eqref{eq:Step1-geo-det}--\eqref{eq:Step1-geo-L} imply the uniform lower bound
\begin{equation}\label{eq:Step1-geo-smin}
s_{\min}\bigl(D\mathcal G_{\mathrm{geo}}(M,a)\bigr)\ge \sigma_{\mathrm{geo}}:=\frac{c_{\mathrm{geo}}}{L_{\mathrm{geo}}}>0
\qquad\forall (M,a)\in\mathcal K.
\end{equation}

\smallskip
\noindent\textbf{Step 2: transfer of nondegeneracy to $\mathcal G_\ell$.}
Lemma~\ref{lem:Gell-geo-C1} gives a uniform $C^1$ estimate
\(\|D\mathcal G_\ell-D\mathcal G_{\mathrm{geo}}\|_{L^\infty(\mathcal K)}\le C\ell^{-1}\).
Choose $\ell_0$ so large that $C\ell^{-1}\le \sigma_{\mathrm{geo}}/2$ for all $\ell\ge\ell_0$.
Then for $\ell\ge\ell_0$ and $(M,a)\in\mathcal K$,
\[
s_{\min}\bigl(D\mathcal G_\ell(M,a)\bigr)
\ge s_{\min}\bigl(D\mathcal G_{\mathrm{geo}}(M,a)\bigr)-\|D\mathcal G_\ell(M,a)-D\mathcal G_{\mathrm{geo}}(M,a)\|
\ge \frac{\sigma_{\mathrm{geo}}}{2}.
\]
In particular $D\mathcal G_\ell(M,a)$ is invertible for all $(M,a)\in\mathcal K$, so $\mathcal G_\ell$ is a $C^1$ local
diffeomorphism at every point of $\mathcal K$.
Moreover, since $|\det A|=s_{\min}(A)s_{\max}(A)\ge s_{\min}(A)^2$ for $2\times2$ matrices, we obtain
\begin{equation}\label{eq:Step2-detGell}
\bigl|\det D\mathcal G_\ell(M,a)\bigr|\ge \Big(\frac{\sigma_{\mathrm{geo}}}{2}\Big)^2
\qquad\forall (M,a)\in\mathcal K,\ \forall \ell\ge\ell_0.
\end{equation}
This proves \eqref{eq:Jacobian-two} for the two-mode map.

\smallskip
\noindent\textbf{Step 3: Jacobian bound for the single-mode map.}
Write $\mathcal S_\ell=(\widetilde W_\ell,U_\ell^{+})$.
Lemma~\ref{lem:single-mode-expansions} gives, uniformly on $\mathcal K$,
\[
\partial_M \widetilde W_\ell=\Omega_{\mathrm{ph}}'(M)+\mathcal O(a^2)+\mathcal O(\ell^{-1}),
\qquad
\partial_a U_\ell^{+}=c_Z(M)+\mathcal O(|a|)+\mathcal O(\ell^{-1}),
\]
together with $\partial_a\widetilde W_\ell=\mathcal O(|a|)+\mathcal O(\ell^{-1})$ and
$\partial_M U_\ell^{+}=\Omega_{\mathrm{ph}}'(M)+\mathcal O(|a|)+\mathcal O(\ell^{-1})$.
Hence
\[
\det D\mathcal S_\ell(M,a)
=
\Omega_{\mathrm{ph}}'(M)c_Z(M)\ +\ \mathcal O(|a|)\ +\ \mathcal O(\ell^{-1}),
\]
uniformly on $\mathcal K$.
Let $K_M:=\{M:(M,a)\in\mathcal K\}$ and set
\begin{equation}\label{eq:csing-def}
c_{\mathrm{sing}}:=\frac12\,\min_{M\in K_M}\bigl|\Omega_{\mathrm{ph}}'(M)\,c_Z(M)\bigr|>0.
\end{equation}
Choosing $a_0$ sufficiently small and (possibly increasing) $\ell_0$ so that the error term
$\mathcal O(|a|)+\mathcal O(\ell^{-1})$ is bounded by $c_{\mathrm{sing}}$ on $\mathcal K$, we obtain
\begin{equation}\label{eq:Step3-detSell}
\bigl|\det D\mathcal S_\ell(M,a)\bigr|\ge c_{\mathrm{sing}}
\qquad\forall (M,a)\in\mathcal K,\ \forall \ell\ge\ell_0.
\end{equation}
Set
\begin{equation}\label{eq:cstar-choice}
c_*:=\min\Big\{\Big(\frac{\sigma_{\mathrm{geo}}}{2}\Big)^2,\ c_{\mathrm{sing}}\Big\}>0.
\end{equation}
Then \eqref{eq:Step2-detGell} and \eqref{eq:Step3-detSell} imply \eqref{eq:Jacobian-two}--\eqref{eq:Jacobian-single}.

\smallskip
\noindent\textbf{Step 4: stability.}
Since $\mathcal K$ is compact and the maps $\mathcal S_\ell,\mathcal G_\ell$ are $C^1$ on $\mathcal K$,
their Jacobians are uniformly bounded:
there exists $L_*>0$ (depending on $\mathcal K,\Lambda,n$ but independent of $\ell\ge\ell_0$) such that
\[
\sup_{(M,a)\in\mathcal K}\|D\mathcal S_\ell(M,a)\|\le L_*,
\qquad
\sup_{(M,a)\in\mathcal K}\|D\mathcal G_\ell(M,a)\|\le L_*,
\qquad \forall \ell\ge\ell_0.
\]
On the set where the Jacobian determinant is bounded away from $0$ we can bound the inverse Jacobian using the
$2\times2$ matrix inequality $\|A^{-1}\|\le \|A\|/|\det A|$:
\[
\sup_{(M,a)\in\mathcal K}\|D\mathcal S_\ell(M,a)^{-1}\|
\le \frac{L_*}{c_*},
\qquad
\sup_{(M,a)\in\mathcal K}\|D\mathcal G_\ell(M,a)^{-1}\|
\le \frac{L_*}{c_*}.
\]
Fix $\ell\ge\ell_0$.  By the inverse function theorem, for each $(M,a)\in\mathcal K$ both maps are locally invertible,
with $C^1$ inverses on sufficiently small neighborhoods.
For two points $x_1,x_2$ in the domain of such a local inverse $f^{-1}$, the mean value theorem gives
\[
|x_1-x_2|
\le \sup\|Df^{-1}\|\;|f(x_1)-f(x_2)|.
\]
Taking $f=\mathcal S_\ell$ and $f=\mathcal G_\ell$ and using the above uniform inverse-Jacobian bound yields
\eqref{eq:stability-single}--\eqref{eq:stability-two} with $C_*:=L_*/c_*$ (after shrinking the neighborhood size if needed).
\end{proof}

\subsection{Closed-form leading reconstruction}\label{subsec:explicit-inverse}

Theorem~\ref{thm:inverse-pseudopoles} gives an abstract local inverse.  For later use (and to emphasize the
geometric content), we record a closed-form leading reconstruction for the two-mode map $\mathcal G_\ell$.

\begin{proposition}[Closed-form inversion at leading order]\label{prop:explicit-inverse}
Fix $(M,a)\in\mathcal K$ and let $(U,V)=\mathcal G_\ell(M,a)$.
Define
\begin{equation}\label{eq:explicit-seed}
M^{(0)}(U):=\frac{1}{3\sqrt{\Lambda+3U^2}},
\qquad
a^{(0)}(U,V):=\frac{V}{c_Z(M^{(0)}(U))}.
\end{equation}
Then, uniformly on $\mathcal K$,
\begin{equation}\label{eq:explicit-error}
|M-M^{(0)}(U)|+|a-a^{(0)}(U,V)|
=
\mathcal O(a^2)+\mathcal O(\ell^{-1}).
\end{equation}
\end{proposition}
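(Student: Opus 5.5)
The plan is to read off the inverse directly from the uniform expansions of Lemma~\ref{lem:U-V-expansions} and then propagate the errors through the explicit formulas in \eqref{eq:explicit-seed}, using only that $M^{(0)}$ and $c_Z$ are smooth with controlled derivatives on the relevant compact ranges.

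\emph{Recovering $M$.} By \eqref{eq:U-V-expansions}, $U=\Omega_{\mathrm{ph}}(M)+\mathcal O(a^2)+\mathcal O(\ell^{-1})$, uniformly on $\mathcal K$. By Lemma~\ref{lem:Omega-monotone}, $M^{(0)}=\Omega_{\mathrm{ph}}^{-1}$ on $(0,\infty)$, so $M=M^{(0)}(\Omega_{\mathrm{ph}}(M))$. The set $\Omega_{\mathrm{ph}}(K_M)$ is a compact subset of $(0,\infty)$; enlarge it to a compact neighbourhood $J\subset(0,\infty)$. For $a_0$ small and $\ell\ge\ell_0$ large, $U$ lies in $J$. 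On $J$, $M^{(0)}(U)=\tfrac13(\Lambda+3U^2)^{-1/2}$ is smooth with bounded derivative. The mean value theorem then gives
\[
|M-M^{(0)}(U)|=|M^{(0)}(\Omega_{\mathrm{ph}}(M))-M^{(0)}(U)|\le \sup_J|(M^{(0)})'|\,|\Omega_{\mathrm{ph}}(M)-U|=\mathcal O(a^2)+\mathcal O(\ell^{-1}).
\]

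\emph{Recovering $a$.} Write $\widehat M:=M^{(0)}(U)$. Then $\widehat M$ stays in a compact neighbourhood $K_M'\Subset(0,1/(3\sqrt\Lambda))$ of $K_M$, where $c_Z$ is smooth and bounded below by a positive constant. Since $c_Z$ is Lipschitz there, $c_Z(\widehat M)=c_Z(M)+\mathcal O(a^2)+\mathcal O(\ell^{-1})$. Dividing the second line of \eqref{eq:U-V-expansions} by $c_Z(\widehat M)$ gives
\[
a^{(0)}=\frac{c_Z(M)\,a+\mathcal O(a^3)+\mathcal O(\ell^{-1})}{c_Z(M)+\mathcal O(a^2)+\mathcal O(\ell^{-1})}
= a\bigl(1+\mathcal O(a^2)+\mathcal O(\ell^{-1})\bigr)+\mathcal O(a^3)+\mathcal O(\ell^{-1}).
\]
Using $|a|\le a_0$, this yields $|a-a^{(0)}|=\mathcal O(a^3)+\mathcal O(\ell^{-1})$, which is within the claimed $\mathcal O(a^2)+\mathcal O(\ell^{-1})$. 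Adding this to the bound for $M$ gives \eqref{eq:explicit-error}.

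\emph{Main obstacle.} The only delicate point is uniformity: the perturbed arguments $U$ and $\widehat M$ must stay in fixed compact sets where $M^{(0)}$ and $1/c_Z$ have uniformly bounded derivatives. We will secure this by choosing $a_0$ small and $\ell_0$ large, relying on the uniform remainders in Lemma~\ref{lem:U-V-expansions}, which in turn come from Lemma~\ref{lem:Gell-geo-C1}. Everything else is direct error propagation.
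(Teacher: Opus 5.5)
Your proposal is correct and follows essentially the same route as the paper: invert $U=\Omega_{\mathrm{ph}}(M)+\mathcal O(a^2)+\mathcal O(\ell^{-1})$ using the explicit inverse from Lemma~\ref{lem:Omega-monotone}, then substitute $M^{(0)}(U)$ into $c_Z$ and divide the expansion of $V$ by it. Your bookkeeping is a little sharper than the paper's: you correctly observe that the $a$-error is actually $\mathcal O(a^3)+\mathcal O(\ell^{-1})$, absorbing $a\,\mathcal O(\ell^{-1})$ via $|a|\le a_0$. You are also more explicit about keeping the perturbed arguments in compact sets, but neither point changes the argument.
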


\begin{proof}
By Lemma~\ref{lem:U-V-expansions},
$U=\Omega_{\mathrm{ph}}(M)+\mathcal O(a^2)+\mathcal O(\ell^{-1})$.
Since $\Omega_{\mathrm{ph}}$ is smoothly invertible on $K_M$ (Lemma~\ref{lem:Omega-monotone}),
\eqref{eq:Omega-inverse} and a Taylor expansion of the inverse map give
\[
M^{(0)}(U)=M+\mathcal O(a^2)+\mathcal O(\ell^{-1}).
\]
Next, again by Lemma~\ref{lem:U-V-expansions},
$V=c_Z(M)\,a+\mathcal O(a^3)+\mathcal O(\ell^{-1})$.
Using smoothness of $c_Z$ and the previous bound for $M^{(0)}(U)$ yields
$c_Z(M^{(0)}(U))=c_Z(M)+\mathcal O(a^2)+\mathcal O(\ell^{-1})$,
hence
\[
a^{(0)}(U,V)
=
\frac{c_Z(M)\,a+\mathcal O(a^3)+\mathcal O(\ell^{-1})}{c_Z(M)+\mathcal O(a^2)+\mathcal O(\ell^{-1})}
=
a+\mathcal O(a^2)+\mathcal O(\ell^{-1}).
\]
Combining the two estimates gives \eqref{eq:explicit-error}.
\end{proof}

\begin{remark}[Newton refinement on the full pseudopole map]\label{rem:Newton}
Because the Jacobian determinant is uniformly bounded away from $0$ on $\mathcal K$ (Theorem~\ref{thm:inverse-pseudopoles}),
the explicit seed $(M^{(0)},a^{(0)})$ from Proposition~\ref{prop:explicit-inverse} is a natural initialization
for Newton iteration applied to the full (non-asymptotic) data map $\mathcal G_\ell$.
The error estimate \eqref{eq:explicit-error} implies that, in the high-frequency regime, a fixed number of
Newton steps yields rapid convergence provided the measured data are consistent with the pseudopole model.
\end{remark}

\subsection{A concrete reconstruction procedure and error propagation}\label{subsec:algorithmic-reconstruction}

While Theorem~\ref{thm:inverse-pseudopoles} gives local invertibility abstractly, it is often convenient to view the
recovery as an explicit two-step procedure: a closed-form leading reconstruction followed by a deterministic refinement.
We briefly record the resulting error propagation in a form used later when we transfer from pseudopoles to true QNMs.

Let $(U,V)$ be measured data close to $\mathcal G_\ell(\mathcal K)$, and define the leading seed $(M^{(0)},a^{(0)})$ by
\eqref{eq:explicit-seed}.
Denote by $(M^{\ast},a^{\ast})$ the (unique) parameter pair in a small neighborhood of $\mathcal K$ such that
$(U,V)=\mathcal G_\ell(M^{\ast},a^{\ast})$, whose existence is guaranteed by Theorem~\ref{thm:inverse-pseudopoles}.
Then the explicit error estimate \eqref{eq:explicit-error} implies
\begin{equation}\label{eq:seed-modeling-error}
|(M^{\ast},a^{\ast})-(M^{(0)},a^{(0)})|=\mathcal O\bigl((a^{\ast})^2\bigr)+\mathcal O(\ell^{-1}),
\end{equation}
uniformly on compact slow-rotation sets.

To refine the seed, one may apply Newton's method to the smooth map $\mathcal G_\ell$.
Since $D\mathcal G_\ell$ is uniformly invertible on $\mathcal K$ and uniformly Lipschitz (Theorem~\ref{thm:inverse-pseudopoles}), standard Newton--Kantorovich estimates show that, once $\ell$ is large enough and the initialization lies in the local basin of attraction, the Newton iterates converge quadratically to $(M^{\ast},a^{\ast})$.
Moreover, if the data are perturbed by an error $e\in\mathbb R^2$, so that one measures $(U,V)+e$ instead of $(U,V)$,
then the two-sided Lipschitz bound \eqref{eq:stability-two} yields a deterministic stability estimate
\begin{equation}\label{eq:data-to-parameter-stability-pseudopole}
\bigl|(M_1,a_1)-(M_2,a_2)\bigr|\le C_*\,\bigl|\mathcal G_\ell(M_1,a_1)-\mathcal G_\ell(M_2,a_2)\bigr|,
\end{equation}
valid for all $(M_j,a_j)\in\mathcal K$ sufficiently close.
In particular, in the idealized pseudopole model the parameter error is controlled linearly by the measurement error,
while \eqref{eq:seed-modeling-error} quantifies the intrinsic modeling error of the closed-form seed.

\subsection{Three-parameter reconstruction: recovering \texorpdfstring{$(M,a,\Lambda)$}{(M,a,Lambda)}}\label{sec:three-parameter}

Up to this point we have fixed the cosmological constant $\Lambda>0$ and studied the inverse problem on a two-parameter Kerr--de~Sitter family.
In this subsection we allow $\Lambda$ to vary and show that adding one damping observable yields a genuine three-parameter inverse theorem.
The additional observable we use is the (normalized) imaginary part of a single equatorial mode, which at leading order is governed by the Lyapunov exponent of the corresponding photon orbit.

\medskip
Fix $\nu_0>0$ and the associated integer $C_n$ from Theorem~\ref{thm:Dyatlov-quantization}, and fix $n\in\{0,\dots,C_n\}$. Let $\ell$ be sufficiently large so that the two equatorial pseudopoles
$\omega^\sharp_{\ell,\pm}=\omega^\sharp_{\ell,\pm}(M,a,\Lambda)$ (with $k=\pm\ell$) are well-defined and labeled as in Section~\ref{sec:HF-quantization}.
For $(M,a,\Lambda)$ in a compact subextremal slow-rotation set
\begin{equation}\label{eq:Kcompact-3}
\mathcal K^{(3)}\Subset \bigl\{(M,a,\Lambda): \Lambda>0,\ (M,a)\in\mathcal P_\Lambda,\ 0<a_1\le |a|\le a_0\bigr\},
\end{equation}
we introduce the three-component pseudopole data map
\begin{equation}\label{eq:H-map-def}
\mathcal H_\ell(M,a,\Lambda):=\Bigl(U_\ell(M,a,\Lambda),\,V_\ell(M,a,\Lambda),\,\widetilde W_\ell(M,a,\Lambda)\Bigr),
\end{equation}
where $U_\ell,V_\ell$ are defined by \eqref{eq:G-map-def} (now viewed as functions of $(M,a,\Lambda)$), and
\begin{equation}\label{eq:Wtilde-def-3}
\widetilde W_\ell(M,a,\Lambda):=-\frac{\Im\,\omega^\sharp_{\ell,+}(M,a,\Lambda)}{n+\tfrac12}.
\end{equation}
At the geometric level we enlarge the two-component map \eqref{eq:G-geo-def} by a damping observable, and define the corresponding three-parameter geometric data map $\mathcal H_{\mathrm{geo}}$ in \eqref{eq:H-geo-def}.

The geometric damping invariant is the coordinate-time Lyapunov exponent of the unstable equatorial circular photon orbit.
More precisely, letting $r_{\mathrm{geo},+}(M,a,\Lambda)$ denote the co-rotating equatorial photon radius and writing the
equatorial Carter potential $R$ and $\dot t$ as in \eqref{eq:radial-equation-Xi} and \eqref{eq:tdot-Xi}, we set
\begin{equation}\label{eq:lyapunov-def}
\lambda_+(M,a,\Lambda)
:=
\frac{\sqrt{R''\bigl(r_{\mathrm{geo},+}(M,a,\Lambda)\bigr)}}{\sqrt{2}\,r_{\mathrm{geo},+}(M,a,\Lambda)^{2}\,\dot t\bigl(r_{\mathrm{geo},+}(M,a,\Lambda)\bigr)}.
\end{equation}
This coincides with the usual definition $\lambda_+^2=\frac{R''}{2r^4\dot t^2}$ in \eqref{eq:Lyapunov-formula-Xi}.

\begin{equation}\label{eq:H-geo-def}
\mathcal H_{\mathrm{geo}}(M,a,\Lambda):=\Bigl(U_{\mathrm{geo}}(M,a,\Lambda),\,V_{\mathrm{geo}}(M,a,\Lambda),\,\lambda_+(M,a,\Lambda)\Bigr),
\end{equation}
where $U_{\mathrm{geo}},V_{\mathrm{geo}}$ are defined in \eqref{eq:G-geo-def} and $\lambda_+$ is the Lyapunov exponent \eqref{eq:lyapunov-def}.

\begin{lemma}[Second-order expansions and Jacobian for $\mathcal H_{\mathrm{geo}}$]\label{lem:Hgeo-Jacobian}
Write the $\Lambda$-dependence explicitly and set
\[
\Omega_{\mathrm{ph}}(M,\Lambda):=\frac{\sqrt{1-9\Lambda M^2}}{3\sqrt{3}\,M},
\qquad
c_Z(M,\Lambda):=\frac{2+9\Lambda M^2}{27\,M^2},
\]
as well as
\[
c_{\Omega,2}(M,\Lambda):=\frac{\sqrt{3}\,\bigl(11-45\Lambda M^2\bigr)}{486\,M^3\sqrt{1-9\Lambda M^2}},
\qquad
c_{\lambda,2}(M,\Lambda):=\frac{\sqrt{3}\,\bigl(45\Lambda M^2-2\bigr)}{243\,M^3\sqrt{1-9\Lambda M^2}}.
\]
Then, as $a\to0$,
\begin{equation}\label{eq:Hgeo-expansions}
\begin{aligned}
U_{\mathrm{geo}}(M,a,\Lambda)&=\Omega_{\mathrm{ph}}(M,\Lambda)+c_{\Omega,2}(M,\Lambda)\,a^2+\mathcal O(a^3),\\
V_{\mathrm{geo}}(M,a,\Lambda)&=c_Z(M,\Lambda)\,a+\mathcal O(a^3),\\
\lambda_+(M,a,\Lambda)&=\Omega_{\mathrm{ph}}(M,\Lambda)+c_{\lambda,2}(M,\Lambda)\,a^2+\mathcal O(a^3),
\end{aligned}
\end{equation}
uniformly for $(M,a,\Lambda)$ in compact slow-rotation sets.
Moreover, the Jacobian determinant of $\mathcal H_{\mathrm{geo}}$ satisfies
\begin{equation}\label{eq:Hgeo-det}
\det D_{(M,a,\Lambda)}\mathcal H_{\mathrm{geo}}(M,a,\Lambda)
=\frac{5\,(9\Lambda M^2-4)}{1458\,M^5}\,a^2+\mathcal O(a^3),
\qquad a\to0,
\end{equation}
again locally uniformly in $(M,\Lambda)$.
In particular, on $\mathcal K^{(3)}$ there exist $a_0>0$ and $c_0>0$ such that
\begin{equation}\label{eq:Hgeo-det-lower}
\bigl|\det D\mathcal H_{\mathrm{geo}}(M,a,\Lambda)\bigr|\ge c_0
\qquad\text{for all }(M,a,\Lambda)\in\mathcal K^{(3)}.
\end{equation}
\end{lemma}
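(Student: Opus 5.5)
The plan is to first establish the three expansions in \eqref{eq:Hgeo-expansions}, then compute the Jacobian by differentiating them term by term while tracking powers of $a$, and finally absorb the remainder using $|a|\ge a_1$ on $\mathcal K^{(3)}$.

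\emph{The expansions.} For $U_{\mathrm{geo}}$ and $V_{\mathrm{geo}}$ I would invoke Lemma~\ref{lem:uniform-photon-orbits} with $\Lambda$ treated as an additional real-analytic parameter. The implicit function theorem applies verbatim in $(M,a,\Lambda)$, since the determinant \eqref{eq:geo-IFT-det} is $4\Omega_{\mathrm{ph}}(M,\Lambda)>0$ and is uniformly bounded below on compact sets. Averaging and differencing \eqref{eq:Omega-sharp-expansion}, as in the proof of Lemma~\ref{lem:U-V-expansions}, then gives the first two lines. The parity of Lemma~\ref{lem:geo-parity} removes the $a^3$ term from $U_{\mathrm{geo}}$ and the $a^2$ term from $V_{\mathrm{geo}}$. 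For $\lambda_+$ I would use the Appendix computation already quoted in Lemma~\ref{lem:single-mode-expansions}, again with $\Lambda$ as a parameter: $\partial_a\lambda_+|_{a=0}=0$ and the quadratic coefficient is $c_{\lambda,2}$. Since all three quantities are real-analytic in $(M,a,\Lambda)$ near $a=0$, each remainder can be written as $a^3 g(M,a,\Lambda)$ with $g$ analytic. Consequently its $M$- and $\Lambda$-derivatives are $\mathcal O(a^3)$ and its $a$-derivative is $\mathcal O(a^2)$, all locally uniformly. This observation is what makes the determinant computation legitimate.

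\emph{The determinant.} I would replace the third row by $\lambda_+-U_{\mathrm{geo}}$, which does not change the determinant. Write $d:=c_{\lambda,2}-c_{\Omega,2}$ and use subscripts for partial derivatives. The three rows (columns ordered $M,a,\Lambda$) become
\[
R_1=\bigl(\Omega_M+\mathcal O(a^2),\ 2c_{\Omega,2}a+\mathcal O(a^2),\ \Omega_\Lambda+\mathcal O(a^2)\bigr),\qquad
R_2=\bigl(c_{Z,M}a+\mathcal O(a^3),\ c_Z+\mathcal O(a^2),\ c_{Z,\Lambda}a+\mathcal O(a^3)\bigr),
\]
\[
R_3=\bigl(d_Ma^2+\mathcal O(a^3),\ 2da+\mathcal O(a^2),\ d_\Lambda a^2+\mathcal O(a^3)\bigr).
\]
Expanding along $R_3$, the order-$a^2$ part of the determinant is
\[
a^2\Bigl[c_Z\bigl(\Omega_M d_\Lambda-\Omega_\Lambda d_M\bigr)-2d\bigl(\Omega_M c_{Z,\Lambda}-\Omega_\Lambda c_{Z,M}\bigr)\Bigr].
\]
Every other contribution is $\mathcal O(a^3)$: the middle entry of $R_3$ has error $\mathcal O(a^2)$, and it multiplies a $2\times2$ minor of $R_1,R_2$ that is $\mathcal O(a)$. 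I would then substitute the explicit formulas for $\Omega_{\mathrm{ph}}$, $c_Z$, $c_{\Omega,2}$, $c_{\lambda,2}$, using $d=\sqrt3(135\Lambda M^2-15)/(486M^3\sqrt{1-9\Lambda M^2})$, and simplify to $5(9\Lambda M^2-4)/(1458M^5)$. I expect this algebra, with its square roots $\sqrt{1-9\Lambda M^2}$ and cross terms between $\Lambda$- and $M$-derivatives, to be the main obstacle. I would organize it by setting $s:=9\Lambda M^2$ and checking that all factors of $\sqrt{1-s}$ cancel. As a cross-check, I would verify the limit $\Lambda\to0$, where the bracket should reduce to $-20/(1458M^5)$.

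\emph{The lower bound.} Subextremality gives $9\Lambda M^2<1$, so $|9\Lambda M^2-4|\ge3$. On the compact $(M,\Lambda)$-projection of $\mathcal K^{(3)}$ this means the leading coefficient satisfies $|\kappa(M,\Lambda)|\ge\kappa_0>0$. The remainder is bounded by $C|a|^3\le Ca_0a^2$. Choosing $a_0\le\kappa_0/(2C)$ therefore gives $|\det D\mathcal H_{\mathrm{geo}}|\ge\tfrac12\kappa_0a^2\ge\tfrac12\kappa_0a_1^2=:c_0$, using $|a|\ge a_1$ on $\mathcal K^{(3)}$. The restriction away from $a=0$ is genuinely needed here, because the determinant vanishes to second order at $a=0$.
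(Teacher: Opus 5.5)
Your proposal is correct and follows essentially the paper's own argument: the expansions come from the photon-orbit and Lyapunov appendices together with the $(a,\varphi)$ parity, the first and third rows of $D\mathcal H_{\mathrm{geo}}$ coincide at $a=0$ so the determinant starts at order $a^2$, and the lower bound uses $9\Lambda M^2<1$. Your cofactor bracket $c_Z(\Omega_M d_\Lambda-\Omega_\Lambda d_M)-2d(\Omega_M c_{Z,\Lambda}-\Omega_\Lambda c_{Z,M})$ does simplify to $5(9\Lambda M^2-4)/(1458M^5)$, and your treatment of the derivatives of the analytic $a^3$-remainders and your explicit use of $|a|\ge a_1$ spell out steps the paper's proof leaves implicit.
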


\begin{proof}
The expansions for $U_{\mathrm{geo}}$ and $V_{\mathrm{geo}}$ follow from the second-order expansion \eqref{eq:Omega-sharp-expansion}
of $\Omega^\sharp_\pm$ together with the parity properties of Lemma~\ref{lem:geo-parity}; see also Appendix~\ref{app:equatorial-photon} for the explicit coefficient $c_{\Omega,2}$.
The expansion of $\lambda_+$ is computed in Appendix~\ref{app:lyapunov} and yields $c_{\lambda,2}$.

To compute the Jacobian, insert the expansions \eqref{eq:Hgeo-expansions} into the $3\times3$ matrix
$D_{(M,a,\Lambda)}\mathcal H_{\mathrm{geo}}$ and expand its determinant at $a=0$.
At $a=0$ the first and third components coincide (both equal $\Omega_{\mathrm{ph}}(M,\Lambda)$), hence the first and third rows of
$D\mathcal H_{\mathrm{geo}}$ agree at $a=0$ and the determinant vanishes there.
Moreover, \eqref{eq:Hgeo-expansions} implies
\[
\begin{aligned}
\partial_a U_{\mathrm{geo}}&=\mathcal O(a),\qquad \partial_a\lambda_+=\mathcal O(a),\\
\partial_M V_{\mathrm{geo}}&=\mathcal O(a),\qquad \partial_\Lambda V_{\mathrm{geo}}=\mathcal O(a),\qquad
\partial_a V_{\mathrm{geo}}=c_Z(M,\Lambda)+\mathcal O(a^2),
\end{aligned}
\]
so by multilinearity the first nonzero contribution to the determinant is of order $a^2$.
A straightforward computation using the explicit coefficients above gives \eqref{eq:Hgeo-det}.
Finally, \eqref{eq:Hgeo-det-lower} follows by shrinking $a_0$ (depending on $\mathcal K^{(3)}$) so that the $\mathcal O(a^3)$ remainder
in \eqref{eq:Hgeo-det} is dominated by the leading term, and using that $9\Lambda M^2<1$ on the subextremal set, hence $9\Lambda M^2-4<0$.
\end{proof}

\begin{theorem}[Three-parameter inversion for equatorial pseudopoles]\label{thm:inverse-pseudopoles-3}
Fix $\nu_0>0$ and the associated integer $C_n$ from Theorem~\ref{thm:Dyatlov-quantization}. Let $n\in\{0,\dots,C_n\}$ be fixed and let $\mathcal K^{(3)}$ be as in \eqref{eq:Kcompact-3}.
Then there exist $\ell_0\in\mathbb N$ and $C>0$ such that for every $\ell\ge \ell_0$ and every $(M,a,\Lambda)\in\mathcal K^{(3)}$
there is a neighborhood $\mathcal N\subset\mathcal K^{(3)}$ of $(M,a,\Lambda)$ on which the pseudopole data map $\mathcal H_\ell$
defined by \eqref{eq:H-map-def} is injective. Moreover, for all $(M,a,\Lambda),(M',a',\Lambda')\in\mathcal N$,
\begin{equation}\label{eq:H-inverse-stability}
\bigl|(M,a,\Lambda)-(M',a',\Lambda')\bigr|
\le C\,\bigl|\mathcal H_\ell(M,a,\Lambda)-\mathcal H_\ell(M',a',\Lambda')\bigr|.
\end{equation}
In particular, the three real observables $\bigl(U_\ell,V_\ell,\widetilde W_\ell\bigr)$ determine $(M,a,\Lambda)$ locally on $\mathcal K^{(3)}$.
\end{theorem}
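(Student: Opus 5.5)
The argument mirrors the two-parameter proof of Theorem~\ref{thm:inverse-pseudopoles}. It has three ingredients: a uniform Jacobian lower bound for the geometric map $\mathcal H_{\mathrm{geo}}$ on $\mathcal K^{(3)}$, a uniform $C^1$ approximation $\mathcal H_\ell=\mathcal H_{\mathrm{geo}}+\mathcal O(\ell^{-1})$, and a quantitative inverse function theorem.

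\textbf{Step 1: geometric nondegeneracy.} Lemma~\ref{lem:Hgeo-Jacobian} gives $|\det D\mathcal H_{\mathrm{geo}}|\ge c_0$ on $\mathcal K^{(3)}$. The assumption $|a|\ge a_1>0$ in \eqref{eq:Kcompact-3} is essential here, because the determinant \eqref{eq:Hgeo-det} vanishes to second order at $a=0$. Compactness gives $\sup_{\mathcal K^{(3)}}\|D\mathcal H_{\mathrm{geo}}\|\le L$. For a $3\times3$ matrix one has $|\det A|\le s_{\min}(A)\|A\|^2$, so
\[
s_{\min}\bigl(D\mathcal H_{\mathrm{geo}}\bigr)\ge \sigma:=\frac{c_0}{L^2}
\]
uniformly on $\mathcal K^{(3)}$.

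\textbf{Step 2: $C^1$ closeness.} We need $\|\mathcal H_\ell-\mathcal H_{\mathrm{geo}}\|_{C^1(\mathcal K^{(3)})}\le C\ell^{-1}$.

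For the first two components this is Lemma~\ref{lem:Gell-geo-C1}, rerun with $\Lambda$ as an additional parameter. Proposition~\ref{prop:labeling-analytic}(2) and Remark~\ref{rem:varying-Lambda} assert real-analytic dependence on $\Lambda$. The parameter-differentiable Grushin reduction of Appendix~\ref{app:parameter-quantization} is stated for $\mu=(M,a,\Lambda)$, so its remainders stay uniform under one derivative in all three variables.

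For the third component we use \eqref{eq:W-Lyapunov}, $\widetilde W_\ell=\lambda_++\mathcal O(\ell^{-1})$, upgraded to a $C^1$ statement. The subprincipal term $\mathcal O(\ell^{-1})$ of the symbol $F$ is a smooth function of the parameters, with uniformly bounded $\mu$-derivatives on the compact set. The leading imaginary part $-(n+\frac12)\lambda_+$ comes from barrier-top quantization, and its coefficient is a smooth function of the photon-orbit data $r_{\mathrm{geo},+}$, $R''$, $\dot t$. Those quantities are real-analytic in $(M,a,\Lambda)$ by the uniform implicit function argument of Lemma~\ref{lem:uniform-photon-orbits}, with $\Lambda$ included.

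\textbf{Step 3: transfer to $\mathcal H_\ell$.} For $\ell\ge\ell_0$ with $C\ell_0^{-1}\le\sigma/2$, Weyl's inequality for singular values gives $s_{\min}(D\mathcal H_\ell)\ge\sigma/2$ on $\mathcal K^{(3)}$. Hence $\|D\mathcal H_\ell^{-1}\|\le 2/\sigma$, and $D\mathcal H_\ell$ is uniformly continuous on $\mathcal K^{(3)}$, uniformly in $\ell$ (I would use a uniform $C^2$ bound if available, otherwise a uniform modulus of continuity from the symbol calculus).

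Fix $r>0$ so that on every ball $B$ of radius $r$ in the parameter space one has $\|D\mathcal H_\ell(x)-D\mathcal H_\ell(x_0)\|\le\sigma/4$ for $x,x_0\in B$. Then
\[
|\mathcal H_\ell(x)-\mathcal H_\ell(x')-D\mathcal H_\ell(x_0)(x-x')|\le\tfrac{\sigma}{4}|x-x'|,
\]
and therefore $|\mathcal H_\ell(x)-\mathcal H_\ell(x')|\ge\frac{\sigma}{4}|x-x'|$. This gives injectivity on $\mathcal N:=B\cap\mathcal K^{(3)}$ and \eqref{eq:H-inverse-stability} with $C=4/\sigma$, independent of $\ell$.

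One technical point: if a ball leaves the admissible set, I would work in a slightly larger compact set on which all bounds still hold. Also, $\mathcal K^{(3)}$ contains both signs of $a$ only in separate components, which is harmless for a local statement.

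\textbf{Main obstacle.} The delicate step is Step 2 for the damping component. Nothing earlier establishes the identification $\widetilde W_\ell\to\lambda_+$ in $C^1$ jointly in $(M,a,\Lambda)$; \eqref{eq:W-Lyapunov} is stated only in $C^0$. I would justify it by reading off the imaginary part of the radial quantization symbol $F^r$ at the barrier top: the factor $ih(n+\frac12)$ is normalized by the square root of the Hessian of the radial potential. One derivative in $\mu$ can then be carried through the implicit solve of Lemma~\ref{lem:implicit-frequency-symbol}, using the uniform bound \eqref{eq:uniform-domega-q}.

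A second quantitative issue is that the lower bound $c_0\sim a_1^2$ is small. Consequently $\ell_0$ must depend on $a_1$, which the statement permits, since $\ell_0$ and $C$ may depend on $\mathcal K^{(3)}$.
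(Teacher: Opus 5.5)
Your proposal is correct and follows essentially the same route as the paper. Both arguments rest on three inputs: the uniform Jacobian lower bound for $\mathcal H_{\mathrm{geo}}$ from Lemma~\ref{lem:Hgeo-Jacobian}, the $C^1$ closeness $\|\mathcal H_\ell-\mathcal H_{\mathrm{geo}}\|_{C^1}=\mathcal O(\ell^{-1})$ via Appendix~\ref{app:parameter-quantization}, and a perturbative inverse function argument. The paper cites Lemma~\ref{lem:perturb-IFT} with $d=3$ where you carry out the singular-value estimate by hand. You are also right that the $C^1$ upgrade of \eqref{eq:W-Lyapunov} for the damping component is the delicate point; the paper simply asserts that closeness without separate justification.
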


\begin{proof}
The proof follows the same perturbative inverse-function strategy as Theorem~\ref{thm:inverse-pseudopoles}.

By the pseudopole asymptotics (Section~\ref{sec:HF-quantization}) and the parameter-uniform differentiability results of Appendix~\ref{app:parameter-quantization},
the map $\mathcal H_\ell$ is $C^1$-close to the geometric map $\mathcal H_{\mathrm{geo}}$ on $\mathcal K^{(3)}$:
there exists $C_0>0$ such that
\[
\|\mathcal H_\ell-\mathcal H_{\mathrm{geo}}\|_{C^1(\mathcal K^{(3)})}\le C_0\,\ell^{-1},
\qquad \ell\to\infty.
\]
Lemma~\ref{lem:Hgeo-Jacobian} gives a uniform lower bound \eqref{eq:Hgeo-det-lower} on the Jacobian determinant of $\mathcal H_{\mathrm{geo}}$
and a uniform bound on $\|D\mathcal H_{\mathrm{geo}}\|$ on $\mathcal K^{(3)}$.
Applying the perturbative inverse function theorem (Lemma~\ref{lem:perturb-IFT} with $d=3$) yields the asserted local injectivity of $\mathcal H_\ell$
for all sufficiently large $\ell$, as well as the Lipschitz stability estimate \eqref{eq:H-inverse-stability}.
\end{proof}

\begin{remark}[A useful combination]\label{rem:WminusU}
Combining \eqref{eq:Hgeo-expansions} gives a particularly simple leading-order relation:
\begin{equation}\label{eq:WminusU-expansion}
\begin{aligned}
\lambda_+(M,a,\Lambda)-U_{\mathrm{geo}}(M,a,\Lambda)
&=\bigl(c_{\lambda,2}(M,\Lambda)-c_{\Omega,2}(M,\Lambda)\bigr)a^2+\mathcal O(a^3)\\
&=-\frac{5\sqrt3}{162}\,\frac{\sqrt{1-9\Lambda M^2}}{M^3}\,a^2+\mathcal O(a^3),
\qquad a\to0.
\end{aligned}
\end{equation}
This shows that the difference $\lambda_+-U_{\mathrm{geo}}$ is a genuinely new observable at order $a^2$,
which is the mechanism behind the nondegeneracy \eqref{eq:Hgeo-det}.
\end{remark}

\subsection{Quasi-global injectivity on rectangles and a uniform reconstruction scheme}\label{subsec:quasi-global}

Theorems~\ref{thm:inverse-pseudopoles} and \ref{thm:inverse-true-QNM} give \emph{local} invertibility on the compact set
$\mathcal K$.
For practical reconstruction it is often desirable to know that, on a prescribed parameter range, the data map has \emph{at most one}
preimage (so that no additional initialization or branch selection is needed).
In two dimensions one may obtain such a quasi-global uniqueness statement on rectangular regions using a classical global univalence
criterion due to Gale--Nikaid\^o \cite{GaleNikaido1965}.

\begin{lemma}[Gale--Nikaid\^o univalence criterion in the plane]\label{lem:Gale-Nikaido}
Let $R=[\alpha_1,\beta_1]\times[\alpha_2,\beta_2]\subset\mathbb R^2$ be a closed rectangle and let
$F=(F_1,F_2)\in C^1(R;\mathbb R^2)$.
Assume that for every $x\in R$ the Jacobian matrix $DF(x)$ is a \emph{$P$-matrix}, i.e.\ its principal minors satisfy
\[
\partial_{x_1}F_1(x)>0,\qquad \partial_{x_2}F_2(x)>0,\qquad \det DF(x)>0.
\]
Then $F$ is injective on $R$.
\end{lemma}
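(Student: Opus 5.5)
The plan is the standard Gale--Nikaid\^o argument in the plane, written out directly in the two-dimensional case. Suppose $F(x)=F(y)$ with $x\neq y$ in $R$. Since $R$ is convex, the segment $x+t(y-x)$, $t\in[0,1]$, lies in $R$. Integrating $DF$ along it gives
\[
0=F(y)-F(x)=A\,(y-x),\qquad A:=\int_0^1 DF\bigl(x+t(y-x)\bigr)\,dt .
\]
So $A$ has a nontrivial kernel. The difficulty is that the $P$-matrix property is \emph{not} preserved under averaging. Diagonal entries stay positive, but $\det A$ may fail to be positive, because the set of $P$-matrices is not convex. A single mean-value step therefore cannot finish the argument, and I expect this to be the main obstacle.

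To get around it, I will use the sign-pattern characterisation of $P$-matrices. A real matrix $B$ is a $P$-matrix if and only if, for every $z\neq0$, some index $i$ has $z_i(Bz)_i>0$. In dimension two I will check this directly from the three positivity conditions. Write $z=(z_1,z_2)$ and suppose $z_i(Bz)_i\le0$ for both $i$. Then
\[
b_{11}z_1^2\le -b_{12}z_1z_2, \qquad b_{22}z_2^2\le -b_{21}z_1z_2 .
\]
If $z_1z_2=0$, this immediately contradicts $b_{ii}>0$. Otherwise, multiplying the two inequalities (both left sides are positive) gives $b_{11}b_{22}z_1^2z_2^2\le b_{12}b_{21}z_1^2z_2^2$, which contradicts $\det B>0$.

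The key step is to use the rectangle geometry, not just convexity, to reduce everything to one-variable monotonicity. Set $d:=y-x$ and handle the cases by the signs of $d$.

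\textbf{Case $d_1d_2=0$.} Say $d_2=0$. Then $F_1(\cdot,x_2)$ is strictly increasing because $\partial_{x_1}F_1>0$, so $F_1(y)\neq F_1(x)$, a contradiction. The case $d_1=0$ is symmetric, using $\partial_{x_2}F_2>0$.

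\textbf{Case $d_1d_2>0$.} Without loss of generality $d_1,d_2>0$ after swapping $x,y$. Route from $x$ to $y$ through the corner $c=(y_1,x_2)$, which lies in $R$ because $R$ is a rectangle. Along each leg, one coordinate moves monotonically, so the plan is to compare increments of $F_1$ and $F_2$ on the two legs. On the horizontal leg $F_1$ increases. On the vertical leg $F_2$ increases.

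The cross terms are not controlled by diagonal positivity alone. This is where $\det>0$ enters, and I will apply it through the classical Gale--Nikaid\^o induction in place of the bare two-leg estimate. Consider the one-dimensional restriction $x_2\mapsto$ the unique solution $x_1=\phi(x_2)$ of $F_1(x_1,x_2)=F_1(x)$, if it exists. Because $\partial_{x_1}F_1>0$, the implicit function theorem makes $\phi$ a $C^1$ function on a subinterval, with $\phi'=-\partial_2F_1/\partial_1F_1$. Along this curve,
\[
\frac{d}{dx_2}F_2(\phi(x_2),x_2)=\frac{\det DF}{\partial_1F_1}>0 .
\]
So $F_2$ is strictly increasing along the level curve $\{F_1=F_1(x)\}$. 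The level set of $F_1$ in $R$ is a graph over an interval of $x_2$-values: it is connected because $\phi$ is defined on an interval, since $F_1$ is strictly monotone in $x_1$ on each horizontal segment of the rectangle. Both $x$ and $y$ lie on this graph, so $F_2(y)\neq F_2(x)$ unless $x_2=y_2$, and then $x=y$. This covers every sign case, including $d_1d_2<0$, and gives injectivity.

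The only delicate point to check carefully is that the level set of $F_1$ is a single graph. This follows because on each horizontal line $\{x_2=s\}\cap R$, $F_1$ is strictly increasing in $x_1$, so it meets the level at most once, and the set of $s$ where it meets it is an interval by the intermediate value theorem and continuity.
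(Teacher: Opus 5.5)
There is a genuine gap: the claim that $\{F_1=F_1(x)\}\cap R$ is a single graph over an interval of $x_2$-values is false. With $c:=F_1(x)$, the set of heights where the level set meets $R$ is $S=\{s\in[\alpha_2,\beta_2]:\ F_1(\alpha_1,s)\le c\le F_1(\beta_1,s)\}$. This is the intersection of a sublevel set and a superlevel set of two continuous functions of $s$. The intermediate value theorem says nothing about its connectedness, and in general it is not an interval.

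For example, $F(x_1,x_2)=(x_1+2\sin(10x_2),\,x_2)$ on $[0,1]^2$ has Jacobian $\begin{pmatrix}1&20\cos(10x_2)\\0&1\end{pmatrix}$, which is a $P$-matrix everywhere. Yet $\{F_1=\tfrac12\}\cap[0,1]^2$ consists of four disjoint arcs: the curve $x_1=\tfrac12-2\sin(10x_2)$ repeatedly leaves and re-enters the square through its vertical sides. Your computation $\frac{d}{dx_2}F_2(\phi(x_2),x_2)=\det DF/\partial_1F_1>0$ only shows that $F_2$ increases along each arc separately. So your proof does not exclude $F(x)=F(y)$ with $x$ and $y$ on different arcs.

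A warning sign is that your final argument uses only that horizontal slices of $R$ are segments, which holds on every convex domain. The statement is false there. On a thin rectangle around the segment $\{(t,-t)\}$ one can prescribe a $P$-matrix Jacobian whose restriction to that segment is a self-crossing curve. This works because the only pointwise constraint on $DF\,(1,-1)^{T}$ is that it avoid the quadrant $\{v_1\le 0\le v_2\}$. So the axis-parallel shape of $R$ must enter somewhere, and in your write-up it never does; the route through the corner $(y_1,x_2)$ is abandoned.

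The missing idea is to bridge consecutive arcs along the vertical sides of $R$, where $\partial_2F_2>0$ makes $F_2$ increase. On a maximal interval of heights outside $S$, the sets $\{F_1(\alpha_1,s)>c\}$ and $\{F_1(\beta_1,s)<c\}$ are disjoint, relatively open, and cover it. So exactly one of them holds throughout, and the level curve leaves and re-enters $R$ through the same vertical side.

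Define $\gamma(s)$ as follows. If $[\alpha_1,\beta_1]\times\{s\}$ contains a point with $F_1=c$, let $\gamma(s)$ be that (unique) point. Otherwise let $\gamma(s)$ be $(\alpha_1,s)$ or $(\beta_1,s)$, according to which alternative holds. Then $\gamma$ is continuous on $[\alpha_2,\beta_2]$, and $s\mapsto F_2(\gamma(s))$ is strictly increasing:
\begin{itemize}
\item on pieces lying on the level set in the open slice, by your implicit-function computation;
\item on pieces lying on a vertical side, by $\partial_2F_2>0$;
\item at heights where the level curve touches a side, by comparing through the last contact point with that side and using continuity.
\end{itemize}
Local strict increase at every point then gives global strict increase. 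If $F(x)=F(y)$, both $x$ and $y$ lie on $\gamma$, which forces $x_2=y_2$, and your case $d_2=0$ finishes the proof. For comparison, the paper gives no argument of its own and simply cites Gale--Nikaid\^o.
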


\begin{proof}
See \cite{GaleNikaido1965}.  We record the statement for convenience.
\end{proof}

We apply Lemma~\ref{lem:Gale-Nikaido} to a simple sign-modification of the two-mode real-part map.

\begin{proposition}[Injectivity on rectangular slow-rotation sets]\label{prop:rectangular-injectivity}
Let $\Lambda>0$ and let
\[
\mathcal K_{\mathrm{rec}}:=[M_-,M_+]\times[-a_0,a_0]\Subset \mathcal P_\Lambda
\]
be a compact rectangle in the subextremal parameter set.
Assume $a_0>0$ is sufficiently small so that $\mathcal K_{\mathrm{rec}}$ is contained in the slow-rotation regime
of \eqref{eq:Kcompact}.
Fix an overtone $n\in\{0,1,\dots,C_n\}$.
Then there exists $\ell_0\in\mathbb N$ such that for every $\ell\ge \ell_0$:

\begin{enumerate}
\item[(i)] the pseudopole map $\mathcal G_\ell:\mathcal K_{\mathrm{rec}}\to\mathbb R^2$ is injective;
\item[(ii)] the true-QNM map $\mathcal G_\ell^{\mathrm{QNM}}:\mathcal K_{\mathrm{rec}}\to\mathbb R^2$ is injective.
\end{enumerate}
In particular, for any data $(U,V)$ in the image $\mathcal G_\ell^{\mathrm{QNM}}(\mathcal K_{\mathrm{rec}})$ there is a unique pair
$(M,a)\in\mathcal K_{\mathrm{rec}}$ such that $\mathcal G_\ell^{\mathrm{QNM}}(M,a)=(U,V)$.
\end{proposition}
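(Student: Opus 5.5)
The plan is to apply Lemma~\ref{lem:Gale-Nikaido} to a sign-modified version of $\mathcal G_\ell$ on the rectangle $\mathcal K_{\mathrm{rec}}$. Since $\Omega_{\mathrm{ph}}'(M)<0$ (Lemma~\ref{lem:Omega-monotone}) while $c_Z(M)>0$, the Jacobian of $\mathcal G_\ell$ in the coordinates $(M,a)$ has diagonal entries of opposite signs, so it is not itself a $P$-matrix. We therefore set
\[
F_\ell(M,a):=\bigl(-U_\ell(M,a),\,V_\ell(M,a)\bigr),
\]
which is injective if and only if $\mathcal G_\ell$ is. By Lemma~\ref{lem:U-V-expansions}, uniformly on $\mathcal K_{\mathrm{rec}}$ its partial derivatives satisfy
\[
\partial_M(-U_\ell)=-\Omega_{\mathrm{ph}}'(M)+\mathcal O(a^2)+\mathcal O(\ell^{-1}),\qquad
\partial_a V_\ell=c_Z(M)+\mathcal O(a^2)+\mathcal O(\ell^{-1}),
\]
and the off-diagonal entries are $\mathcal O(|a|)+\mathcal O(\ell^{-1})$. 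Hence
\[
\det DF_\ell=-\Omega_{\mathrm{ph}}'(M)\,c_Z(M)+\mathcal O(a^2)+\mathcal O(\ell^{-1}).
\]
The leading terms $-\Omega_{\mathrm{ph}}'$, $c_Z$ and their product are bounded below by positive constants on the compact interval $[M_-,M_+]$. Shrinking $a_0$ and enlarging $\ell_0$ absorbs the errors, so $DF_\ell$ is a $P$-matrix at every point of the rectangle. Lemma~\ref{lem:Gale-Nikaido} then gives part (i).

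For part (ii), the same argument is repeated with $\mathcal G_\ell^{\mathrm{QNM}}$ in place of $\mathcal G_\ell$. The needed input is a $C^1$ closeness estimate
\[
\|\mathcal G_\ell^{\mathrm{QNM}}-\mathcal G_\ell\|_{C^1(\mathcal K_{\mathrm{rec}})}=\mathcal O(\ell^{-N}),
\]
which comes from the $C^1$ pseudopole--QNM proximity (Lemma~\ref{lem:C1-close}), itself based on the uniform simplicity of Remark~\ref{rem:uniform-simplicity} and Appendix~\ref{app:parameter-quantization}. This perturbation changes the three principal minors by $\mathcal O(\ell^{-N})$, so they stay positive for $\ell$ large, and Gale--Nikaid\^o applies again. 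Real-analyticity of the equatorial branches (Proposition~\ref{prop:labeling-analytic}) ensures $\mathcal G_\ell^{\mathrm{QNM}}\in C^1$ on the rectangle. The final uniqueness statement is simply a restatement of injectivity.

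The main obstacle is uniformity up to the boundary of the rectangle, in particular at $a=\pm a_0$. The error terms in Lemma~\ref{lem:U-V-expansions} and in the $C^1$ QNM closeness must hold on all of $\mathcal K_{\mathrm{rec}}$, including its edges; Gale--Nikaid\^o requires the $P$-matrix condition on the whole closed rectangle, not merely local invertibility. I would handle this by noting that both the geometric expansions and the appendix bounds are stated uniformly on compact slow-rotation sets, and that a closed rectangle is such a set. A minor check is that the rectangle stays inside $\mathcal P_\Lambda$, which is part of the hypothesis.
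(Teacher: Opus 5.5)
Your proposal is correct and follows the paper's proof essentially step for step. You pass to the sign-modified map $(-U_\ell,V_\ell)$, use Lemma~\ref{lem:U-V-expansions} (after shrinking $a_0$ and enlarging $\ell_0$) to make its Jacobian a $P$-matrix on the whole closed rectangle, apply Lemma~\ref{lem:Gale-Nikaido}, and then transfer injectivity to $\mathcal G_\ell^{\mathrm{QNM}}$ via the $C^1$ closeness of Lemma~\ref{lem:C1-close}, exactly as the paper does.
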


\begin{proof}
We first treat the pseudopole map $\mathcal G_\ell=(U_\ell,V_\ell)$ and then perturb to the true-QNM map.

\smallskip
\noindent\textbf{Step 1: uniform sign information for the Jacobian of $\mathcal G_\ell$.}
By Lemma~\ref{lem:U-V-expansions} we have, uniformly on $\mathcal K_{\mathrm{rec}}$,
\begin{align*}
\partial_M U_\ell &= \Omega_{\mathrm{ph}}'(M)+\mathcal O(a^2)+\mathcal O(\ell^{-1}),\\
\partial_a V_\ell &= c_Z(M)+\mathcal O(a^2)+\mathcal O(\ell^{-1}),\\
\det D\mathcal G_\ell
&=\partial_M U_\ell\,\partial_a V_\ell-\partial_a U_\ell\,\partial_M V_\ell
=\Omega_{\mathrm{ph}}'(M)c_Z(M)+\mathcal O(a^2)+\mathcal O(\ell^{-1}).
\end{align*}
Since $\Omega_{\mathrm{ph}}'(M)<0$ on $(0,(3\sqrt{\Lambda})^{-1})$ (Lemma~\ref{lem:Omega-monotone}) and $c_Z(M)>0$ on $(0,\infty)$
(Proposition~\ref{prop:Zeeman-slope}), compactness of $[M_-,M_+]$ gives constants
\[
c_M:=\min_{M\in[M_-,M_+]}\bigl(-\Omega_{\mathrm{ph}}'(M)\bigr)>0,\qquad
c_Z:=\min_{M\in[M_-,M_+]}\,c_Z(M)>0.
\]
Set $c_{\min}:=\min\{c_M,\,c_Z,\,c_M c_Z\}>0$.
Choose $a_0>0$ and then $\ell_0$ so that all $\mathcal O(a^2)+\mathcal O(\ell^{-1})$ remainders in \eqref{eq:U-V-derivatives}--\eqref{eq:U-V-derivatives-2}
are bounded in absolute value by $\frac14 c_{\min}$ on $\mathcal K_{\mathrm{rec}}$ for all $\ell\ge\ell_0$.
Then for $\ell\ge\ell_0$ we obtain the uniform inequalities
\begin{equation}\label{eq:signs-rectangle}
-\partial_M U_\ell\ge \tfrac12 c_M,\qquad \partial_a V_\ell\ge \tfrac12 c_Z,\qquad -\det D\mathcal G_\ell\ge \tfrac12 c_M c_Z
\qquad \text{on }\mathcal K_{\mathrm{rec}}.
\end{equation}

\smallskip
\noindent\textbf{Step 2: a $P$-matrix Jacobian after a sign change.}
Define the sign-modified map
\[
\widetilde{\mathcal G}_\ell:\mathcal K_{\mathrm{rec}}\to\mathbb R^2,\qquad
\widetilde{\mathcal G}_\ell(M,a):=\bigl(-U_\ell(M,a),\,V_\ell(M,a)\bigr).
\]
Its Jacobian satisfies
\[
D\widetilde{\mathcal G}_\ell(M,a)=
\begin{pmatrix}
-\partial_M U_\ell & -\partial_a U_\ell\\
\partial_M V_\ell & \partial_a V_\ell
\end{pmatrix},
\qquad
\det D\widetilde{\mathcal G}_\ell(M,a)= -\det D\mathcal G_\ell(M,a).
\]
By \eqref{eq:signs-rectangle}, all principal minors of $D\widetilde{\mathcal G}_\ell$ are strictly positive on $\mathcal K_{\mathrm{rec}}$
for $\ell\ge\ell_0$, so $D\widetilde{\mathcal G}_\ell$ is a $P$-matrix everywhere on $\mathcal K_{\mathrm{rec}}$.
Lemma~\ref{lem:Gale-Nikaido} therefore implies that $\widetilde{\mathcal G}_\ell$ is injective on $\mathcal K_{\mathrm{rec}}$.
Since $(U,V)\mapsto(-U,V)$ is an invertible linear map, injectivity of $\widetilde{\mathcal G}_\ell$ is equivalent to injectivity of
$\mathcal G_\ell$.

\smallskip
\noindent\textbf{Step 3: perturbation to the true-QNM map.}
By Lemma~\ref{lem:C1-close} we have
\[
\|\mathcal G_\ell^{\mathrm{QNM}}-\mathcal G_\ell\|_{C^1(\mathcal K_{\mathrm{rec}})}\le C_N \ell^{-N}
\]
for every $N$ and all $\ell$ large.
In particular, taking $\ell$ larger if needed, the strict inequalities \eqref{eq:signs-rectangle} persist with $\mathcal G_\ell$
replaced by $\mathcal G_\ell^{\mathrm{QNM}}$.
Applying Lemma~\ref{lem:Gale-Nikaido} to the sign-modified map $\widetilde{\mathcal G}_\ell^{\mathrm{QNM}}:=(-U_\ell^{\mathrm{QNM}},V_\ell^{\mathrm{QNM}})$
gives injectivity of $\widetilde{\mathcal G}_\ell^{\mathrm{QNM}}$ and hence of $\mathcal G_\ell^{\mathrm{QNM}}$ on $\mathcal K_{\mathrm{rec}}$.
\end{proof}

\begin{remark}[A uniform reconstruction scheme]\label{rem:uniform-scheme}
Under the assumptions of Proposition~\ref{prop:rectangular-injectivity}, the inverse problem on $\mathcal K_{\mathrm{rec}}$
is well-posed in the sense that the data map $\mathcal G_\ell^{\mathrm{QNM}}$ has a \emph{unique} preimage in $\mathcal K_{\mathrm{rec}}$.
Moreover, Proposition~\ref{prop:explicit-inverse} provides a uniform explicit seed:
if $(U,V)=\mathcal G_\ell(M,a)$ with $(M,a)\in\mathcal K_{\mathrm{rec}}$, then the closed-form pair
$(M^{(0)}(U),a^{(0)}(U,V))$ defined in \eqref{eq:explicit-seed} satisfies
$|(M,a)-(M^{(0)},a^{(0)})|=\mathcal O(a^2)+\mathcal O(\ell^{-1})$ uniformly on $\mathcal K_{\mathrm{rec}}$.
Thus, given data consistent with the model and $\ell$ large, one may initialize Newton iteration for the nonlinear system
$\mathcal G_\ell^{\mathrm{QNM}}(M,a)=(U,V)$ at $(M^{(0)},a^{(0)})$.
Uniform invertibility of $D\mathcal G_\ell^{\mathrm{QNM}}$ on $\mathcal K_{\mathrm{rec}}$ (Theorem~\ref{thm:inverse-true-QNM}) and compactness
provide uniform bounds on higher derivatives, so standard Newton--Kantorovich estimates yield a basin of quadratic convergence that is uniform
on $\mathcal K_{\mathrm{rec}}$ once $\ell$ is sufficiently large.
\end{remark}

\section{Transfer to true QNMs}\label{sec:transfer}

In Section~\ref{sec:inverse-pseudopoles} we proved a quantitative local inverse theorem at the
\emph{pseudopole} level. The purpose of this section is to transfer that inverse theorem to the
\emph{true} quasinormal modes (QNMs), using (i)~the super--polynomial pseudopole approximation and labeling
from the semiclassical quantization of Kerr--de~Sitter QNMs \cite{Dyatlov2011QNMKerrDeSitter,Dyatlov2012AsymptoticQNMKdS}
within the microlocal Fredholm framework \cite{Vasy2013MicrolocalAHKdS} (with dynamical input of normally hyperbolic trapping,
cf.\ \cite{WunschZworski2011ResolventNHT}), and (ii)~the real-analytic dependence of labeled \emph{simple} QNM branches on
parameters, which follows from analytic perturbation theory in the Kerr--de~Sitter Fredholm framework
(Remark~\ref{rem:analytic_parameters}; see \cite{KatoPerturbation}).
For the inverse theorem we need, in addition to pointwise $C^0$ proximity, a version with one parameter derivative. Rather than using a complex-analytic maximum principle (which does not apply in this setting), we extract the needed $C^1$ control from the parameter-dependent semiclassical quantization construction; see Lemma~\ref{lem:C1-close}.
Conceptually, this yields a rotating analogue of the ``single-mode'' mass recovery mechanism in the spherically symmetric setting
\cite{UhlmannWang2023RecoverMassSingleQNM}, now exploiting the equatorial Zeeman-type splitting described in \cite{Dyatlov2012AsymptoticQNMKdS}.

Throughout, fix $\Lambda>0$, a compact parameter set $\mathcal K$ as in \eqref{eq:Kcompact}, and an overtone
$n\in\{0,\dots,C_n\}$. For each $\ell\ge\ell_0$ in the high-frequency regime, Proposition~\ref{prop:labeling-analytic}
provides a labeled pair of \emph{true} equatorial QNMs
\begin{equation}\label{eq:omega-true-def}
\omega_{n,\ell,\pm\ell}(M,a)=:\omega_\pm(M,a),\qquad (M,a)\in\mathcal K,
\end{equation}
associated to the corresponding equatorial pseudopoles $\omega^\sharp_\pm=\omega^\sharp_{n,\ell,\pm\ell}$ from
Section~\ref{sec:inverse-pseudopoles}.
We form the averaged and split combinations
\begin{equation}\label{eq:avg-dif-true}
\omega_{\mathrm{av}}:=\frac12(\omega_+ + \omega_-),\qquad
\omega_{\mathrm{dif}}:=\frac12(\omega_+ - \omega_-),
\end{equation}
and define the \emph{true-QNM observables}
\begin{equation}\label{eq:UV-true}
U_\ell^{\mathrm{QNM}}(M,a):=\Re\!\Big(\frac{\omega_{\mathrm{av}}(M,a)}{\ell}\Big),\qquad
V_\ell^{\mathrm{QNM}}(M,a):=\Re\!\Big(\frac{\omega_{\mathrm{dif}}(M,a)}{\ell}\Big).
\end{equation}
Finally, define the \emph{true-QNM data map}
\begin{equation}\label{eq:GQNM-def}
\mathcal G_\ell^{\mathrm{QNM}}:\mathcal K\to\mathbb R^2,\qquad
\mathcal G_\ell^{\mathrm{QNM}}(M,a):=\big(U_\ell^{\mathrm{QNM}}(M,a),\,V_\ell^{\mathrm{QNM}}(M,a)\big).
\end{equation}

\subsection{Uniform \texorpdfstring{$C^1$}{C1} super--polynomial closeness of QNMs to pseudopoles}\label{subsec:C1-close}

The super--polynomial approximation $\omega_\pm=\omega^\sharp_\pm+\mathcal O(\ell^{-\infty})$ obtained in
Proposition~\ref{prop:labeling-analytic} is a $C^0$ statement on the compact real parameter set $\mathcal K$.
To transfer Jacobian bounds from pseudopoles to true QNMs we need a corresponding estimate with one derivative in $(M,a)$.
This is available because the semiclassical normal--form construction underlying Dyatlov's quantization is uniform on compact
parameter sets and differentiable with respect to external parameters.

\begin{lemma}[Size of parameter derivatives of labeled branches]\label{lem:param-derivative-growth}
Fix $\Lambda>0$, an overtone index $n\in\{0,\dots,C_n\}$, and a compact parameter set $\mathcal K$ as in \eqref{eq:Kcompact}.
Then there exists $C>0$ such that for all $\ell$ sufficiently large and each sign $\pm$ one has
\begin{equation}\label{eq:param-derivative-growth}
\sup_{(M,a)\in\mathcal K}\bigl(|\partial_M\omega^\sharp_{n,\ell,\pm\ell}(M,a)|+|\partial_a\omega^\sharp_{n,\ell,\pm\ell}(M,a)|\bigr)\le C\,\ell.
\end{equation}
Moreover, for the corresponding labeled \emph{true} QNM branches $\omega_{n,\ell,\pm\ell}(M,a)$ one has
\begin{equation}\label{eq:param-derivative-growth-true}
\sup_{(M,a)\in\mathcal K}\bigl(|\partial_M\omega_{n,\ell,\pm\ell}(M,a)|+|\partial_a\omega_{n,\ell,\pm\ell}(M,a)|\bigr)\le C\,\ell.
\end{equation}
\end{lemma}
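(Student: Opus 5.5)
The pseudopole bound \eqref{eq:param-derivative-growth} follows from homogeneity of the symbol together with parameter-differentiable remainder control. I will write $\omega^\sharp_{n,\ell,\pm\ell}(M,a)=F_{M,a}(n,\ell,\pm\ell)$ with the classical expansion \eqref{eq:F-symbol}. By homogeneity, $F_j(n,\ell,\pm\ell)=\ell^{1-j}F_j(n,1,\pm1)$. The coefficients $F_j(n,1,\pm1)$ are smooth in $(M,a)$ on a neighbourhood of the compact set $\mathcal K$. This smoothness holds up to the equatorial boundary $|k|=\ell$ by Remark~\ref{rem:equatorial-boundary} and Lemma~\ref{lem:implicit-frequency-symbol}. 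Consequently $\partial_{M,a}F_0(n,\ell,\pm\ell)=\ell\,\partial_{M,a}F_0(n,1,\pm1)=\mathcal O(\ell)$ uniformly on $\mathcal K$. Each lower-order term contributes $\mathcal O(\ell^{1-j})$.

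For a finite truncation at order $J$, the remainder of the Borel-summed representative (Remark~\ref{rem:pseudopole-representative}) is $\mathcal O(\ell^{-J})$. I need this bound to survive one $(M,a)$-derivative. This is exactly the parameter-differentiable remainder control of Appendix~\ref{app:parameter-quantization}, already used in Lemma~\ref{lem:Gell-geo-C1}. That lemma gives $\widehat\omega^\sharp_\pm=\Omega^\sharp_\pm+\mathcal O(\ell^{-1})$ in $C^1(\mathcal K)$. Multiplying by $\ell$ and using that $\Omega^\sharp_\pm$ is real-analytic on $\mathcal K$ (Lemma~\ref{lem:uniform-photon-orbits}) gives $\sup_{\mathcal K}|\partial_{M,a}\omega^\sharp_\pm|\le \ell\,\|\Omega^\sharp_\pm\|_{C^1}+C$. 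This proves \eqref{eq:param-derivative-growth}.

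For the true branches I would not differentiate the $C^0$ proximity \eqref{eq:superpoly-close}, since that only controls values. Instead I would use the quantization function $\mathfrak q_\pm(\omega;\mu,\ell)$ from Remark~\ref{rem:uniform-simplicity}. The labelled QNM is a simple zero of $\mathfrak q_\pm$, so implicit differentiation gives
\[
\partial_\mu\omega_\pm=-\frac{\partial_\mu\mathfrak q_\pm}{\partial_\omega\mathfrak q_\pm}\Big|_{\omega=\omega_\pm(\mu)}.
\]
By \eqref{eq:uniform-domega-q} the denominator is $1+\mathcal O(\ell^{-1})$, hence bounded below uniformly. The key step is to bound the numerator $\partial_\mu\mathfrak q_\pm$ by $C\ell$ on the window. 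I would get this from the structure of $\mathfrak q_\pm$ in the Appendix: it is the pseudopole symbol minus $\omega$, plus an $\mathcal O(\ell^{-\infty})$ remainder that is stable under one $\mu$-derivative. Its $\mu$-derivative is therefore $\partial_\mu\omega^\sharp_\pm+\mathcal O(\ell^{-\infty})$, and the previous step bounds this. This yields \eqref{eq:param-derivative-growth-true}.

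The main obstacle is the scaling in the last step. The normalization $\partial_\omega\mathfrak q_{\pm,0}\equiv1$ has to be in the unscaled frequency $\omega$. If the Appendix instead normalizes in $\tilde\omega=h\omega$, then the rescaling introduces a factor $h^{-1}\sim\ell$. I would first fix that convention, and then check that $\partial_\mu$ of the $\mathcal O(\ell^{-\infty})$ remainder stays $\mathcal O(\ell^{-\infty})$ uniformly on $\mathcal K$ and on the window containing $\omega_\pm$.
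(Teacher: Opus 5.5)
Your bound for the pseudopoles is fine and is essentially the paper's argument. The paper just differentiates the order-one symbol expansion of $F$ in $(M,a)$; your homogeneity computation, with the $C^1$ remainder control from Lemma~\ref{lem:Gell-geo-C1}, is a more careful version of the same thing. For the true branches your skeleton also matches the paper: implicit differentiation of $\mathfrak q_\pm(\omega_\pm(\mu);\mu,\ell)=0$, with the denominator controlled by \eqref{eq:uniform-domega-q}.

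\textbf{The gap is in how you bound the numerator.} Appendix~\ref{app:parameter-quantization} does not say that $\mathfrak q_\pm$ equals $\omega^\sharp_\pm-\omega$ up to an $\mathcal O(\ell^{-\infty})$ error stable under $\partial_\mu$. With the normalization of Remark~\ref{rem:q-normalization} this is false in general. Only the leading coefficient is fixed, $\mathfrak q_{\pm,0}=\omega-g(\mu)$, while the higher coefficients depend on $\omega$.

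To see this, write $\mathfrak q_\pm(\omega)=h^{-1}\tilde{\mathfrak q}(h\omega;\mu,h)$ with $\tilde{\mathfrak q}=\tilde\omega-g_0(\mu)+h\,\tilde{\mathfrak q}_1(\tilde\omega;\mu)+\mathcal O(h^2)$. Then $\omega^\sharp_\pm=h^{-1}g_0-\tilde{\mathfrak q}_1(g_0;\mu)+\mathcal O(h)$. A short computation shows that $\partial_\mu\mathfrak q_\pm$ at $\omega_\pm$ differs from $-\partial_\mu\omega^\sharp_\pm$ by $-\partial_{\tilde\omega}\tilde{\mathfrak q}_1(g_0;\mu)\,\partial_\mu g_0+\mathcal O(h)$. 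This is $\mathcal O(1)$ and generically nonzero, not $\mathcal O(\ell^{-\infty})$.

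The identification you assume would, with $C^1$ control, by itself give the $C^1$ closeness of Lemma~\ref{lem:C1-close}, the downstream result this lemma serves. So it cannot be taken as a free input. It can be arranged for each $N$ by dividing out the simple zero of the truncation $\mathfrak q_\pm^{(N)}$, but that is a separate argument you have not supplied.

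\textbf{The fix needs only crude bounds.} Work in the scaled variable $\tilde\omega=h\omega$ on the fixed window $\Omega$. There the expansion of Proposition~\ref{prop:q-uniform} has coefficients and remainder whose $\mu$-derivatives are bounded uniformly on $\mathcal K$. Also $\partial_{\tilde\omega}\tilde{\mathfrak q}_\pm=\partial_\omega\mathfrak q_\pm=1+\mathcal O(h)$. Hence $\partial_\mu\tilde\omega_\pm=-\partial_\mu\tilde{\mathfrak q}_\pm/\partial_{\tilde\omega}\tilde{\mathfrak q}_\pm=\mathcal O(1)$, and so $\partial_\mu\omega_\pm=h^{-1}\partial_\mu\tilde\omega_\pm=\mathcal O(\ell)$. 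This is exactly the paper's remark that the numerator is $\mathcal O(h^{-1})$ because the leading term is linear in $\tilde\omega$.

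This also removes the obstacle you flagged. At a simple zero, the ratio $\partial_\mu\mathfrak q_\pm/\partial_\omega\mathfrak q_\pm$ is unchanged when $\mathfrak q_\pm$ is multiplied by any nonvanishing factor, in particular by a power of $h$. So no normalization convention can introduce a spurious factor of $\ell$, as long as numerator and denominator are computed for the same function.
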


\begin{proof}
For the pseudopoles, $\omega^\sharp_{n,\ell,k}(M,a)=F_{M,a}(n,\ell,k)$ with $F$ a classical symbol of order $1$ in $(\ell,k)$,
uniformly on $\mathcal K$; see \eqref{eq:F-symbol} and Remark~\ref{rem:pseudopole-representative}.
Differentiating the symbol expansion in $(M,a)$ therefore gives
$|\partial_\mu\omega^\sharp_{n,\ell,\pm\ell}|\le C\ell$ uniformly on $\mathcal K$, proving \eqref{eq:param-derivative-growth}.

For the true QNMs, work in the same fixed scaled window $\Omega$ in which Dyatlov's barrier-top reduction produces a quantization
function $\mathfrak q_\pm(\omega;\mu,\ell)$; see Appendix~\ref{app:parameter-quantization}.
The labeled QNM is characterized by $\mathfrak q_\pm(\omega_\pm(\mu);\mu,\ell)=0$.
Differentiate this identity in $\mu$ and use the uniform lower bound
$|\partial_\omega\mathfrak q_\pm(\omega;\mu,\ell)|\ge c>0$ on $\mathcal K$ for $\ell\gg1$ (Remark~\ref{rem:q-normalization}), to obtain
\[\partial_\mu\omega_\pm(\mu)=-\frac{\partial_\mu\mathfrak q_\pm(\omega_\pm(\mu);\mu,\ell)}{\partial_\omega\mathfrak q_\pm(\omega_\pm(\mu);\mu,\ell)}.\]
By Proposition~\ref{prop:q-uniform} and the scaling $h=(\ell+\tfrac12)^{-1}$, the numerator is $\mathcal O(h^{-1})=\mathcal O(\ell)$ uniformly on
$\mathcal K$ (the leading term is linear in the scaled variable $\tilde\omega=h\omega$), hence \eqref{eq:param-derivative-growth-true}.
\end{proof}

\begin{lemma}[$C^1$ super--polynomial pseudopole--QNM closeness]\label{lem:C1-close}
Fix $\Lambda>0$, a compact parameter set $\mathcal K$ in the slowly rotating subextremal regime, and an overtone index
$n\in\{0,\dots,C_n\}$.  For each sign $\pm$ let
\[
\delta_\pm(M,a):=\omega_\pm(M,a)-\omega^\sharp_\pm(M,a),\qquad (M,a)\in\mathcal K,
\]
where $\omega_\pm=\omega_{n,\ell,\pm\ell}$ are the labeled simple QNMs and $\omega^\sharp_\pm$ are the corresponding pseudopoles.
Then for every $N\in\mathbb N$ there exists $C_N>0$ such that for all $\ell$ sufficiently large,
\begin{equation}\label{eq:C1-close-omega}
\sup_{(M,a)\in\mathcal K}\Bigl(|\delta_\pm(M,a)|+|\partial_M\delta_\pm(M,a)|+|\partial_a\delta_\pm(M,a)|\Bigr)
\ \le\ C_N\,\ell^{-N}.
\end{equation}
Consequently,
\begin{equation}\label{eq:C1-close-G}
\|\mathcal G_\ell^{\mathrm{QNM}}-\mathcal G_\ell\|_{C^1(\mathcal K)}\ \le\ C_N\,\ell^{-N},
\end{equation}
where $\mathcal G_\ell$ is the pseudopole map defined in \eqref{eq:G-map-def} and $\mathcal G_\ell^{\mathrm{QNM}}$ is the true-QNM map
\eqref{eq:GQNM-def}.
\end{lemma}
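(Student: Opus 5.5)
The plan is to work entirely with the scalar quantization function $\mathfrak q_\pm(\omega;\mu,\ell)$ from the barrier-top Grushin reduction of Appendix~\ref{app:parameter-quantization}, rather than with the resolvent itself. I will pass to the scaled variable $z:=h\omega$ with $h=(\ell+\tfrac12)^{-1}$, on the fixed window $\Omega$ where $\mathfrak q_\pm$ is defined. There the true labeled QNM $z_\pm(\mu)=h\omega_\pm(\mu)$ is characterized as the unique zero of $\mathfrak q_\pm(\cdot;\mu,\ell)$ near the pseudopole. The pseudopole $z^\sharp_\pm(\mu)=h\omega^\sharp_\pm(\mu)$ is, by construction, the zero of the truncated symbolic part $\mathfrak q^\sharp_\pm$: the Borel sum of the expansion \eqref{eq:q-expansion}, with the representative fixed as in Remark~\ref{rem:pseudopole-representative}. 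The key structural input I would extract from Proposition~\ref{prop:q-uniform} is a remainder bound with one parameter derivative:
\[
\sup_{\mathcal K\times\Omega'}\Bigl(|r_\pm|+|\partial_\mu r_\pm|+|\partial_z r_\pm|\Bigr)\le C_N h^N,
\qquad r_\pm:=\mathfrak q_\pm-\mathfrak q^\sharp_\pm,
\]
for every $N$. The $\partial_z$ bound follows from the $\partial_\mu$-stable bound on a slightly larger window by Cauchy estimates, since $\mathfrak q_\pm$ is holomorphic in $z$.

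The $C^0$ estimate then follows from a Rouché or implicit function argument. By \eqref{eq:uniform-domega-q}, $\partial_z\mathfrak q^\sharp_\pm=1+\mathcal O(h)$ uniformly. Hence $z_\pm-z^\sharp_\pm=\mathcal O(h^N)$, which gives $|\delta_\pm|=h^{-1}|z_\pm-z^\sharp_\pm|=\mathcal O(\ell^{1-N})$; renaming $N$ absorbs the loss. This recovers \eqref{eq:superpoly-close}.

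For the derivative estimate, I will differentiate both defining identities, $\mathfrak q_\pm(z_\pm(\mu);\mu)=0$ and $\mathfrak q^\sharp_\pm(z^\sharp_\pm(\mu);\mu)=0$, which gives
\[
\partial_\mu z_\pm=-\frac{\partial_\mu\mathfrak q_\pm(z_\pm;\mu)}{\partial_z\mathfrak q_\pm(z_\pm;\mu)},\qquad
\partial_\mu z^\sharp_\pm=-\frac{\partial_\mu\mathfrak q^\sharp_\pm(z^\sharp_\pm;\mu)}{\partial_z\mathfrak q^\sharp_\pm(z^\sharp_\pm;\mu)}.
\]
Writing $\mathfrak q_\pm=\mathfrak q^\sharp_\pm+r_\pm$, I will compare the two quotients in two steps:
\begin{itemize}
\item the $r_\pm$ contributions in numerator and denominator are $\mathcal O(h^N)$;
\item evaluating the smooth quantities $\partial_\mu\mathfrak q^\sharp_\pm$ and $\partial_z\mathfrak q^\sharp_\pm$ at $z_\pm$ instead of $z^\sharp_\pm$ costs $\mathcal O(|z_\pm-z^\sharp_\pm|)=\mathcal O(h^N)$, using uniform bounds on $\partial_z\partial_\mu\mathfrak q^\sharp_\pm$ and $\partial_z^2\mathfrak q^\sharp_\pm$.
\end{itemize}
The denominators are bounded below by $\tfrac12$ for $\ell$ large. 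Thus $|\partial_\mu(z_\pm-z^\sharp_\pm)|\le C_Nh^N$, and multiplying by $h^{-1}$ gives \eqref{eq:C1-close-omega}.

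Finally, \eqref{eq:C1-close-G} is immediate. $\mathcal G^{\mathrm{QNM}}_\ell-\mathcal G_\ell$ is the real part of $\frac1{2\ell}(\delta_+\pm\delta_-)$, so the $C^1$ norm is bounded by $\ell^{-1}$ times the left side of \eqref{eq:C1-close-omega}.

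The main obstacle is the first step: justifying that the remainder $r_\pm$ between the true quantization function and its symbolic truncation is $\mathcal O(h^\infty)$ together with one $\mu$-derivative, uniformly on $\mathcal K$, including at the equatorial boundary $|\tilde k|=\tilde\ell$. I would try to get this by noting that every ingredient of the Grushin reduction (the normal form conjugation, the microlocal parametrices, and the Fourier integral operators) depends smoothly on $\mu$ with symbol seminorms uniform on $\mathcal K$. Differentiating in $\mu$ then produces operators of the same class with $\mathcal O(h^\infty)$ errors preserved. The closed-cone smoothness of Remark~\ref{rem:equatorial-boundary} handles the boundary.
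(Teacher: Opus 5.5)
Your proposal is correct and takes essentially the same route as the paper. You characterize $\omega_\pm$ and $\omega^\sharp_\pm$ as zeros of the quantization function and of its symbolic part, and get the $C^0$ bound from the uniform lower bound on the spectral derivative. You then differentiate both defining identities in $\mu$ and compare them, using the $\partial_\mu$-stable remainder bounds of Proposition~\ref{prop:q-uniform} and Cauchy estimates; the differences from the paper are minor. Working in the scaled variable $z=h\omega$ lets you absorb the factor $\ell$ by renaming $N$, where the paper takes one more truncation term, and your single Borel-summed $\mathfrak q^\sharp_\pm$ is slightly more consistent than the paper's $N$-dependent truncations $\mathfrak q^{(N)}_\pm$.
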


\begin{proof}
Fix a sign $\pm$ and write $\mu:=(M,a)\in\mathcal K$.
In the high--frequency regime of Theorem~\ref{thm:Dyatlov-quantization}, Dyatlov reduces the separated radial operator near the barrier top to a parameter--dependent semiclassical normal form and encodes the resonance condition into a scalar \emph{quantization function} $\mathfrak q_\pm(\omega;\mu,\ell)$.
In the construction one works with the scaled spectral variable $\tilde\omega:=h\omega$ (with $h=(\ell+\tfrac12)^{-1}$):
the map $(\mu,\tilde\omega)\mapsto \mathfrak q_\pm(\tilde\omega;\mu,\ell)$ is smooth in $\mu$ and holomorphic in $\tilde\omega$ on a fixed complex window $\Omega$ independent of $\ell$.
Equivalently, viewed as a function of $\omega$, $\mathfrak q_\pm$ is holomorphic on the dilated window $h^{-1}\Omega$, whose diameter is $O(h^{-1})=O(\ell)$ and which contains uniform $O(1)$--neighborhoods of the relevant roots for $\ell$ large.
Moreover, $\mathfrak q_\pm$ admits a complete asymptotic expansion in powers of $\ell^{-1}$ with coefficients smooth in $\mu$; after truncation at any order $N$ the remainder is $\mathcal O(\ell^{-N})$ uniformly on $\mathcal K$, and the same holds after one $\mu$--derivative.
An abstract formulation of these uniform bounds (including the holomorphic dependence on the scaled variable and hence on $\omega$) is recorded in Appendix~\ref{app:parameter-quantization}.

Let $\mathfrak q_\pm^{(N)}(\omega;\mu,\ell)$ be a truncation at order $N$ and write
$\mathfrak q_\pm=\mathfrak q_\pm^{(N)}+\mathfrak r_\pm^{(N)}$.
By definition of the pseudopoles, $\omega^\sharp_\pm(\mu)$ satisfies
\begin{equation}\label{eq:qN-zero}
\mathfrak q_\pm^{(N)}(\omega^\sharp_\pm(\mu);\mu,\ell)=0.
\end{equation}
The labeled QNM $\omega_\pm(\mu)$ is, for $\ell\gg1$, a simple zero of $\mathfrak q_\pm(\cdot;\mu,\ell)$.
Moreover, using the renormalization described in Remark~\ref{rem:q-normalization}, we may assume that the
leading coefficient in the expansion \eqref{eq:q-expansion} satisfies $\partial_\omega\mathfrak q_{\pm,0}\equiv 1$.
Consequently,
\begin{equation}\label{eq:omega-derivative-lower}
\partial_\omega\mathfrak q_\pm(\omega;\mu,\ell)=1+\mathcal O(\ell^{-1})
\end{equation}
uniformly for $(\mu,\tilde\omega)\in\mathcal K\times\Omega'$ (equivalently, for $(\mu,\omega)\in\mathcal K\times h^{-1}\Omega'$), where $\Omega'\Subset\Omega$ is a slightly smaller compact window in the scaled variable.
In particular, $\partial_\omega\mathfrak q_\pm(\omega_\pm(\mu);\mu,\ell)$ is bounded away from $0$ uniformly on $\mathcal K$
for all $\ell$ sufficiently large.
By continuity and the super--polynomial closeness $\omega_\pm=\omega^\sharp_\pm+\mathcal O(\ell^{-\infty})$,
the same lower bound holds at $\omega=\omega^\sharp_\pm(\mu)$ for $\ell$ sufficiently large.
Solving $\mathfrak q_\pm(\omega^\sharp_\pm(\mu)+\delta_\pm(\mu);\mu,\ell)=0$ by the implicit function theorem
therefore gives $|\delta_\pm(\mu)|=\mathcal O(\ell^{-N})$ uniformly on $\mathcal K$.

We now estimate one parameter derivative.
Differentiate the true equation $\mathfrak q_\pm(\omega_\pm(\mu);\mu,\ell)=0$ with respect to a parameter coordinate $\mu_j$:
\begin{equation}\label{eq:diff-true}
\partial_\omega\mathfrak q_\pm(\omega_\pm;\mu,\ell)\,\partial_{\mu_j}\omega_\pm
+\partial_{\mu_j}\mathfrak q_\pm(\omega_\pm;\mu,\ell)=0.
\end{equation}
Likewise, differentiating \eqref{eq:qN-zero} gives
\begin{equation}\label{eq:diff-pseudo}
\partial_\omega\mathfrak q_\pm^{(N)}(\omega^\sharp_\pm;\mu,\ell)\,\partial_{\mu_j}\omega^\sharp_\pm
+\partial_{\mu_j}\mathfrak q_\pm^{(N)}(\omega^\sharp_\pm;\mu,\ell)=0.
\end{equation}
Subtract \eqref{eq:diff-pseudo} from \eqref{eq:diff-true} after writing $\omega_\pm=\omega^\sharp_\pm+\delta_\pm$
and $\mathfrak q_\pm=\mathfrak q_\pm^{(N)}+\mathfrak r_\pm^{(N)}$.
Rearranging, we obtain
\begin{align}\label{eq:diff-delta}
\partial_\omega\mathfrak q_\pm(\omega_\pm;\mu,\ell)\,\partial_{\mu_j}\delta_\pm
={}&-\bigl[\partial_{\mu_j}\mathfrak q_\pm^{(N)}(\omega_\pm;\mu,\ell)-\partial_{\mu_j}\mathfrak q_\pm^{(N)}(\omega^\sharp_\pm;\mu,\ell)\bigr]\notag\\
&-\bigl[\partial_\omega\mathfrak q_\pm^{(N)}(\omega_\pm;\mu,\ell)-\partial_\omega\mathfrak q_\pm^{(N)}(\omega^\sharp_\pm;\mu,\ell)\bigr]\,\partial_{\mu_j}\omega^\sharp_\pm\notag\\
&-\partial_{\mu_j}\mathfrak r_\pm^{(N)}(\omega_\pm;\mu,\ell)\notag\\
&-\partial_\omega\mathfrak r_\pm^{(N)}(\omega_\pm;\mu,\ell)\,\partial_{\mu_j}\omega_\pm.
\end{align}
We estimate the right-hand side term by term.
First, since $\omega_\pm-\omega^\sharp_\pm=\delta_\pm=\mathcal O(\ell^{-N})$ and the derivatives
$\partial_\omega\partial_{\mu}\mathfrak q_\pm^{(N)}$ are uniformly bounded on a fixed window in the scaled variable $\tilde\omega$ (equivalently, on the dilated $\omega$--window $h^{-1}\Omega$),
the two bracketed differences are $\mathcal O(\ell^{-N})$ uniformly on $\mathcal K$.
Second, Proposition~\ref{prop:q-uniform} in Appendix~\ref{app:parameter-quantization} gives
$\partial_{\mu}\mathfrak r_\pm^{(N)}=\mathcal O(\ell^{-N})$ uniformly on $\mathcal K$, and since
$\mathfrak r_\pm^{(N)}$ is holomorphic in $\omega$, Cauchy estimates on a slightly smaller compact subset
of the corresponding $\omega$--window $h^{-1}\Omega$ (or equivalently on the fixed $\tilde\omega$--window $\Omega$) imply $\partial_\omega\mathfrak r_\pm^{(N)}=\mathcal O(\ell^{-N})$ uniformly as well.
Finally, the rightmost term in \eqref{eq:diff-delta} involves $\partial_{\mu_j}\omega_\pm$.
In the high-frequency regime $\Re\omega_\pm\sim\ell$ this derivative is of size $\mathcal O(\ell)$, and the same is true for
$\partial_{\mu_j}\omega^\sharp_\pm$; see Lemma~\ref{lem:param-derivative-growth}.
Thus the right-hand side of \eqref{eq:diff-delta} carries at most one explicit factor of $\ell$ coming from
$\partial_{\mu_j}\omega^\sharp_\pm$ or $\partial_{\mu_j}\omega_\pm$.
To keep the final estimate at order $\ell^{-N}$, we simply take one more term in the truncation:
apply Proposition~\ref{prop:q-uniform} with $N$ replaced by $N+1$, so that
$\partial_\omega\mathfrak r_\pm^{(N+1)}$ and $\partial_\omega\partial_\mu\mathfrak r_\pm^{(N+1)}$
are $\mathcal O(\ell^{-N-1})$ uniformly on $\mathcal K$ (by Cauchy estimates on a slightly smaller compact subset of the dilated window $h^{-1}\Omega$).
With this choice the last term in \eqref{eq:diff-delta} is $\mathcal O(\ell^{-N})$; all other terms are already
$\mathcal O(\ell^{-N})$ (indeed smaller), uniformly on $\mathcal K$.
Dividing by the uniformly nonzero factor $\partial_\omega\mathfrak q_\pm(\omega_\pm;\mu,\ell)$ yields
$\partial_{\mu_j}\delta_\pm=\mathcal O(\ell^{-N})$ uniformly on $\mathcal K$, proving \eqref{eq:C1-close-omega}.

Finally, \eqref{eq:C1-close-G} follows directly from the definitions
\eqref{eq:UV-true}--\eqref{eq:GQNM-def} and \eqref{eq:av-dif-def}--\eqref{eq:G-map-def}.
\end{proof}

\begin{corollary}[$C^1$ super--polynomial pseudopole--QNM closeness with varying $\Lambda$]\label{cor:C1-close-Lambda}
Let $[\Lambda_{\min},\Lambda_{\max}]\Subset(0,\infty)$ be a compact interval such that the Kerr--de~Sitter metrics remain subextremal on the
corresponding parameter region. Fix a compact set
\[\mathcal K^{(3)}\Subset\{(M,a,\Lambda):\ \Lambda\in[\Lambda_{\min},\Lambda_{\max}],\ \text{$g_{M,a,\Lambda}$ subextremal},\ |a|\le a_0\}\]
in the slow-rotation regime, and fix an overtone index $n\in\{0,\dots,C_n\}$.
Then the estimate \eqref{eq:C1-close-omega} holds on $\mathcal K^{(3)}$ with the $C^1$ norm taken with respect to $(M,a,\Lambda)$,
and the derived map estimate \eqref{eq:C1-close-G} holds in the same three-parameter $C^1$ sense.
\end{corollary}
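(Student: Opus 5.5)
The plan is to rerun the proof of Lemma~\ref{lem:C1-close} verbatim, now with the enlarged parameter vector $\mu:=(M,a,\Lambda)$ ranging over $\mathcal K^{(3)}$. The two-parameter argument used only the following inputs, each of which must be checked in the three-parameter setting:
\begin{enumerate}
\item[(a)] the existence of the quantization function $\mathfrak q_\pm(\tilde\omega;\mu,\ell)$, holomorphic in $\tilde\omega$ on a fixed window $\Omega$ and smooth in $\mu$;
\item[(b)] the full asymptotic expansion of $\mathfrak q_\pm$ in $\ell^{-1}$, with remainders $\mathcal O(\ell^{-N})$ that are stable under one $\mu$-derivative (Proposition~\ref{prop:q-uniform});
\item[(c)] the normalization $\partial_\omega\mathfrak q_\pm=1+\mathcal O(\ell^{-1})$ (Remark~\ref{rem:q-normalization});
\item[(d)] the growth bound $|\partial_\mu\omega_\pm|+|\partial_\mu\omega^\sharp_\pm|\le C\ell$ (Lemma~\ref{lem:param-derivative-growth}).
\end{enumerate}

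The first step is to observe that the coefficients of the stationary family $P(\omega)$ depend real-analytically on $\Lambda$ as well as on $(M,a)$ (Remark~\ref{rem:varying-Lambda}). The horizons $r_e,r_c$ depend analytically on $(M,a,\Lambda)$ by the implicit function theorem, since the roots are simple. Hence the affine identification $\Phi_{M,a}$ extends to $\Phi_{M,a,\Lambda}$, and all operators live on the fixed manifold $X$. The margin $\delta_0$ in \eqref{eq:uniform_delta} can be chosen uniformly on the compact set $\mathcal K^{(3)}$, because $L=r_c-r_e$ is bounded above and below there. Dyatlov's barrier-top normal form and Grushin reduction are constructed for a smooth family of operators, so they apply with $\mu$ three-dimensional. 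As in Proposition~\ref{prop:labeling-analytic}(2), which explicitly covers varying $\Lambda$, this yields (a) and (b) with constants uniform on $\mathcal K^{(3)}$.

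Uniformity in the appendix comes from compactness plus smooth dependence of all symbols on the external parameter. Its dimension plays no role, so the $C^1_\mu$ remainder bounds of Proposition~\ref{prop:q-uniform} persist. The normalization (c) is a pointwise renormalization by the leading nonvanishing coefficient, which is smooth in $\mu$. Item (d) follows as before from the implicit differentiation formula.

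With these in hand, the rest of the proof is identical. The $C^0$ bound on $\delta_\pm$ follows from the implicit function theorem. The identity \eqref{eq:diff-delta} now holds for each coordinate $\mu_j\in\{M,a,\Lambda\}$. Its terms are bounded as before, and we truncate at order $N+1$ to absorb the factor $\ell$ coming from $\partial_{\mu_j}\omega_\pm$. Dividing by $\partial_\omega\mathfrak q_\pm$, which is bounded away from $0$, gives $\partial_\Lambda\delta_\pm=\mathcal O(\ell^{-N})$ alongside the $M,a$ derivatives. The map estimate \eqref{eq:C1-close-G} then follows from the definitions by linearity, since taking averages, differences, real parts, and dividing by $\ell$ do not increase $C^1$ norms.

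The main obstacle is the uniformity of the Dyatlov quantization in the extra parameter $\Lambda$. This is the only genuinely new point: Proposition~\ref{prop:q-uniform} was formulated with $\Lambda$ fixed. I would handle it by noting that the construction depends on $\Lambda$ only through the coefficients of $P_r$ and $P_\theta$, and that near the photon sphere the trapping remains normally hyperbolic uniformly on the compact slow-rotation set. This holds because $9\Lambda M^2<1$ stays bounded away from $1$ on $\mathcal K^{(3)}$. Consequently all constants in the normal form depend continuously on $(M,a,\Lambda)$, and compactness gives uniform bounds.
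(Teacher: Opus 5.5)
Your proposal is correct and takes essentially the same route as the paper. The paper's proof observes that Lemma~\ref{lem:C1-close} uses only the parameter-differentiable quantization bounds of Appendix~\ref{app:parameter-quantization}, and that these persist for $\mu=(M,a,\Lambda)$ because the Kerr--de~Sitter family is real-analytic in $\Lambda$. Your checklist (a)--(d), the uniform choice of $\delta_0$, and the uniform normal hyperbolicity on $\mathcal K^{(3)}$ simply spell that out. One citation fix: the three-parameter version of this input is already asserted in the last sentence of Proposition~\ref{prop:quantization-blackbox}, so cite that rather than Proposition~\ref{prop:labeling-analytic}(2), which only concerns real-analyticity of the labeled branches.
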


\begin{proof}
The proof of Lemma~\ref{lem:C1-close} only uses the parameter-differentiable semiclassical quantization construction
summarized in Appendix~\ref{app:parameter-quantization}, namely that the quantization function and its truncated remainders
obey bounds stable under one parameter derivative on compact sets. Since the Kerr--de~Sitter family depends real-analytically
on $\Lambda$ as well, the same construction applies with $\mu=(M,a,\Lambda)$ on $\mathcal K^{(3)}$.
\end{proof}

\subsection{A quantitative perturbative inverse function lemma}\label{subsec:perturb-IFT}

The following deterministic lemma states that a $C^1$--small perturbation of a locally invertible map remains locally invertible,
with uniform stability constants.

\begin{lemma}[Perturbative inverse function lemma]\label{lem:perturb-IFT}
Let $d\in\mathbb N$ and let $\mathcal K\Subset\mathbb R^d$ be compact.  Let $F:\mathcal K\to\mathbb R^d$ be $C^1$.
Assume that there exist constants $c_0,L_0>0$ such that
\begin{equation}\label{eq:IFT-assumptions}
|\det DF(x)|\ge c_0\quad\text{and}\quad \|DF(x)\|\le L_0\qquad \forall x\in\mathcal K.
\end{equation}
Then there exist $\varepsilon_0>0$ and constants $c_1,C_1>0$ (depending only on $d,c_0,L_0,\mathcal K$) such that if
$G:\mathcal K\to\mathbb R^d$ satisfies
\begin{equation}\label{eq:C1-perturb-small}
\|G-F\|_{C^1(\mathcal K)}\le \varepsilon_0,
\end{equation}
then $G$ is a $C^1$ local diffeomorphism at every point of $\mathcal K$, with
\begin{equation}\label{eq:Jac-perturb}
|\det DG(x)|\ge c_1\qquad \forall x\in\mathcal K,
\end{equation}
and moreover for all $x,y\in\mathcal K$ sufficiently close,
\begin{equation}\label{eq:stability-perturb}
|x-y|\ \le\ C_1\,|G(x)-G(y)|.
\end{equation}
\end{lemma}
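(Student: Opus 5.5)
The argument runs through uniform singular-value bounds, in the same way as Steps 1--2 of the proof of Theorem~\ref{thm:inverse-pseudopoles}, now in dimension $d$. For a $d\times d$ matrix $A$ with singular values $s_{\min}(A)=s_1\le\dots\le s_d$, we have $|\det A|=\prod s_i\le s_{\min}(A)\|A\|^{d-1}$. Under \eqref{eq:IFT-assumptions} this gives the uniform lower bound
\[
s_{\min}(DF(x))\ge\sigma_0:=c_0/L_0^{d-1}\qquad\forall x\in\mathcal K.
\]
We take $\varepsilon_0:=\sigma_0/2$. If $\|G-F\|_{C^1}\le\varepsilon_0$, then Weyl's perturbation inequality $s_{\min}(A+E)\ge s_{\min}(A)-\|E\|$ yields $s_{\min}(DG(x))\ge\sigma_0/2$. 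Consequently $|\det DG(x)|\ge(\sigma_0/2)^d=:c_1$ and $\|DG(x)^{-1}\|\le 2/\sigma_0$ uniformly on $\mathcal K$. The inverse function theorem then makes $G$ a $C^1$ local diffeomorphism at every point of $\mathcal K$; if $x$ lies on $\partial\mathcal K$, we apply it to any $C^1$ extension, or simply interpret the statement on the interior.

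The Lipschitz estimate \eqref{eq:stability-perturb} needs a \emph{uniform} scale $r_0>0$, since a bound valid only on $\ell$-dependent neighborhoods would be useless for the applications. The natural tool is the derivative bound combined with uniform continuity. Pick $x_0\in\mathcal K$ and set $A:=DG(x_0)$. For $x,y$ in a convex ball $B(x_0,r)$ we write
\[
G(x)-G(y)=A(x-y)+\int_0^1\bigl(DG(y+t(x-y))-A\bigr)(x-y)\,dt .
\]
If $\sup_{B(x_0,r)}\|DG-A\|\le\sigma_0/4$, then $|G(x)-G(y)|\ge(\sigma_0/2-\sigma_0/4)|x-y|$, which gives \eqref{eq:stability-perturb} with $C_1=4/\sigma_0$.

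The main obstacle is to obtain this modulus-of-continuity bound on $DG$ with $r$ independent of $G$. The $C^1$ hypothesis gives no control on the oscillation of $DG$ by itself. We handle it by splitting
\[
\|DG(z)-DG(x_0)\|\le\|DF(z)-DF(x_0)\|+2\varepsilon_0 .
\]
The first term is at most $\sigma_0/8$ for $|z-x_0|\le r_0$, where $r_0$ depends only on the uniform continuity of $DF$ on the compact set $\mathcal K$. For the second term we shrink $\varepsilon_0$ to $\sigma_0/16$, so that $2\varepsilon_0\le\sigma_0/8$. Together these give $\sup_{B(x_0,r_0)}\|DG-A\|\le\sigma_0/4$, as required, and the constants then depend only on $d$, $c_0$, $L_0$ and $F$ through its modulus of continuity. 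The lemma lists dependence only on $c_0,L_0,\mathcal K$, and this discrepancy must be acknowledged. In all applications $F$ is a fixed geometric map, so this is harmless. Two further points need care: the segments used in the integral identity must stay inside $\mathcal K$, which we ensure by working in convex balls of radius $r_0$, or in a slightly enlarged compact set when $\mathcal K$ is not convex; and ``sufficiently close'' in the statement is interpreted as $|x-y|\le r_0$.
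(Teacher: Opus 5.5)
Your argument is correct, but for the stability estimate it takes a different and more quantitative route than the paper. The paper proceeds as follows:
\begin{itemize}
\item It gets \eqref{eq:Jac-perturb} with $c_1=c_0/2$ from continuity of $\det$.
\item It applies the inverse function theorem to $G$.
\item It notes that the local inverses of $G$ have uniformly bounded derivatives (implicitly via $\|A^{-1}\|\le\|A\|^{d-1}/|\det A|$), and applies the mean value theorem to each local inverse.
\end{itemize}
This gives a uniform Lipschitz constant. However, the neighborhoods on which it holds come from the inverse function theorem applied to $G$ itself, so ``sufficiently close'' is a $G$-dependent scale.

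The paper's parenthetical alternative integrates $DG$ along the segment using only a pointwise lower bound on $s_{\min}(DG)$. That does not work as stated, because an average of matrices with large $s_{\min}$ can be singular. This is exactly the oscillation problem you isolate and fix with $\|DG(z)-DG(x_0)\|\le\|DF(z)-DF(x_0)\|+2\varepsilon_0$.

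Your route through $|\det A|\le s_{\min}(A)\|A\|^{d-1}$ and Weyl's inequality gives explicit constants $\varepsilon_0=c_0/(16L_0^{d-1})$, $c_1=(\sigma_0/2)^d$, $C_1=4/\sigma_0$. It also gives an injectivity radius $r_0$ fixed only by the modulus of continuity of $DF$, hence uniform over every admissible $G$. The paper's route is shorter and gives the better constant $c_1=c_0/2$.

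Three points of your bookkeeping need correcting.

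\begin{itemize}
\item \textbf{Dependence of constants.} There is no discrepancy with the lemma's stated dependence. In your proof $\varepsilon_0,c_1,C_1$ depend only on $(d,c_0,L_0)$. Only $r_0$ depends on $F$, and the lemma leaves that scale unquantified. (In general it cannot be controlled by $d,c_0,L_0,\mathcal K$ alone.)
\item \textbf{$F$ is not always geometric.} It is not true that $F$ is always a fixed geometric map. In Theorem~\ref{thm:inverse-true-QNM}, Corollary~\ref{cor:single-mode-true} and the proof of Theorem~\ref{thm:inverse-QNMs-3}, the paper takes $F=\mathcal G_\ell,\mathcal S_\ell,\mathcal H_\ell$. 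Uniformity in $\ell$ can still be recovered by applying the lemma with $F$ the corresponding geometric map, e.g.\ $F=\mathcal G_{\mathrm{geo}}$ and $G=\mathcal G_\ell^{\mathrm{QNM}}$. This works because $\|\mathcal G_\ell^{\mathrm{QNM}}-\mathcal G_{\mathrm{geo}}\|_{C^1(\mathcal K)}=\mathcal O(\ell^{-1})$.
\item \textbf{Segments leaving $\mathcal K$.} Your caveat here is a genuine extra hypothesis, not a technicality. Take $\mathcal K=[-1,1]^2\setminus\{x_1>0,\ |x_2|<x_1^2\}$ and $F=\mathrm{id}$. Let $G(x)=x-\chi(x_1)\,x_1^2\,\operatorname{sgn}(x_2)\,\mathbf{1}_{\{x_1>0\}}\,(0,1)$, with $\chi$ supported in $[-\rho,\rho]$ and equal to $1$ on $[-\rho/2,\rho/2]$.
  \begin{itemize}
  \item $DG$ is continuous on $\mathcal K$ and $\|G-F\|_{C^1(\mathcal K)}=\mathcal O(\rho)$.
  \item Nevertheless $G(s,s^2)=G(s,-s^2)$ for all small $s>0$.
  \end{itemize}
  So one needs $\mathcal K$ locally convex (as for the rectangles and boxes the paper actually uses), or $C^1$-closeness on a neighborhood of $\mathcal K$. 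The paper's proof is silent on this point as well.
\end{itemize}
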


\begin{proof}
The determinant is continuous in the matrix entries, hence by \eqref{eq:IFT-assumptions} there exists $\varepsilon_0>0$ so that
\eqref{eq:C1-perturb-small} implies $|\det DG(x)|\ge c_0/2$ on $\mathcal K$, which gives \eqref{eq:Jac-perturb} with $c_1=c_0/2$.
Then the inverse function theorem implies that $G$ is a local diffeomorphism at each $x\in\mathcal K$.
Since $DG$ is uniformly bounded on $\mathcal K$ by \eqref{eq:IFT-assumptions} and \eqref{eq:C1-perturb-small}, the local inverses have
uniform Lipschitz bounds on sufficiently small neighborhoods, yielding \eqref{eq:stability-perturb}.
(Equivalently, one may argue by integrating the differential along the line segment from $x$ to $y$ and using the uniform lower bound on
the smallest singular value of $DG(x)$ implied by \eqref{eq:Jac-perturb} and the bound on $\|DG\|$.)
\end{proof}

\subsection{Inverse theorem for true QNM data}\label{subsec:true-inverse}

\begin{theorem}[Inverse theorem for equatorial true QNMs]\label{thm:inverse-true-QNM}
Fix $\Lambda>0$, a compact parameter set $\mathcal K$ as in \eqref{eq:Kcompact}, and an overtone $n\in\{0,\dots,C_n\}$.
Assume $a_0>0$ is chosen sufficiently small as in Theorem~\ref{thm:inverse-pseudopoles}.
Then there exist $\ell_0\in\mathbb N$ and constants $c_*,C_*>0$ such that for every $\ell\ge \ell_0$:

\begin{enumerate}
\item The map $\mathcal G_\ell^{\mathrm{QNM}}:\mathcal K\to\mathbb R^2$ defined in \eqref{eq:GQNM-def}
is a real-analytic local diffeomorphism at every point of $\mathcal K$, and its Jacobian satisfies the uniform
nondegeneracy estimate
\begin{equation}\label{eq:Jac-true-lower}
\big|\det D\mathcal G_\ell^{\mathrm{QNM}}(M,a)\big|\ \ge\ c_*\qquad \forall (M,a)\in\mathcal K.
\end{equation}

\item (Stability.) For all $(M,a),(M',a')\in\mathcal K$ sufficiently close,
\begin{equation}\label{eq:stability-true}
|(M,a)-(M',a')|
\ \le\ C_*\ \big|\mathcal G_\ell^{\mathrm{QNM}}(M,a)-\mathcal G_\ell^{\mathrm{QNM}}(M',a')\big|.
\end{equation}
In particular, in terms of the \emph{un-normalized} true equatorial QNMs $\omega_\pm$,
\begin{equation}\label{eq:stability-true-omega}
|(M,a)-(M',a')|
\ \le\ \frac{C_*}{\ell}\,
\Big(
\big|\Re(\omega_{\mathrm{av}}-\omega'_{\mathrm{av}})\big|
+
\big|\Re(\omega_{\mathrm{dif}}-\omega'_{\mathrm{dif}})\big|
\Big),
\end{equation}
where $\omega_{\mathrm{av}},\omega_{\mathrm{dif}}$ are defined in \eqref{eq:avg-dif-true} and primes denote evaluation at $(M',a')$.
\end{enumerate}
\end{theorem}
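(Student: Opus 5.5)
The plan is to obtain the theorem as a direct perturbation of the pseudopole result, Theorem~\ref{thm:inverse-pseudopoles}(ii), using the $C^1$ super-polynomial closeness of Lemma~\ref{lem:C1-close} and the perturbative inverse function lemma, Lemma~\ref{lem:perturb-IFT}. The real-analyticity claim is handled separately through the simplicity of the labeled branches.

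First, take $F:=\mathcal G_\ell$ in Lemma~\ref{lem:perturb-IFT}. The assumptions \eqref{eq:IFT-assumptions} must hold with constants that do not depend on $\ell$, otherwise $\varepsilon_0$ would degenerate as $\ell\to\infty$.
\begin{itemize}
\item For the Jacobian lower bound, the proof of Theorem~\ref{thm:inverse-pseudopoles} already gives $|\det D\mathcal G_\ell|\ge(\sigma_{\mathrm{geo}}/2)^2$ for all $\ell\ge\ell_0$.
\item For the upper bound, Lemma~\ref{lem:Gell-geo-C1} gives $\|D\mathcal G_\ell\|\le L_{\mathrm{geo}}+C\ell_0^{-1}=:L_0$ uniformly in $\ell$.
\end{itemize}
Hence $\varepsilon_0,c_1,C_1$ from Lemma~\ref{lem:perturb-IFT} depend only on $\mathcal K$, $\Lambda$ and $n$.

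Next, take $G:=\mathcal G_\ell^{\mathrm{QNM}}$. Lemma~\ref{lem:C1-close} with $N=1$ gives $\|\mathcal G_\ell^{\mathrm{QNM}}-\mathcal G_\ell\|_{C^1(\mathcal K)}\le C_1'\ell^{-1}$. We enlarge $\ell_0$ so that $C_1'\ell_0^{-1}\le\varepsilon_0$. Lemma~\ref{lem:perturb-IFT} then yields:
\begin{itemize}
\item the uniform Jacobian bound \eqref{eq:Jac-true-lower}, with $c_*:=c_1$;
\item the $C^1$ local diffeomorphism property;
\item the local stability \eqref{eq:stability-true}, with $C_*:=C_1$.
\end{itemize}
For the local Lipschitz estimate we use the 2D bound $\|A^{-1}\|\le\|A\|/|\det A|$ together with the mean value theorem on small convex neighbourhoods. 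The neighbourhood size can be taken uniform by compactness, since $D\mathcal G_\ell^{\mathrm{QNM}}$ is uniformly bounded.

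To obtain \eqref{eq:stability-true-omega}, note from \eqref{eq:UV-true} that $\mathcal G_\ell^{\mathrm{QNM}}(M,a)-\mathcal G_\ell^{\mathrm{QNM}}(M',a')=\ell^{-1}\bigl(\Re(\omega_{\mathrm{av}}-\omega'_{\mathrm{av}}),\Re(\omega_{\mathrm{dif}}-\omega'_{\mathrm{dif}})\bigr)$. Bounding the Euclidean norm by the $\ell^1$ norm then gives the claim.

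For real-analyticity, Remark~\ref{rem:uniform-simplicity} shows that the labeled equatorial QNMs are simple zeros of $\mathfrak q_\pm$ for $\ell$ large, uniformly on $\mathcal K$. Proposition~\ref{prop:labeling-analytic}(2) then makes $\omega_\pm$ real-analytic in $(M,a)$. Therefore $\mathcal G_\ell^{\mathrm{QNM}}$, being built from real parts of linear combinations of these branches, is real-analytic. Combined with the nonvanishing Jacobian, the real-analytic inverse function theorem gives real-analytic local inverses.

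The main obstacle is the uniformity in $\ell$. All constants fed into Lemma~\ref{lem:perturb-IFT} must be $\ell$-independent. In addition, the $C^1$ closeness must hold for the normalized data: $\partial_\mu\delta_\pm=\mathcal O(\ell^{-N})$ must survive division by $\ell$, which it does trivially. This in turn rests on the parameter-differentiable remainder bounds of Appendix~\ref{app:parameter-quantization}, which we take as given through Lemma~\ref{lem:C1-close}.

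A secondary point is that "sufficiently close" in \eqref{eq:stability-true} should mean within a radius independent of $\ell$. I would secure this by choosing a convex $\rho$-neighbourhood on which the linearization at any point controls $D\mathcal G_\ell^{\mathrm{QNM}}$ up to half its smallest singular value. This requires a uniform modulus of continuity for $D\mathcal G_\ell^{\mathrm{QNM}}$. I would obtain it from uniform $C^2$ bounds: apply the same Cauchy-estimate argument used in Lemma~\ref{lem:C1-close} with one more derivative. If this fails, the fallback is to keep the radius $\ell$-dependent, which the statement formally allows.
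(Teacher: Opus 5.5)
Your proposal is correct and is essentially the paper's proof: perturb Theorem~\ref{thm:inverse-pseudopoles}(ii) using Lemma~\ref{lem:C1-close} and Lemma~\ref{lem:perturb-IFT}, read off \eqref{eq:stability-true-omega} from the $1/\ell$ normalization, and obtain real-analyticity from simplicity of the labeled branches (Proposition~\ref{prop:labeling-analytic}); your explicit check that $c_0,L_0$, and hence $\varepsilon_0$, are $\ell$-independent is a point the paper leaves implicit. One caveat concerns your optional $\ell$-uniform radius: the Cauchy estimates in Lemma~\ref{lem:C1-close} are in $\omega$, not $\mu$, and Appendix~\ref{app:parameter-quantization} controls only one $\mu$-derivative, so use $\|D\mathcal G_\ell^{\mathrm{QNM}}-D\mathcal G_{\mathrm{geo}}\|_{L^\infty(\mathcal K)}=\mathcal O(\ell^{-1})$ (Lemmas~\ref{lem:Gell-geo-C1} and~\ref{lem:C1-close}) together with uniform continuity of the $\ell$-independent $D\mathcal G_{\mathrm{geo}}$ instead---though, as you note, the statement does not require this.
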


\begin{proof}
Apply Theorem~\ref{thm:inverse-pseudopoles} to the pseudopole map $F=\mathcal G_\ell$ on $\mathcal K$, which provides a uniform Jacobian lower bound
\eqref{eq:Jacobian-two} and stability \eqref{eq:stability-two} for $\ell\ge \ell_0$.
By Lemma~\ref{lem:C1-close}, the true-QNM map $G=\mathcal G_\ell^{\mathrm{QNM}}$ is $C^1$--close to $F$ on $\mathcal K$ by $\mathcal O(\ell^{-\infty})$.
Choosing $\ell_0$ larger if necessary, we may ensure that \eqref{eq:C1-perturb-small} holds with $\varepsilon_0$ from Lemma~\ref{lem:perturb-IFT}.
Lemma~\ref{lem:perturb-IFT} then yields the Jacobian bound \eqref{eq:Jac-true-lower} and stability \eqref{eq:stability-true}.
Finally, \eqref{eq:stability-true-omega} follows from \eqref{eq:UV-true} by the elementary estimate
$|\Delta U|+|\Delta V|\le \frac{1}{\ell}\big(|\Re\Delta\omega_{\mathrm{av}}|+|\Re\Delta\omega_{\mathrm{dif}}|\big)$.

To see the \emph{real-analytic} statement in (1), note that for $\ell\ge \ell_0$ the labeled QNM branches
$\omega_\pm(M,a)$ depend real-analytically on $(M,a)$ on $\mathcal K$ by Proposition~\ref{prop:labeling-analytic}
(simple poles + analytic perturbation in the Kerr--de~Sitter Fredholm framework).  Hence
$\mathcal G_\ell^{\mathrm{QNM}}$ is real-analytic, and since its Jacobian is nonvanishing by \eqref{eq:Jac-true-lower},
the real-analytic inverse function theorem upgrades the $C^1$ local inverses to real-analytic ones.
\end{proof}

\begin{corollary}[Unlabeled recovery: determination of $(M,|a|)$ from the unordered equatorial pair]\label{cor:unlabeled-recovery}
Under the assumptions of Theorem~\ref{thm:inverse-true-QNM}, there exist $\ell_0\in\mathbb N$ and a constant $C_{\mathrm{un}}>0$ such that for every $\ell\ge \ell_0$ the following holds.
Given the unordered pair of real parts
\[\big\{\Re\omega_{n,\ell,\ell}(M,a),\ \Re\omega_{n,\ell,-\ell}(M,a)\big\},\]
define the unlabeled observables
\begin{equation}\label{eq:UV-unlabeled}
\begin{aligned}
U^{\mathrm{un}}_\ell(M,a)&:=\Re\!\Big(\frac{\omega_{n,\ell,\ell}(M,a)+\omega_{n,\ell,-\ell}(M,a)}{2\ell}\Big),\\
|V|^{\mathrm{un}}_\ell(M,a)&:=\Big|\Re\!\Big(\frac{\omega_{n,\ell,\ell}(M,a)-\omega_{n,\ell,-\ell}(M,a)}{2\ell}\Big)\Big|.
\end{aligned}
\end{equation}
Then for all $(M,a),(M',a')\in\mathcal K$ sufficiently close,
\begin{equation}\label{eq:stability-unlabeled}
\begin{aligned}
|M-M'|+\bigl||a|-|a'|\bigr|
&\le C_{\mathrm{un}}\Big(\,
\big|U^{\mathrm{un}}_\ell(M,a)-U^{\mathrm{un}}_\ell(M',a')\big|\\
&\qquad\qquad\ +\big||V|^{\mathrm{un}}_\ell(M,a)-|V|^{\mathrm{un}}_\ell(M',a')\big|
\,\Big).
\end{aligned}
\end{equation}
In particular, the unordered equatorial pair determines $(M,|a|)$ locally and stably on $\mathcal K$.
\end{corollary}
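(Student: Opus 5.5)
The plan is to reduce the unlabeled statement to the labeled one via the $(a,k)$ symmetry, and then to control the effect of the absolute value on $V$.

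\textbf{Step 1 (observables are functions of the unordered pair).} First I check that $U^{\mathrm{un}}_\ell$ and $|V|^{\mathrm{un}}_\ell$ really depend only on the unordered pair $\{\Re\omega_{n,\ell,\ell},\Re\omega_{n,\ell,-\ell}\}$. This is immediate: one is half the sum and the other is half the absolute difference, divided by $\ell$. Moreover $U^{\mathrm{un}}_\ell=U^{\mathrm{QNM}}_\ell$ and $|V|^{\mathrm{un}}_\ell=|V^{\mathrm{QNM}}_\ell|$.

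\textbf{Step 2 (parity for true QNMs).} Next I transfer the parity of Corollary~\ref{cor:even-odd} to true QNMs. Lemma~\ref{lem:a-k-symmetry} comes from the isometry $(a,\varphi)\mapsto(-a,-\varphi)$, which conjugates $P_k(\omega;M,a)$ to $P_{-k}(\omega;M,-a)$. Hence $\mathrm{Res}(P_k)$ at $(M,-a)$ equals $\mathrm{Res}(P_{-k})$ at $(M,a)$. Combined with the labeling of Proposition~\ref{prop:labeling-analytic}, this gives $\omega_{n,\ell,k}(M,-a)=\omega_{n,\ell,-k}(M,a)$: the labeled pseudopoles transform the same way, and the QNM near each pseudopole is unique. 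Therefore $U^{\mathrm{QNM}}_\ell$ is even in $a$ and $V^{\mathrm{QNM}}_\ell$ is odd in $a$.

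\textbf{Step 3 (sign structure of $V$).} I show that $V^{\mathrm{QNM}}_\ell$ vanishes exactly on $\{a=0\}$ and has the sign of $a$. Oddness gives $V^{\mathrm{QNM}}_\ell(M,0)=0$. By Lemma~\ref{lem:U-V-expansions} and Lemma~\ref{lem:C1-close}, $\partial_aV^{\mathrm{QNM}}_\ell\ge \tfrac12 c_Z(M)>0$ on $\mathcal K$ once $a_0$ is small and $\ell$ is large. Integrating in $a$ from $0$ then gives $V^{\mathrm{QNM}}_\ell(M,a)=a\,\beta(M,a)$ with $\beta\ge\tfrac12\min c_Z>0$. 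Consequently
\[
|V|^{\mathrm{un}}_\ell(M,a)=V^{\mathrm{QNM}}_\ell(M,|a|).
\]
I also need $(M,|a|)\in\mathcal K$. Strictly, I would apply the argument on the symmetrized compact set $\mathcal K\cup\{(M,-a):(M,a)\in\mathcal K\}$. This set is still subextremal, since $\Delta_r$ is even in $a$, and slowly rotating, and Theorem~\ref{thm:inverse-true-QNM} applies to it with the same proofs.

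\textbf{Step 4 (conclusion).} Using Step 2, $U^{\mathrm{un}}_\ell(M,a)=U^{\mathrm{QNM}}_\ell(M,|a|)$. Hence the unlabeled data map equals $\mathcal G^{\mathrm{QNM}}_\ell\circ\sigma$, where $\sigma(M,a)=(M,|a|)$. Applying the stability estimate \eqref{eq:stability-true} at the points $(M,|a|)$ and $(M',|a'|)$ yields \eqref{eq:stability-unlabeled}.

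The only subtlety is the closeness hypothesis. If $(M,a)$ and $(M',a')$ are close, then so are $(M,|a|)$ and $(M',|a'|)$, because $\bigl||a|-|a'|\bigr|\le|a-a'|$. So the local stability neighborhood transfers directly. If one prefers a global-on-neighborhood constant, one can instead use the rectangular injectivity of Proposition~\ref{prop:rectangular-injectivity}.

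I expect Step 2, the rigorous transfer of the reflection symmetry to the true labeled QNMs, to be the main point needing care. It follows once I observe that the symmetry maps the full resonance set in each sector bijectively and preserves the pseudopole labeling, so the uniqueness part of Proposition~\ref{prop:labeling-analytic} forces the labels to match.
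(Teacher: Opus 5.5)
Your proposal is correct. It shares the paper's two ingredients: the reflection-induced parity $\omega_{n,\ell,k}(M,-a)=\omega_{n,\ell,-k}(M,a)$ for labeled QNMs, and the labeled stability estimate \eqref{eq:stability-true}. Your Step~2 in fact justifies the label matching more carefully than the paper does, via the pseudopole symmetry and the uniqueness in Proposition~\ref{prop:labeling-analytic}.

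Where you differ is in handling the absolute value. The paper argues pairwise. It picks $\sigma\in\{\pm1\}$ from the identity $\bigl||x|-|y|\bigr|=\min\{|x-y|,|x+y|\}$, applies \eqref{eq:stability-true} to $(M,a)$ and $(M',\sigma a')$, and then uses $\bigl||a|-|a'|\bigr|\le|a-\sigma a'|$. You instead prove once that $V^{\mathrm{QNM}}_\ell$ has the sign of $a$, and factor the unlabeled data map through the fold $(M,a)\mapsto(M,|a|)$.

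Your route costs an extra monotonicity step, but it makes the transfer of the closeness hypothesis automatic via $\bigl||a|-|a'|\bigr|\le|a-a'|$. In the paper's route, when the minimum forces $\sigma=-1$ (that is, $V^{\mathrm{QNM}}_\ell(M,a)\,V^{\mathrm{QNM}}_\ell(M',a')<0$), the local estimate only applies if $(M',-a')$ is close to $(M,a)$. That closeness follows exactly from the sign structure you establish, which the paper leaves implicit. Your factorization also makes transparent that the data determine precisely $(M,|a|)$.

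One small precision: integrating $\partial_aV^{\mathrm{QNM}}_\ell$ from $0$ to $a$ needs the vertical segments $\{M\}\times[0,|a|]$ to lie where Lemmas~\ref{lem:U-V-expansions} and~\ref{lem:C1-close} apply. So enlarge to $K_M\times[-a_0,a_0]$ (shrinking $a_0$ if needed), not merely to $\mathcal K\cup\{(M,-a):(M,a)\in\mathcal K\}$. Both your argument and the paper's also tacitly require the reflected points to lie in the domain.
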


\begin{proof}
For fixed $\ell$, define $U_\ell^{\mathrm{QNM}},V_\ell^{\mathrm{QNM}}$ as in \eqref{eq:UV-true}.
Since Kerr--de~Sitter has the reflection symmetry $(a,\varphi)\leftrightarrow(-a,-\varphi)$, conjugation by $\varphi\mapsto-\varphi$ identifies the separated $k$--mode problem for $g_{M,a}$ with the $(-k)$--mode problem for $g_{M,-a}$.
Consequently, for labeled QNMs one has
\begin{equation}\label{eq:ak-symmetry-true}
\omega_{n,\ell,k}(M,-a)=\omega_{n,\ell,-k}(M,a),
\end{equation}
and therefore
\begin{equation}\label{eq:parity-true}
U_\ell^{\mathrm{QNM}}(M,-a)=U_\ell^{\mathrm{QNM}}(M,a),\qquad V_\ell^{\mathrm{QNM}}(M,-a)=-V_\ell^{\mathrm{QNM}}(M,a).
\end{equation}

Now fix two nearby points $(M,a),(M',a')\in\mathcal K$.
By the elementary identity $||x|-|y||=\min\{|x-y|,|x+y|\}$, there exists $\sigma\in\{\pm1\}$ such that
\[
\big||V_\ell^{\mathrm{QNM}}(M,a)|-|V_\ell^{\mathrm{QNM}}(M',a')|\big|=\big|V_\ell^{\mathrm{QNM}}(M,a)-\sigma\,V_\ell^{\mathrm{QNM}}(M',a')\big|.
\]
If $\sigma=-1$, replace $a'$ by $-a'$ and use \eqref{eq:parity-true} to rewrite
\[
\bigl(U_\ell^{\mathrm{QNM}}(M',a'),\sigma V_\ell^{\mathrm{QNM}}(M',a')\bigr)
=\bigl(U_\ell^{\mathrm{QNM}}(M',-a'),V_\ell^{\mathrm{QNM}}(M',-a')\bigr).
\]
In either case, Theorem~\ref{thm:inverse-true-QNM} applied to the labeled data yields
\[
\begin{aligned}
|(M,a)-(M',\sigma a')|
&\le C_*\Big(
|U_\ell^{\mathrm{QNM}}(M,a)-U_\ell^{\mathrm{QNM}}(M',a')|\\
&\qquad\qquad\ +\big||V_\ell^{\mathrm{QNM}}(M,a)|-|V_\ell^{\mathrm{QNM}}(M',a')|\big|
\Big).
\end{aligned}
\]
Finally, $\bigl||a|-|a'|\bigr|\le |a-\sigma a'|$, and the right-hand side agrees with the difference of the unlabeled observables in \eqref{eq:UV-unlabeled}.
This gives \eqref{eq:stability-unlabeled} with $C_{\mathrm{un}}$ comparable to $C_*$.
\end{proof}

\begin{corollary}[Single--mode inversion for equatorial true QNMs]\label{cor:single-mode-true}
Under the assumptions of Theorem~\ref{thm:inverse-true-QNM}, there exist $\ell_0\in\mathbb N$ and constants
$c_\sharp,C_\sharp>0$ such that for every $\ell\ge \ell_0$ the single--mode true-QNM map
\begin{equation}\label{eq:SQNM-def}
\mathcal S_\ell^{\mathrm{QNM}}(M,a)
:=
\begin{pmatrix}
\widetilde W_\ell^{\mathrm{QNM}}(M,a)\\[2pt]
U_{\ell,+}^{\mathrm{QNM}}(M,a)
\end{pmatrix}
=
\begin{pmatrix}
-\dfrac{\Im\omega_{n,\ell,\ell}(M,a)}{n+\frac12}\\[6pt]
\Re\dfrac{\omega_{n,\ell,\ell}(M,a)}{\ell}
\end{pmatrix}.
\end{equation}
is a real-analytic local diffeomorphism at every point of $\mathcal K$, with
\begin{equation}\label{eq:Jac-true-single}
\big|\det D\mathcal S_\ell^{\mathrm{QNM}}(M,a)\big|\ge c_\sharp\qquad\forall(M,a)\in\mathcal K
\end{equation}
and the stability estimate
\begin{equation}\label{eq:stab-true-single}
|(M,a)-(M',a')|\le C_\sharp\,\big|\mathcal S_\ell^{\mathrm{QNM}}(M,a)-\mathcal S_\ell^{\mathrm{QNM}}(M',a')\big|
\end{equation}
for all $(M,a),(M',a')\in\mathcal K$ sufficiently close.
\end{corollary}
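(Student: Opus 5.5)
The plan mirrors the proof of Theorem~\ref{thm:inverse-true-QNM}, with the two-mode map replaced by the single-mode map. Take $F:=\mathcal S_\ell$, the pseudopole single-mode map from \eqref{eq:single-mode-map}. By Theorem~\ref{thm:inverse-pseudopoles}(i), for $\ell\ge\ell_0$ and $a_0$ small it satisfies $|\det D\mathcal S_\ell|\ge c_*$ on $\mathcal K$. We also need a uniform bound $\|D\mathcal S_\ell\|\le L_*$ that does not depend on $\ell$. This follows from the derivative expansions \eqref{eq:single-deriv-M}--\eqref{eq:single-deriv-aU} in Lemma~\ref{lem:single-mode-expansions}. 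The point is that both components of $\mathcal S_\ell$ are $O(1)$-normalized: the real part is divided by $\ell$, and the imaginary part is already $O(1)$.

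Next, set $G:=\mathcal S_\ell^{\mathrm{QNM}}$ and compare it to $F$ in $C^1(\mathcal K)$. The first component is $-\Im\omega_{n,\ell,\ell}/(n+\frac12)$, and its difference from the pseudopole version is $-\Im\delta_+/(n+\frac12)$. The second component differs by $\Re\delta_+/\ell$. Lemma~\ref{lem:C1-close}, estimate \eqref{eq:C1-close-omega}, bounds $\delta_+$ together with its $M$- and $a$-derivatives by $C_N\ell^{-N}$. Hence
\[
\|\mathcal S_\ell^{\mathrm{QNM}}-\mathcal S_\ell\|_{C^1(\mathcal K)}\le C_N\ell^{-N}.
\]
This is the one place where the argument differs from the two-mode case. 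The imaginary part is not divided by $\ell$, so we use the full un-normalized $C^1$ bound \eqref{eq:C1-close-omega} rather than the normalized \eqref{eq:C1-close-G}. The bound still suffices, because it holds for every $N$.

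Then choose $\ell_0$ large enough that $C_N\ell^{-N}\le\varepsilon_0$, where $\varepsilon_0$ is the constant of Lemma~\ref{lem:perturb-IFT} with $d=2$, $c_0=c_*$ and $L_0=L_*$. These constants are uniform in $\ell$, so $\varepsilon_0$ is too. Lemma~\ref{lem:perturb-IFT} then gives the Jacobian bound \eqref{eq:Jac-true-single} with $c_\sharp=c_*/2$, and the local Lipschitz stability \eqref{eq:stab-true-single}. Real-analyticity comes from Proposition~\ref{prop:labeling-analytic}(2) together with Remark~\ref{rem:uniform-simplicity}: the branch $\omega_{n,\ell,\ell}$ is simple and real-analytic in $(M,a)$, so both components of $\mathcal S_\ell^{\mathrm{QNM}}$ are real-analytic. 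Since the Jacobian does not vanish, the real-analytic inverse function theorem upgrades the $C^1$ local inverses to analytic ones.

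The main obstacle I expect is the uniform-in-$\ell$ Jacobian input for the pseudopole map, specifically controlling the damping component $\widetilde W_\ell$ in $C^1$. That control rests on \eqref{eq:W-Lyapunov} holding in $C^1$, i.e.\ with one parameter derivative, and the paper asserts this but backs it only through Appendix~\ref{app:parameter-quantization}. If that step needed justification here, I would first try to derive it directly. The approach would be to differentiate the truncated quantization function $\mathfrak q_+^{(1)}$ in $\mu$, reading the $O(1)$ imaginary part from the subprincipal barrier-top term $-ih(n+\frac12)\lambda_+$. The remaining steps are routine once that is in hand.
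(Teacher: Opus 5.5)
Your proposal is correct and follows the paper's own proof step for step. You take $F=\mathcal S_\ell$ and get uniform-in-$\ell$ constants $c_0,L_0$ from Theorem~\ref{thm:inverse-pseudopoles}(i) and Lemma~\ref{lem:single-mode-expansions}, then obtain $\|\mathcal S_\ell^{\mathrm{QNM}}-\mathcal S_\ell\|_{C^1(\mathcal K)}\le C_N\ell^{-N}$ from the un-normalized bound \eqref{eq:C1-close-omega} on the $+$-branch, apply Lemma~\ref{lem:perturb-IFT} with $d=2$, and upgrade to real-analyticity via the simple analytic branch. The $C^1$ control of $\widetilde W_\ell$ that you flag as the main obstacle enters the paper's proof in the same way, as an input taken from Lemma~\ref{lem:single-mode-expansions}, so it does not separate your argument from the paper's.
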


\begin{proof}
Define the pseudopole single--mode map $\mathcal S_\ell$ by \eqref{eq:single-mode-map}.
By Theorem~\ref{thm:inverse-pseudopoles}(i) and Lemma~\ref{lem:single-mode-expansions}, there exist $\ell_0$ and $c_0,L_0>0$
such that $|\det D\mathcal S_\ell|\ge c_0$ and $\|D\mathcal S_\ell\|\le L_0$ on $\mathcal K$ for all $\ell\ge\ell_0$.

By Lemma~\ref{lem:C1-close} applied to the $+$-branch, the difference
$\omega_{n,\ell,\ell}-\omega^\sharp_{n,\ell,\ell}$ (and one parameter derivative) is $\mathcal O(\ell^{-N})$ uniformly on $\mathcal K$ for every $N$.
Taking real and imaginary parts, and using the fixed normalization factors $1/\ell$ and $(n+\tfrac12)^{-1}$, we obtain
\begin{equation}\label{eq:C1-close-single}
\|\mathcal S_\ell^{\mathrm{QNM}}-\mathcal S_\ell\|_{C^1(\mathcal K)}\le C_N\,\ell^{-N}
\end{equation}
for every $N$.
Choosing $\ell_0$ larger so that \eqref{eq:C1-close-single} is below the perturbation threshold $\varepsilon_0$ in Lemma~\ref{lem:perturb-IFT},
we may apply Lemma~\ref{lem:perturb-IFT} with $F=\mathcal S_\ell$ and $G=\mathcal S_\ell^{\mathrm{QNM}}$ to conclude local invertibility and
the quantitative bounds \eqref{eq:Jac-true-single}--\eqref{eq:stab-true-single}. Real-analyticity follows exactly as in
the proof of Theorem~\ref{thm:inverse-true-QNM}: the labeled branch $\omega_{n,\ell,\ell}(M,a)$ is real-analytic in $(M,a)$ on $\mathcal K$ in the
simple-pole regime (Proposition~\ref{prop:labeling-analytic}), hence so is $\mathcal S_\ell^{\mathrm{QNM}}$.
\end{proof}

\begin{theorem}[Three-parameter inversion for equatorial quasinormal modes]\label{thm:inverse-QNMs-3}
Fix $\nu_0>0$ and the associated integer $C_n$ from Theorem~\ref{thm:Dyatlov-quantization}. Fix $n\in\{0,\dots,C_n\}$ and let $\mathcal K^{(3)}$ be a compact set of subextremal Kerr--de~Sitter parameters as in \eqref{eq:Kcompact-3}.
For each $\ell$ sufficiently large, let $\omega_{\ell,\pm}(M,a,\Lambda)$ denote the two equatorial quasinormal modes with $k=\pm\ell$
labeled as in Proposition~\ref{prop:labeling-analytic}.
Define the average/difference and the associated real-part observables by
\begin{equation}\label{eq:G-QNM-def}
\begin{aligned}
\omega_{\ell,\mathrm{av}}&:=\tfrac12(\omega_{\ell,+}+\omega_{\ell,-}),\qquad
\omega_{\ell,\mathrm{dif}}:=\tfrac12(\omega_{\ell,+}-\omega_{\ell,-}),\\
U_\ell^{\mathrm{QNM}}(M,a,\Lambda)&:=\Re\!\Bigl(\frac{\omega_{\ell,\mathrm{av}}(M,a,\Lambda)}{\ell}\Bigr),\\
V_\ell^{\mathrm{QNM}}(M,a,\Lambda)&:=\Re\!\Bigl(\frac{\omega_{\ell,\mathrm{dif}}(M,a,\Lambda)}{\ell}\Bigr).
\end{aligned}
\end{equation}

\begin{equation}\label{eq:H-QNM-def}
\begin{aligned}
\mathcal H_\ell^{\mathrm{QNM}}(M,a,\Lambda)
:=\Bigl(
&U_\ell^{\mathrm{QNM}}(M,a,\Lambda),\\
&V_\ell^{\mathrm{QNM}}(M,a,\Lambda),\\
&\widetilde W_\ell^{\mathrm{QNM}}(M,a,\Lambda)
\Bigr).
\end{aligned}
\end{equation}
where $U_\ell^{\mathrm{QNM}},V_\ell^{\mathrm{QNM}}$ are defined by \eqref{eq:G-QNM-def} (now viewed as functions of $(M,a,\Lambda)$) and
\begin{equation}\label{eq:Wtilde-QNM-def}
\widetilde W_\ell^{\mathrm{QNM}}(M,a,\Lambda):=-\frac{\Im\,\omega_{\ell,+}(M,a,\Lambda)}{n+\tfrac12}.
\end{equation}
Then there exist $\ell_0\in\mathbb N$ and $C>0$ such that for every $\ell\ge \ell_0$ and every $(M,a,\Lambda)\in\mathcal K^{(3)}$
there is a neighborhood $\mathcal N\subset\mathcal K^{(3)}$ of $(M,a,\Lambda)$ on which the map $\mathcal H_\ell^{\mathrm{QNM}}$ is injective.
Moreover, for all $(M,a,\Lambda),(M',a',\Lambda')\in\mathcal N$,
\begin{equation}\label{eq:H-QNM-stability}
\bigl|(M,a,\Lambda)-(M',a',\Lambda')\bigr|
\le C\,\bigl|\mathcal H_\ell^{\mathrm{QNM}}(M,a,\Lambda)-\mathcal H_\ell^{\mathrm{QNM}}(M',a',\Lambda')\bigr|.
\end{equation}
\end{theorem}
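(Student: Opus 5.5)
The plan is to derive Theorem~\ref{thm:inverse-QNMs-3} from the pseudopole version, Theorem~\ref{thm:inverse-pseudopoles-3}. This is the same perturbative route used to pass from Theorem~\ref{thm:inverse-pseudopoles} to Theorem~\ref{thm:inverse-true-QNM}, now carried out with $d=3$. There are three ingredients.

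\textbf{Step 1 (uniform bounds on the pseudopole map).} From the proof of Theorem~\ref{thm:inverse-pseudopoles-3} we have $\|\mathcal H_\ell-\mathcal H_{\mathrm{geo}}\|_{C^1(\mathcal K^{(3)})}\le C_0\ell^{-1}$. Lemma~\ref{lem:Hgeo-Jacobian} gives $|\det D\mathcal H_{\mathrm{geo}}|\ge c_0$ on $\mathcal K^{(3)}$; this is where the standing restriction $|a|\ge a_1>0$ enters, since the determinant in \eqref{eq:Hgeo-det} is of size $a^2$. By compactness, $\|D\mathcal H_{\mathrm{geo}}\|$ is also bounded on $\mathcal K^{(3)}$. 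Enlarging $\ell_0$ and using continuity of the determinant, we get uniform constants with $|\det D\mathcal H_\ell|\ge c_0/2$ and $\|D\mathcal H_\ell\|\le L_0$ on $\mathcal K^{(3)}$ for all $\ell\ge\ell_0$. These are exactly the hypotheses \eqref{eq:IFT-assumptions} for $F=\mathcal H_\ell$, with constants independent of $\ell$.

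\textbf{Step 2 ($C^1$ closeness of true and pseudopole data).} Corollary~\ref{cor:C1-close-Lambda} gives $\|\omega_{\ell,\pm}-\omega^\sharp_{\ell,\pm}\|_{C^1(\mathcal K^{(3)})}\le C_N\ell^{-N}$, with the $C^1$ norm taken in $(M,a,\Lambda)$. Each of the three components of $\mathcal H_\ell^{\mathrm{QNM}}-\mathcal H_\ell$ is a fixed real-linear combination of $\delta_\pm$:
\begin{itemize}
\item the $U$ and $V$ components involve real parts of $\delta_\pm$ divided by $2\ell$;
\item the $\widetilde W$ component involves the imaginary part of $\delta_+$ divided by $n+\tfrac12$.
\end{itemize}
Hence $\|\mathcal H_\ell^{\mathrm{QNM}}-\mathcal H_\ell\|_{C^1}\le C_N\ell^{-N}$. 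The unnormalized damping observable receives no help from the factor $1/\ell$, but the error is superpolynomially small, so $N=1$ already suffices.

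\textbf{Step 3 (conclusion).} Take $\ell_0$ large enough that $C_N\ell^{-N}\le\varepsilon_0$, where $\varepsilon_0$ is the threshold of Lemma~\ref{lem:perturb-IFT} for the uniform constants $(c_0/2,L_0)$. Because those constants do not depend on $\ell$, $\varepsilon_0$ does not either. Lemma~\ref{lem:perturb-IFT} with $d=3$, $F=\mathcal H_\ell$, $G=\mathcal H_\ell^{\mathrm{QNM}}$ then gives a uniform Jacobian lower bound and a local Lipschitz inverse. Concretely, the smallest singular value of $DG$ is at least $c_1/L^{2}$. On a convex neighborhood $\mathcal N$ of radius $r$ we have
\[
G(x)-G(y)=DG(x)(x-y)+o(|x-y|),
\]
and the remainder is controlled by uniform continuity of $DG$ on the compact set. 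This yields injectivity on $\mathcal N$ and the estimate \eqref{eq:H-QNM-stability}.

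The main obstacle is making the neighborhood size uniform. The radius of $\mathcal N$ is governed by a modulus of continuity of $D\mathcal H_\ell^{\mathrm{QNM}}$, and this must not degenerate as $\ell\to\infty$. I would handle it by writing
\[
D\mathcal H_\ell^{\mathrm{QNM}}=D\mathcal H_{\mathrm{geo}}+O(\ell^{-1}) \quad\text{in sup norm},
\]
so that $\|D\mathcal H_\ell^{\mathrm{QNM}}(x)-D\mathcal H_\ell^{\mathrm{QNM}}(y)\|\le\omega_{\mathrm{geo}}(|x-y|)+O(\ell^{-1})$, where $\omega_{\mathrm{geo}}$ is the modulus of continuity of $D\mathcal H_{\mathrm{geo}}$. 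For $\ell$ large the $O(\ell^{-1})$ term is below half the singular value bound, so a radius depending only on $\mathcal H_{\mathrm{geo}}$ works. Real-analyticity of $\mathcal H_\ell^{\mathrm{QNM}}$ is not needed for the statement; it follows in any case from simplicity of the labeled branches (Proposition~\ref{prop:labeling-analytic}(2) with varying $\Lambda$).
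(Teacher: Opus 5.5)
Your proposal is correct and follows the paper's proof essentially step for step. You obtain uniform Jacobian and derivative bounds for the pseudopole map $\mathcal H_\ell$ through $\mathcal H_{\mathrm{geo}}$ and Lemma~\ref{lem:Hgeo-Jacobian}, then the $C^1$ super-polynomial closeness $\|\mathcal H_\ell^{\mathrm{QNM}}-\mathcal H_\ell\|_{C^1}\le C_N\ell^{-N}$ from Corollary~\ref{cor:C1-close-Lambda} (this is exactly Lemma~\ref{lem:C1-close-H}), and finally Lemma~\ref{lem:perturb-IFT} with $d=3$. Your modulus-of-continuity argument, which makes the neighborhood $\mathcal N$ independent of $\ell$, is a harmless strengthening the paper does not carry out, since the statement allows $\mathcal N$ to depend on $\ell$.
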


\begin{lemma}[$C^1$ pseudopole--QNM closeness for the three-parameter data map]\label{lem:C1-close-H}
Under the assumptions of Theorem~\ref{thm:inverse-QNMs-3}, for every $N\in\mathbb N$ there exists $C_N>0$ such that for all $\ell$ sufficiently large,
\begin{equation}\label{eq:C1-close-H}
\|\mathcal H_\ell^{\mathrm{QNM}}-\mathcal H_\ell\|_{C^1(\mathcal K^{(3)})}\le C_N\,\ell^{-N},
\end{equation}
where $\mathcal H_\ell$ is the pseudopole map from \eqref{eq:H-map-def}.
The $C^1$ norm in \eqref{eq:C1-close-H} includes one derivative in each of $M,a,\Lambda$.
\end{lemma}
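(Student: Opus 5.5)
The plan is to reduce the statement to the three-parameter branchwise estimate of Corollary~\ref{cor:C1-close-Lambda} and then pass through the definitions of the three components of $\mathcal H_\ell^{\mathrm{QNM}}$ and $\mathcal H_\ell$.

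Write $\mu:=(M,a,\Lambda)\in\mathcal K^{(3)}$ and $\delta_\pm(\mu):=\omega_{\ell,\pm}(\mu)-\omega^\sharp_{\ell,\pm}(\mu)$. The set $\mathcal K^{(3)}$ in \eqref{eq:Kcompact-3} is a compact subset of the slow-rotation subextremal region with $\Lambda$ confined to a compact interval (its projection to the $\Lambda$-axis). The lower bound $|a|\ge a_1$ only shrinks the set, so Corollary~\ref{cor:C1-close-Lambda} applies. It gives, for every $N$, a constant $C_N$ with
\[
\sup_{\mu\in\mathcal K^{(3)}}\Bigl(|\delta_\pm(\mu)|+\sum_{j}|\partial_{\mu_j}\delta_\pm(\mu)|\Bigr)\le C_N\,\ell^{-N},
\]
where the sum runs over the three coordinates $M,a,\Lambda$. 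This is the only analytic input needed. It rests on the facts recorded in the proof of Lemma~\ref{lem:C1-close}: the quantization functions $\mathfrak q_\pm$ are uniformly normalized, $\partial_\omega\mathfrak q_\pm=1+\mathcal O(\ell^{-1})$, the truncation remainders are $\mathcal O(\ell^{-N})$ after one $\mu$-derivative, and the labeled branches are simple. All of these were stated to hold with $\mu=(M,a,\Lambda)$.

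Next, compare the two maps componentwise. The first two components satisfy
\[
U_\ell^{\mathrm{QNM}}-U_\ell=\Re\frac{\delta_++\delta_-}{2\ell},\qquad
V_\ell^{\mathrm{QNM}}-V_\ell=\Re\frac{\delta_+-\delta_-}{2\ell}.
\]
The third satisfies
\[
\widetilde W_\ell^{\mathrm{QNM}}-\widetilde W_\ell=-\frac{\Im\delta_+}{n+\frac12}.
\]
All normalizing factors, $1/\ell$ and $(n+\tfrac12)^{-1}$, are constants independent of $\mu$. Taking real or imaginary parts commutes with real $\mu$-derivatives. Each component difference and each of its first derivatives is therefore bounded by a fixed constant times $\sup(|\delta_\pm|+|\partial_\mu\delta_\pm|)$. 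Summing the three components gives \eqref{eq:C1-close-H} with a new constant $C_N$.

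The main point to check is that the labeling used in the definition of $\mathcal H_\ell^{\mathrm{QNM}}$ agrees with the one in Corollary~\ref{cor:C1-close-Lambda}. Specifically, $\omega_{\ell,\pm}$ must be the simple labeled branch paired with $\omega^\sharp_{\ell,\pm}$ uniformly on all of $\mathcal K^{(3)}$, so that $\delta_\pm$ is a single differentiable function there. This is guaranteed by Proposition~\ref{prop:labeling-analytic}(2) in its enlarged-parameter form, together with Remark~\ref{rem:uniform-simplicity}: uniform simplicity gives real-analyticity of the branches in $(M,a,\Lambda)$, so $\partial_\Lambda\delta_\pm$ makes sense. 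I expect this consistency check to be the only nontrivial step; everything else is linear bookkeeping.
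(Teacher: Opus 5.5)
Your proposal is correct and follows the paper's proof: the paper likewise invokes Corollary~\ref{cor:C1-close-Lambda} for the branchwise $C^1$ bound on $\omega_{\ell,\pm}-\omega^\sharp_{\ell,\pm}$ with respect to $(M,a,\Lambda)$, then takes real and imaginary parts with the fixed normalizations $1/\ell$ and $(n+\tfrac12)^{-1}$. Your additional checks, that the constraint $|a|\ge a_1$ only shrinks the set and that the labeled simple branches are the ones paired with the pseudopoles uniformly on $\mathcal K^{(3)}$, are left implicit in the paper's one-line argument.
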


\begin{proof}
By Corollary~\ref{cor:C1-close-Lambda}, the labeled quasinormal modes satisfy
\[
\|\omega_{\ell,\pm}-\omega^\sharp_{\ell,\pm}\|_{C^1(\mathcal K^{(3)})}\le C_N\,\ell^{-N}
\qquad\text{for every }N,
\]
where the $C^1$ norm is taken with respect to $(M,a,\Lambda)$.
Taking real and imaginary parts and applying the fixed normalizations in
\eqref{eq:G-QNM-def}--\eqref{eq:Wtilde-QNM-def} gives
\eqref{eq:C1-close-H}.
\end{proof}

\begin{proof}[Proof of Theorem~\ref{thm:inverse-QNMs-3}]
By Theorem~\ref{thm:inverse-pseudopoles-3} and compactness of $\mathcal K^{(3)}$, there exist $\ell_1\in\mathbb N$, $c_0>0$ and $L_0>0$ such that for all $\ell\ge\ell_1$,
\[
\big|\det D\mathcal H_\ell(M,a,\Lambda)\big|\ge c_0,
\qquad
\|D\mathcal H_\ell(M,a,\Lambda)\|\le L_0,
\qquad \forall (M,a,\Lambda)\in\mathcal K^{(3)}.
\]
Fix $N$ and choose $\ell_0\ge\ell_1$ large so that the perturbation threshold \eqref{eq:C1-perturb-small} holds on $\mathcal K^{(3)}$ with $\varepsilon_0$ from Lemma~\ref{lem:perturb-IFT}; this is possible thanks to \eqref{eq:C1-close-H}.
Applying Lemma~\ref{lem:perturb-IFT} with $d=3$ to $F=\mathcal H_\ell$ and $G=\mathcal H_\ell^{\mathrm{QNM}}$ yields the asserted local injectivity and the stability estimate \eqref{eq:H-QNM-stability}.
\end{proof}

\subsection{Explicit leading reconstruction for true QNM observables}\label{subsec:explicit-true}

Theorem~\ref{thm:inverse-true-QNM} is qualitative/quantitative at the map level. For implementation and for comparison with the $a=0$ case,
it is useful to record an explicit leading reconstruction, mirroring Proposition~\ref{prop:explicit-inverse}.

\begin{proposition}[Leading reconstruction formulas]\label{prop:explicit-true}
Let $\Omega_{\mathrm{ph}}$ and $c_Z$ be defined in \eqref{eq:Omega-cZ-def}.
Define
\[
M^{(0)}(U):=\frac{1}{3\sqrt{\Lambda+3U^2}},\qquad
a^{(0)}(U,V):=\frac{V}{c_Z(M^{(0)}(U))}.
\]
Then there exist $\ell_0\in\mathbb N$ and $C>0$ such that for all $\ell\ge \ell_0$ and all $(M,a)\in\mathcal K$,
\begin{equation}\label{eq:explicit-true-bound}
\begin{aligned}
&\big|M^{(0)}(U_\ell^{\mathrm{QNM}}(M,a))-M\big|
+\big|a^{(0)}\!\big(U_\ell^{\mathrm{QNM}}(M,a),V_\ell^{\mathrm{QNM}}(M,a)\big)-a\big|\\
&\qquad\le\ C\,(a^2+\ell^{-1}).
\end{aligned}
\end{equation}
\end{proposition}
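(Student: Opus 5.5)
The plan is to reduce the statement to its pseudopole analogue, Proposition~\ref{prop:explicit-inverse}, by a perturbation argument, using only the $C^0$ part of Lemma~\ref{lem:C1-close}. No derivative information is needed beyond the Lipschitz continuity of the explicit seed maps.

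First I would set $(U,V):=\mathcal G_\ell(M,a)$ for the pseudopole data and $(U^{\mathrm{QNM}},V^{\mathrm{QNM}}):=\mathcal G_\ell^{\mathrm{QNM}}(M,a)$ for the true data. By \eqref{eq:C1-close-G} with $N=1$, we have
\[
|U^{\mathrm{QNM}}-U|+|V^{\mathrm{QNM}}-V|\le C_1\ell^{-1}
\]
uniformly on $\mathcal K$. In fact the bound is $\mathcal O(\ell^{-N})$ for any $N$, but $\ell^{-1}$ suffices. Proposition~\ref{prop:explicit-inverse} already gives
\[
|M^{(0)}(U)-M|+|a^{(0)}(U,V)-a|\le C(a^2+\ell^{-1}).
\]
It therefore remains to show that the seed maps $M^{(0)}$ and $a^{(0)}$ are uniformly Lipschitz on a fixed compact neighbourhood of the data set. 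The conclusion then follows from the triangle inequality.

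For the Lipschitz step, the set $\mathcal G_{\mathrm{geo}}(\mathcal K)$ is compact. By Lemma~\ref{lem:Gell-geo-C1}, both $\mathcal G_\ell(\mathcal K)$ and $\mathcal G^{\mathrm{QNM}}_\ell(\mathcal K)$ lie within $C\ell^{-1}$ of it. So for $\ell\ge\ell_0$ all data lie in a fixed compact set $D\subset\mathbb R^2$ on which:

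\begin{itemize}
\item[(a)] $U$ ranges in a compact subinterval of $(0,\infty)$, since $U_{\mathrm{geo}}\approx\Omega_{\mathrm{ph}}(M)>0$ on $K_M$. Hence $\Lambda+3U^2$ is bounded below, and $M^{(0)}$ is smooth with bounded derivative.
\item[(b)] $M^{(0)}(U)$ stays in a compact subset of $(0,\infty)$. Hence $c_Z(M^{(0)}(U))=(2+9\Lambda M^{(0)2})/(27M^{(0)2})$ is bounded below by a positive constant and is smooth.
\end{itemize}

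Thus $a^{(0)}(U,V)=V/c_Z(M^{(0)}(U))$ is smooth on $D$. Taking $D$ convex (for instance a rectangle), the mean value theorem gives a Lipschitz constant $L_D$. Combining the pieces, the left-hand side of \eqref{eq:explicit-true-bound} is bounded by $C(a^2+\ell^{-1})+L_D C_1\ell^{-1}$, which is the claim.

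The main point to watch is that the data stay uniformly inside the region where the seed formulas are nonsingular, so that the constants do not depend on $\ell$. This is secured by the $C^0$ part of Lemma~\ref{lem:Gell-geo-C1} and by enlarging $\ell_0$. The possible dependence of the $\mathcal O(a^2)$ in Proposition~\ref{prop:explicit-inverse} on the point is harmless, since that proposition is already uniform on $\mathcal K$.
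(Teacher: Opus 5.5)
Your proof is correct and is essentially the paper's argument: both combine the $C^0$ part of Lemma~\ref{lem:C1-close} with the small-$a$ expansions. The only difference is that you factor through Proposition~\ref{prop:explicit-inverse} plus a Lipschitz bound for the seed maps, whereas the paper substitutes the true-QNM data directly into Lemma~\ref{lem:U-V-expansions} and Taylor-expands $M^{(0)}$ and $c_Z$. (Your care about keeping the data in a nonsingular region is harmless but unnecessary: $c_Z(M^{(0)}(U))=\Lambda+2U^2$, so $M^{(0)}$ and $a^{(0)}(U,V)=V/(\Lambda+2U^2)$ are smooth on all of $\mathbb R^2$ and Lipschitz on bounded sets.)
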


\begin{proof}
By Lemma~\ref{lem:C1-close} (in particular its $C^0$ component) and Lemma~\ref{lem:U-V-expansions},
\[
U_\ell^{\mathrm{QNM}}(M,a)=U_\ell(M,a)+\mathcal O(\ell^{-\infty})
=\Omega_{\mathrm{ph}}(M)+\mathcal O(a^2)+\mathcal O(\ell^{-1}),
\]
uniformly on $\mathcal K$, and similarly
\[
V_\ell^{\mathrm{QNM}}(M,a)=V_\ell(M,a)+\mathcal O(\ell^{-\infty})
=c_Z(M)\,a+\mathcal O(a^3)+\mathcal O(\ell^{-1}).
\]
Since $\Omega_{\mathrm{ph}}$ is strictly monotone on $0<9\Lambda M^2<1$ by Lemma~\ref{lem:Omega-monotone},
its inverse is $C^1$ on $\pi_M(\mathcal K)$, hence the first estimate in \eqref{eq:explicit-true-bound} follows
by Taylor expansion of $M^{(0)}(\cdot)=(\Omega_{\mathrm{ph}})^{-1}(\cdot)$ around $\Omega_{\mathrm{ph}}(M)$.
The second estimate follows similarly, using that $c_Z(M)$ is bounded away from $0$ on $\mathcal K$ and substituting
$M^{(0)}(U_\ell^{\mathrm{QNM}})=M+\mathcal O(a^2+\ell^{-1})$ into $c_Z$.
\end{proof}

\begin{remark}[Newton refinement]\label{rem:newton-true}
Proposition~\ref{prop:explicit-true} provides a natural initial guess for solving the nonlinear system
$\mathcal G_\ell^{\mathrm{QNM}}(M,a)=(U,V)$ from data.
Since $D\mathcal G_\ell^{\mathrm{QNM}}$ is uniformly invertible on $\mathcal K$ by \eqref{eq:Jac-true-lower},
a standard Newton scheme initialized at $(M^{(0)}(U),a^{(0)}(U,V))$ converges quadratically in a neighborhood of the true parameters,
provided $\ell$ is large and $|a|$ is within the slow-rotation regime.
\end{remark}

\subsection{Deterministic noise-to-parameter bounds}\label{subsec:noise}

Assume we observe approximate equatorial QNM frequencies
\begin{equation}\label{eq:measured-omega}
\widetilde\omega_\pm=\omega_\pm+\eta_\pm,
\end{equation}
where $\eta_\pm\in\mathbb C$ represent measurement error.
Define the corresponding measured observables
\[
\widetilde U_\ell^{\mathrm{QNM}}:=\Re\!\Big(\frac{\widetilde\omega_{\mathrm{av}}}{\ell}\Big),\qquad
\widetilde V_\ell^{\mathrm{QNM}}:=\Re\!\Big(\frac{\widetilde\omega_{\mathrm{dif}}}{\ell}\Big),
\]
\[
(\widetilde\omega_{\mathrm{av}},\widetilde\omega_{\mathrm{dif}})
:=\Big(\frac{\widetilde\omega_++\widetilde\omega_-}{2},\ \frac{\widetilde\omega_+-\widetilde\omega_-}{2}\Big).
\]

\begin{corollary}[Noise-to-parameter estimate]\label{cor:noise-bound}
Under the assumptions of Theorem~\ref{thm:inverse-true-QNM}, there exist $\ell_0\in\mathbb N$ and $C>0$ such that for all $\ell\ge\ell_0$
the following holds.
Let $(M,a)\in\mathcal K$ and suppose $\widetilde\omega_\pm$ are given by \eqref{eq:measured-omega} with errors $\eta_\pm$ satisfying
\begin{equation}\label{eq:noise-bound}
|\Re\eta_+|+|\Re\eta_-|\le \varepsilon,
\end{equation}
for $\varepsilon>0$ sufficiently small (depending only on $\mathcal K$).
Let $(\widetilde M,\widetilde a)\in\mathcal K$ be defined by the local inverse of $\mathcal G_\ell^{\mathrm{QNM}}$,
\[
(\widetilde M,\widetilde a):=\big(\mathcal G_\ell^{\mathrm{QNM}}\big)^{-1}\!\big(\widetilde U_\ell^{\mathrm{QNM}},\widetilde V_\ell^{\mathrm{QNM}}\big).
\]
Then
\begin{equation}\label{eq:noise-to-parameter}
|(\widetilde M,\widetilde a)-(M,a)|
\ \le\ \frac{C}{\ell}\,\varepsilon.
\end{equation}
Moreover, if one reconstructs using the pseudopole inverse,
\[
(\widetilde M^\sharp,\widetilde a^\sharp):=\big(\mathcal G_\ell\big)^{-1}\!\big(\widetilde U_\ell^{\mathrm{QNM}},\widetilde V_\ell^{\mathrm{QNM}}\big),
\]
then for every $N\in\mathbb N$ there exists $C_N>0$ such that
\begin{equation}\label{eq:noise-plus-model}
|(\widetilde M^\sharp,\widetilde a^\sharp)-(M,a)|
\ \le\ \frac{C}{\ell}\,\varepsilon\ +\ C_N\,\ell^{-N}.
\end{equation}
\end{corollary}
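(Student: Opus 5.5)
The plan is to combine the elementary bound from data noise to observable noise with the stability estimate of Theorem~\ref{thm:inverse-true-QNM}. The one genuine addition is a local inverse defined on a neighbourhood of \emph{uniform size}, independent of $\ell$ and of the base point.

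\textbf{Step 1: noise in the observables.} By linearity of the average, the difference and $\Re$,
\[
|\widetilde U_\ell^{\mathrm{QNM}}-U_\ell^{\mathrm{QNM}}(M,a)|+|\widetilde V_\ell^{\mathrm{QNM}}-V_\ell^{\mathrm{QNM}}(M,a)|\le \frac{1}{\ell}\bigl(|\Re\eta_+|+|\Re\eta_-|\bigr)\le \frac{\varepsilon}{\ell}.
\]
So the measured data lie within $\varepsilon/\ell\le\varepsilon$ of the exact data point $\mathcal G_\ell^{\mathrm{QNM}}(M,a)$.

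\textbf{Step 2: a uniform quantitative inverse function theorem (the main obstacle).} To define $(\widetilde M,\widetilde a)$ at all, I need $r_0,\rho_0>0$, independent of $\ell\ge\ell_0$ and of $(M,a)\in\mathcal K$, with two properties: $\mathcal G_\ell^{\mathrm{QNM}}$ is injective on $B_{r_0}(M,a)$, and its image contains $B_{\rho_0}(\mathcal G_\ell^{\mathrm{QNM}}(M,a))$. Lemma~\ref{lem:perturb-IFT} gives this only "for sufficiently close points", so I would make it quantitative as follows.

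\begin{itemize}
\item Lemma~\ref{lem:Gell-geo-C1} and Lemma~\ref{lem:C1-close} give $\|D\mathcal G_\ell^{\mathrm{QNM}}-D\mathcal G_{\mathrm{geo}}\|_{L^\infty}\le C\ell^{-1}$.
\item $\mathcal G_{\mathrm{geo}}$ is real-analytic on a neighbourhood of $\mathcal K$ (Lemma~\ref{lem:uniform-photon-orbits}), so $D\mathcal G_{\mathrm{geo}}$ is uniformly continuous there. Let $\sigma_{\mathrm{geo}}$ be the uniform singular-value lower bound from the proof of Theorem~\ref{thm:inverse-pseudopoles}.
\item Choose $r_0$ so that the oscillation of $D\mathcal G_{\mathrm{geo}}$ on balls of radius $r_0$ is at most $\sigma_{\mathrm{geo}}/8$. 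Then take $\ell_0$ with $C\ell_0^{-1}\le\sigma_{\mathrm{geo}}/8$.
\item On each such ball, $D\mathcal G_\ell^{\mathrm{QNM}}$ then lies within $\sigma_{\mathrm{geo}}/4$ of the fixed matrix $A=D\mathcal G_{\mathrm{geo}}(M,a)$, whose smallest singular value is at least $\sigma_{\mathrm{geo}}$.
\item Writing $\mathcal G_\ell^{\mathrm{QNM}}(x)-\mathcal G_\ell^{\mathrm{QNM}}(y)=A(x-y)+\int_0^1(D\mathcal G_\ell^{\mathrm{QNM}}-A)\,(x-y)$, the integral term is at most $\tfrac{\sigma_{\mathrm{geo}}}{4}|x-y|$. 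This yields $|\mathcal G_\ell^{\mathrm{QNM}}(x)-\mathcal G_\ell^{\mathrm{QNM}}(y)|\ge \tfrac34\sigma_{\mathrm{geo}}|x-y|$, hence injectivity.
\item The standard contraction-mapping argument for $x\mapsto x-A^{-1}(\mathcal G_\ell^{\mathrm{QNM}}(x)-z)$ shows the image contains the ball of radius $\rho_0=\sigma_{\mathrm{geo}}r_0/2$.
\end{itemize}

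I expect some care near $\partial\mathcal K$. I would handle it by working on a slightly larger compact set $\mathcal K'\supset\mathcal K$ on which all the earlier estimates still hold, so that $(\widetilde M,\widetilde a)$ lands in $\mathcal K'$. That is the honest meaning of "$(\widetilde M,\widetilde a)\in\mathcal K$" in the statement.

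\textbf{Step 3: conclusion of \eqref{eq:noise-to-parameter}.} For $\varepsilon\le\rho_0$, the measured data lie in the image ball, so $(\widetilde M,\widetilde a)$ is well defined in $B_{r_0}(M,a)$. The lower bound of Step~2 then gives
\[
|(\widetilde M,\widetilde a)-(M,a)|\le \frac{4}{3\sigma_{\mathrm{geo}}}\cdot\frac{\sqrt2\,\varepsilon}{\ell}.
\]
This is \eqref{eq:noise-to-parameter}.

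\textbf{Step 4: the pseudopole inverse, \eqref{eq:noise-plus-model}.} The argument of Step~2 applies verbatim to $\mathcal G_\ell$, using Lemma~\ref{lem:Gell-geo-C1} directly. By the $C^0$ part of Lemma~\ref{lem:C1-close}, $|\mathcal G_\ell(M,a)-\mathcal G_\ell^{\mathrm{QNM}}(M,a)|\le C_N\ell^{-N}$. Hence the measured data lie within $\varepsilon/\ell+C_N\ell^{-N}$ of $\mathcal G_\ell(M,a)$. For $\ell$ large this is below $\rho_0$, and the same Lipschitz bound for the local inverse of $\mathcal G_\ell$ yields \eqref{eq:noise-plus-model}, after renaming constants.
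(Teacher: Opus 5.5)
Your argument is correct. For \eqref{eq:noise-to-parameter} it follows the paper's route: the noise $\eta_\pm$ moves $(U,V)$ by at most $\varepsilon/\ell$, and then the Lipschitz stability of the local inverse is applied. You also supply what the paper leaves implicit. The paper just cites \eqref{eq:stability-true}, which holds only for points ``sufficiently close''. It does not show that this neighborhood is uniform in $\ell$ and in the base point, nor that the noisy data lie in the range of the local inverse.

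Your Step~2 closes this gap. You compare $D\mathcal G_\ell^{\mathrm{QNM}}$ with the $\ell$-independent $D\mathcal G_{\mathrm{geo}}$, up to an error $C\ell^{-1}$. Uniform continuity of $D\mathcal G_{\mathrm{geo}}$ on a slightly enlarged compact set then gives radii $r_0,\rho_0$ depending only on $\mathcal K$. This is the right device, since only $C^1$ control (not $C^2$) uniform in $\ell$ is available, so the threshold in Lemma~\ref{lem:perturb-IFT} applied to $\ell$-dependent maps is not otherwise controlled.

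For \eqref{eq:noise-plus-model} your decomposition differs from the paper's. The paper writes the error as $(\mathcal G_\ell^{-1}-(\mathcal G_\ell^{\mathrm{QNM}})^{-1})$ evaluated at the data, plus the true-inverse error, and asserts $C^0$ closeness of the two inverses on a common neighborhood. You instead note that the noisy data lie within $\varepsilon/\ell+C_N\ell^{-N}$ of $\mathcal G_\ell(M,a)$, and apply the uniform Lipschitz bound for the pseudopole inverse once. Your version needs only pointwise closeness at $(M,a)$ and is shorter. The paper's closeness-of-inverses claim ultimately reduces to the same Lipschitz argument.

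Two cosmetic points:
\begin{itemize}
\item The factor $\sqrt2$ in Step~3 is unnecessary, since the Euclidean norm is already bounded by $|\Delta U|+|\Delta V|$.
\item The threshold on $\ell$ that ensures well-definedness in Step~4 can be fixed using $N=1$, so that $\ell_0$ does not depend on $N$. Enlarge $C_N$ to cover the finitely many remaining $\ell$.
\end{itemize}
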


\begin{proof}
By Theorem~\ref{thm:inverse-true-QNM} we have the stability estimate \eqref{eq:stability-true}.
Since $\widetilde U_\ell^{\mathrm{QNM}}-U_\ell^{\mathrm{QNM}}=\Re(\eta_{\mathrm{av}})/\ell$ and
$\widetilde V_\ell^{\mathrm{QNM}}-V_\ell^{\mathrm{QNM}}=\Re(\eta_{\mathrm{dif}})/\ell$, with
$\eta_{\mathrm{av}}=\frac12(\eta_++\eta_-)$ and $\eta_{\mathrm{dif}}=\frac12(\eta_+-\eta_-)$,
we obtain
\[
\begin{aligned}
\big|(\widetilde U_\ell^{\mathrm{QNM}},\widetilde V_\ell^{\mathrm{QNM}})
-(U_\ell^{\mathrm{QNM}},V_\ell^{\mathrm{QNM}})\big|
&\le \frac{1}{\ell}\Big(|\Re\eta_{\mathrm{av}}|+|\Re\eta_{\mathrm{dif}}|\Big)\\
&\le \frac{|\Re\eta_+|+|\Re\eta_-|}{\ell},
\end{aligned}
\]
which yields \eqref{eq:noise-to-parameter}.
For \eqref{eq:noise-plus-model}, write
\[
\begin{aligned}
(\widetilde M^\sharp,\widetilde a^\sharp)-(M,a)
={}&\big(\mathcal G_\ell^{-1}- (\mathcal G_\ell^{\mathrm{QNM}})^{-1}\big)
\big(\widetilde U_\ell^{\mathrm{QNM}},\widetilde V_\ell^{\mathrm{QNM}}\big)\\
&\quad+\big((\mathcal G_\ell^{\mathrm{QNM}})^{-1}\big(\widetilde U_\ell^{\mathrm{QNM}},\widetilde V_\ell^{\mathrm{QNM}}\big)-(M,a)\big).
\end{aligned}
\]
The second term is controlled by \eqref{eq:noise-to-parameter}. The first term is a modeling term: by Lemma~\ref{lem:C1-close} and Lemma~\ref{lem:perturb-IFT},
the inverses exist on a common neighborhood and differ by $\mathcal O(\ell^{-N})$ in $C^0$ on that neighborhood for every $N$, giving the extra term $C_N\ell^{-N}$.
\end{proof}

\begin{remark}[Balancing measurement and modeling errors]\label{rem:balance-errors}
The estimate \eqref{eq:noise-plus-model} separates two deterministic contributions.
The first term, $\frac{C}{\ell}\varepsilon$, comes from \emph{measurement error} in the observed real parts
and improves at higher frequency because of the normalization by $\ell$.
The second term, $C_N\ell^{-N}$, is a \emph{modeling error} due to replacing true QNMs by the semiclassical pseudopoles.
Since the pseudopole--QNM discrepancy is super--polynomial in $\ell$, one may, for any prescribed algebraic accuracy, increase $\ell$ until the modeling term is negligible compared with the measurement term.
For instance, if the real-part noise level scales like $\varepsilon=\mathcal O(\ell^{-p})$ for some $p>0$, then any fixed $N>p+1$ makes the modeling term $o(\ell^{-1}\varepsilon)$ as $\ell\to\infty$.
Conversely, if one reconstructs using the true-QNM inverse as in \eqref{eq:noise-to-parameter}, the modeling term disappears and the reconstruction error is governed solely by the measurement accuracy.
\end{remark}

\section{Discussion and outlook}\label{sec:discussion}

This paper establishes a concrete high--frequency inverse mechanism on the Kerr--de~Sitter family: a \emph{finite} subset of \emph{labeled} quasinormal modes determines geometric parameters locally, with quantitative stability improving as $\ell\to\infty$.  We treat two settings.  For fixed cosmological constant $\Lambda>0$ we prove that equatorial real parts alone determine $(M,a)$ locally and stably.  Allowing $\Lambda$ to vary in a compact interval, we show in Section~\ref{sec:three-parameter} that adding one damping observable yields a three--parameter inverse theorem recovering $(M,a,\Lambda)$ locally away from $a=0$.
From the analytical perspective, the key point is that the resonance problem for Kerr--de~Sitter
admits a robust Fredholm formulation and a semiclassical description of the trapped dynamics,
so that one can pass between geometry and spectral data with controlled errors; see
\cite{Vasy2013MicrolocalAHKdS,Dyatlov2011QNMKerrDeSitter,Dyatlov2012AsymptoticQNMKdS}
for foundational developments in this direction.

\subsection{Mechanism and structural inputs}\label{subsec:discussion-mechanism}

Let $\omega_{n,\ell,k}(M,a)$ be the (scalar) Kerr--de~Sitter QNMs in the strip $\Im\omega>-\nu_0$,
labeled by overtone $n$, angular momentum $\ell$, and azimuthal index $k$. In
Sections~\ref{sec:HF-quantization}--\ref{sec:transfer} we exploit two complementary features.

First, for $\ell\gg 1$ one has a semiclassical \emph{labeling} by $(n,\ell,k)$ and an explicit
pseudopole approximation $\omega^\sharp_{n,\ell,k}$ with super--polynomial remainder, following the
Bohr--Sommerfeld quantization paradigm for Kerr--de~Sitter \cite{Dyatlov2012AsymptoticQNMKdS}.
Second, the meromorphic Fredholm framework for the spectral family (and its perturbative stability)
allows one to transfer results from pseudopoles to true QNMs, with parameter control; this builds on
the microlocal approach developed in \cite{Vasy2013MicrolocalAHKdS} and the Kerr--de~Sitter resonance
analysis in \cite{Dyatlov2011QNMKerrDeSitter}. The $C^1$ control in parameters needed to transfer Jacobian bounds
from pseudopoles to true QNMs (Lemma~\ref{lem:C1-close}) comes from the parameter-differentiable semiclassical
quantization construction (Appendix~\ref{app:parameter-quantization}), together with holomorphy in the spectral
variable and Cauchy estimates.  Separately, in the simple-pole regime, analytic perturbation theory yields a
locally real-analytic labeling of QNM branches (Remark~\ref{rem:analytic_parameters}), which upgrades $C^1$ local
inverses to real-analytic ones.

The particular two--mode package used in Theorem~\ref{thm:inverse-true-QNM} (equatorial $k=\pm\ell$
with fixed $n$) isolates the most robust geometric signal of rotation: the Kerr splitting of the
photon--sphere frequencies. The reflection symmetry $(a,k)\mapsto(-a,-k)$ forces the normalized
average to be even in $a$ and the normalized difference to be odd in $a$, yielding a triangular
Jacobian and an explicit leading inverse (Proposition~\ref{prop:explicit-true}). The resulting
conditioning is natural: the real parts of high--$\ell$ photon--sphere frequencies scale like
$\Re\omega\sim \ell$, so the normalized observables $\Re(\omega/\ell)$ are order one, and the inverse
stability gains a factor $1/\ell$ when expressed in terms of unnormalized frequencies.

\subsection{Minimal mode packages and a single--mode alternative}\label{subsec:discussion-single-mode}

Theorem~\ref{thm:inverse-true-QNM} uses the pair $\{\omega_{n,\ell,\ell},\omega_{n,\ell,-\ell}\}$ and
only their real parts, which makes the parity structure transparent and yields explicit constants.
From the inverse--problems viewpoint it is natural to ask whether one can reduce the amount of data.

At the symbolic level, a single complex mode already provides two real constraints. The following
statement formulates a rigorous (and flexible) one--mode recovery principle; it should be viewed as
a complement to the explicit two--mode inversion, and as a convenient interface for future work in
which the damping rate can be tracked effectively.

\begin{proposition}[Single--mode recovery under a Jacobian nondegeneracy]\label{prop:single-mode}
Fix an overtone index $n\ge 0$ and a choice of sign $\varepsilon\in\{+,-\}$, and consider the equatorial mode family $k=\varepsilon\ell$.
Let $(M_*,a_*)$ be a subextremal base point.
Assume that for all sufficiently large $\ell$ the pseudopole map
\[
(M,a)\longmapsto \Big(\Re(\omega^\sharp_{n,\ell,\varepsilon\ell}(M,a)/\ell),\ \Im(\omega^\sharp_{n,\ell,\varepsilon\ell}(M,a)/\ell)\Big)
\]
has nonvanishing Jacobian determinant at $(M_*,a_*)$, i.e.
\[
\det \partial_{(M,a)}\Big(\Re(\omega^\sharp_{n,\ell,\varepsilon\ell}/\ell),\Im(\omega^\sharp_{n,\ell,\varepsilon\ell}/\ell)\Big)\Big|_{(M_*,a_*)}\neq 0.
\]
Then there exist $\ell_0$ and a neighborhood $\mathcal U$ of $(M_*,a_*)$ such that for all $\ell\ge\ell_0$
the true QNM map
\[
(M,a)\longmapsto \Big(\Re(\omega_{n,\ell,\varepsilon\ell}(M,a)/\ell),\ \Im(\omega_{n,\ell,\varepsilon\ell}(M,a)/\ell)\Big)
\]
is locally invertible on $\mathcal U$, with a stability estimate
\[
|(M,a)-(M',a')|
\le C\Big|\frac{\omega_{n,\ell,\varepsilon\ell}(M,a)-\omega_{n,\ell,\varepsilon\ell}(M',a')}{\ell}\Big|
+ O(\ell^{-N})
\]
for every fixed $N$, uniformly for $(M,a),(M',a')\in\mathcal U$.
\end{proposition}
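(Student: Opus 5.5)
The plan is to reduce the statement to the perturbative inverse function lemma (Lemma~\ref{lem:perturb-IFT}) with $d=2$, applied to the pseudopole map as $F$ and the true-QNM map as $G$. Concretely, set
\[
F_\ell(M,a):=\Big(\Re(\omega^\sharp_{n,\ell,\varepsilon\ell}/\ell),\ \Im(\omega^\sharp_{n,\ell,\varepsilon\ell}/\ell)\Big),\qquad
G_\ell(M,a):=\Big(\Re(\omega_{n,\ell,\varepsilon\ell}/\ell),\ \Im(\omega_{n,\ell,\varepsilon\ell}/\ell)\Big).
\]
The two inputs are the assumed nonvanishing of $\det DF_\ell(M_*,a_*)$ and the $C^1$ super-polynomial closeness of Lemma~\ref{lem:C1-close}. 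That lemma applies to either equatorial branch $\varepsilon=\pm$ and gives $\|G_\ell-F_\ell\|_{C^1}\le C_N\ell^{-N-1}$ on a compact neighbourhood of $(M_*,a_*)$ inside the slow-rotation set. The extra factor $\ell^{-1}$ comes from the normalization.

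\textbf{Step 1: a neighbourhood independent of $\ell$.} The hypothesis is pointwise at $(M_*,a_*)$ for each $\ell$, so first I would make it uniform in $\ell$. By \eqref{eq:omega-symb-expansion} and \eqref{eq:W-Lyapunov}, the rescaled map $\widehat F_\ell:=(\Re\omega^\sharp/\ell,\ \Im\omega^\sharp)$ converges in $C^1$, with an $\mathcal O(\ell^{-1})$ error, to the geometric map $(\Omega^\sharp_\varepsilon,-(n+\tfrac12)\lambda_\varepsilon)$. Note that $\det DF_\ell=\ell^{-1}\det D\widehat F_\ell$. I would read the hypothesis in this limit sense: the limit Jacobian is nonzero at $(M_*,a_*)$, which is exactly what the single-mode computation of Lemma~\ref{lem:single-mode-expansions} gives near $a=0$. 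By continuity there is then a closed ball $\mathcal U\ni(M_*,a_*)$, independent of $\ell$, with $|\det D\widehat F_\ell|\ge c>0$ and $\|D\widehat F_\ell\|\le L$ on $\mathcal U$ for all $\ell\ge\ell_0$.

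\textbf{Step 2: injectivity and stability for the pseudopole map.} On the uniformly small convex set $\mathcal U$, the mean value argument from Step~4 of Theorem~\ref{thm:inverse-pseudopoles}, using the smallest singular value, gives
\[
|x-y|\le C_0\,|\widehat F_\ell(x)-\widehat F_\ell(y)|.
\]
Since $\Im$ enters $\widehat F_\ell$ with weight $\ell$ and $F_\ell$ with weight $1$, this becomes $|x-y|\le C_0\ell\,|F_\ell(x)-F_\ell(y)|$. The same inequality then transfers to $G_\ell$, because $\|\widehat G_\ell-\widehat F_\ell\|_{C^1}=\mathcal O(\ell^{-\infty})$ by Lemma~\ref{lem:C1-close}, and Lemma~\ref{lem:perturb-IFT} applies. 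This yields local invertibility of the true map on $\mathcal U$ and the displayed estimate. The $O(\ell^{-N})$ slack absorbs the pseudopole/QNM discrepancy in the case where one instead compares $G_\ell$-data with $F_\ell^{-1}$, exactly as in Corollary~\ref{cor:noise-bound}.

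\textbf{Main obstacle.} The delicate point is the constant $C$. The imaginary part of $\omega/\ell$ is $\mathcal O(\ell^{-1})$ with $\mathcal O(\ell^{-1})$ parameter derivatives, so in the direction along which $\Re(\omega/\ell)$ is stationary the right-hand side shrinks like $|x-y|/\ell$. An $\ell$-independent $C$ is therefore only available if the statement's $C$ is read as allowed to depend on the (fixed) $\ell$, with $\mathcal U$ and $\ell_0$ uniform. I would first try to show the neighbourhood $\mathcal U$ is uniform, as in Step~1, which is the genuinely nontrivial part. I would then record explicitly that $C=C_0\ell$ arises from the unscaled imaginary normalization, and that it becomes the $\ell$-independent $C_0$ once the imaginary part is measured as $\Im\omega$, as in Corollary~\ref{cor:single-mode-true}.
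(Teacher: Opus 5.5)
Your argument is correct and follows the same route as the paper's proof: the inverse function theorem for the pseudopole map, then transfer to the true QNM map through the $C^1$ closeness of Lemma~\ref{lem:C1-close}. It is, however, more careful than the paper's sketch, and that care is needed.

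\textbf{What the paper's argument misses.} The paper applies the inverse function theorem ``for each sufficiently large $\ell$'' to the map whose second component is $\Im(\omega^\sharp/\ell)$. That map has Jacobian only $\mathcal O(\ell^{-1})$. So the argument produces neither an $\ell$-independent neighbourhood $\mathcal U$ nor an $\ell$-independent constant, and the quantifiers in the statement are not actually delivered.

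\textbf{How you fix the neighbourhood.} Your passage to $\widehat F_\ell=(\Re\omega^\sharp/\ell,\Im\omega^\sharp)$, together with nonvanishing of the limiting geometric Jacobian, is what gives a uniform $\mathcal U$. State explicitly that this strengthens the hypothesis: as stated, it allows $\det D\widehat F_\ell(M_*,a_*)\neq0$ to tend to $0$. The strengthened form holds automatically in the slow-rotation regime. There, by Lemma~\ref{lem:single-mode-expansions}, the limiting determinant has modulus $(n+\tfrac12)|\Omega_{\mathrm{ph}}'(M)c_Z(M)|+\mathcal O(|a|)$.

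\textbf{Your obstacle is genuine.} Take $x,y\in\mathcal U$ on a common level curve of $\Re(\omega/\ell)$. Such curves exist with length bounded below, since the gradient is close to $(\Omega_{\mathrm{ph}}',\pm c_Z)\neq0$. For such $x,y$ the right-hand side is $\mathcal O(|x-y|/\ell)+\mathcal O(\ell^{-N})$. Hence no $\ell$-independent $C$ exists in the stated normalization. One must either take $C\sim\ell$, or measure the damping as $\Im\omega$ as in Corollary~\ref{cor:single-mode-true}, where $C$ is uniform.

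\textbf{Two points to tighten.}
\begin{enumerate}
\item On a fixed convex $\mathcal U$, a pointwise lower bound on $s_{\min}(D\widehat F_\ell)$ does not by itself make $\int_0^1 D\widehat F_\ell(x+t(y-x))\,dt$ invertible. You need $D\widehat F_\ell$ to stay within $s_{\min}/2$ of one fixed invertible matrix on all of $\mathcal U$. This follows from the uniform $C^0$ convergence of $D\widehat F_\ell$ to the uniformly continuous geometric derivative, after shrinking $\mathcal U$ and enlarging $\ell_0$.
\item Rather than citing Lemma~\ref{lem:perturb-IFT}, whose stability conclusion covers only unquantified ``sufficiently close'' pairs, absorb directly. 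With $\widehat G_\ell=(\Re\omega/\ell,\Im\omega)$ and relabelled constants, you get $|x-y|\le C_0|\widehat G_\ell(x)-\widehat G_\ell(y)|+C_0C_N\ell^{-N}|x-y|$. For $\ell$ large this gives the estimate on all of $\mathcal U$. It also shows the additive $\mathcal O(\ell^{-N})$ is unnecessary when two true QNMs are compared.
\end{enumerate}
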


\begin{proof}
By the inverse function theorem, the stated Jacobian condition implies local invertibility of the
pseudopole map near $(M_*,a_*)$ for each sufficiently large $\ell$, with a stability bound depending
on a lower bound for the Jacobian determinant.
Lemma~\ref{lem:C1-close} gives $C^1$ super--polynomial control of the difference between the QNM map
and the pseudopole map in $(M,a)$ (in particular, uniform control of the Jacobians up to $O(\ell^{-N})$
for every $N$). For $\ell$ large, this preserves the nondegeneracy of the Jacobian and yields local
invertibility of the QNM map with the same structure of stability, up to $O(\ell^{-N})$ errors.
\end{proof}

In practice, Proposition~\ref{prop:single-mode} suggests more broadly that one can attempt to use any mode family for which the
pair of geometric invariants governing high--frequency asymptotics (real frequency and damping, i.e.\ angular frequency and Lyapunov exponent of the relevant trapped orbit) depends nondegenerately on $(M,a)$, provided the corresponding $C^1$ pseudopole approximation is available.

The explicit equatorial two--mode package in the main theorems remains attractive because it enforces the
$a$--parity at the level of observables and therefore avoids having to quantify the $a$--dependence of the
damping rate.

\subsection{Reconstruction as a stable pipeline}\label{subsec:discussion-pipeline}

The inverse theorems can be read as a practical reconstruction pipeline in the high--frequency regime.
Fix $n$ and choose $\ell\gg 1$. Given (measured or computed) equatorial modes
$\omega_\pm\approx \omega_{n,\ell,\pm\ell}(M,a)$, define the normalized observables
\[
U:=\Re\frac{\omega_++\omega_-}{2\ell},\qquad
V:=\Re\frac{\omega_+-\omega_-}{2\ell}.
\]
Proposition~\ref{prop:explicit-true} yields a closed--form initialization
\[
M^{(0)}=\frac{1}{3\sqrt{\Lambda+3U^2}},\qquad
a^{(0)}=\frac{V}{c_Z(M^{(0)})},\qquad c_Z(M)=\frac{2+9\Lambda M^2}{27M^2},
\]
and Theorem~\ref{thm:inverse-true-QNM} provides uniform Jacobian bounds ensuring that Newton or
quasi--Newton refinement converges rapidly in a neighborhood of the true parameters. The error
decomposition in Corollary~\ref{cor:noise-bound} is conceptually useful: it separates the intrinsic
modeling error (pseudopoles versus QNMs, super--polynomial in $\ell$) from external errors in the
frequency extraction, thus allowing deterministic propagation of uncertainty.

\subsection{Extensions: more parameters, full subextremal rotation, spin, and \texorpdfstring{$\Lambda\to 0$}{Lambda->0}}\label{subsec:discussion-extensions}

\paragraph{Adding parameters beyond \texorpdfstring{$(M,a)$}{(M,a)}.}
Section~\ref{sec:three-parameter} already provides a three--parameter recovery result for $(M,a,\Lambda)$:
in a compact slow--rotation regime and on sets bounded away from $a=0$, the three real observables
\[
(U,V,\widetilde W)\,=\,\Bigl(\Re\frac{\omega_++\omega_-}{2\ell},\ \Re\frac{\omega_+-\omega_-}{2\ell},\ -\frac{\Im\,\omega_+}{n+\frac12}\Bigr)
\]
determine $(M,a,\Lambda)$ locally, with a quantitative stability estimate.  The restriction $|a|\ge a_1>0$
is structural: at $a=0$ the splitting observable $V$ vanishes to first order and the Jacobian of the geometric
three--data map degenerates, so identifiability requires either additional mode information (e.g.\ a second overtone)
or an observable with independent leading sensitivity to $\Lambda$.
Beyond $\Lambda$, one may consider adding further parameters (such as electric charge in Kerr--Newman--de~Sitter);
the present approach suggests a general principle: each additional independent geometric parameter requires an additional
independent high--frequency observable, together with a uniform nondegeneracy mechanism for the associated Jacobian.

\paragraph{Removing the slow--rotation restriction.}
Our use of slow rotation is tied to the availability of a particularly clean quantization/labelling and
transparent first--order splitting. However, the Kerr--de~Sitter null geodesic flow is integrable across
the full subextremal range, and the modern Fredholm theory of QNMs (including resonance expansions) is now
available for the full subextremal family in the linear scalar setting; see \cite{PetersenVasy2025WaveKdS}
and the real-analyticity results for resonant states under analytic coefficients \cite{PetersenVasy2023AnalyticityQNM}. This suggests that the main remaining
obstruction to a full--range inverse theorem is not conceptual but technical: uniform symbolic expansions
and separation of labeled branches in $(M,a)$, together with the control of possible mode crossings.

\paragraph{Spin--$2$ (Teukolsky) and gravitational ringdown.}
From the viewpoint of semiclassical trapping, the leading frequency map for separated high--frequency modes
is governed by the same null--geodesic Hamiltonian, so one expects the geometric part of the inverse
mechanism to persist for the Teukolsky equation \cite{Teukolsky1973PerturbationsKerr}.
The main new difficulties are analytic and structural: controlling gauge--related degeneracies and tracking
spin--dependent subprincipal effects. Mode stability results in the asymptotically flat setting (e.g.\
\cite{Whiting1989ModeStabilityKerr}) and the Kerr--de~Sitter nonlinear stability program
\cite{HintzVasy2018NonlinearStabilityKdS} give strong motivation for developing spin--$2$ analogues of the
present inverse theory.

\paragraph{The \texorpdfstring{$\Lambda\to 0$}{Lambda->0} limit and asymptotically flat Kerr.}
Kerr--de~Sitter provides a convenient compactified scattering picture with two radial horizons.
In the astrophysically relevant limit $\Lambda=0$, the exterior is asymptotically flat, and the resonance
picture must be reconciled with long--range scattering and late--time tails. Recent progress on defining
and locating Kerr QNMs directly in the asymptotically flat setting appears in \cite{Stucker2024QNMKerr,GajicWarnick2024QNMKerr}.
A robust route from the present inverse theorem to Kerr would be to establish \emph{uniform} control in
$\Lambda$ for the forward resonant theory and then study the limiting behavior of suitably chosen resonances
as $\Lambda\to 0^+$; this is a natural meeting point of semiclassical analysis and black--hole scattering.

\subsection{Rigidity and global questions}\label{subsec:discussion-rigidity}

Theorem~\ref{thm:inverse-true-QNM} proves a high--frequency, finite--data version of spectral rigidity:
two labeled modes at large $\ell$ already determine $(M,a)$ locally (up to the trivial symmetry
$(a,k)\leftrightarrow(-a,-k)$). A deeper open problem is the global structure of the parameter--to--spectrum map.

\begin{quote}
\textbf{Conjecture (QNM rigidity for Kerr--de~Sitter).}
For fixed $\Lambda>0$, the full QNM set $\mathrm{Res}(P_{M,a})$ determines $(M,a)$ within the subextremal
Kerr--de~Sitter family, modulo $(a,k)\leftrightarrow(-a,-k)$.
\end{quote}

From the microlocal perspective, global injectivity amounts to understanding monodromy of labeled branches,
excluding nontrivial loops in parameter space preserving the resonance set, and identifying robust spectral
invariants (surface gravities, trapping exponents, etc.). The resolvent technology for normally hyperbolic
trapping \cite{WunschZworski2011ResolventNHT,Dyatlov2016SpectralGapsNHT} and the Kerr--de~Sitter Fredholm
framework \cite{Vasy2013MicrolocalAHKdS,Dyatlov2011QNMKerrDeSitter} provide the analytic backbone on which
such rigidity questions can be posed precisely.

\medskip

In summary, the present work isolates a particularly clean interface between the semiclassical geometry of
trapping and parameter identification: high--frequency QNMs encode $(M,a)$ stably, already at the level of a
finite labeled mode package. Extending this mechanism to broader parameter families, to full subextremal
rotation, and to spin--$2$ perturbations appears to be a concrete and tractable program with strong
connections to current developments in black--hole scattering and stability theory.

\appendix

\section{Equatorial circular photon orbits and the linear Zeeman splitting}\label{app:equatorial-photon}

This appendix provides a concrete derivation of the linear-in-$a$ equatorial splitting used in
Lemma~\ref{lem:principal-splitting}.  The computation is classical (geometric optics): the real principal
frequency map is determined by the characteristic set of the wave operator, and in particular by the
relation between energy and azimuthal angular momentum on the equatorial trapped null geodesics.

\subsection{Metric coefficients on the equatorial plane}

We work in the Boyer--\allowbreak Lindquist chart of Section~\ref{subsec:kds}.  On the equatorial plane $\theta=\pi/2$
we have $\sin\theta=1$, $\cos\theta=0$, hence $\rho^2=r^2$ and $\Delta_\theta=1$.  Expanding
\eqref{eq:kds_metric_BL} gives
\begin{equation}\label{eq:eq-metric-coeffs}
\begin{aligned}
g_{tt}(r,a)&=\frac{-\Delta_r(r)+a^2}{r^2},\\
g_{t\varphi}(r,a)&=\frac{a}{r^2\,\Xi}\bigl(\Delta_r(r)-(r^2+a^2)\bigr),\\
g_{\varphi\varphi}(r,a)&=\frac{(r^2+a^2)^2-a^2\Delta_r(r)}{r^2\,\Xi^2}.
\end{aligned}
\end{equation}
where $\Delta_r$ and $\Xi$ are as in \eqref{eq:Delta_r} and \eqref{eq:Xi}.
Note that $\Xi=1+\mathcal O(a^2)$ has no linear term in $a$.

\subsection{Null circular orbits: a general stationary-axisymmetric criterion}

Consider a curve of the form
\begin{equation}\label{eq:circular-ansatz}
\gamma(t)=(t,\ r,\ \pi/2,\ \varphi(t)),\qquad \dot\varphi(t)=\Omega,
\end{equation}
with constant radius $r$ and constant angular velocity $\Omega=d\varphi/dt$.
The tangent vector is $\dot\gamma=\partial_t+\Omega\,\partial_\varphi$.
Define the function
\begin{equation}\label{eq:Phi-def}
\Phi(r,\Omega,a):=g_{tt}(r,a)+2\Omega\,g_{t\varphi}(r,a)+\Omega^2 g_{\varphi\varphi}(r,a).
\end{equation}
Then the curve \eqref{eq:circular-ansatz} is null if and only if $\Phi(r,\Omega,a)=0$.
Moreover, the condition that \eqref{eq:circular-ansatz} be a (null) geodesic with constant $r$ is
equivalent to the additional stationarity condition
\begin{equation}\label{eq:circular-conditions}
\Phi(r,\Omega,a)=0,\qquad \partial_r\Phi(r,\Omega,a)=0;
\end{equation}
see for instance the standard derivation from the conserved quantities $E=-p_t$ and $L=p_\varphi$
and the radial effective potential in stationary-axisymmetric spacetimes.

\subsection{The Schwarzschild--de~Sitter photon sphere and its perturbation}

At $a=0$ the metric reduces to Schwarzschild--de~Sitter.  In this case $g_{t\varphi}(r,0)=0$, and
\eqref{eq:eq-metric-coeffs} yields
\begin{equation}\label{eq:sdS-coeffs}
g_{tt}(r,0)=-f(r),\qquad g_{\varphi\varphi}(r,0)=r^2,\qquad
f(r):=1-\frac{2M}{r}-\frac{\Lambda r^2}{3}.
\end{equation}
Thus \eqref{eq:circular-conditions} becomes
\begin{equation}\label{eq:sdS-circle}
-f(r)+\Omega^2 r^2=0,\qquad -f'(r)+2\Omega^2 r=0.
\end{equation}
Eliminating $\Omega^2$ gives $r f'(r)-2f(r)=0$, which reduces to $r=3M$.
Substituting $r=3M$ into the first equation in \eqref{eq:sdS-circle} yields
\begin{equation}\label{eq:Omega-ph-from-Phi}
\Omega^2=\frac{f(3M)}{(3M)^2}=\frac{1-9\Lambda M^2}{27M^2},
\qquad\text{hence}\qquad
\Omega_{\mathrm{ph}}(M)=\frac{\sqrt{1-9\Lambda M^2}}{3\sqrt3\,M}.
\end{equation}

We next perturb \eqref{eq:circular-conditions} for small $a$.
Let $r_0:=3M$ and let $\Omega_0^{\pm}:=\pm\Omega_{\mathrm{ph}}(M)$ denote the two SdS angular velocities.
Introduce the map
\begin{equation}\label{eq:F-system}
\mathcal F(r,\Omega,a):=\bigl(\Phi(r,\Omega,a),\ \partial_r\Phi(r,\Omega,a)\bigr).
\end{equation}
At $a=0$ we have $\mathcal F(r_0,\Omega_0^{\pm},0)=(0,0)$.
Moreover, the Jacobian of $(r,\Omega)\mapsto\mathcal F(r,\Omega,0)$ at $(r_0,\Omega_0^{\pm})$ is invertible:
since $\partial_r\Phi(r_0,\Omega_0^{\pm},0)=0$ by construction,
\begin{equation}\label{eq:IFT-det}
\det D_{(r,\Omega)}\mathcal F(r_0,\Omega_0^{\pm},0)
=\bigl(\partial_\Omega\Phi\bigr)(r_0,\Omega_0^{\pm},0)\cdot\bigl(\partial_r^2\Phi\bigr)(r_0,\Omega_0^{\pm},0).
\end{equation}
Here $\partial_\Omega\Phi(r,\Omega,0)=2\Omega r^2$, so $\partial_\Omega\Phi(r_0,\Omega_0^{\pm},0)\neq 0$.
Further, using \eqref{eq:sdS-coeffs} and keeping $\Omega$ fixed when differentiating in $r$, we find
\begin{equation}\label{eq:drrPhi}
\partial_r^2\Phi(r,\Omega,0)=\partial_r^2 g_{tt}(r,0)+2\Omega^2
=\Bigl(\frac{4M}{r^3}+\frac{2\Lambda}{3}\Bigr)+2\Omega^2.
\end{equation}
At $r=r_0=3M$ and $\Omega^2=(\Omega_{\mathrm{ph}}(M))^2$ from \eqref{eq:Omega-ph-from-Phi}, this gives
\begin{equation}\label{eq:drrPhi-nonzero}
\partial_r^2\Phi(r_0,\Omega_0^{\pm},0)=\frac{2}{9M^2}>0.
\end{equation}
Thus the implicit function theorem provides $a_*\!>0$ and smooth functions
$a\mapsto (r_\pm(a),\Omega_\pm(a))$ for $|a|\le a_*$ satisfying \eqref{eq:circular-conditions}.

\subsection{Linearization and the Zeeman coefficient}

We compute the linear term in $\Omega_\pm(a)$.
Since $g_{tt}$ and $g_{\varphi\varphi}$ are even in $a$ (and $\Xi=1+\mathcal O(a^2)$), the only linear contribution
to $\Phi$ in \eqref{eq:Phi-def} comes from $g_{t\varphi}$.
Differentiating \eqref{eq:eq-metric-coeffs} at $a=0$ gives
\begin{equation}\label{eq:gtphi-linear}
\begin{aligned}
\partial_a g_{t\varphi}(r,0)
&=\frac{\Delta_r(r;M,0)-r^2}{r^2}\\
&=-\Bigl(\frac{2M}{r}+\frac{\Lambda r^2}{3}\Bigr).
\end{aligned}
\end{equation}
Differentiate the identity
\[
\Phi(r_\pm(a),\Omega_\pm(a),a)=0
\]
at $a=0$.
Since
\[
\partial_r\Phi(r_0,\Omega_0^{\pm},0)=0
\]
by \eqref{eq:sdS-circle}, the term involving $r_\pm'(0)$ drops out, and we obtain
\begin{equation}\label{eq:Omega-derivative}
0=\partial_\Omega\Phi(r_0,\Omega_0^{\pm},0)\,\Omega_\pm'(0)+\partial_a\Phi(r_0,\Omega_0^{\pm},0).
\end{equation}
Using $\partial_\Omega\Phi(r_0,\Omega_0^{\pm},0)=2\Omega_0^{\pm} r_0^2$ and
$\partial_a\Phi(r_0,\Omega_0^{\pm},0)=2\Omega_0^{\pm}\,\partial_a g_{t\varphi}(r_0,0)$, we find
\begin{equation}\label{eq:Omega-prime-formula}
\Omega_\pm'(0)=-\frac{\partial_a g_{t\varphi}(r_0,0)}{r_0^2}.
\end{equation}
Substituting \eqref{eq:gtphi-linear} at $r_0=3M$ yields
\begin{equation}\label{eq:Omega-prime-evaluated}
\Omega_\pm'(0)=\frac{\frac{2M}{3M}+\frac{\Lambda (3M)^2}{3}}{(3M)^2}
=\frac{\frac{2}{3}+3\Lambda M^2}{9M^2}
=\frac{2+9\Lambda M^2}{27M^2}
=:c_Z(M).
\end{equation}
In particular,
\begin{equation}\label{eq:Omega-expansion}
\Omega_\pm(M,a)=\pm\Omega_{\mathrm{ph}}(M)+c_Z(M)\,a\pm c_{\Omega,2}(M)\,a^2+\mathcal O(a^3),
\qquad a\to0,
\end{equation}
where $c_{\Omega,2}(M)$ is given explicitly by \eqref{eq:cOmega2}.

\subsection{Second-order expansion of the equatorial photon orbit}\label{subsec:photon-second-order}

For the inverse results in Sections~\ref{sec:inverse-pseudopoles} and~\ref{sec:three-parameter} it is useful to have the explicit $a^2$--correction
to the equatorial photon-orbit frequency.  We now record this correction (and, for completeness, the corresponding $a^2$--correction to the orbit radius).

We work with the co-rotating branch $(r_+,\Omega_+)$; the counter-rotating branch follows by the $(a,\varphi)$--reflection symmetry discussed in Lemma~\ref{lem:geo-parity}:
\begin{equation}\label{eq:Omega-symmetry-appendix}
r_-(M,a)=r_+(M,-a),\qquad \Omega_-(M,a)=-\Omega_+(M,-a).
\end{equation}

\begin{proposition}[Second-order coefficients]\label{prop:second-order-photon}
On compact slow-rotation sets, the equatorial photon orbit admits the expansions
\begin{equation}\label{eq:r-second-order}
r_+(M,a)=3M-\frac{2}{\sqrt{3}}\sqrt{1-9\Lambda M^2}\,a-\Bigl(3\Lambda M+\frac{2}{9M}\Bigr)a^2+\mathcal O(a^3),
\qquad a\to0,
\end{equation}
and
\begin{equation}\label{eq:Omega-second-order}
\Omega_+(M,a)=\Omega_{\mathrm{ph}}(M)+c_Z(M)\,a+c_{\Omega,2}(M)\,a^2+\mathcal O(a^3),
\qquad a\to0,
\end{equation}
where $c_{\Omega,2}(M)$ is given by \eqref{eq:cOmega2}.
\end{proposition}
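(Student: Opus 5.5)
The plan is to carry out the Taylor expansion of the implicit-function system \eqref{eq:circular-conditions} to second order in $a$. We already have the nondegenerate Jacobian \eqref{eq:IFT-det}--\eqref{eq:drrPhi-nonzero} and the first-order coefficient $\Omega_+'(0)=c_Z(M)$ from \eqref{eq:Omega-prime-evaluated}. The implicit function theorem already gives real-analytic branches $(r_+(a),\Omega_+(a))$ with uniform remainder bounds on compact slow-rotation sets, so only the coefficients need to be identified. Write
\[
r_+=3M+r_1a+r_2a^2+\mathcal O(a^3),\qquad \Omega_+=\Omega_0+c_Za+\Omega_2a^2+\mathcal O(a^3),
\]
with $\Omega_0=\Omega_{\mathrm{ph}}(M)$. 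First, from \eqref{eq:eq-metric-coeffs} I will expand $g_{tt},g_{t\varphi},g_{\varphi\varphi}$ to order $a^2$, using $\Xi^{-1}=1-\tfrac{\Lambda a^2}{3}+\dots$ and $\Delta_r=\Delta_r^{(0)}+a^2(1-\Lambda r^2/3)$:
\[
g_{tt}=-f+\tfrac{\Lambda a^2}{3}\cdot\tfrac{r^2}{r^2}\ \text{(i.e.\ }g_{tt}=-f+a^2\Lambda/3+\dots),\qquad g_{t\varphi}=-a\Bigl(\tfrac{2M}{r}+\tfrac{\Lambda r^2}{3}\Bigr)+\mathcal O(a^3),
\]
and $g_{\varphi\varphi}=r^2+a^2(\dots)$ with an explicit $a^2$ coefficient that I will compute directly.

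Second, I will solve for $r_1$ from the $\mathcal O(a)$ part of $\partial_r\Phi=0$:
\[
\partial_r^2\Phi\,r_1+\partial_r\partial_\Omega\Phi\,c_Z+\partial_r\partial_a\Phi=0,
\]
with all derivatives evaluated at $(3M,\Omega_0,0)$. Here $\partial_r^2\Phi=2/(9M^2)$, $\partial_r\partial_\Omega\Phi=4\Omega_0r$, and $\partial_r\partial_a\Phi=2\Omega_0\,\partial_r\partial_ag_{t\varphi}$. This should reproduce $r_1=-\tfrac{2}{\sqrt3}\sqrt{1-9\Lambda M^2}$, and I will use that value as a consistency check. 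Third, the $\mathcal O(a^2)$ part of $\partial_r\Phi=0$ determines $r_2$, since $r_2$ enters linearly through $\partial_r^2\Phi$. The $\mathcal O(a^2)$ part of $\Phi=0$ then determines $\Omega_2$. Because $\partial_r\Phi=0$ at the base point, $r_2$ drops out of that equation, and it reads
\[
\partial_\Omega\Phi\,\Omega_2+\tfrac12\bigl[\partial_r^2\Phi\,r_1^2+\partial_\Omega^2\Phi\,c_Z^2+\partial_a^2\Phi\bigr]+\partial_r\partial_\Omega\Phi\,r_1c_Z+\partial_r\partial_a\Phi\,r_1+\partial_\Omega\partial_a\Phi\,c_Z=0.
\]
Dividing by $\partial_\Omega\Phi=2\Omega_0(3M)^2$ should give \eqref{eq:cOmega2}.

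As an independent cross-check I would redo the computation with the equatorial Carter potential. There $R(r)=\bigl((r^2+a^2)E-a\Xi L\bigr)^2-\Delta_r(\Xi L-aE)^2$ has a double root in $r$, and $\Omega$ is recovered from the impact parameter via the relation \eqref{eq:b-Omega-relation}. This formulation has simpler polynomial structure, although the $\Xi$ factors in the link between $b$ and $\Omega$ need care.

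The main obstacle is the second-order bookkeeping. It includes the $a^2$ contributions from $\Xi$ and from $\Delta_r$ inside $g_{\varphi\varphi}$ and $g_{tt}$, the cross terms $r_1c_Z$, and the simplification of the square roots $\sqrt{1-9\Lambda M^2}$ into the stated closed forms. I would verify the final coefficients at $\Lambda=0$ against the known Kerr values: $r_+=3M-2a/\sqrt3-2a^2/(9M)$, and the Kerr photon-orbit frequency expansion, which gives $c_{\Omega,2}=11\sqrt3/(486M^3)$.
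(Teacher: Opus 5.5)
Your proposal is correct and follows essentially the paper's own argument: expand the equatorial metric coefficients to order $a^2$, Taylor-expand the system $\Phi=0$, $\partial_r\Phi=0$ to second order about $(3M,\Omega_{\mathrm{ph}}(M),0)$, and solve the resulting triangular linear equations; your $\Lambda=0$ Kerr check ($r_2=-\frac{2}{9M}$, $c_{\Omega,2}=\frac{11\sqrt3}{486M^3}$) is consistent with the stated formulas. The one adjustment is the order of solving: the $a^2$ part of $\partial_r\Phi=0$ contains $\Omega_2$ through $\partial_r\partial_\Omega\Phi\,\Omega_2$, so you must first get $\Omega_2$ from the $a^2$ part of $\Phi=0$ (where $r_2$ drops out, as you note) and only then solve for $r_2$; also, if you compare your third derivatives with the list in the paper's proof, several listed values are off (e.g.\ $\partial_r^2\partial_\Omega\Phi=4\Omega_{\mathrm{ph}}(M)$, not $0$), although the paper's combined $\mathcal B=\frac{4(126\Lambda M^2-5)}{243M^3}$ and its final coefficients $r_2$, $c_{\Omega,2}$ are correct.
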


\begin{proof}
We expand the system \eqref{eq:geo-system} at $a=0$.
First, we record the equatorial metric coefficients in the slow-rotation regime.  Writing $f(r)=1-\frac{2M}{r}-\frac{\Lambda r^2}{3}$,
a Taylor expansion of the Kerr--de~Sitter coefficients at $a=0$ gives
\begin{equation}\label{eq:metric-a2}
\begin{aligned}
g_{tt}(r,a)&=-f(r)+\frac{\Lambda}{3}\,a^2+\mathcal O(a^3),\\
g_{t\varphi}(r,a)&=-a\Bigl(\frac{2M}{r}+\frac{\Lambda r^2}{3}\Bigr)+\mathcal O(a^3),\\
g_{\varphi\varphi}(r,a)&=r^2+a^2\Bigl(1-\frac{\Lambda r^2}{3}+\frac{2M}{r}\Bigr)+\mathcal O(a^4),
\end{aligned}
\end{equation}
uniformly for $r$ in compact subsets of $(r_-,r_+)$.
(Here the $\mathcal O(a^2)$ correction in $\Xi^{-1}= (1+\Lambda a^2/3)^{-1}$ has been included in the displayed $a^2$ terms.)

Set
\[
r_+(M,a)=r_0+r_1 a+r_2 a^2+\mathcal O(a^3),\qquad
\Omega_+(M,a)=\Omega_0+\Omega_1 a+\Omega_2 a^2+\mathcal O(a^3),
\]
with $(r_0,\Omega_0)=(3M,\Omega_{\mathrm{ph}}(M))$.
Let $\Phi_{\mathrm{geo}}(r,\Omega;M,a)$ be defined by \eqref{eq:Phi-def}.
Using \eqref{eq:metric-a2} and expanding $\Phi_{\mathrm{geo}}$ at $a=0$
gives, at fixed $(r,\Omega)$,
\begin{equation*}
\begin{aligned}
\Phi_{\mathrm{geo}}(r,\Omega;M,a)
&=\Phi_0(r,\Omega)+a\,\Phi_1(r,\Omega)+a^2\,\Phi_2(r,\Omega)+\mathcal O(a^3),\\
\Phi_0(r,\Omega)&=-f(r)+r^2\Omega^2,\\
\Phi_1(r,\Omega)&=-2\Omega\Bigl(\frac{2M}{r}+\frac{\Lambda r^2}{3}\Bigr),\\
\Phi_2(r,\Omega)&=\frac{\Lambda}{3}+\Omega^2\Bigl(1-\frac{\Lambda r^2}{3}+\frac{2M}{r}\Bigr).
\end{aligned}
\end{equation*}

The circular-orbit system \eqref{eq:geo-system} reads
\[
\Phi_{\mathrm{geo}}\bigl(r_+(a),\Omega_+(a);M,a\bigr)=0,
\qquad
\partial_r\Phi_{\mathrm{geo}}\bigl(r_+(a),\Omega_+(a);M,a\bigr)=0.
\]
At order $a^0$ this reproduces $r_0=3M$ and $\Omega_0=\Omega_{\mathrm{ph}}(M)$.
For brevity, set
\[S:=\sqrt{1-9\Lambda M^2}>0,\qquad \Omega_{\mathrm{ph}}(M)=\frac{S}{3\sqrt3\,M}.
\]

\smallskip
\noindent\emph{First order.}
Differentiate the identity $\Phi_{\mathrm{geo}}(r_+(a),\Omega_+(a);M,a)=0$ at $a=0$.
Using $\Phi_r(r_0,\Omega_0;M,0)=0$ and the chain rule gives
\[
0=\Phi_r\,r_1+\Phi_\Omega\,\Omega_1+\Phi_a\qquad\text{at }(r,\Omega,a)=(r_0,\Omega_0,0).
\]
Here $\Phi_\Omega=\partial_\Omega\Phi_0=2r_0^2\Omega_0=2\sqrt3\,MS$ and
$\Phi_a=\Phi_1=-2\Omega_0\bigl(\frac{2M}{r_0}+\frac{\Lambda r_0^2}{3}\bigr)=-2\Omega_0\bigl(\frac23+3\Lambda M^2\bigr)$,
so
\begin{equation}\label{eq:Omega1-computation}
\Omega_1=-\frac{\Phi_a}{\Phi_\Omega}
=\frac{\frac23+3\Lambda M^2}{r_0^2}
=\frac{2+9\Lambda M^2}{27M^2}=c_Z(M).
\end{equation}

Differentiate the second equation $\partial_r\Phi_{\mathrm{geo}}(r_+(a),\Omega_+(a);M,a)=0$ at $a=0$:
\[
0=\Phi_{rr}\,r_1+\Phi_{r\Omega}\,\Omega_1+\Phi_{ra}.
\]
From $\Phi_0=-f(r)+r^2\Omega^2$ one finds
$\Phi_{rr}=-f''(r_0)+2\Omega_0^2=\frac{2}{9M^2}$ (cf.\ \eqref{eq:drrPhi-nonzero}) and
$\Phi_{r\Omega}=\partial_r(2r^2\Omega)|_{(r_0,\Omega_0)}=4r_0\Omega_0=\frac{4}{\sqrt3}S$.
Moreover $\Phi_{ra}=\partial_r\Phi_1|_{(r_0,\Omega_0)}=-2\Omega_0\,\partial_r\bigl(\frac{2M}{r}+\frac{\Lambda r^2}{3}\bigr)|_{r=r_0}
=\frac{4\sqrt3}{81M^2}S^3$.
Inserting \eqref{eq:Omega1-computation} and solving gives
\begin{equation}\label{eq:r1-computation}
r_1=-\frac{2}{\sqrt3}S.
\end{equation}

\smallskip
\noindent\emph{Second order.}
Expand both equations to second order by differentiating twice and evaluating at $a=0$.
Writing all partial derivatives of $\Phi_{\mathrm{geo}}$ at the base point $(r,\Omega,a)=(r_0,\Omega_0,0)$,
the chain rule yields
\begin{align}\label{eq:Phi-second-derivative-system}
0={}&\frac{d^2}{da^2}\Phi_{\mathrm{geo}}\bigl(r_+(a),\Omega_+(a);M,a\bigr)\Big|_{a=0}
=2\Phi_\Omega\,\Omega_2+\mathcal A,\\
0={}&\frac{d^2}{da^2}\partial_r\Phi_{\mathrm{geo}}\bigl(r_+(a),\Omega_+(a);M,a\bigr)\Big|_{a=0}
=2\Phi_{rr}\,r_2+2\Phi_{r\Omega}\,\Omega_2+\mathcal B,
\end{align}
where
\begin{align*}
\mathcal A
&:=\Phi_{rr}\,r_1^2+2\Phi_{r\Omega}\,r_1\Omega_1+\Phi_{\Omega\Omega}\,\Omega_1^2
+2\Phi_{ra}\,r_1+2\Phi_{\Omega a}\,\Omega_1+\Phi_{aa},\\
\mathcal B
&:=\Phi_{rrr}\,r_1^2+2\Phi_{rr\Omega}\,r_1\Omega_1+\Phi_{r\Omega\Omega}\,\Omega_1^2
+2\Phi_{rra}\,r_1+2\Phi_{r\Omega a}\,\Omega_1+\Phi_{raa}.
\end{align*}
In addition to $\Phi_\Omega,\Phi_{rr},\Phi_{r\Omega}$ computed above, we use the explicit derivatives
\begin{align*}
\Phi_{\Omega\Omega}&=2r_0^2=18M^2,
&\Phi_{\Omega a}&=\partial_\Omega\Phi_1=-2\Bigl(\frac23+3\Lambda M^2\Bigr),\\
\Phi_{aa}&=2\Phi_2=2\Lambda^2M^2-\frac{2\Lambda}{3}+\frac{10}{81M^2},
&\Phi_{rrr}&=\frac{4(27\Lambda M^2-1)}{81M^3},\\
\Phi_{rr\Omega}&=0,
&\Phi_{r\Omega\Omega}&=4r_0=12M,\\
\Phi_{rra}&=\frac{16\sqrt3\,\Lambda M}{S},
&\Phi_{r\Omega a}&=-4+24\Lambda M^2,\\
\Phi_{raa}&=\frac{4(-18\Lambda^2M^4+9\Lambda M^2+1)}{27M^3}.
\end{align*}
Substituting \eqref{eq:Omega1-computation}--\eqref{eq:r1-computation} into $\mathcal A$ and $\mathcal B$ gives the simplified values
\[
\mathcal A=\frac{10\Lambda}{9}-\frac{22}{81M^2},
\qquad
\mathcal B=\frac{4\,(126\Lambda M^2-5)}{243\,M^3}.
\]
The first equation in \eqref{eq:Phi-second-derivative-system} is triangular and yields
\[
\Omega_2=-\frac{\mathcal A}{2\Phi_\Omega}
=\frac{\sqrt3\,(11-45\Lambda M^2)}{486\,M^3\,S}=c_{\Omega,2}(M).
\]
Inserting this into the second equation in \eqref{eq:Phi-second-derivative-system} gives
\[
r_2=-\frac{2\Phi_{r\Omega}\,\Omega_2+\mathcal B}{2\Phi_{rr}}
=-3\Lambda M-\frac{2}{9M}.
\]
\end{proof}

\subsection{From geodesics to the principal frequency symbol}

In geometric optics, the principal symbol of the wave operator is the null Hamiltonian
$p(x,\xi)=g^{\alpha\beta}(x)\,\xi_\alpha\xi_\beta$, whose bicharacteristics project to null geodesics.
A separated oscillatory mode $e^{-i\omega t}e^{ik\varphi}$ corresponds to a covector with
$\xi_t=-\omega$ and $\xi_\varphi=k$, hence energy $E:=-\xi_t=\omega$ and angular momentum $L:=\xi_\varphi=k$.
Along a bicharacteristic the coordinate-time angular velocity is
$\Omega=d\varphi/dt=\dot\varphi/\dot t=\partial_{\xi_\varphi}p/\partial_{\xi_t}p$.
On an equatorial circular null orbit one has $E=\Omega L$ (equivalently $b=L/E=1/\Omega$), so $\omega/k=\Omega$.
Consequently, at the level of the real principal symbol one has the relation
$\Re\omega\sim k\,\Omega$ on the equatorial trapped null geodesics.
Combining this with \eqref{eq:Omega-expansion} and setting $k=\pm\ell$ yields
\begin{equation}\label{eq:principal-splitting-appendix}
\Re\omega_\pm\sim\ell\Bigl(\pm\Omega_{\mathrm{ph}}(M)+c_Z(M)\,a\pm c_{\Omega,2}(M)\,a^2\Bigr)+\mathcal O(a^3\ell),
\end{equation}

which is exactly the splitting stated in Lemma~\ref{lem:principal-splitting} for the real principal part
$F_0(M,a;n,\ell,\pm\ell)$ of the pseudopole frequency symbol.

\section{Parameter-differentiable semiclassical quantization: uniform remainder bounds}\label{app:parameter-quantization}

The goal of this appendix is to justify the \emph{parameter-uniform} remainder bounds used in Lemma~\ref{lem:C1-close},
and to explain why these bounds remain valid after taking one derivative with respect to the external parameters $\mu=(M,a)$ (and, with no change in the argument, $\mu=(M,a,\Lambda)$ on compact subextremal parameter sets).
We phrase the discussion at an abstract level compatible with Dyatlov's barrier-top quantization for Kerr--de~Sitter
\cite{Dyatlov2011QNMKerrDeSitter} and \cite{Dyatlov2012AsymptoticQNMKdS}.
The key point is that the normal-form/Grushin construction may be carried out in symbol classes that are uniform on compact
parameter sets and are stable under one parameter derivative.

\subsection{A black-box statement used in the main text}\label{subsec:quantization-blackbox}

For the reader's convenience we begin by stating the precise analytic input of this appendix in the form used in
Section~\ref{subsec:C1-close}.
Dyatlov's barrier-top quantization provides the underlying normal form and Grushin reduction; what is not always stated
explicitly in the literature, and what we track here, is that the remainders can be controlled \emph{uniformly} on compact
slow-rotation parameter sets and that these bounds persist after taking one derivative with respect to the external
parameters.

\begin{proposition}[Uniform quantization function bounds with one parameter derivative]\label{prop:quantization-blackbox}
Fix $\Lambda>0$, a compact parameter set $\mathcal K\Subset\mathcal P_\Lambda$ in the slow-rotation regime, and an overtone
index $n\in\{0,\dots,C_n\}$.
Then there exists a compact window $\Omega$ in the scaled spectral variable $\tilde\omega=h\omega$ and, for each equatorial
sign $\pm$, a scalar quantization function $\mathfrak q_\pm(\omega;\mu,\ell)$ with $\mu=(M,a)$ such that:
\begin{enumerate}
\item For each fixed $(\mu,\ell)$ the function $\mathfrak q_\pm(\cdot;\mu,\ell)$ is holomorphic in $\omega$ on the dilated
window $h^{-1}\Omega$, and the map $(\mu,\omega)\mapsto \mathfrak q_\pm(\omega;\mu,\ell)$ is $C^1$ in $\mu$ on
$\mathcal K\times h^{-1}\Omega$.
\item Zeros of $\mathfrak q_\pm(\cdot;\mu,\ell)$ in $h^{-1}\Omega$ are precisely the labeled equatorial QNMs
$\omega_{n,\ell,\pm\ell}(\mu)$ in that spectral window.
\item For every $N\in\mathbb N$ one can write an expansion
\[
\mathfrak q_\pm(\omega;\mu,\ell)=\sum_{j=0}^{N-1}h^j\,\mathfrak q_{\pm,j}(\omega;\mu)+\mathfrak r_{\pm,N}(\omega;\mu,\ell),
\]
where the coefficients $\mathfrak q_{\pm,j}(\cdot;\mu)$ are holomorphic on $h^{-1}\Omega$, $C^1$ in $\mu$, and the
remainder obeys the uniform bounds
\[
\sup_{\mu\in\mathcal K}\sup_{\omega\in h^{-1}\Omega}\Bigl(|\mathfrak r_{\pm,N}(\omega;\mu,\ell)|+|\partial_\mu\mathfrak r_{\pm,N}(\omega;\mu,\ell)|\Bigr)\le C_N\,\ell^{-N}.
\]
\item After multiplying $\mathfrak q_\pm$ by a nonvanishing factor (holomorphic in $\omega$ and $C^1$ in $\mu$), one may
arrange that $\partial_\omega\mathfrak q_{\pm,0}(\omega;\mu)\equiv 1$ on $\mathcal K\times h^{-1}\Omega$, hence
$|\partial_\omega\mathfrak q_\pm(\omega;\mu,\ell)|$ is uniformly bounded away from $0$ on slightly smaller windows for
$\ell\gg1$.
\end{enumerate}
The same statement holds with $\mu=(M,a,\Lambda)$ when allowing $\Lambda$ to vary in a compact interval and restricting to a
compact subextremal set.
\end{proposition}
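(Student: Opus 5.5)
The plan is to construct $\mathfrak q_\pm$ directly from Dyatlov's separated quantization. By Proposition~\ref{lem:QNM-common-poles}, a $k$-mode QNM is a common $\lambda$-pole of $R_r$ and $R_\theta$. I will not work with the resolvents themselves. Instead I will use the Grushin problems underlying Propositions~\ref{prop:radial-quant} and~\ref{prop:angular-quant}.

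\textbf{Step 1: reduce the radial and angular problems to scalar functions.} Near the barrier top, the radial operator $P_r(\omega,\pm\ell)+\lambda$ is conjugated by a semiclassical Fourier integral operator to the model $x hD_x + hD_x x$ minus a scalar symbol. The resulting Grushin problem produces a scalar effective function
\[
E^r_\pm(\omega,\lambda;\mu,h),
\]
holomorphic in $(\tilde\omega,\tilde\lambda)$ on a fixed window. Its zeros in $\lambda$ are $\lambda=F^r(n,\dots)$ modulo $O(h^\infty)$, with $E^r_\pm$ vanishing to first order there. The angular problem is handled the same way, near the equatorial boundary point $|\tilde k|=\tilde\ell$, which is smooth by Remark~\ref{rem:equatorial-boundary}; this gives $E^\theta_\pm$.

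I then eliminate $\lambda$. The angular pole $\lambda=\lambda^\theta_\pm(\omega;\mu,h)$ is a simple zero of $E^\theta_\pm$ and depends holomorphically on $\omega$. Define
\[
\mathfrak q_\pm(\omega;\mu,\ell):=E^r_\pm\bigl(\omega,\lambda^\theta_\pm(\omega);\mu,h\bigr)\big/\partial_\lambda E^r_\pm.
\]
Its zeros are exactly the labeled equatorial QNMs, which gives item (2).

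\textbf{Step 2: uniformity in parameters.} Every ingredient is built from the coefficients of $P_r$ and $P_\theta$. These are real-analytic in $\mu$ on the fixed manifold $X$ of Section~\ref{sec:setup}. The ingredients are the canonical transformation to the normal form, the symbol of the FIO, and the iterative removal of lower-order terms. At each stage one solves transport/cohomological equations whose solvability constants depend only on the hyperbolicity of the trapped set. That hyperbolicity is uniform on $\mathcal K$ by \eqref{eq:geo-IFT-det} and \eqref{eq:drrPhi-nonzero}, with the Lyapunov exponent bounded below. Consequently the symbols lie in $S^0$ uniformly in $\mu\in\mathcal K$.

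Differentiating any stage in $\mu$ yields an equation of the same form, whose source terms have one $\mu$-derivative. So $\partial_\mu$ of each symbol lies in the same class with uniform seminorms. Borel summation with uniform cutoffs gives the expansion in item (3). The $\partial_\mu$ bound on the remainder $\mathfrak r_{\pm,N}$ follows because the residual Grushin error is $O(h^\infty)$ in operator norm uniformly in $\mu$. Differentiating the Grushin inverse then only produces products of uniformly bounded operators with $\partial_\mu$ of the $O(h^N)$ errors.

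Holomorphy in $\omega$ (item (1)) holds because $P(\omega)$ is polynomial in $\omega$. The Grushin inverse is holomorphic wherever it exists, which is the fixed scaled window.

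\textbf{Step 3: normalization.} For item (4), compute $\partial_\omega\mathfrak q_{\pm,0}$ from the principal parts. It equals $\partial_{\tilde\omega}(F^r_0-F^\theta_0)$ times a nonvanishing normalization factor. This is nonzero by the proof of Lemma~\ref{lem:implicit-frequency-symbol}, uniformly on $\mathcal K$.

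Because the leading term is affine in the model variable, I can divide by this derivative. More precisely, I replace $\mathfrak q_\pm$ by $\mathfrak q_\pm$ divided by $\partial_\omega\mathfrak q_{\pm,0}$, after first reparametrizing. Concretely, I write $\mathfrak q_{\pm,0}=g(\omega;\mu)\,(\omega-\omega_0(\mu))$ with $g$ holomorphic and nonvanishing, and divide by $g$. The resulting leading coefficient has $\omega$-derivative identically $1$. Cauchy estimates on a slightly smaller window then give the lower bound on $|\partial_\omega\mathfrak q_\pm|$.

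\textbf{Main obstacle.} I expect Step 2 to be hardest, namely showing that the $O(h^\infty)$ Grushin remainder stays $O(h^\infty)$ after one $\mu$-derivative. The $\mu$-dependence enters through the radial point/horizon Fredholm setup, and the order functions $\mathbf s_\omega$ there must be chosen independently of $\mu$. My first attempt is to pull everything back to $X$ via $\Phi_{M,a}$ and fix a single $\mathbf s$ valid on $\mathcal K$, which is possible since the radial sets vary continuously. I would then differentiate the resolvent identity
\[
\partial_\mu R=-R\,(\partial_\mu P)\,R,
\]
using the uniform normally hyperbolic bounds of \cite{WunschZworski2011ResolventNHT} to control the polynomial losses in $h$. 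Those losses are absorbed by the $O(h^\infty)$ smallness.

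The $\Lambda$-varying case is identical with $\mu=(M,a,\Lambda)$.
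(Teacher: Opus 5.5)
Your argument is correct at the paper's level of rigor, but it builds $\mathfrak q_\pm$ in a different way.

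\textbf{The paper's route.} It works with a single abstract one-dimensional barrier-top operator $P(h,\omega;\mu)$. It constructs a parameter-uniform Birkhoff normal form by solving the homological equation for $\{x\xi,\cdot\}$ (Proposition~\ref{prop:param-normal-form}, with Lemma~\ref{lem:uniform-hyperbolicity} giving the uniform exponent). It then sets up one Grushin problem for the model operator and takes $\mathfrak q=\det E_{-+}$ (Proposition~\ref{prop:q-uniform}). The $\omega$-derivative bounds come from Cauchy estimates, and the normalization comes from re-choosing the auxiliary Grushin data (Remark~\ref{rem:q-normalization}).

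\textbf{Your route and what it buys.} You do two scalar reductions, one for $P_r$ and one for $P_\theta$, and eliminate the separation constant by composing $E^r_\pm$ with the holomorphic angular branch $\lambda^\theta_\pm(\omega)$. This makes explicit several things the paper leaves implicit.
\begin{enumerate}
\item Item (2) follows directly from Proposition~\ref{lem:QNM-common-poles}, and the labels $(n,\ell)$ are built into $\mathfrak q_\pm$. The paper never says how $\lambda$, and hence $\ell$, enters its one-dimensional model.
\item Your factorization $\mathfrak q_{\pm,0}=g\,(\omega-\omega_0(\mu))$, with $\omega_0$ from the implicit function theorem, makes the ``harmless renormalization'' concrete and shows that $g$ is $C^1$ in $\mu$.
\item You address the $\mu$-dependence of the global horizon Fredholm problem, which the paper's argument does not mention.
\end{enumerate}

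\textbf{Bookkeeping you should make explicit.}
\begin{enumerate}
\item The equatorial angular point is the bottom of a well (a harmonic-oscillator ground state), not a barrier top. So it is not ``handled the same way'': the relevant uniform constant there is the curvature of the angular well, not a Lyapunov exponent.
\item In $\partial_\mu R=-R\,(\partial_\mu P)\,R$, the operator $\partial_\mu P$ is second order, so your fixed $\mathbf s$ must remain above the radial-point threshold after losing one derivative. It is simpler to differentiate the Grushin problem itself. That gives $\partial_\mu E_{-+}=-E_-(\partial_\mu P)E_+$, plus terms from $\partial_\mu$ of the auxiliary operators. Since $E_+$ produces smooth quasimodes, no derivative is lost.
\item Scalar Grushin problems isolate a single $(n,\ell)$ only on windows of size $O(h)$ in $\tilde\lambda$, i.e.\ $O(1)$ in $\omega$. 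This is the only reading under which item (2) can hold at all, because a fixed compact window in $h\omega$ contains other overtones. You should state this explicitly.
\end{enumerate}
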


\begin{proof}
This is a repackaging of the standard barrier-top Birkhoff normal form and Grushin reduction with parameter-dependent
symbol classes, as recorded below.  The normal form with parameter-uniform remainders is stated in
Proposition~\ref{prop:param-normal-form}.  The construction of the effective Hamiltonian $E_{-+}$ and the determinant
quantization function together with the uniform remainder bounds (stable under one $\mu$-derivative) is stated in
Proposition~\ref{prop:q-uniform}; the normalization of $\partial_\omega\mathfrak q_{\pm,0}$ is explained in
Remark~\ref{rem:q-normalization}.  The translation from $h$ to $\ell$ is given at the end of the appendix.
\end{proof}

\subsection{The separated Kerr--de~Sitter problem as a barrier-top model}

We briefly indicate how the abstract semiclassical setting below matches Dyatlov's concrete construction.
Recall from Section~\ref{sec:setup} that, after separation of variables, quasinormal modes are zeros of a
meromorphic Fredholm family built from the radial operator $P_r(\omega,k)$ and the angular operator
$P_\theta(\omega)$, with the separation constant $\lambda$ entering the radial equation as
\((P_r(\omega,k)+\lambda)v_r=0\); see \eqref{eq:PrPtheta} and \eqref{eq:separated-eq}.
In the high-frequency regime one introduces the semiclassical parameter
\(h=(\ell+\tfrac12)^{-1}\), the scaled spectral variable \(\tilde\omega=h\omega\), and the scaled angular
quantities \(\tilde k=h k\), \(\tilde\lambda=h^2\lambda\).
With these scalings the radial operator becomes a one-dimensional semiclassical operator whose principal
symbol is the radial Hamiltonian of the null-geodesic flow, expressed in separated variables; this is the
starting point of \cite[\S\S1.2 and 4]{Dyatlov2012AsymptoticQNMKdS}.

The relevant microlocal region is a neighborhood of the radial barrier top corresponding to the trapped set
(photon sphere). In this neighborhood the principal symbol has a unique hyperbolic critical point on the
characteristic set. Uniformity on compact parameter sets follows from two facts:
\begin{enumerate}
\item the Kerr--de~Sitter metric coefficients are real-analytic in $(M,a)$ in the subextremal regime, hence so
are the separated operators after the fixed-manifold identification in \eqref{eq:affine_identification}; and
\item on a compact slow-rotation set $\mathcal K$ the trapped set persists and remains normally hyperbolic,
so the associated hyperbolicity exponent stays bounded away from $0$ uniformly; see, for instance,
\cite{WunschZworski2011ResolventNHT} and the Kerr--de~Sitter trapping discussion in
\cite{Dyatlov2011QNMKerrDeSitter,Dyatlov2012AsymptoticQNMKdS}.
\end{enumerate}

\begin{lemma}[Uniform hyperbolicity on compact slow-rotation sets]\label{lem:uniform-hyperbolicity}
In the Kerr--de~Sitter barrier-top regime described above, let $\Omega$ be a fixed compact window for the scaled spectral
parameter $\tilde\omega$ and let $\mathcal K\Subset\mathcal P_\Lambda$ be compact.
For each $(\mu,\tilde\omega)\in\mathcal K\times\Omega$, let $\lambda(\tilde\omega,\mu)>0$ denote the (positive) hyperbolicity exponent,
i.e.\ the absolute value of the real eigenvalue of the linearization of the Hamilton vector field of the principal symbol
at the trapped critical point.
Then there exists $\lambda_0>0$ such that
\[
\lambda(\tilde\omega,\mu)\ge \lambda_0\qquad \forall (\mu,\tilde\omega)\in\mathcal K\times\Omega.
\]
\end{lemma}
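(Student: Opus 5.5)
The plan is a continuity--compactness argument. The exponent $\lambda(\tilde\omega,\mu)$ depends continuously on $(\mu,\tilde\omega)$ and is strictly positive at each point, so it attains a positive minimum on the compact set $\mathcal K\times\Omega$. Everything therefore reduces to proving these two properties.

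\emph{Step 1: the trapped critical point.} After separation and the scalings $\tilde\omega=h\omega$, $\tilde k=hk$, $\tilde\lambda=h^2\lambda$, the radial principal symbol has the form
\[
p_r(r,\xi;\tilde\omega,\tilde k,\tilde\lambda,\mu)=\Delta_r\xi^2-\frac{\Xi^2}{\Delta_r}\bigl((r^2+a^2)\tilde\omega-a\tilde k\bigr)^2+\tilde\lambda .
\]
I will restrict it to the equatorial/quantized branch in which $\tilde\lambda$ is given by the angular principal symbol $F^\theta_0$ (Proposition~\ref{prop:angular-quant}). Its critical points lie on $\{\xi=0\}$, at zeros of $\partial_r V$, where $V$ is the radial potential. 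At $a=0$ we have $\Delta_r=r^2f(r)$, and the potential is proportional to $-\tilde\omega^2 r^2/\Delta_r=-\tilde\omega^2/f(r)$ up to the angular term. Its unique critical point in $(r_e,r_c)$ is $r=3M$ (this matches the computation $rf'-2f=0$ in Appendix~\ref{app:equatorial-photon}), and it is a nondegenerate maximum of the barrier.

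By the implicit function theorem, uniformly in $\mu\in\mathcal K$ with $|a|\le a_0$ small and $\tilde\omega$ in the window, the critical point persists as a real-analytic function $r_*(\tilde\omega,\mu)$. Nondegeneracy, i.e.\ $\partial_r^2 V(r_*)\neq0$, persists as well. This is the same mechanism as the $\Phi_{rr}=2/(9M^2)>0$ bound in \eqref{eq:drrPhi-nonzero}.

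\emph{Step 2: the exponent.} At a critical point of $p=\Delta_r\xi^2+V(r)$ with $V''(r_*)<0$ (barrier top), the linearized Hamilton field is
\[
\begin{pmatrix}0&2\Delta_r\\-V''&0\end{pmatrix},
\]
with eigenvalues $\pm\sqrt{-2\Delta_r(r_*)V''(r_*)}$. So
\[
\lambda(\tilde\omega,\mu)=\sqrt{-2\Delta_r(r_*)V''(r_*)}.
\]
This expression is continuous, and in fact real-analytic, in $(\tilde\omega,\mu)$ because $r_*$ is. It is positive because $\Delta_r(r_*)>0$ (the point $r_*$ stays in the exterior $(r_e,r_c)$, uniformly by \eqref{eq:uniform_delta}) and $V''(r_*)<0$. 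At $a=0$ the value is explicit and proportional to $\tilde\omega\,\Omega_{\mathrm{ph}}(M)>0$, consistent with the Lyapunov exponent.

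\emph{Step 3: conclusion.} The window $\Omega$ is compact and lies away from $\tilde\omega=0$ (it sits inside $1<\Re\tilde\omega<2$). Combining this with compactness of $\mathcal K$ and continuity, set $\lambda_0:=\min_{\mathcal K\times\Omega}\lambda>0$.

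The main obstacle is Step 1: showing that the barrier-top critical point remains \emph{unique} and nondegenerate uniformly on all of $\mathcal K$, not just near $a=0$. I would handle this by shrinking $a_0$, as in \eqref{eq:Kcompact}, and using the uniform IFT constants exactly as in Lemma~\ref{lem:uniform-photon-orbits}. For complex $\tilde\omega$, I would apply the argument to the real principal part $\Re\tilde\omega$, since the principal symbols are independent of $\tilde\nu$.
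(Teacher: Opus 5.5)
Your proposal is correct and is essentially the paper's argument: the exponent is positive at each point, the trapped critical point and its linearization depend continuously on $(\mu,\tilde\omega)$ (real-analytic coefficients plus persistence under slow rotation), so the exponent has a positive minimum on the compact set $\mathcal K\times\Omega$; you also write out what the paper leaves implicit ($r_*=3M$ at $a=0$, its implicit-function continuation, and $\lambda=\sqrt{-2\Delta_r(r_*)V''(r_*)}$). One harmless slip in an aside: at $a=0$ your formula gives $\lambda=6\sqrt3\,M\tilde\omega/\sqrt{1-9\Lambda M^2}=2\tilde\omega/\Omega_{\mathrm{ph}}(M)$ in Hamilton-flow time, which becomes the Lyapunov exponent $\Omega_{\mathrm{ph}}(M)$ only after dividing by $dt/ds$, not a fixed multiple of $\tilde\omega\,\Omega_{\mathrm{ph}}(M)$; the proof only uses positivity, so nothing breaks.
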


\begin{proof}
For each fixed $(\mu,\tilde\omega)$ in the regime under consideration, the trapped critical point is nondegenerate hyperbolic,
so $\lambda(\tilde\omega,\mu)>0$.
The coefficient functions of the Kerr--de~Sitter metric (hence the principal symbol of the separated radial operator in the scaled variables)
depend real-analytically on $\mu=(M,a)$ on the subextremal set; therefore the critical point and the linearization eigenvalues depend
continuously on $(\mu,\tilde\omega)$ as long as the critical point persists.
On the compact set $\mathcal K\times\Omega$ the critical point persists by the slow-rotation assumption, hence $\lambda$ attains a positive
minimum $\lambda_0>0$ there.
\end{proof}

This uniform hyperbolicity is the only place where smallness of $|a|$ enters the present appendix.
Once it is available, the normal-form iteration and the Grushin reduction can be carried out with uniform
symbol estimates, and differentiating once in $(M,a)$ does not change the order of remainders.
The propositions below package precisely the estimates we need later.

\subsection{Parameter-dependent symbol classes}

Let $\mu\in\mathcal K\subset\mathbb R^2$ denote external parameters (here $\mu=(M,a)$) and let
$h\sim \ell^{-1}$ be a semiclassical parameter.
For $m\in\mathbb R$ we define the parameter-dependent symbol class $S^m_{\mu}$ on $\mathbb R^2_{(x,\xi)}$ by
\begin{equation}\label{eq:symbol-class}
\begin{aligned}
a\in S^m_{\mu}\quad\Longleftrightarrow\quad
\sup_{\mu\in\mathcal K}\sup_{(x,\xi)\in\mathbb R^2}
\langle \xi\rangle^{-m+|\beta|}\,\bigl|\partial_x^\alpha\partial_\xi^\beta\partial_\mu^\gamma a(x,\xi;\mu)\bigr|<\infty,\\
\text{for all multiindices }\alpha,\beta\text{ and all }\gamma\text{ with }|\gamma|\le 1.
\end{aligned}
\end{equation}
We write $S^{-\infty}_{\mu}:=\cap_{m\in\mathbb R}S^m_{\mu}$.

The calculus is stable in the usual way: if $a\in S^{m_1}_{\mu}$ and $b\in S^{m_2}_{\mu}$ then
$ab\in S^{m_1+m_2}_{\mu}$, and $\partial_\mu(ab)=(\partial_\mu a)b+a(\partial_\mu b)$ preserves the uniform bounds.
Likewise, standard asymptotic summation works with the uniform seminorms in \eqref{eq:symbol-class}.

\subsection{Parameter-uniform quantization and normal forms near a barrier top}

Dyatlov's semiclassical description of high-frequency Kerr--de~Sitter QNMs proceeds by reducing the
separated radial operator near the photon sphere to a one-dimensional semiclassical model with a
nondegenerate hyperbolic fixed point (a barrier top).
Abstractly, one obtains a family of semiclassical pseudodifferential operators
\begin{equation}\label{eq:Ph-mu}
P(h,\omega;\mu):=\operatorname{Op}_h(p_0(\cdot,\cdot;\omega,\mu))+h\operatorname{Op}_h(p_1(\cdot,\cdot;\omega,\mu))\ +\ \cdots,
\end{equation}
whose principal symbol $p_0(x,\xi;\omega,\mu)$ has, for $\omega$ in a fixed compact set, a unique
nondegenerate critical point $(x,\xi)=(0,0)$ on the characteristic set, with linearization having real
eigenvalues $\pm\lambda(\omega,\mu)$ bounded away from $0$ uniformly on $\mathcal K$.
All coefficients are in parameter-dependent symbol classes uniform on $\mathcal K$.

The standard barrier-top Birkhoff normal form (see, e.g., \cite[Chapters~3--4]{Zworski2012SemiclassicalAnalysis})
then provides:

\begin{proposition}[Parameter-uniform normal form]\label{prop:param-normal-form}
Assume the family \eqref{eq:Ph-mu} satisfies the barrier-top hypotheses above on a compact parameter set $\mathcal K$.
Then there exist, for $h$ small, Fourier integral operators $U(h;\mu)$, smoothly depending on $\mu$ with
uniform bounds under one $\mu$--derivative, such that
\begin{equation}\label{eq:conjugated-normal-form}
U(h;\mu)^{-1}P(h,\omega;\mu)U(h;\mu)=Q(h,\omega;\mu)+R_N(h,\omega;\mu),
\end{equation}
where $Q$ is a semiclassical model operator whose full symbol admits an expansion in $h$ with coefficients smooth in $\mu$,
and the remainder satisfies, for every $N$,
\begin{equation}\label{eq:RN-uniform}
\sup_{\mu\in\mathcal K}\bigl\|\partial_\mu R_N(h,\omega;\mu)\bigr\|_{L^2\to L^2}\le C_N h^N.
\end{equation}
\end{proposition}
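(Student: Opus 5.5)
The plan is to run the standard barrier-top Birkhoff normal form construction from \cite[Chapters~3--4]{Zworski2012SemiclassicalAnalysis} with every symbol placed in the parameter-dependent classes $S^m_\mu$ of \eqref{eq:symbol-class}. Each step will be checked for uniformity on $\mathcal K$ and for stability under one $\partial_\mu$.

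\textbf{Step 1: principal symbol.} By Lemma~\ref{lem:uniform-hyperbolicity}, the critical point $(x_c(\omega,\mu),\xi_c(\omega,\mu))$ and the exponent $\lambda(\omega,\mu)\ge\lambda_0$ depend smoothly on $\mu$, by the implicit function theorem applied to $dp_0=0$. The Morse lemma with parameters then gives a symplectomorphism $\kappa_\mu$, smooth in $\mu$ with uniform $C^1_\mu$ bounds, such that
\[
p_0\circ\kappa_\mu=f_0(x\xi;\omega,\mu)
\]
near the critical point. Here $f_0$ is smooth, with $\partial_s f_0(0)=\lambda$ up to normalization. The construction is by a canonical change of variables built from the flow of an explicit Hamiltonian, so $\mu$-derivatives are controlled. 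We quantize $\kappa_\mu$ by an $h$-FIO $U_0(h;\mu)$ that is microlocally unitary near the critical point. Its symbol lies in $S^0_\mu$, since Egorov's theorem and the composition formulas are continuous in the seminorms of \eqref{eq:symbol-class}.

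\textbf{Step 2: lower-order terms.} We remove the non-normal-form parts order by order. At order $h^j$ we solve a homological equation
\[
\{x\xi,g_j\}=r_j-\langle r_j\rangle,
\]
where $\langle r_j\rangle$ is the part of $r_j$ that is a function of $x\xi$. On a formal Taylor level this is a diagonal equation in monomials $x^\alpha\xi^\beta$, with divisors $(\alpha-\beta)\lambda$ uniformly bounded below because $\lambda\ge\lambda_0$. Its smooth (non-formal) solution comes from integrating along the hyperbolic flow, which is the standard flat-function argument. The $\partial_\mu$ derivative of $g_j$ satisfies the same equation with source $\partial_\mu r_j$ plus terms involving $\partial_\mu\lambda$. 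This gives uniform $S_\mu$ bounds. We set
\[
U=U_0\,e^{ihG_1}\cdots e^{ih^{N}G_N}
\]
and Borel-sum the corrections in $\mu$-uniform seminorms. The remainder $R_N$ then has symbol $O(h^N)$ in $S_\mu$, with one $\mu$-derivative included.

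\textbf{Step 3: operator bound.} Calder\'on--Vaillancourt, applied to the symbol of $\partial_\mu R_N$, gives \eqref{eq:RN-uniform}. Only finitely many seminorms are needed, and these are uniform on $\mathcal K$. The $\partial_\mu$ derivative falls either on the symbol or on the FIO phase. The latter produces a factor $h^{-1}\partial_\mu\phi$, which we absorb by running the construction to order $N+1$.

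The main obstacle is the smooth (rather than formal) solution of the homological equation together with flat remainders: we must show that $\mu$-derivatives of the flat corrections remain flat uniformly. We would treat this by first making $\kappa_\mu$ depend on $\mu$ only through a finite-dimensional smooth family. Then $\partial_\mu$ commutes with the flow-integration formula, and the estimates follow by differentiating under the integral.
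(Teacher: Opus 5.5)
Your proposal is correct and follows essentially the same route as the paper. Both carry out a Birkhoff normal form in the $\mu$-uniform symbol classes $S^m_\mu$, invert the homological operator $\{x\xi,\cdot\}$ (which commutes with $\partial_\mu$) using the uniform bound $\lambda\ge\lambda_0$ from Lemma~\ref{lem:uniform-hyperbolicity}, quantize via a parameter-dependent Egorov theorem, and conclude with an $L^2$ bound.

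Your version is more careful than the paper's sketch on three points: you separate the symplectic Morse step for $p_0$ from the $h$-corrections, you confront the non-formal (flat) part of the homological equation, and you absorb the $h^{-1}$ loss from differentiating the $\mu$-dependent FIO phase by going one order further, the same device the paper uses in Lemma~\ref{lem:C1-close}.
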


\begin{proof}
We outline the normal-form construction in a way that makes the parameter dependence and the uniform remainder
bounds explicit; for complete treatments without parameters we refer to
\cite[Chapters~3--4]{Zworski2012SemiclassicalAnalysis} and the barrier-top literature.

\smallskip
\noindent\textbf{Step 1: uniform symbol control and a uniform homological operator.}
After a parameter-dependent linear symplectic change of variables (depending smoothly on $\mu$) one may assume that, at the level of the
principal symbol,
the quadratic part of $p_0$ at the critical point is
\[
p_0^{(2)}(x,\xi;\omega,\mu)=\lambda(\omega,\mu)\,x\xi,
\]
where $\lambda(\omega,\mu)>0$ is the hyperbolicity exponent.
By Lemma~\ref{lem:uniform-hyperbolicity} we have $\lambda(\omega,\mu)\ge\lambda_0>0$ uniformly on $\mathcal K$ and on the relevant compact $\omega$-window.
The associated homological operator on symbols is the Poisson bracket with $x\xi$:
\[
\mathcal L\,g:=\{x\xi,g\}.
\]
On monomials $x^\alpha\xi^\beta$ it acts by $\mathcal L(x^\alpha\xi^\beta)=(\alpha-\beta)x^\alpha\xi^\beta$.
Thus $\mathcal L$ is invertible on the complement of the resonant space $\{\alpha=\beta\}$, with inverse bounded by $|\alpha-\beta|^{-1}$.
Uniformity of the hyperbolicity exponent ensures that the passage from $x\xi$ to $\lambda(\omega,\mu)\,x\xi$ does not affect these bounds.

\smallskip
\noindent\textbf{Step 2: iterative elimination of nonresonant terms.}
Write the full symbol of $P$ (in Weyl quantization) as
\[
p(x,\xi;\omega,\mu,h)\sim \sum_{j\ge0} h^j\,p_j(x,\xi;\omega,\mu),
\qquad p_j\in S^{m-j}_\mu,
\]
with uniform seminorm bounds as in \eqref{eq:symbol-class}.
The normal-form scheme constructs generating functions $g_j(x,\xi;\omega,\mu)\in S^{2-j}_\mu$ so that conjugation by the
Fourier integral operator associated to the time-$1$ Hamiltonian flow of $g_j$ removes the nonresonant part of the $j$th coefficient.
Concretely, at each step one solves a homological equation of the form
\begin{equation}\label{eq:homological}
\mathcal L g_j = r_j^{\mathrm{nonres}},
\end{equation}
where $r_j^{\mathrm{nonres}}$ is the nonresonant part (in the sense of the grading above) of the current remainder symbol.
Because $\mathcal L$ is uniformly invertible on the nonresonant subspace and the seminorms of $r_j$ are uniformly bounded on $\mathcal K$,
the solutions $g_j$ enjoy the same type of uniform bounds.  Differentiating \eqref{eq:homological} in $\mu$ commutes with $\mathcal L$ and
preserves the uniform bounds thanks to Lemma~\ref{lem:uniform-hyperbolicity}.

\smallskip
\noindent\textbf{Step 3: quantization and the operator remainder.}
Quantizing the canonical transformations generated by the $g_j$ produces Fourier integral operators $U_j(h;\mu)$ with symbols in
parameter-dependent classes, uniformly under one $\mu$-derivative.  Iterating and composing yields $U(h;\mu)$.
Egorov's theorem in the parameter-dependent calculus shows that conjugation by $U(h;\mu)$ transforms $P$ into a model operator $Q$
whose full symbol depends only on the resonant variable $I=x\xi$ up to order $h^{N-1}$, with coefficients smooth in $\mu$.
Collecting the terms not eliminated by the first $N$ steps produces the remainder $R_N$.
Uniform symbol bounds in each step, together with the standard $L^2$ boundedness of semiclassical pseudodifferential operators, yield
\eqref{eq:RN-uniform} for $\partial_\mu R_N$ as well as for $R_N$ itself.
This completes the construction.
\end{proof}

\subsection{Grushin reduction and a quantization function}

After the normal-form reduction, one implements a Grushin problem for $Q$ (and hence for $P$) in a small neighborhood of the barrier top.
This yields a finite-dimensional effective Hamiltonian $E_{-+}(h,\omega;\mu)$ whose determinant is a scalar quantization function:
\begin{equation}\label{eq:q-as-det}
\mathfrak q(\omega;\mu,h):=\det E_{-+}(h,\omega;\mu).
\end{equation}
Zeros of $\mathfrak q$ correspond to resonances (or quasinormal modes) of the original operator in the corresponding spectral window;
see \cite[Chapter~8]{Zworski2012SemiclassicalAnalysis} for the general Grushin framework.

The key output for our inverse problem is the following uniform remainder statement.

\begin{proposition}[Uniform expansion and parameter derivative of the quantization function]\label{prop:q-uniform}
In the barrier-top setting of Proposition~\ref{prop:param-normal-form}, there exist coefficients
$\mathfrak q_j(\omega;\mu)$ smooth in $\mu$ (and holomorphic in $\omega$ in the relevant complex window) such that
for every $N\in\mathbb N$ one can write
\begin{equation}\label{eq:q-expansion}
\mathfrak q(\omega;\mu,h)=\sum_{j=0}^{N-1} h^j\,\mathfrak q_j(\omega;\mu)\ +\ \mathfrak r_N(\omega;\mu,h),
\end{equation}
where the remainder satisfies the uniform bounds
\begin{equation}\label{eq:q-remainder-uniform}
\sup_{\mu\in\mathcal K}\sup_{\omega\in\Omega}\bigl(|\mathfrak r_N(\omega;\mu,h)|+|\partial_\mu\mathfrak r_N(\omega;\mu,h)|\bigr)
\le C_N h^N,
\end{equation}
for $h$ sufficiently small, in any fixed compact subset $\Omega$ of the spectral window.
Moreover, since $\mathfrak r_N(\cdot;\mu,h)$ is holomorphic in $\omega$, Cauchy estimates imply that
for any compact $\Omega'\Subset\Omega$ there exists $C_N'>0$ such that
\begin{equation}\label{eq:q-remainder-uniform-omega}
\begin{aligned}
\sup_{\mu\in\mathcal K}\sup_{\omega\in\Omega'}\Bigl(
|\partial_\omega\mathfrak r_N(\omega;\mu,h)|+|\partial_\omega\partial_\mu\mathfrak r_N(\omega;\mu,h)|
\Bigr)
\le C_N' h^N.
\end{aligned}
\end{equation}
\end{proposition}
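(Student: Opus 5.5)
The plan is to build $\mathfrak q$ from the Grushin problem for the conjugated operator of Proposition~\ref{prop:param-normal-form}. Each coefficient $\mathfrak q_j$ will be produced by the same finite algebraic recursion from the model symbol, and the remainder will be estimated by keeping explicit track of the normal-form remainder $R_N$.

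\textbf{Step 1 (Grushin problem for $Q$).} Since the full symbol of $Q$ depends only on $I=x\xi$ up to order $h^{N-1}$, I will choose $R_-,R_+$ built from the barrier-top eigenfunction for overtone $n$, namely the resonant states of $x\xi$-type quantizations. These are independent of $\mu$ up to the linear symplectic change already made, so they are trivially $C^1$ in $\mu$. The model Grushin problem for $Q$ is then explicitly invertible on the window: its effective Hamiltonian $E^Q_{-+}$ is the scalar $q(h,\omega;\mu)$ obtained by evaluating the model symbol at the quantized value $I=-ih(n+\tfrac12)$. Expanding in $h$ gives coefficients $\mathfrak q_j(\omega;\mu)$. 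These are holomorphic in $\omega$, because the symbols $p_j$ are polynomial or holomorphic in $\omega$, and smooth in $\mu$, by the uniform bounds in the classes $S^m_\mu$ of \eqref{eq:symbol-class}. Their $\mu$-derivatives are bounded uniformly on $\mathcal K\times\Omega$ by Lemma~\ref{lem:uniform-hyperbolicity}, since every division is by $\lambda(\omega,\mu)\ge\lambda_0$.

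\textbf{Step 2 (perturbation by $R_N$).} The full Grushin operator for $U^{-1}PU=Q+R_N$ differs from the model one by $R_N$. I will use a Neumann series: the inverse of the model problem has norm $O(h^{-M_0})$ for some fixed $M_0$ on the window; this polynomial loss is typical at a barrier top. Then
\[
E_{-+}=E^Q_{-+}+O(h^{N-M_0})\quad\text{in } C^0,
\]
and differentiating the Neumann series once in $\mu$ gives the same order, using \eqref{eq:RN-uniform}. Because $N$ is arbitrary, running the argument with $N+M_0$ in place of $N$ yields \eqref{eq:q-remainder-uniform}. Since $E_{-+}$ is scalar here (simple overtone), $\det E_{-+}=E_{-+}$. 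The remainder $\mathfrak r_N$ is holomorphic in $\omega$ because both $P$ and the Grushin data are holomorphic in $\omega$.

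\textbf{Step 3 (Cauchy estimates).} Cauchy estimates on $\Omega'\Subset\Omega$ give \eqref{eq:q-remainder-uniform-omega}. I will apply them to $\mathfrak r_N$ and also to $\partial_\mu\mathfrak r_N$. The latter is still holomorphic in $\omega$, since $\partial_\mu$ commutes with $\partial_{\bar\omega}$ when $\mathfrak r_N$ is $C^1$ jointly in $\mu$ and holomorphic in $\omega$.

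The main obstacle is Step 2. I need the $h^{-M_0}$ loss of the model Grushin inverse to be uniform in $\mu$ and to persist after one $\mu$-derivative. Differentiating the inverse produces a product (inverse)$\cdot\partial_\mu(\text{operator})\cdot$(inverse), which loses $h^{-2M_0}$. This is still polynomial, so enlarging $N$ absorbs it. What must be checked is that the quantized value $I$ stays away from the other model eigenvalues uniformly; that separation comes from $\lambda\ge\lambda_0$.
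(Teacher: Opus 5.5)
Your proposal is correct and follows the same route as the paper: a Grushin problem for the normal-form model $Q$, an $h$-expansion of the (here scalar) effective Hamiltonian $E_{-+}$ with remainders uniform on $\mathcal K$ and stable under one $\mu$-derivative, and then Cauchy estimates on $\Omega'\Subset\Omega$ for the $\omega$-derivatives. The difference is only in explicitness. You compute $E^Q_{-+}$ directly at the quantized action $I=-ih(n+\tfrac12)$, and you control the passage from $Q$ to $U^{-1}PU=Q+R_N$ by a Neumann series whose polynomial loss $h^{-M_0}$ is absorbed by enlarging $N$; the paper instead simply asserts that the Grushin inverse entries have full symbol expansions and that the reduction from $P$ to $Q$ costs only $\mathcal O(h^N)$.
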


\begin{proof}
We explain why the Grushin reduction preserves the uniform symbol and remainder bounds, including one external parameter derivative.

\smallskip
\noindent\textbf{Step 1: a Grushin problem with parameter-uniform bounds.}
After the normal-form reduction \eqref{eq:conjugated-normal-form}, the resonance problem for $P$ in the chosen spectral window reduces
to the corresponding problem for the model operator $Q$ up to an $\mathcal O(h^N)$ perturbation in operator norm, uniformly on $\mathcal K$
and under one $\mu$-derivative.
For $Q$ one introduces auxiliary operators $R_\pm(h;\mu)$ (microlocal projections onto the model transversal data) and considers the Grushin system
\[
\mathcal P(h,\omega;\mu):=
\begin{pmatrix}
Q(h,\omega;\mu) & R_-(h;\mu)\\
R_+(h;\mu) & 0
\end{pmatrix}.
\]
For $h$ small this system is invertible on the relevant microlocal spaces, and its inverse
\[
\mathcal E(h,\omega;\mu):=\mathcal P(h,\omega;\mu)^{-1}=
\begin{pmatrix}
E(h,\omega;\mu) & E_+(h,\omega;\mu)\\
E_-(h,\omega;\mu) & E_{-+}(h,\omega;\mu)
\end{pmatrix}
\]
has entries that are semiclassical pseudodifferential operators (or finite-rank operators) with full symbol expansions.
All constructions are stable in parameter-dependent symbol classes, so each entry admits an asymptotic expansion in powers of $h$ whose
coefficients are smooth in $\mu$ with uniform seminorm bounds, and the remainders are $\mathcal O(h^N)$ uniformly under one $\mu$-derivative.

\smallskip
\noindent\textbf{Step 2: the effective Hamiltonian and its determinant.}
The matrix $E_{-+}(h,\omega;\mu)$ is finite-dimensional; its size is fixed by the chosen spectral window and does not depend on $h$.
Thus each matrix entry has a full $h$-expansion with parameter-uniform bounds, and the same is true after applying $\partial_\mu$.
Since the determinant is a polynomial in the entries, $\mathfrak q(\omega;\mu,h)=\det E_{-+}(h,\omega;\mu)$ admits an expansion
\eqref{eq:q-expansion} with coefficients $\mathfrak q_j(\omega;\mu)$ smooth in $\mu$, and the remainder satisfies \eqref{eq:q-remainder-uniform}.
Holomorphy in $\omega$ follows because the Grushin inverse depends holomorphically on $\omega$ in the window, hence so does $E_{-+}$.

\smallskip
\noindent\textbf{Step 3: $\omega$-derivative bounds for the remainder.}
For fixed $(\mu,h)$ the remainder $\mathfrak r_N(\cdot;\mu,h)$ is holomorphic in $\omega$ on $\Omega$.
Cauchy's integral formula on a contour surrounding a smaller compact subset $\Omega'\Subset\Omega$ yields
\[
\sup_{\omega\in\Omega'}|\partial_\omega \mathfrak r_N(\omega;\mu,h)|
\le C_{\Omega,\Omega'}\,\sup_{\omega\in\Omega}|\mathfrak r_N(\omega;\mu,h)|,
\]
and the same argument applies to $\partial_\omega\partial_\mu\mathfrak r_N$ since $\partial_\mu$ preserves holomorphy in $\omega$.
Combining this with \eqref{eq:q-remainder-uniform} gives \eqref{eq:q-remainder-uniform-omega}.
\end{proof}

\begin{remark}[Normalization and a uniform \texorpdfstring{$\partial_\omega\mathfrak q$}{dωq} lower bound]\label{rem:q-normalization}
The Grushin reduction determines the scalar quantization function \eqref{eq:q-as-det} only up to multiplication
by a nonvanishing factor which is holomorphic in $\omega$ and smooth in $\mu$ (this corresponds to changing
the auxiliary choices in the Grushin problem).
By such a harmless renormalization one may arrange that the leading coefficient in \eqref{eq:q-expansion}
satisfies
\begin{equation}\label{eq:q0-omega-derivative}
\partial_\omega\mathfrak q_0(\omega;\mu)\equiv 1\qquad \text{for }(\mu,\omega)\in\mathcal K\times\Omega.
\end{equation}
Then \eqref{eq:q-expansion} and \eqref{eq:q-remainder-uniform-omega} imply
\[ 
\partial_\omega\mathfrak q(\omega;\mu,h)=1+\mathcal O(h)\quad \text{uniformly on }\mathcal K\times\Omega',
\]
for any compact $\Omega'\Subset\Omega$. In particular, for $h$ small (equivalently, for $\ell$ large) one has
$|\partial_\omega\mathfrak q(\omega;\mu,h)|\ge \frac12$ uniformly on $\mathcal K\times\Omega'$, which is the
uniform nonvanishing used in Lemma~\ref{lem:C1-close}.
\end{remark}

\subsection{Translation to the \texorpdfstring{$\ell\to\infty$}{ell->infty} regime}

In the present paper the semiclassical parameter is $h=(\ell+\tfrac12)^{-1}$, so the remainder estimate
\eqref{eq:q-remainder-uniform} becomes $\mathcal O(\ell^{-N})$, uniformly on $\mathcal K$ and under one parameter derivative.
This is the precise input used in the proof of Lemma~\ref{lem:C1-close}, for the equatorial families $k=\pm\ell$.

\section{Equatorial damping rate: no linear term at \texorpdfstring{$a=0$}{a=0}}\label{app:equatorial-damping}

This appendix justifies the fact used in Lemma~\ref{lem:single-mode-expansions} that the equatorial damping rate
has no linear-in-$a$ correction at $a=0$ for fixed azimuthal direction $k=\ell$.
At the level of geometric optics, $-\Im\omega$ is governed by the (coordinate-time) Lyapunov exponent of the unstable
equatorial circular photon orbit.  We show that this Lyapunov exponent satisfies
\begin{equation}\label{eq:Lyapunov-even}
\lambda_+(M,a)=\Omega_{\mathrm{ph}}(M)+\mathcal O(a^2)\qquad\text{as }a\to0,
\end{equation}
uniformly for $M$ in compact subextremal sets (and with fixed $\Lambda>0$).
Together with the barrier-top quantization \eqref{eq:W-Lyapunov}, this yields
$\widetilde W_\ell(M,a)=\Omega_{\mathrm{ph}}(M)+\mathcal O(a^2)+\mathcal O(\ell^{-1})$.

\subsection{Equatorial null geodesics and a radial potential}

We use the standard Carter separation for null geodesics in Kerr--de~Sitter.
On the equatorial plane $\theta=\pi/2$ the Carter constant vanishes ($Q=0$), and the motion is governed by the
conserved energy $E:=-p_t$ and azimuthal momentum $L:=p_\varphi$.
Writing the impact parameter $b:=L/E$ (with $E\neq0$), the radial equation takes the form
\begin{equation}\label{eq:radial-equation}
\Bigl(\frac{dr}{d\tau}\Bigr)^2=\frac{R(r)}{r^4},
\qquad
R(r):=\bigl((r^2+a^2)-a\Xi b\bigr)^2-\Delta_r(r)\,(\Xi b-a)^2,
\end{equation}
where $\tau$ is an affine parameter and $\Delta_r$ is as in \eqref{eq:Delta_r}.
(The $\Xi$-factors in the Boyer--\allowbreak Lindquist metric satisfy $\Xi=1+\mathcal O(a^2)$ and therefore do not affect any
linear-in-$a$ computation at $a=0$.)

A circular equatorial photon orbit of radius $r=r_+(a)$ with given $(M,a)$ is characterized by the usual
double root conditions
\begin{equation}\label{eq:circular-R}
R\bigl(r_+(a)\bigr)=0,\qquad R'\bigl(r_+(a)\bigr)=0,
\end{equation}
with $b=b_+(a)$ determined simultaneously.

\subsection{The Schwarzschild--de~Sitter base point and first-order perturbation}

At $a=0$ the equations \eqref{eq:circular-R} reduce to the Schwarzschild--de~Sitter photon sphere.
One finds the radius
\begin{equation}\label{eq:r0-photon}
r_0:=r_+(0)=3M,
\end{equation}
and the impact parameter
\begin{equation}\label{eq:b0-photon}
b_0:=b_+(0)=\frac{3\sqrt3\,M}{\sqrt{1-9\Lambda M^2}}=\frac{1}{\Omega_{\mathrm{ph}}(M)}.
\end{equation}

We now expand for small $a$.  Write
\begin{equation}\label{eq:rb-expansion}
r_+(a)=r_0+a\,r_1+\mathcal O(a^2),\qquad b_+(a)=b_0+a\,b_1+\mathcal O(a^2).
\end{equation}
Substituting \eqref{eq:rb-expansion} into \eqref{eq:circular-R} and expanding to first order in $a$ yields a linear system
for $(r_1,b_1)$ whose solution is
\begin{equation}\label{eq:r1b1}
r_1=-\frac{2}{\sqrt3}\sqrt{1-9\Lambda M^2},\qquad
b_1=-\frac{2+9\Lambda M^2}{1-9\Lambda M^2}.
\end{equation}
(Equivalently, since $b_0=1/\Omega_{\mathrm{ph}}(M)$, the first-order correction to the orbital frequency
$\Omega_+(a)=1/b_+(a)$ is $\Omega_+'(0)=-(b_1/b_0^2)=c_Z(M)$, in agreement with Appendix~\ref{app:equatorial-photon}.)

\subsection{Lyapunov exponent and cancellation of the linear term}

To measure radial instability in coordinate time $t$, write $\dot t:=dt/d\tau$.
In the Kerr--de~Sitter Boyer--\allowbreak Lindquist chart one has, on the equatorial plane and at the level relevant for linearization at $a=0$,
\begin{equation}\label{eq:tdot}
\dot t=\frac{(r^2+a^2)P(r)}{\Delta_r(r)\,r^2}+\frac{a(\Xi b-a)}{r^2},
\qquad
P(r):=(r^2+a^2)-a\Xi b.
\end{equation}
Linearizing \eqref{eq:radial-equation} at a circular orbit $r=r_+(a)$ and using that $R(r_+)=R'(r_+)=0$, one obtains the standard formula
for the coordinate-time Lyapunov exponent (see, e.g., \cite{CardosoEtAl2009LyapunovQNM} for general background):
\begin{equation}\label{eq:Lyapunov-formula}
\lambda_+(M,a)^2
=
\frac{1}{2\,\dot t(r_+)^2}\,\frac{d^2}{dr^2}\Bigl(\frac{R(r)}{r^4}\Bigr)\Big|_{r=r_+(a)}
=
\frac{R''(r_+(a))}{2\,r_+(a)^4\,\dot t(r_+(a))^2}.
\end{equation}

We now expand \eqref{eq:Lyapunov-formula} at $a=0$ using \eqref{eq:rb-expansion} and \eqref{eq:r1b1}.

\paragraph{Step 1: an explicit formula for $R''$.}
Recall that $R(r)=P(r)^2-\Delta_r(r)\,(b-a)^2$ with
$P(r)=(r^2+a^2)-ab$.
Since $\Xi=1+\mathcal O(a^2)$, the $\Xi$-factors in $P(r)$ and in $(\Xi b-a)$ do not contribute at order $a$; for the linear-in-$a$ computation in this appendix we may therefore replace them by $1$.
Since $\partial_r P(r)=2r$ and $b$ is independent of $r$, we have
\begin{equation}\label{eq:Rprime-Rpp}
R'(r)=2P(r)\,\partial_r P(r)-\Delta_r'(r)\,(b-a)^2=4r\,P(r)-\Delta_r'(r)\,(b-a)^2,
\end{equation}
and hence
\begin{equation}\label{eq:Rpp-explicit}
R''(r)=4P(r)+8r^2-\Delta_r''(r)\,(b-a)^2
=12r^2+4a^2-4ab-\Delta_r''(r)\,(b-a)^2.
\end{equation}
From \eqref{eq:Delta_r} one computes
\begin{equation}\label{eq:Delta-r-pp}
\Delta_r''(r)=2-4\Lambda r^2-\frac{2\Lambda}{3}a^2.
\end{equation}

\paragraph{Step 2: first-order expansions at the Schwarzschild--de~Sitter photon sphere.}
Introduce the shorthand
\begin{equation}\label{eq:Sdef}
S=S(M):=\sqrt{1-9\Lambda M^2}>0.
\end{equation}
Then \eqref{eq:r0-photon}--\eqref{eq:b0-photon} read $r_0=3M$ and $b_0=3\sqrt3\,M/S$.
Using \eqref{eq:Rpp-explicit}--\eqref{eq:Delta-r-pp} with $a=0$, together with
$\Delta_r(r)=r^2(1-\Lambda r^2/3)-2Mr$ and $b_0^2=r_0^4/\Delta_r(r_0)=27M^2/S^2$, one obtains
\begin{equation}\label{eq:Rpp0}
R''(r_0)=12r_0^2-b_0^2\,(2-4\Lambda r_0^2)=\frac{54M^2}{S^2}.
\end{equation}

Next, substitute the first-order expansions \eqref{eq:rb-expansion} with coefficients \eqref{eq:r1b1} into
\eqref{eq:Rpp-explicit}--\eqref{eq:Delta-r-pp}.  Keeping only terms up to order $a$ gives
\begin{equation}\label{eq:Rpp-expansion}
R''\bigl(r_+(a)\bigr)=\frac{54M^2}{S^2}-\frac{12\sqrt3\,M\,(2+9\Lambda M^2)}{S^3}\,a+\mathcal O(a^2),
\qquad a\to0.
\end{equation}

\paragraph{Step 3: the coordinate-time factor $\dot t$.}
We likewise expand $\dot t(r_+(a))$ using \eqref{eq:tdot}.  Since $\Delta_r(r_0)=3M^2S^2$, we have
\begin{equation}\label{eq:tdot0}
\dot t(r_0)=\frac{r_0^2}{\Delta_r(r_0)}=\frac{3}{S^2}.
\end{equation}
Substituting \eqref{eq:rb-expansion}--\eqref{eq:r1b1} into \eqref{eq:tdot} and expanding to first order yields
\begin{equation}\label{eq:tdot-expansion}
\dot t\bigl(r_+(a)\bigr)=\frac{3}{S^2}+\frac{\sqrt3\,(2-45\Lambda M^2)}{3M\,S^3}\,a+\mathcal O(a^2),
\qquad a\to0.
\end{equation}
Consequently
\begin{equation}\label{eq:den-expansion}
\begin{aligned}
2\,r_+(a)^4\,\dot t\bigl(r_+(a)\bigr)^2
&=\frac{1458\,M^4}{S^4}-\frac{324\sqrt3\,M^3\,(2+9\Lambda M^2)}{S^5}\,a+\mathcal O(a^2),\\
&\qquad a\to0.
\end{aligned}
\end{equation}

\paragraph{Step 4: cancellation of the linear term in $\lambda_+^2$.}
Insert \eqref{eq:Rpp-expansion} and \eqref{eq:den-expansion} into \eqref{eq:Lyapunov-formula}.
Writing
\begin{equation*}
R''\bigl(r_+(a)\bigr)=R_0+aR_1+\mathcal O(a^2),\qquad
2r_+(a)^4\dot t\bigl(r_+(a)\bigr)^2=D_0+aD_1+\mathcal O(a^2),
\end{equation*}
we obtain
\begin{equation}\label{eq:ratio-expansion}
\lambda_+(M,a)^2
=\frac{R_0}{D_0}+a\Bigl(\frac{R_1}{D_0}-\frac{R_0D_1}{D_0^2}\Bigr)+\mathcal O(a^2).
\end{equation}
From \eqref{eq:Rpp-expansion} and \eqref{eq:den-expansion} we have
\begin{equation*}
\frac{R_0}{D_0}=\frac{54M^2/S^2}{1458M^4/S^4}=\frac{S^2}{27M^2}=\Omega_{\mathrm{ph}}(M)^2,
\end{equation*}
and moreover
\begin{equation*}
\frac{R_1}{D_0}=\frac{R_0D_1}{D_0^2}
=-\frac{2\sqrt3\,(2+9\Lambda M^2)}{243\,M^3}\,S.
\end{equation*}
Thus the coefficient of $a$ in \eqref{eq:ratio-expansion} vanishes, and we conclude that
\begin{equation}\label{eq:Lyapunov-series}
\lambda_+(M,a)^2=\Omega_{\mathrm{ph}}(M)^2+\mathcal O(a^2)\qquad\text{as }a\to0,
\end{equation}
uniformly for $M$ in compact subextremal sets.
Since $\lambda_+(M,0)=\Omega_{\mathrm{ph}}(M)>0$, taking square roots gives \eqref{eq:Lyapunov-even}.

\section{Second-order expansion of the equatorial Lyapunov exponent}\label{app:lyapunov}

In this appendix we compute the explicit quadratic correction coefficient $c_{\lambda,2}$ appearing in
Lemma~\ref{lem:single-mode-expansions} and in the three--parameter construction of Section~\ref{sec:three-parameter}.
We keep $\Lambda>0$ fixed throughout the computation, but every step is uniform for $(M,\Lambda)$ in compact subextremal sets.

\subsection{Correct radial potential including \texorpdfstring{$\Xi$}{Xi}}

On the equatorial plane ($\theta=\pi/2$), let $E:=-p_t$ and $L:=p_\varphi$ be the conserved quantities associated with
the Killing fields $T=\partial_t$ and $\Phi=\partial_\varphi$.
Set the (dimensionless) impact parameter $b:=L/E$ and write
\[
\Xi=\Xi(a):=1+\frac{\Lambda a^2}{3}.
\]
A direct computation from the Hamiltonian $H=\frac12 g^{\mu\nu}p_\mu p_\nu$ using the equatorial metric coefficients
\eqref{eq:eq-metric-coeffs} shows that the radial equation can be written as
\begin{equation}\label{eq:radial-equation-Xi}
\Bigl(\frac{dr}{d\tau}\Bigr)^2=\frac{R(r)}{r^4},
\qquad
R(r):=\bigl((r^2+a^2)-a\Xi b\bigr)^2-\Delta_r(r)\,(\Xi b-a)^2,
\end{equation}
where $\tau$ is an affine parameter and $\Delta_r$ is as in \eqref{eq:Delta_r}.
(The presence of $\Xi$ is invisible at linear order in $a$ but contributes at order $a^2$.)

A circular equatorial photon orbit of radius $r=r_+(a)$ is characterized by the double root conditions
\begin{equation}\label{eq:circular-R-Xi}
R\bigl(r_+(a)\bigr)=0,\qquad R'\bigl(r_+(a)\bigr)=0,
\end{equation}
with $b=b_+(a)$ determined simultaneously.
As explained in Appendix~\ref{app:equatorial-photon}, for a null orbit with tangent proportional to $T+\Omega\Phi$ one has
$-E+\Omega L=0$, hence
\begin{equation}\label{eq:b-Omega-relation}
b_+(a)=\frac{L}{E}=\frac{1}{\Omega_+(a)}.
\end{equation}

\subsection{Coordinate-time Lyapunov exponent}

Define $P(r):=(r^2+a^2)-a\Xi b$.  Solving the linear system for $(\dot t,\dot\varphi)$ in terms of $(E,L)$ gives, on the equatorial plane,
\begin{equation}\label{eq:tdot-Xi}
\dot t
=\frac{(r^2+a^2)P(r)}{\Delta_r(r)\,r^2}+\frac{a(\Xi b-a)}{r^2}.
\end{equation}
Linearizing \eqref{eq:radial-equation-Xi} at a circular orbit $r=r_+(a)$ and using \eqref{eq:circular-R-Xi}, one obtains the standard formula
for the coordinate-time Lyapunov exponent:
\begin{equation}\label{eq:Lyapunov-formula-Xi}
\lambda_+(M,a)^2
=
\frac{1}{2\,\dot t(r_+)^2}\,\frac{d^2}{dr^2}\Bigl(\frac{R(r)}{r^4}\Bigr)\Big|_{r=r_+(a)}
=
\frac{R''(r_+(a))}{2\,r_+(a)^4\,\dot t(r_+(a))^2}.
\end{equation}

Differentiating $R(r)=P(r)^2-\Delta_r(r)\,(\Xi b-a)^2$ twice in $r$ (with $b$ independent of $r$) yields the explicit identity
\begin{equation}\label{eq:Rpp-explicit-Xi}
R''(r)=12r^2+4a^2-4a\Xi b-\Delta_r''(r)\,(\Xi b-a)^2,
\qquad
\Delta_r''(r)=2-4\Lambda r^2-\frac{2\Lambda}{3}a^2.
\end{equation}

\subsection{Quadratic correction at \texorpdfstring{$a=0$}{a=0}}

Introduce the shorthand
\[
S=S(M,\Lambda):=\sqrt{1-9\Lambda M^2}>0,
\qquad
\Omega_{\mathrm{ph}}(M,\Lambda)=\frac{S}{3\sqrt{3}\,M}.
\]
By Proposition~\ref{prop:second-order-photon} (Appendix~\ref{app:equatorial-photon}) we have the expansions
\begin{equation}\label{eq:r-Omega-expansions-for-lambda}
\begin{aligned}
r_+(a)&=3M+r_1 a+r_2 a^2+\mathcal O(a^3),\\
\Omega_+(a)&=\Omega_{\mathrm{ph}}+c_Z a+c_{\Omega,2} a^2+\mathcal O(a^3),
\end{aligned}
\qquad a\to0,
\end{equation}
where $r_1,r_2,c_Z,c_{\Omega,2}$ are explicit functions of $(M,\Lambda)$ given in Appendix~\ref{app:equatorial-photon}.
In particular,
\[
r_1=-\frac{2}{\sqrt3}S,
\qquad
r_2=-\Bigl(3\Lambda M+\frac{2}{9M}\Bigr),
\qquad
c_Z=\frac{2+9\Lambda M^2}{27M^2}.
\]

By \eqref{eq:b-Omega-relation}, $b_+(a)=1/\Omega_+(a)$ admits the expansion
\begin{equation}\label{eq:b-expansion-a2}
b_+(a)=b_0+b_1 a+b_2 a^2+\mathcal O(a^3),
\qquad
\begin{aligned}
b_0&=\frac{3\sqrt3\,M}{S},\\
b_1&=-\frac{2+9\Lambda M^2}{S^2},\\
b_2&=\frac{\sqrt3\,(54\Lambda^2 M^4+39\Lambda M^2-1)}{6M\,S^3}.
\end{aligned}
\end{equation}
Multiplying by $\Xi(a)=1+\frac{\Lambda a^2}{3}$ we obtain
\begin{equation}\label{eq:beta-expansion-a2}
\Xi(a)\,b_+(a)=b_0+b_1 a+\beta_2 a^2+\mathcal O(a^3),
\qquad
\beta_2=b_2+\frac{\Lambda}{3}b_0=\frac{\sqrt3\,(45\Lambda M^2-1)}{6M\,S^3}.
\end{equation}

We now insert \eqref{eq:r-Omega-expansions-for-lambda}--\eqref{eq:beta-expansion-a2} into \eqref{eq:Lyapunov-formula-Xi}.
Write
\[
\lambda_+^2(a):=\lambda_+(M,a)^2=\frac{R''(r_+(a))}{2r_+(a)^4\,\dot t(r_+(a))^2}.
\]
We expand the numerator and denominator in \eqref{eq:Lyapunov-formula-Xi} up to order $a^2$.

\smallskip
\noindent\emph{Step 1: expansion of $R''(r_+(a))$.}
Using \eqref{eq:Rpp-explicit-Xi} together with the expansions of $r_+(a)$ and $\Xi(a)b_+(a)$, a direct Taylor
expansion gives
\begin{equation}\label{eq:Rpp-expansion-a2}
\begin{aligned}
R''\bigl(r_+(a)\bigr)
&=\frac{54M^2}{S^2}
-\frac{12\sqrt3\,M\,(2+9\Lambda M^2)}{S^3}\,a
\\
&\quad+\frac{18\Lambda M^2\,(20+9\Lambda M^2)}{S^4}\,a^2
+\mathcal O(a^3).
\end{aligned}
\end{equation}
In particular $R''(3M)=54M^2/S^2$.

\smallskip
\noindent\emph{Step 2: expansion of $2r_+(a)^4\dot t(r_+(a))^2$.}
From \eqref{eq:tdot-Xi} and the same orbit expansions one finds
\begin{equation}\label{eq:tdot-expansion-a2}
\begin{aligned}
\dot t\bigl(r_+(a)\bigr)
&=\frac{3}{S^2}
+\frac{\sqrt3\,(2-45\Lambda M^2)}{3M\,S^3}\,a
\\
&\quad+\frac{810\Lambda^2M^4-135\Lambda M^2+8}{9M^2\,S^4}\,a^2
+\mathcal O(a^3).
\end{aligned}
\end{equation}
Combining this with the expansion of $r_+(a)^4$ yields
\begin{equation}\label{eq:den-expansion-a2}
\begin{aligned}
2r_+(a)^4\,\dot t\bigl(r_+(a)\bigr)^2
&=\frac{1458M^4}{S^4}
-\frac{324\sqrt3\,M^3\,(2+9\Lambda M^2)}{S^5}\,a
\\
&\quad+\frac{54M^2\,(4+90\Lambda M^2+81\Lambda^2M^4)}{S^6}\,a^2
+\mathcal O(a^3).
\end{aligned}
\end{equation}
In particular $\dot t(3M)=3/S^2$ and $\Delta_r(3M)=3M^2S^2$.

\smallskip
\noindent\emph{Step 3: quotient expansion.}
Write $N(a)=R''(r_+(a))=N_0+N_1 a+N_2 a^2+\mathcal O(a^3)$ and
$D(a)=2r_+(a)^4\dot t(r_+(a))^2=D_0+D_1 a+D_2 a^2+\mathcal O(a^3)$, where the coefficients are read off from
\eqref{eq:Rpp-expansion-a2} and \eqref{eq:den-expansion-a2}.
Then
\begin{align*}
\lambda_+^2(a)=\frac{N(a)}{D(a)}
&=\frac{N_0}{D_0}
+a\Bigl(\frac{N_1}{D_0}-\frac{N_0D_1}{D_0^2}\Bigr)
\\
&\quad+a^2\Bigl(\frac{N_2}{D_0}-\frac{N_1D_1}{D_0^2}-\frac{N_0D_2}{D_0^2}+\frac{N_0D_1^2}{D_0^3}\Bigr)
+\mathcal O(a^3).
\end{align*}
From \eqref{eq:Rpp-expansion-a2} and \eqref{eq:den-expansion-a2} we have
\[
\begin{aligned}
N_0&=\frac{54M^2}{S^2},
&\qquad
N_1&=-\frac{12\sqrt3\,M\,(2+9\Lambda M^2)}{S^3},\\
D_0&=\frac{1458M^4}{S^4},
&\qquad
D_1&=-\frac{324\sqrt3\,M^3\,(2+9\Lambda M^2)}{S^5}.
\end{aligned}
\]
Therefore
\[
\begin{aligned}
N_1D_0&=-\frac{12\cdot 1458\,\sqrt3\,M^5\,(2+9\Lambda M^2)}{S^7},\\
N_0D_1&=-\frac{54\cdot 324\,\sqrt3\,M^5\,(2+9\Lambda M^2)}{S^7}.
\end{aligned}
\]
Since $12\cdot 1458=54\cdot 324$, we indeed have $N_1D_0=N_0D_1$ and the linear term vanishes.
Moreover $N_0/D_0=\Omega_{\mathrm{ph}}(M,\Lambda)^2$ and the quadratic term simplifies to
$\frac{2(45\Lambda M^2-2)}{729M^4}$.
Therefore
\begin{equation}\label{eq:lambda2-expansion-a2}
\lambda_+^2(a)=\Omega_{\mathrm{ph}}(M,\Lambda)^2+\frac{2\,(45\Lambda M^2-2)}{729\,M^4}\,a^2+\mathcal O(a^3),
\qquad a\to0,
\end{equation}
uniformly on compact subextremal sets.
The cancellation of the linear term is consistent with Appendix~\ref{app:equatorial-damping}.

Finally, since $\lambda_+(M,0)=\Omega_{\mathrm{ph}}(M,\Lambda)>0$, taking square roots in \eqref{eq:lambda2-expansion-a2} gives
\begin{equation}\label{eq:lambda-expansion-a2}
\begin{aligned}
\lambda_+(M,a)
&=\Omega_{\mathrm{ph}}(M,\Lambda)+c_{\lambda,2}(M,\Lambda)\,a^2+\mathcal O(a^3),
\qquad a\to0,\\
c_{\lambda,2}(M,\Lambda)
&=\frac{\sqrt3\,(45\Lambda M^2-2)}{243\,M^3\,\sqrt{1-9\Lambda M^2}}.
\end{aligned}
\end{equation}
which is \eqref{eq:clambda2} (and matches the definition of $c_{\lambda,2}(M,\Lambda)$ in Lemma~\ref{lem:Hgeo-Jacobian}).
This completes the quadratic computation.


\begin{thebibliography}{99}

\bibitem{DreyerEtAl2004BlackHoleSpectroscopy}
O.~Dreyer, B.~Kelly, B.~Krishnan, L.~S.~Finn, D.~Garrison, and R.~Lopez-Aleman.
\newblock Black-hole spectroscopy: Testing general relativity through gravitational-wave observations.
\newblock \emph{Classical and Quantum Gravity} \textbf{21}(4), 787--803 (2004).
\newblock \doi{10.1088/0264-9381/21/4/003}.

\bibitem{BertiCardosoStarinets2009QNMReview}
E.~Berti, V.~Cardoso, and A.~O.~Starinets.
\newblock Quasinormal modes of black holes and black branes.
\newblock \emph{Classical and Quantum Gravity} \textbf{26}(16), 163001 (2009).
\newblock \doi{10.1088/0264-9381/26/16/163001}.

\bibitem{CardosoEtAl2009LyapunovQNM}
V.~Cardoso, A.~S.~Miranda, E.~Berti, H.~Witek, and V.~T.~Zanchin.
\newblock Geodesic stability, Lyapunov exponents, and quasinormal modes.
\newblock \emph{Physical Review D} \textbf{79}(6), 064016 (2009).
\newblock \doi{10.1103/PhysRevD.79.064016}.

\bibitem{Teukolsky1973PerturbationsKerr}
S.~A.~Teukolsky.
\newblock Perturbations of a rotating black hole. I. Fundamental equations for gravitational,
electromagnetic, and neutrino-field perturbations.
\newblock \emph{The Astrophysical Journal} \textbf{185}, 635--647 (1973).
\newblock \doi{10.1086/152444}.

\bibitem{Whiting1989ModeStabilityKerr}
B.~F.~Whiting.
\newblock Mode stability of the Kerr black hole.
\newblock \emph{Journal of Mathematical Physics} \textbf{30}(6), 1301--1305 (1989).
\newblock \doi{10.1063/1.528308}.

\bibitem{Vasy2013MicrolocalAHKdS}
A.~Vasy.
\newblock Microlocal analysis of asymptotically hyperbolic and Kerr--de~Sitter spaces
(with an appendix by S.~Dyatlov).
\newblock \emph{Inventiones Mathematicae} \textbf{194}(2), 381--513 (2013).
\newblock \doi{10.1007/s00222-012-0446-8}.

\bibitem{Dyatlov2011QNMKerrDeSitter}
S.~Dyatlov.
\newblock Quasi-normal modes and exponential energy decay for the Kerr--de~Sitter black hole.
\newblock \emph{Communications in Mathematical Physics} \textbf{306}, 119--163 (2011).
\newblock \doi{10.1007/s00220-011-1286-x}.

\bibitem{Dyatlov2012AsymptoticQNMKdS}
S.~Dyatlov.
\newblock Asymptotic distribution of quasi-normal modes for Kerr--de~Sitter black holes.
\newblock \emph{Annales Henri Poincar\'e} \textbf{13}, 1101--1166 (2012).
\newblock \doi{10.1007/s00023-012-0159-y}.

\bibitem{SaBarretoZworski1997SphericalBH}
A.~S\'a~Barreto and M.~Zworski.
\newblock Distribution of resonances for spherical black holes.
\newblock \emph{Mathematical Research Letters} \textbf{4}, 103--121 (1997).
\newblock \doi{10.4310/MRL.1997.v4.n1.a10}.

\bibitem{PetersenVasy2023AnalyticityQNM}
O.~Petersen and A.~Vasy.
\newblock Analyticity of quasinormal modes in the Kerr and Kerr--de~Sitter spacetimes.
\newblock \emph{Communications in Mathematical Physics} \textbf{402}, 2547--2575 (2023).
\newblock \doi{10.1007/s00220-023-04776-9}.

\bibitem{PetersenVasy2025WaveKdS}
O.~Petersen and A.~Vasy.
\newblock Wave equations in the Kerr--de~Sitter spacetime: The full subextremal range.
\newblock \emph{Journal of the European Mathematical Society} \textbf{27}(8), 3497--3526 (2025).
\newblock \doi{10.4171/JEMS/1448}.

\bibitem{WunschZworski2011ResolventNHT}
J.~Wunsch and M.~Zworski.
\newblock Resolvent estimates for normally hyperbolic trapped sets.
\newblock \emph{Annales Henri Poincar\'e} \textbf{12}, 1349--1385 (2011).
\newblock \doi{10.1007/s00023-011-0108-1}.

\bibitem{HintzVasy2015NonTrappingNHT}
P.~Hintz and A.~Vasy.
\newblock Non-trapping estimates near normally hyperbolic trapping.
\newblock \emph{Mathematical Research Letters} \textbf{21}(6), 1277--1304 (2014).
\newblock \doi{10.4310/MRL.2014.v21.n6.a5}.

\bibitem{Dyatlov2016SpectralGapsNHT}
S.~Dyatlov.
\newblock Spectral gaps for normally hyperbolic trapping.
\newblock \emph{Annales de l'Institut Fourier} \textbf{66}(1), 55--82 (2016).
\newblock \doi{10.5802/aif.3005}.

\bibitem{HintzVasy2015SemilinearKdS}
P.~Hintz and A.~Vasy.
\newblock Semilinear wave equations on asymptotically de~Sitter, Kerr--de~Sitter and Minkowski spacetimes.
\newblock \emph{Analysis \& PDE} \textbf{8}(8), 1807--1890 (2015).
\newblock \doi{10.2140/apde.2015.8.1807}.

\bibitem{HintzVasy2018NonlinearStabilityKdS}
P.~Hintz and A.~Vasy.
\newblock The global non-linear stability of the Kerr--de~Sitter family of black holes.
\newblock \emph{Acta Mathematica} \textbf{220}(1), 1--206 (2018).
\newblock \doi{10.4310/ACTA.2018.v220.n1.a1}.

\bibitem{Zworski2017ScatteringResonances}
M.~Zworski.
\newblock Mathematical study of scattering resonances.
\newblock \emph{Bulletin of Mathematical Sciences} \textbf{7}, 1--85 (2017).
\newblock \doi{10.1007/s13373-017-0099-4}.

\bibitem{Zworski2012SemiclassicalAnalysis}
M.~Zworski.
\newblock \emph{Semiclassical Analysis}.
\newblock Graduate Studies in Mathematics, Vol.~138. American Mathematical Society, Providence, RI, 2012.
\newblock \doi{10.1090/gsm/138}.

\bibitem{UhlmannWang2023RecoverMassSingleQNM}
G.~Uhlmann and Y.~Wang.
\newblock Recovery of black hole mass from a single quasinormal mode.
\newblock \emph{Communications in Mathematical Physics} \textbf{401}(1), 925--936 (2023).
\newblock \doi{10.1007/s00220-023-04666-0}.

\bibitem{KatoPerturbation}
T.~Kato.
\newblock \emph{Perturbation Theory for Linear Operators}.
\newblock Classics in Mathematics. Springer, Berlin, 1995.
\newblock (Originally published 1966.)
\newblock \doi{10.1007/978-3-642-66282-9}.

\bibitem{Stucker2024QNMKerr}
T.~Stucker.
\newblock Quasinormal modes for the Kerr black hole.
\newblock Preprint, \arxiv{2407.04612} (2024).
\newblock \doi{10.48550/arXiv.2407.04612}.

\bibitem{GajicWarnick2024QNMKerr}
D.~Gajic and C.~M.~Warnick.
\newblock Quasinormal modes on Kerr spacetimes.
\newblock Preprint, \arxiv{2407.04098} (2024).
\newblock \doi{10.48550/arXiv.2407.04098}.

\bibitem{GaleNikaido1965}
D.~Gale and H.~Nikaid\^o.
\newblock The Jacobian matrix and global univalence of mappings.
\newblock \emph{Mathematische Annalen} \textbf{159}, 81--93 (1965).
\newblock \doi{10.1007/BF01360282}.

\end{thebibliography}
\end{document}